\documentclass[12pt]{article}

\usepackage{epsfig,color,amsmath,amssymb,euscript}
 \usepackage{mathabx}
\usepackage{subfigure}
\usepackage{fancyhdr}
\usepackage{natbib}
\usepackage{rotating}
\usepackage{amsmath}
\usepackage{amsfonts} 
\usepackage{setspace}
\usepackage{fancyhdr}

\usepackage{amsthm}
\usepackage{enumerate} 
\usepackage{threeparttable}
\usepackage{diagbox}
\usepackage{multirow}
\usepackage{makecell}
\usepackage{xcolor}
\usepackage{colortbl}
\usepackage[colorlinks,linkcolor = blue,anchorcolor=blue, citecolor= blue]{hyperref}
\usepackage{url} 
 
\usepackage{graphicx}
\usepackage{enumerate}
\usepackage{mathrsfs}
\usepackage{verbatim}
\usepackage{blkarray}
\usepackage{pdflscape}

\usepackage{authblk}
\usepackage{setspace}
\usepackage{threeparttable}
\usepackage{caption}
\usepackage{algorithm,algcompatible}

\usepackage{tabularx}
\usepackage{array}
\usepackage{makecell}

\usepackage{accents}
\newcommand{\ubar}[1]{\underaccent{\bar}{#1}}

\usepackage{bm}
\usepackage{url}
\usepackage{booktabs}
\usepackage{algorithm}
\usepackage{algpseudocode}
\algrenewcommand\algorithmicrequire{\textbf{Input:}}
\algrenewcommand\algorithmicensure{\textbf{Output:}}

\usepackage{rotating}[figuresright]
\usepackage{xcolor}
\definecolor{darkgreen}{rgb}{0,0.5,0}

\usepackage[resetlabels]{multibib}
\allowdisplaybreaks
\newcites{S}{References} 

\algnewcommand\INPUT{\item[\textbf{Input:}]}%
\algnewcommand\OUTPUT{\item[\textbf{Output:}]}%

\theoremstyle{plain}

\newtheorem{theorem}{Theorem}
\newtheorem{lemma}{Lemma} 

\theoremstyle{plain}
\newtheorem{assumption}{Assumption}
\newtheorem{remark}{Remark}

\newcommand{\ab}{\mathbf{a}}
\newcommand{\bbb}{\mathbf{b}}

\newcommand{\db}{\mathbf{d}}
\newcommand{\eb}{\mathbf{e}}
\newcommand{\fb}{\mathbf{f}}

\newcommand{\hb}{\mathbf{h}}

\newcommand{\sbb}{\mathbf{s}}

\newcommand{\ub}{\mathbf{u}}

\newcommand{\xb}{\mathbf{x}}
\newcommand{\yb}{\mathbf{y}}

\newcommand{\Ab}{\mathbf{A}}
\newcommand{\Bb}{\mathbf{B}}
\newcommand{\Cb}{\mathbf{C}}
\newcommand{\Db}{\mathbf{D}}
\newcommand{\Eb}{\mathbf{E}}
\newcommand{\Fb}{\mathbf{F}}
\newcommand{\Gb}{\mathbf{G}}
\newcommand{\Hb}{\mathbf{H}}
\newcommand{\Ib}{\mathbf{I}}
\newcommand{\Jb}{\mathbf{J}}
\newcommand{\Kb}{\mathbf{K}}

\newcommand{\Mb}{\mathbf{M}}

\newcommand{\Ob}{\mathbf{O}}
\newcommand{\Pb}{\mathbf{P}}
\newcommand{\Qb}{\mathbf{Q}}
\newcommand{\Rb}{\mathbf{R}}
\newcommand{\Sbb}{\mathbf{S}}

\newcommand{\Ub}{\mathbf{U}}
\newcommand{\Vb}{\mathbf{V}}
\newcommand{\Wb}{\mathbf{W}}
\newcommand{\Xb}{\mathbf{X}}
\newcommand{\Yb}{\mathbf{Y}}

\newcommand{\bbeta}{\bm{\beta}}
\newcommand{\bgamma}{\bm{\gamma}}

\newcommand{\bzeta}{\bm{\zeta}}

\newcommand{\bvartheta}{\bm{\vartheta}}

\newcommand{\bxi}{\bm{\xi}}

\newcommand{\bvarphi}{\bm{\varphi}}

\newcommand{\bomega}{\bm{\omega}}

\newcommand{\bGamma}{\bm{\Gamma}}

\newcommand{\bTheta}{\bm{\Theta}}
\newcommand{\bLambda}{\bm{\Lambda}}
\newcommand{\bXi}{\bm{\Xi}}

\newcommand{\bSigma}{\bm{\Sigma}}
\newcommand{\bUpsilon}{\bm{\Upsilon}}

\newcommand{\bOmega}{\bm{\Omega}}

\newcommand{\T}{\mathrm{\scriptscriptstyle \top }}
\newcommand{\MP}{\mathrm{\scriptscriptstyle + }}
\newcommand{\mminus}{\scalebox{1}{\text{-}}}

\makeatletter
\newcommand{\fdsy@scale}{1.0}
\newcommand\fdsy@mweight@normal{Book}
\newcommand\fdsy@mweight@small{Book}
\newcommand\fdsy@bweight@normal{Medium}
\newcommand\fdsy@bweight@small{Medium}

\DeclareFontFamily{U}{FdSymbolC}{}

\DeclareFontShape{U}{FdSymbolC}{m}{n}{
	<-7.1> s * [\fdsy@scale] FdSymbolC-\fdsy@mweight@small
	<7.1-> s * [\fdsy@scale] FdSymbolC-\fdsy@mweight@normal
}{}
\DeclareFontShape{U}{FdSymbolC}{b}{n}{
	<-7.1> s * [\fdsy@scale] FdSymbolC-\fdsy@bweight@small
	<7.1-> s * [\fdsy@scale] FdSymbolC-\fdsy@bweight@normal
}{}

\DeclareSymbolFont{arrows}{U}{FdSymbolC}{m}{n}
\SetSymbolFont{arrows}{bold}{U}{FdSymbolC}{b}{n}

\DeclareMathSymbol{\upvDash}{\mathrel}{arrows}{233}

\DeclareMathSymbol{\upmodels}{\mathrel}{arrows}{237}
\makeatother

\def\spacingset#1{\renewcommand{\baselinestretch}%
	{#1}\small\normalsize} \spacingset{1}

\newcommand{\blind}{1}

\def\singlespace{\def\baselinestretch{1}\@normalsize}

\begin{document}

\if1\blind
{
\spacingset{1.25}
  \title{\bf \Large CP-Factorization for High Dimensional Tensor Time Series and Double Projection Iterations}
\author[1,2]{Jinyuan Chang}
\author[1]{Guanglin Huang}
\author[3]{Qiwei Yao}
\author[4]{Long Yu}


\affil[1]{\it \small Joint Laboratory of Data Science and Business
Intelligence, Institute of Statistical
Interdisciplinary Research, Southwestern University of Finance and Economics, Chengdu, China}
\affil[2]{\it \small State Key Laboratory of Mathematical Sciences, Academy of Mathematics and Systems Science, Chinese Academy of Sciences, Beijing, China}
\affil[3]{\it \small Department of Statistics, The London School of Economics and Political Science, London, U.K.}
\affil[4]{\it \small School of Statistics and Data Science,  Institute of Data Science and Statistics, Shanghai University of Finance and Economics, Shanghai, China}

		\setcounter{Maxaffil}{0}
		
		\renewcommand\Affilfont{\itshape\small}
		\date{\vspace{-5ex}}
		\maketitle
	} \fi
	\if0\blind
	{
		\bigskip
		\bigskip
		\bigskip
		\begin{center}
			{
			\Large \bf CP-Factorization for High Dimensional Tensor Time Series and Double Projection Iterations
			}
		\end{center}
		\medskip
	} \fi
 
\spacingset{1.5}
\begin{abstract}
We adopt the canonical polyadic (CP) decomposition to model high-dimensional tensor time series. Our primary goal is to identify and estimate the factor loadings in the CP decomposition. We propose a one-pass estimation procedure through standard eigen-analysis for a matrix constructed based on the serial dependence structure of the data. The asymptotic properties of the proposed estimator are established under a general setting as long as the factor loading vectors are linearly independent,  allowing the factors to be correlated and the factor loading vectors to be not nearly orthogonal. The procedure adapts to the sparsity of the factor loading vectors, accommodates weak factors, and demonstrates strong performance across a wide range of scenarios. To further reduce estimation errors, we also introduce an iterative algorithm based on a novel double projection approach. We theoretically justify the improved convergence rate of the iterative estimator, and derive the associated limiting distribution. A consistent estimator of the asymptotic variance is also provided, which plays a key role in the related inference problems. All results are validated through extensive simulations and two real data applications.
\end{abstract}

\noindent {\sl Keywords}: CP decomposition; dimension reduction; double projection iteration; statistical inference; tensor time series.

\spacingset{1.69}
\setlength{\abovedisplayskip}{0.2\baselineskip}
\setlength{\belowdisplayskip}{0.2\baselineskip}
\setlength{\abovedisplayshortskip}{0.2\baselineskip}
\setlength{\belowdisplayshortskip}{0.2\baselineskip}

\section{Introduction}\label{sec: introduction}
Due to recent advances in information technology and data science, 
the demand for tensor analysis arises in a variety of fields, including but not limited to neuroimaging \citep{zhou2013tensor}, recommendation systems \citep{bi2018}, and dynamic transportation networks \citep{chen2021autoregressive,chen2022factor}. This motivates the rapid development of efficient inference tools and the associated theory for tensor analysis.
%
The size of tensor data is often large or extremely large. Low-rank approximation remains one of the most powerful tools for balancing between computational and statistical efficiencies. 
In tensor analysis, there are two types of frequently used  rank decompositions: the Tucker decomposition and the canonical polyadic (CP) decomposition, and both can be viewed as a natural extension of the singular value decomposition (SVD) for matrices \citep{kolda2009tensor}. 
The Tucker decomposition is often achieved by the SVD on the unfolded matrices \citep{de2000multilinear}. 
By contrast, computing the CP decomposition is
NP-hard, and the alternating least squares iteration remains as the workhorse method 
\citep{wang2017tensor}.

	In real applications, tensor data are often recorded in chronological order, and the dynamics of the data are often driven by a small number of factors.  
    Therefore, it is natural to extend the vector-valued factor models
    \citep{bai2003inferential, lam2012factor, chang2015high} for tensor time series, leading to 
    two types of tensor factor models based on, respectively, the Tucker decomposition and the CP decomposition. Most existing studies focus on tensor Tucker-factor models, which can be traced back to \cite{wang2019factor}, where a two-way factor structure was introduced for matrix time series  (i.e. a tensor with two modes). 
    See also \cite{elynn2020}, \cite{yu2022projected},  and \cite{chen2023statistical}. 
Extensions to higher-order tensor Tucker-factor models have been studied in
 \cite{chen2024rank}, \cite{han2024tensor},
\cite{chen2024semi}, \cite{barigozzi2023statistical}, and
\cite{HE2026105557}. Robust estimation methods for tensor Tucker-factor models are further considered in \cite{barigozzi2023robust} and \cite{barigozzi2025tail}.
    Note that both the factors and the factor loadings in the Tucker decomposition are not uniquely defined: the decomposition is invariant under general invertible linear transformations. In empirical practice, rotations are often applied to the estimated loadings and factors  to enhance interpretability.
    

In contrast, the factor loadings in tensor CP-factor models are uniquely defined up to the reflection and permutation indeterminacy,  where the reflection indeterminacy is also referred to as the sign indeterminacy. This facilitates  a straightforward and practically meaningful interpretation of the fitted models.
See, for example, the real data illustration in Section \ref{sec:application} below.
For tensor CP-factor models, \cite{han2024cp} propose an algorithm for the so-called High-Order Projection Estimator (HOPE),  which consists of a principal component analysis (PCA) based initialization followed by recursive iterations. Remarkably, the estimation accuracy improves progressively with each iteration even though the initial estimate is not consistent. \cite{chen2026estimation} extend the HOPE using a contemporary covariance matrix and randomized projection, and also derive the limiting distribution of the associated estimator.
From the theoretical  perspective, the HOPE is constructed under the following requirements:
    (a) the factor loading vectors are nearly orthogonal, and 
(b) the  factors are almost uncorrelated. 
Note that both factor 
loadings and factors are uniquely defined in CP decomposition (up to the reflection and permutation indeterminacy), and there is no guarantee that those two requirements fulfill.
Free from requirement (b)  and replacing ``nearly orthogonal" by ``linearly independent" in requirement (a),
\cite{chang2023modelling} propose a one-pass estimation procedure (i.e. without iterations) for matrix CP-factor models. Furthermore the ``linearly independent" requirement is freed in \cite{chang2024unified} which propose another one-pass estimation procedure applicable to matrix CP-factor models with rank-deficient factor loadings.
Note that the two estimation procedures are radically different: the method of \cite{chang2023modelling} is based on a generalized eigen-analysis while
the procedure of \cite{chang2024unified}  is more involved and a key step is
to identify the factor loadings by a joint diagonalization of several symmetric matrices defined by the basis vectors of a linear system. However, neither \cite{chang2023modelling} nor \cite{chang2024unified} provide results on statistical inference, and it remains unclear how to generalize their procedures to higher-order tensor settings.
	
	In this paper, we develop a unified framework for estimating  CP-factor models for tensor time series, including the models for matrix time series as special cases. We do not impose requirements (a) and (b) stated above in our analysis. Under the mild  assumption that the factor loading vectors are linearly independent, we propose two new estimation methods. Different from the method of \cite{chang2023modelling} which relies on the generalized eigen-analysis of certain matrices, our first method is established through  the standard eigen-analysis of a well-designed matrix constructed based on the serial dependence structure of the
data, which can  substantially attenuate the plug-in errors, and adapt to the higher (than two) mode tensor structures.
    The second new estimation, termed as double projection method, is an iterative algorithm. 
    This algorithm substantially outperforms state-of-the-art iterative approaches \citep{han2024cp, chen2026estimation} in terms of both statistical and computational efficiency, particularly in the presence of correlated factors. The superiority of the iterative estimator is rigorously established through theoretical analysis and extensive numerical studies.   Furthermore, we derive a tractable limiting representation for this new iterative estimator, along with its explicit asymptotic distribution. An estimator for the asymptotic variance is also provided, which makes the related statistical inference feasible. The \textsf{R}-function \texttt{CP\_TTS} for implementing our newly proposed methods is available publicly in the \texttt{HDTSA} package \citep{chang2024hdtsa}.
	
	The rest of the paper is organized as follows. Section \ref{sec: model} introduces the tensor CP-factor models and the basic settings. Section \ref{sec: methodology} presents our methodology, including the one-pass and iterative estimators together with the inference procedure. 
    Sections \ref{sec:numerical} and \ref{sec:application} validate the performance of our proposed methods through simulation studies and  a real data analysis, respectively. Section \ref{sec: theoretical}  develops the theoretical guarantees of the proposed estimators. Section \ref{sec: discuss} provides some discussion. 
    All technical details, and additional simulation and empirical results, are provided in the supplementary material. The replication code for both the simulations and the real data analysis is
available at the GitHub repository: \url{https://github.com/JinyuanChang-Lab/CPTensorTimeSeries}.
	
	\emph{Notation}. For any integer $p\ge 1$, let $[p]=\{1,\ldots,p\}$, and $\Ib_p$ be  the $p\times p$  identity matrix. Denote by $I(\cdot)$ the indicator function. For a vector $\ab=(a_1,\ldots,a_p)^\T$,  define $|\ab|_2=(\sum_{i=1}^p a_i^2)^{1/2}$ and $|\ab|_0 = \sum_{i=1}^p I(a_i \neq 0)$. For a complex vector $\ab$, $\operatorname{Re}(\ab)$ denotes the vector of its real parts taken entry-wise. For a matrix $\Ab=(a_{i,j})_{p_1 \times p_2}$, denote by $\sigma_i(\Ab)$, $\sigma_{\max}(\Ab)$, and $\sigma_{\min}(\Ab)$, respectively, its $i$-th largest, maximum, and minimum singular values. We write $\|\Ab\|_2=\sqrt{\sigma_1(\Ab^\T\Ab)}$, $\|\Ab\|_{\rm F}=\sqrt{\sum_{i=1}^{p_1}\sum_{j=1}^{p_2}a_{i,j}^2}$,  and $ | \Ab |_{\max}=\max_{i\in[p_1],j\in[p_2]}|a_{i,j}|$. For an $m$-mode tensor $\mathcal{A}=(a_{i_1,\ldots,i_m})_{p_1\times\cdots \times p_m}$, we write $[\mathcal{A}]_{i_1,\ldots,i_m}=a_{i_1,\ldots,i_m}$. The vectorization  $\textup{vec}(\mathcal{A})$ is an ${\textstyle{\prod}}_{j=1}^m p_j$-dimensional vector, with the $\{ 1 + \sum_{k = 1}^m(i_k - 1)\prod_{\ell=1}^{k-1}p_\ell\}$-th element being $a_{i_1,\ldots,i_m}$. The matricization $\textup{Mat}_q(\mathcal{A})$ is a $p_q \times \prod_{j \neq q}p_j$ matrix, with the $\{i_{q}, 1 + \sum_{k \neq q}^m(i_k - 1)\prod_{\ell \neq q}^{k-1}p_\ell\}$-th element being $a_{i_1,\ldots,i_m}$. For two sequences of positive numbers $\{a_n\}_{n\ge 1}$ and $\{b_n\}_{n\ge 1}$, we write $a_n\lesssim b_n$ or $b_n\gtrsim a_n$ if $\limsup_{n\rightarrow\infty}a_n/b_n< \infty$,  $a_n\asymp b_n$ if and only if $b_n\gtrsim a_n$ and $a_n\gtrsim b_n$ hold simultaneously, and $a_n\ll b_n$ or $b_n\gg a_n$ if $\limsup_{n\rightarrow\infty}a_n/b_n=0$.  For any $a, b\in\mathbb{R}$, let $a \vee b=\max(a,b)$ and $a \wedge b=\min(a,b)$. Denote by $\lfloor x \rfloor$ the largest integer less than or equal to $x$.   Let $\circ$ and $\otimes$ denote the vector outer product and the Kronecker product, respectively.
	
	\section{Model}\label{sec: model}
    Let $\mathcal{Y}_t\in \mathbb{R}^{d_1\times \cdots \times d_m}$ be an observed $m$-mode tensor with 
    $m\geq2$. We consider the tensor CP-factor model \citep{han2024cp} as follows:
	\begin{equation}\label{model cp}
		\mathcal{Y}_t= \sum_{i=1}^rw_if_{t,i}\,\ab_{i,1} \circ \ab_{i,2} \circ \cdots \circ \ab_{i,m}+\mathcal{E}_t\,,\quad t\geq1\,,
	\end{equation}
	where $1\leq r \leq \min_{j\in[m]}d_j$ is a fixed but unknown constant,  	 $\mathcal{E}_t \in \mathbb{R}^{d_1\times \cdots \times d_m}$ is the idiosyncratic error tensor,  $\fb_t = (f_{t,1},\ldots,f_{t,r})^\T$ is the $r$-dimensional  factor vector, $w_i$ is the strength of the $i$-th factor,  and $\ab_{i,j}$ is a $d_j$-dimensional factor loading vector corresponding to the $i$-th factor and $j$-th mode. 
    Without loss of generality, we assume $|\ab_{i,j}|_2=1$ for any $i\in[r]$ and $j\in[m]$. When $w_i=1$ for all $i\in[r]$, model \eqref{model cp} is an extension of the model considered in \cite{chang2023modelling} from matrix to tensor regimes.  In practice, $w_i$ can be either a constant or grow with the dimensions. To understand this, let us consider a toy example that $[\mathcal{Y}_{t}]_{h_1,\ldots,h_m} = \beta f_{t}+[\mathcal{E}_t]_{h_1,\ldots,h_m}$ for any   $h_j\in[d_j]$ with $j\in[m]$. This example can be formulated as model \eqref{model cp} with $r = 1$, $w_1=\beta(\prod_{j=1}^{m}d_j)^{1/2}$ and $\ab_{1,j}=(d_j^{-1/2},\ldots,d_j^{-1/2})^\T$ for each $j \in [m]$, where  $w_1$ will diverge if at least one $d_j$ grows to infinity as $n\rightarrow\infty$. 
    
    The key problem of interest for the tensor CP-factor model \eqref{model cp} is to identify the factor loading vectors $\{\ab_{i,j}\}_{i\in[r],j\in[m]}$ and also provide the suitable estimates for them. To do this, we first impose the following regularity assumption on the tensor CP-factor model \eqref{model cp}.
    
    	\begin{assumption}\label{error}
		It holds that $\mathbb{E}(\mathcal{E}_t)={\bf 0}$ for any $t\in[n]$, $\mathbb{E}(\mathcal{E}_t\otimes\mathcal{E}_s)={\bf 0}$ for any $t\ne s$, and $\mathbb{E}(f_{t,i}\mathcal{E}_s)={\bf 0}$ for any $i\in[r]$ and $t,s \in [n]$.
	\end{assumption}
    \begin{remark}
 \textup{(a)}  Assumption \textup{\ref{error}} is significantly weaker than the assumptions imposed in \textup{\cite{han2024cp}}, which is satisfied automatically under the assumptions of \textup{\cite{han2024cp}}.   More specifically, \textup{\cite{han2024cp}} require the error process $\{\mathcal{E}_t\}_{t \ge 1}$ to be independent Gaussian tensors conditional on the factor process $\{\fb_{t}\}_{t \ge 1}$. Furthermore, \textup{\cite{han2024cp}} require the factor process $\{\fb_{t}\}_{t \ge 1}$ to be  stationary with zero mean and also to satisfy $\mathbb{E}(f_{t,i}^2) = 1$ and $\mathbb{E}(f_{t,i}f_{t-k,j})=0$ for all $i\neq j$ and $k\ge 1$, while the stationarity and zero mean are not necessarily required in our framework. \textup{(b)} For the relationship between the factor process $\{\fb_t\}_{t \ge 1}$ and the error process $\{\mathcal{E}_t\}_{t \ge 1}$,   the theoretical analysis of the one-pass estimator introduced in Section \textup{\ref{sec: initial}} only requires $\mathbb{E}(f_{t,i}\mathcal{E}_s)={\bf 0}$ for any $i\in[r]$ and $t \le s$, while the iterative estimator  introduced in Section \textup{\ref{sec: Double projection iterations}} necessitates the stronger condition as stated in Assumption \textup{\ref{error}}. \textup{(c)} Write  $\sigma_{t,i}^2 = \mathbb{E}(f^2_{t,i})$. Different from \textup{\cite{han2024cp}}, we do not require $\sigma_{t,i}^2$ 
 equal to $1$, which allows $\sigma_{t,i}^2$ to vary with $t$ for each given $i$. If $\sigma_{t,i}^2 \equiv \sigma_{i}^2$ for all $t \in [n]$, we can assume $\mathbb{E}(f^2_{t,i}) = 1$ without loss of generality. From this perspective, our model setting is more general than that in \textup{\cite{han2024cp}}. For more general scenarios, $w_i$ and $f_{t,i}$ cannot be identified separately.  Nevertheless, the loading vectors $\ab_{i,1},\ldots,\ab_{i,m}$ remain identifiable up to the reflection and permutation indeterminacy.  \textup{(d)} The idiosyncratic errors are assumed to be serially uncorrelated, which enables a direct separation of the signal part and the noise part through the auto-covariances of the observed data. This is a common assumption in the literature on  factor modeling; see, for example, \textup{\cite{lam2012factor}} and \textup{\cite{han2024cp}}. By contrast, the factors are allowed to be serially correlated; see Assumption \textup{\ref{mixing}} in Section \textup{\ref{sec: asmp}}. 
    \end{remark}

  For each $j\in[m]$, write $d_{\mminus j} = \prod_{j^\prime\ne j}^md_{j^\prime}$. Following the tensor matricization in \cite{kolda2009tensor}, we can reshape $\mathcal{Y}_t$ into a $d_j\times d_{\mminus j}$ matrix as follows:
	\begin{equation}\label{unfold X}
\Yb_{t,j}=\text{Mat}_j(\mathcal{Y}_t)= \underbrace{ {\textstyle\sum\nolimits}_{i=1}^r w_if_{t,i}\ab_{i,j}\bbb_{i,j}^\T}_{\Cb_{t,j}} +\underbrace{\text{Mat}_j(\mathcal{E}_t)}_{\Eb_{t,j}}\,,~~~~j\in[m]\,,
	\end{equation}
	where $	\bbb_{i,j}=\ab_{i,m}\otimes\cdots\otimes \ab_{i,j+1}\otimes \ab_{i,j-1}\otimes\cdots\otimes \ab_{i,1} \in \mathbb{R}^{d_{\mminus j}}$. Write $\Ab_j=(\ab_{1,j},\ldots,\ab_{r,j})$, $\Bb_j=(\bbb_{1,j},\ldots,\bbb_{r,j})$, and  $\Xb_t = \textup{diag}(w_1f_{t,1},\ldots,w_r f_{t,r})$. Then model \eqref{unfold X} can be rewritten as the matrix CP-factor model considered in \cite{chang2023modelling}:
   \begin{equation}\label{eq:matrixform}
       \Yb_{t,j} = \Ab_j \Xb_{t} \Bb_j^{\T} + \Eb_{t,j}\,, ~~~~j\in[m]\,.
   \end{equation}
We assume $\textup{rank}(\Ab_j) = r$ for $j \in [m]$, which is also required in
\cite{han2024cp}. 
Under this assumption, Lemma \ref{pro:rank-B} in the supplementary material  shows that $\Bb_1,\ldots,\Bb_m$ also have full column rank.

We could estimate $\ab_{i,j}$  by the method proposed in \cite{chang2023modelling}. However, for the tensors with more than two modes, the matrix $\Bb_j$ in \eqref{eq:matrixform}  involves a Kronecker product structure. Direct application of the method in \cite{chang2023modelling} would discard this structural information, leading to substantial statistical inefficiency. Meanwhile, the two-stage estimation approach of \cite{chang2023modelling} introduces cross-step plug-in errors, which significantly complicates statistical inference for $\ab_{i,j}$ in high-dimensional settings.  In this paper, we propose a new one-pass estimation method that only requires the eigen-decomposition of a $d_j \times d_j$ matrix to obtain estimates of $\ab_{1,j},\ldots,\ab_{r,j}$. This novel one-pass design eliminates cross-stage plug-in errors, and further motivates an iterative estimation procedure introduced in Section \ref{sec: Double projection iterations}. Moreover, simulation studies in Section \ref{sec:numerical} show that even for matrix-variate cases, our newly proposed methods perform significantly better than the procedure in \cite{chang2023modelling}.
    
\section{Methodology}\label{sec: methodology}
 
\subsection{One-pass estimation of the factor loading vectors}\label{sec: initial}
 
    Let $\xi_{t}$ be a linear combination of the components of $\mathcal{Y}_t$. For any $k\ge 1$ and $t \ge k+1$, we define $\bXi_{t,k,j}=\mathbb{E}[\{\Yb_{t,j}-\mathbb{E}(\bar\Yb_j)\}\{\xi_{t-k}-\mathbb{E}(\bar\xi)\}]$ with $\bar \Yb_j=n^{-1}\sum_{t=1}^n\Yb_{t,j}$ and $\bar\xi=n^{-1}\sum_{t=1}^n\xi_{t}$. Given observations $\{\mathcal{Y}_t\}_{t=1}^n$, for any $k\geq1$, we write
 \begin{equation}\label{Xi and Sigma Y}
 \bSigma_{\Yb_j,\xi}(k)=\frac{1}{n-k}\sum_{t=k+1}^n\bXi_{t,k,j}\,,  
	\end{equation}
and let  $\Gb_{k,\xi}=\textup{diag}(g_{k,1,\xi},\ldots,g_{k,r,\xi})$ be an $r\times r$ diagonal matrix with
	\begin{equation}\label{gki}
		g_{k,i,\xi}=\frac{1}{n-k}\sum_{t=k+1}^n w_i\mathbb{E}[\{f_{t,i}-\mathbb{E}(\bar f_i)\}\{\xi_{t-k}-\mathbb{E}(\bar\xi)\}]\,,
	\end{equation}
	where $\bar f_i=n^{-1}\sum_{t=1}^n f_{t,i}$. Then $\bSigma_{\Yb_j,\xi}(k) = \Ab_j \Gb_{k,\xi} \Bb_j^{\T}$. For each given $j \in [m]$, by singular value decomposition of $\Bb_j$,  there exist a $d_{\mminus j} \times r$ column-orthogonal matrix $\Qb_j$ and an $r\times r$ invertible matrix $\Vb_j$ such that $\Bb_j=\Qb_j\Vb_j$. If  $\textup{rank}(\Gb_{1,\xi}) = r = \textup{rank}(\Gb_{2,\xi})$, we  define 
 \begin{align}\label{Kbj}
		\Kb_{1,2,j}&=\bSigma_{\Yb_j,\xi}(1)\Qb_j \{\Qb_j^\T\bSigma_{\Yb_j,\xi}(2)^\T\bSigma_{\Yb_j,\xi}(2)\Qb_j \}^{-1}\Qb_j^\T\bSigma_{\Yb_j,\xi}(2)^\T\,. 
	\end{align}
Since $\bSigma_{\Yb_j,\xi}(k) = \Ab_j \Gb_{k,\xi} \Bb_j^{\T}$,  we have $\Kb_{1,2,j} = \Ab_j\Gb_{1,\xi}\Gb_{2,\xi}^{-1}(\Ab_j^\T\Ab_j)^{-1}\Ab_j^\T$, which can be used to identify $\Ab_j$. Write $\bar\lambda_i=g_{2,i,\xi}^{-1}g_{1,i,\xi}$ with $g_{k,i,\xi}$ defined in \eqref{gki}. For each given $j\in[m]$, since $(\Ab_j^\T\Ab_j)^{-1}\Ab_j^\T\Ab_j=\Ib_r$, then
	$
	\Kb_{1,2,j}\ab_{i,j}=\bar\lambda_i\ab_{i,j}$ for any $i\in[r]$, 
	which implies that, as long as $\bar\lambda_1,\ldots,\bar\lambda_r$ are distinct, $\ab_{1,j},\ldots,\ab_{r,j}$ can be identified uniquely up to the reflection and permutation indeterminacy by solving the eigen-equation
\begin{equation}\label{generalized eigenequation}
		\Kb_{1,2,j}\ab=\lambda\ab\,.
	\end{equation}

	In practice, $\bSigma_{\Yb_j,\xi}(k)$ and $\Qb_j$ in \eqref{Kbj} are unknown. Given observations $\{\mathcal{Y}_t\}_{t=1}^n$, in the spirit of \cite{bickel2008covariance}, we can estimate $\bSigma_{\Yb_j,\xi}(k)$ by
\begin{equation}\label{hat Sigma kj}
	\tilde\bSigma_{k,j}=T_{\delta_1}\{\tilde\bSigma_{\Yb_j,\xi}(k)\}~~\textrm{with}~~\tilde\bSigma_{\Yb_j,\xi}(k)=\frac{1}{n-k}\sum_{t=k+1}^n(\Yb_{t,j}-\bar\Yb_j)(\xi_{t-k}-\bar\xi)\,,
	\end{equation}
where $T_{\delta_1}(\cdot)$ is a thresholding  operator, i.e., $[T_{\delta_1}(\Wb)]_{i,j} = W_{i,j}\, I(|W_{i,j}| \ge \delta_1)$ for any matrix $\Wb = (W_{i,j})$ with the threshold level $\delta_1 \ge 0$. To estimate $\Qb_j$, define 
    \begin{equation}\label{tilde Mj}
		\mathbf{M}_{j}=\sum_{k=1}^K \bSigma_{\Yb_j,\xi}(k)^\T\bSigma_{\Yb_j,\xi}(k) 
	\end{equation}
 for some predetermined integer $K\ge 1$. Under Assumption \ref{error} and  $\textup{rank}(\Ab_{j}) = r = \textup{rank}(\Gb_{1,\xi})$, we have $\mathbf{M}_{j} = \Bb_j (\sum_{k=1}^K\Gb_{k,\xi}\Ab_j^\T\Ab_j\Gb_{k,\xi})\Bb_j^\T$ with $\textup{rank}(\mathbf{M}_{j}) = r$,  which implies the columns of $\Qb_j$ are in the linear space spanned by the $r$ orthonormal eigenvectors of $\mathbf{M}_{j}$ corresponding to its $r$ largest eigenvalues. Let $\tilde{r}$ be a consistent estimate of $r$, which will be specified in Section \ref{sec:tuning}. Then we select $\tilde\Qb_j$ as a $d_{\mminus j} \times \tilde{r}$ matrix of which the columns are the $\tilde{r}$ orthonormal eigenvectors of $\tilde{\mathbf{M}}_{j} =\sum_{k=1}^K \tilde\bSigma_{k,j}^\T\tilde\bSigma_{k,j}$ corresponding to its $\tilde{r}$ largest eigenvalues.
By plugging $\tilde\Qb_j$ and $\tilde\bSigma_{k,j}$ into \eqref{Kbj}, we can estimate $\Kb_{1,2,j}$ by
	\begin{equation}\label{tilde K21j}
		\tilde\Kb_{1,2,j}=\tilde\bSigma_{1,j}\tilde\Qb_j(\tilde\Qb_j^\T\tilde\bSigma_{2,j}^\T\tilde\bSigma_{2,j}\tilde\Qb_j)^{-1}\tilde\Qb_j^\T\tilde\bSigma_{2,j}^\T\,.
	\end{equation} 
	Let $\tilde\lambda_{i,j}$ and $\tilde\ab_{i,j}$ be the $i$-th largest eigenvalue (in terms of absolute value) and the associated eigenvector of $\tilde\Kb_{1,2,j}$, respectively.  Note that $\Kb_{1,2,j}$ and $\tilde\Kb_{1,2,j}$ are in general nonsymmetric matrices. Although \eqref{generalized eigenequation} indicates that the eigenvectors of $\Kb_{1,2,j}$ are real vectors, those of the estimator $\tilde\Kb_{1,2,j}$ are not guaranteed to always be real vectors in practice.  If $\tilde\ab_{i,j}$ is a complex vector, we replace it by $\operatorname{Re}(\tilde\ab_{i,j})/|\operatorname{Re}(\tilde\ab_{i,j})|_2$,  which has a negligible effect on the consistency of the estimator.  The loading matrix $\Ab_j=(\ab_{1,j},\ldots,\ab_{r,j})$ is then estimated by $(\tilde{\ab}_{1,j},\ldots,\tilde{\ab}_{\tilde{r},j})$. When $\tilde r=r$, for each given $j\in[m]$, Theorem \ref{thm: aij} in Section \ref{sec: iterative theorem} shows that the proposed one-pass estimator $\{\tilde\ab_{i,j}\}_{i\in[\tilde r]}$ is consistent to $\{\ab_{i,j}\}_{i\in[r]}$ up to the reflection and permutation indeterminacy.

\subsection{Double projection estimation for the factor loading vectors}\label{sec: Double projection iterations} 
The one-pass estimation procedure entails thresholding the large $d_j\times d_{\mminus j}$ matrix $\tilde\bSigma_{\Yb_j,\xi}(k)$ in  \eqref{hat Sigma kj}, where the thresholding errors of all the elements in the matrix will accumulate.  Moreover, its performance depends on the choice of the linear combination $\xi_t$.  To address these issues, we introduce a novel double projection iteration method that achieves high accuracy without relying on the uncorrelated factor assumption or the near-orthogonality condition on factor loadings required in \cite{han2024cp}. More specifically, when a consistent initial estimator is available, by projecting the tensor-valued data into lower dimensions,  it will suffice to perform thresholding on a  $d_j$-dimensional vector rather than a large $d_j\times d_{\mminus j}$ matrix. Meanwhile, the initial estimator can be leveraged to construct a specific linear combination of 
 $\mathcal{Y}_t$, denoted by $\tilde{\xi}_t$, to further reduce the estimation error.

Write $(\bbb_{1,j}^{\MP},\ldots,\bbb_{r,j}^{\MP})^{\T}=(\Bb_j^\T\Bb_j)^{-1}\Bb_j^{\T}$. When $\bbb_{i,j}^{\MP}$ is given, it follows from \eqref{eq:matrixform} that, for each $j\in[m]$, the $d_j\times d_{\mminus j}$ matrix $\Yb_{t,j}$ can be projected into the $d_j$-dimensional vector
	\begin{equation}\label{ytij}
		\yb_{t,i,j}=\Yb_{t,j}\bbb_{i,j}^{\MP}=w_if_{t,i}\ab_{i,j}+\eb_{t,i,j}\,,
	\end{equation}
	where $\eb_{t,i,j}=\Eb_{t,j}\bbb_{i,j}^{\MP}$  with $\Eb_{t,j}$ specified in \eqref{unfold X}. Then, it reduces to a standard vector-variate factor model with only one factor and a much lower dimension.  Given $\xi_t$, a linear combination of $\mathcal{Y}_t$, for each  $i \in [r]$ and $j \in [m]$, similarly to \eqref{Xi and Sigma Y}, we let 
\begin{equation*}\label{Sigma small y}	
\bSigma_{\yb_{i,j},\xi}(1)=\frac{1}{n-1}\sum_{t=2}^n\mathbb{E}[\{\yb_{t,i,j}-\mathbb{E}(\bar\yb_{i,j})\}\{\xi_{t-1}-\mathbb{E}(\bar \xi)\}]\,,
\end{equation*}
where  $\bar\yb_{i,j}=n^{-1}\sum_{t=1}^n\yb_{t,i,j}$.
Under Assumption \ref{error} and $\min_{i \in [r]}|g_{1,i,\xi}| > 0$ for $g_{1,i,\xi}$ defined in \eqref{gki}, it holds that $ \bSigma_{\yb_{i,j},\xi}(1)=g_{1,i,\xi}\ab_{i,j}$ and $ \bSigma_{\yb_{i,j},\xi}(1)/|\bSigma_{\yb_{i,j},\xi}(1)|_2 \in \{ \ab_{i,j}, -\ab_{i,j}\}$. In practice, $\bbb_{i,j}^{\MP}$ is unknown. Based on the estimates $\{\tilde\ab_{i,j}\}_{i\in[\tilde{r}],j\in[m]}$, we can plug them into the definition of $\Bb_j$ to obtain $\tilde{\Bb}_j$, the estimate of $\Bb_j$. Set
$(\tilde\bbb_{1,j}^{\MP},\ldots, \tilde\bbb_{\tilde r,j}^{\MP})^{\T}
= (\tilde\Bb_j^\T\tilde\Bb_j)^{-1}\tilde\Bb_j^{\T}$ and define
	\[
    \begin{split}
\tilde\yb_{t,i,j}=\Yb_{t,j}\tilde\bbb_{i,j}^{\MP}= \underbrace{w_if_{t,i}\ab_{i,j}(\bbb_{i,j}^\T\tilde\bbb_{i,j}^{\MP})}_{\textup{``target" factor}} +\underbrace{\textstyle{\sum\nolimits_{\ell\ne i}}w_{\ell}f_{t,\ell}\ab_{\ell, j}(\bbb_{\ell, j}^\T\tilde\bbb_{i,j}^{\MP})}_{\textup{``noisy" factors}}+\underbrace{\Eb_{t,j}\tilde\bbb_{i,j}^{\MP}}_{\textup{error term}}\,, ~ i\in[\tilde{r}],\, j\in[m]\,.
    \end{split}
	\]
Based on a similar projection, \cite{han2024cp} update their estimators using the eigenvector of  $(n-k)^{-1}\sum_{t=k+1}^n(\tilde\yb_{t-k,i,j}\tilde\yb_{t,i,j}^\T+\tilde\yb_{t,i,j}\tilde\yb_{t-k,i,j}^\T)$ associated with the largest eigenvalue for some $k\ge 1$. However, the ``target" and ``noisy"  factors will interact with each other in their procedure. Therefore, they require the assumption of uncorrelated factors, i.e., $\mathbb{E}(f_{t,i}f_{t-k,j})=0$ for all $i\neq j$ and $k\ge 1$, to ensure the iteration works. When the factors are correlated, the iterative method in \cite{han2024cp} becomes inefficient, as shown in Figure \ref{fig:iter-step} in Section \ref{sec:numerical}. This motivates us to explore a new approach.

To reduce the effect of the ``noisy" factors for estimating $\ab_{i,j}$, we need to involve a new linear combination of $\mathcal{Y}_t$, denoted by $\tilde{\xi}_{t,i}$, such that $\tilde{\xi}_{t-1,i}$ is correlated to the ``target" factor $f_{t,i}$ but almost uncorrelated to the ``noisy'' factors $f_{t,\ell}$ for $\ell \ne i$. To this end, we first estimate the factor series. Notice that $(\ab_{i,m}^{\MP} \otimes \cdots \otimes \ab_{i,1}^{\MP})^{\T}\textup{vec}(\mathcal{Y}_t)=w_i f_{t,i}+(\ab_{i,m}^{\MP} \otimes \cdots \otimes \ab_{i,1}^{\MP})^{\T}\textup{vec}(\mathcal{E}_t)$, where $(\ab_{1,j}^{\MP},\ldots,\ab_{r,j}^{\MP})^{\T}=(\Ab_j^\T \Ab_j)^{-1}{\Ab}_j^{\T}$. We therefore estimate $w_i f_{t,i}$ by $\check f_{t,i}=(\tilde\ab_{i,m}^{\MP} \otimes \cdots \otimes \tilde\ab_{i,1}^{\MP})^{\T}\textup{vec}(\mathcal{Y}_t)$, where $(\tilde\ab_{1,j}^{\MP},\ldots,\tilde\ab_{\tilde r,j}^{\MP})^{\T}=(\tilde\Ab_j^\T\tilde\Ab_j)^{-1}\tilde{\Ab}_j^{\T}$ with $\tilde\Ab_j=(\tilde\ab_{1,j},\ldots,\tilde\ab_{\tilde r,j})$. Standardize the series and write
\begin{equation}\label{eq: standardized series}
    \tilde f_{t,i}=  (\check f_{t,i}- \bar{\check{f}}_{i})/\tilde{\sigma}_{\check f,i}\,,
\end{equation}
where $ \bar{\check{f}}_{i} = n^{-1}\sum_{t=1}^n \check f_{t,i}$ and $\tilde{\sigma}^2_{\check f,i} = (n-1)^{-1}\sum_{t=1}^n(\check f_{t,i}-\bar{\check{f}}_{i})^2$. If $\tilde r=1$, we let $\tilde \xi_{t,i}=\tilde f_{t,i}$ for $t\in[n]$ and $i\in[\tilde r]$. If $\tilde r\ge 2$,
 let $\tilde\fb_{i}=(\tilde f_{1,i},\ldots,\tilde f_{n-1,i})^\T$  and $\tilde\Fb_{\mminus i}$ be a $(n-1)\times (\tilde r-1)$ matrix of which the columns are composed of $(\tilde f_{2,\ell},\ldots,\tilde f_{n,\ell})^\T$ for  $\ell \ne i$. We project $\tilde\fb_{i}$ into the complementary space of $\tilde\Fb_{\mminus i}$ and obtain
\begin{equation}\label{eq: tilde xi}
    (\tilde\xi_{1,i},\ldots,\tilde\xi_{n-1,i})^{\T} = \{\Ib_{n-1}-\tilde\Fb_{\mminus i}(\tilde\Fb_{\mminus i}^\T\tilde\Fb_{\mminus i})^{-1}\tilde\Fb_{\mminus i}^{\T}\}\tilde\fb_{i}\,,
\end{equation}
where $\{\tilde\xi_{t,i}\}_{t=1}^{n-1}$ satisfies $ \sum_{t=2}^n \tilde\xi_{t-1,i} \tilde f_{t,\ell} = 0$ for $\ell\ne i$. Define
	\begin{equation*}\label{eq:tildebSigmatildeybij}
	\tilde\bSigma_{\tilde\yb_{i,j},\tilde\xi_{i}}(1) =\frac{1}{n-1}\sum_{t=2}^n(\tilde\yb_{t,i,j}-\bar{\tilde\yb}_{i,j})\tilde\xi_{t-1,i}
	\end{equation*}
    with $\bar{\tilde\yb}_{i,j} = n^{-1}\sum_{t=1}^n\tilde\yb_{t,i,j}$.
 Considering that the loading vector $\ab_{i,j}$ may be sparse,  we can update the estimator $\tilde\ab_{i,j}$ with $T_{\delta_{2,j}}\{\tilde\bSigma_{\tilde\yb_{i,j},\tilde\xi_{i}}(1)\}/|T_{\delta_{2,j}}\{\tilde\bSigma_{\tilde\yb_{i,j},\tilde\xi_{i}}(1)\}|_2$, where $\delta_{2,j}\geq0$ is the threshold level.
Such a double projection refinement can be naturally designed into an iterative procedure, as shown in  Algorithm \ref{alg1}. When $\tilde r=r$, for each given $j\in[m]$, Theorem \ref{thm: iterative} in Section \ref{sec: iterative theorem}  shows that the iterative estimator $\{\hat\ab_{i,j}\}_{i\in[\tilde r]}$ obtained in Algorithm \ref{alg1} is consistent to $\{\ab_{i,j}\}_{i\in[r]}$ up to the reflection and permutation indeterminacy.

   	\begin{algorithm}[H] 
       \footnotesize
		\caption{Double projection iterations for estimating $\{\Ab_j\}_{j=1}^m$}\label{alg1}
		\begin{algorithmic}[1]
			\Require tensor observations $\{\mathcal{Y}_t\}_{t=1}^n$,  number of factors $\tilde r$, initial estimates $\tilde\Ab^{(0)}_j = (\tilde\ab^{(0)}_{1,j},\ldots,\tilde\ab^{(0)}_{\tilde r,j}  )$ for $j \in [m]$, maximal number of iterations $L$, upper error bound $\epsilon_0$, threshold levels $\{\delta_{2,j}\}_{j=1}^m$.
			\Ensure Iterative estimates $\{\hat\Ab_j\}_{j=1}^m$
			\State  (\textbf{Initialization})  $\{ (\tilde\ab_{1,j}^{(0)})^\MP,\ldots,(\tilde\ab_{\tilde r,j}^{(0)})^\MP\}^{\T} \gets \{(\tilde\Ab^{(0)}_j)^\T\tilde\Ab^{(0)}_j\}^{-1}(\tilde\Ab^{(0)}_j)^{\T}$ for $j\in[m]$, $v=1$, $\epsilon^{(0)}=2\epsilon_0$;
			
			\While {$v \le L$ and $\epsilon^{(\textit{v}-1)}>\epsilon_0$}
			     \For {$j=1$ to $m$}
			     \For {$t=1$ to $n$, $i=1$ to $\tilde r$}
			     
			        \STATE $
			     \check f^{(\textit{v},j)}_{t,i}  \gets \{(\tilde\ab_{i,m}^{(\textit{v}-1)})^{\MP} \otimes \cdots \otimes (\tilde\ab_{i,j}^{(\textit{v}-1)})^{\MP} \otimes (\tilde\ab_{i,j-1}^{(\textit{v})})^{\MP} \otimes \cdots \otimes   (\tilde\ab_{i,1}^{(\textit{v})})^{\MP}\}^{\T}\textup{vec}(\mathcal{Y}_t)\,$;
			     
			        \STATE replace $\check f_{t,i}$ in \eqref{eq: standardized series} with $\check f^{(\textit{v},j)}_{t,i}$ to obtain the  standardized factors $\tilde f^{(\textit{v},j)}_{t,i}$;
			     
			     \EndFor
			     
			     \STATE $\tilde\bbb^{(\textit{v})}_{i,j} \gets \tilde\ab^{(\textit{v}-1)}_{i,m}  \otimes \cdots\otimes \tilde\ab^{(\textit{v}-1)}_{i,j+1}  \otimes \tilde\ab^{(\textit{v})}_{i,j-1}  \otimes \cdots \otimes \tilde\ab^{(\textit{v})}_{i,1}$,  $i \in [\tilde r]$;
			     
			     \STATE $\tilde \Bb^{(\textit{v})}_j \gets  (\tilde{\bbb}^{(\textit{v})}_{1,j} ,\ldots,\tilde{\bbb}^{(\textit{v})}_{\tilde r,j})$, $\{(\tilde\bbb^{(\textit{v})}_{1,j})^{\MP} ,\ldots,(\tilde\bbb^{(\textit{v})}_{\tilde r,j})^{\MP}\}^{\T}\gets\{(\tilde\Bb^{(\textit{v})}_j)^\T \tilde\Bb^{(\textit{v})}_j \}^{-1}(\tilde\Bb^{(\textit{v})}_j)^{\T} $;

			     \For {$i=1$ to $\tilde r$}
			     
			     \STATE if $\tilde r=1$, $\tilde \xi_{t,i}^{(\textit{v},j)}\gets \tilde f^{(\textit{v},j)}_{t,i},t\in[n],i\in[\tilde r]$;  if $\tilde r\ge 2$, replace $\tilde f_{t,i}$ in \eqref{eq: tilde xi} with $\tilde f^{(\textit{v},j)}_{t,i}$ to obtain $\{\tilde\xi^{(\textit{v},j)}_{t,i}\}_{t=1}^{n-1}$;
			     
			     \STATE $\tilde\yb^{(\textit{v})}_{t,i,j} \gets\Yb_{t,j}(\tilde\bbb_{i,j}^{(\textit{v})})^{\MP}$ for $t\in[n]$;
              
              \STATE $\tilde\bSigma^{(\textit{v},j)}_{\tilde\yb_{i,j},\tilde\xi_{i}}(1) \gets(n-1)^{-1}\sum_{t=2}^n\{\tilde\yb^{(\textit{v})}_{t,i,j}-n^{-1}\sum_{s=1}^n\tilde\yb^{(\textit{v})}_{s,i,j}\}\tilde\xi^{(\textit{v},j)}_{t-1,i}$;
		
		\STATE $			     	\tilde\ab^{(\textit{v})}_{i,j}\gets T_{\delta_{2,j}}\{\tilde\bSigma^{(\textit{v},j)}_{\tilde\yb_{i,j},\tilde\xi_{i}}(1)\} / |T_{\delta_{2,j}}\{\tilde\bSigma^{(\textit{v},j)}_{\tilde\yb_{i,j},\tilde\xi_{i}}(1) \}|_2$;

			   	 \EndFor  
			     \State  $ \tilde\Ab^{(\textit{v})}_j \gets  (\tilde\ab^{(\textit{v})}_{1,j} ,\ldots,\tilde\ab^{(\textit{v})}_{\tilde r,j} )$,  $\{(\tilde\ab^{(\textit{v})}_{1,j})^{\MP} ,\ldots,(\tilde\ab^{(\textit{v})}_{\tilde r,j})^{\MP}\}^{\T} \gets \{(\tilde\Ab^{(\textit{v})}_j)^\T\tilde\Ab^{(\textit{v})}_j\}^{-1}(\tilde\Ab^{(\textit{v})}_j)^{\T}$;
			     
			     \EndFor
			     
			     \STATE $\epsilon^{(\textit{v})}\gets \max_{j\in[m]}\max_{\ell\in[\tilde r]}\min_{i\in[\tilde r]}\{1-|(\tilde\ab^{(\textit{v})}_{i,j})^\T\tilde\ab^{(\textit{v} - 1)}_{\ell, j}|^2\}$, $v\gets v+1$;
            \EndWhile
            
            \STATE $\hat\ab_{i,j}\gets \tilde\ab^{(\textit{v})}_{i,j} $, 	$\hat\Ab_j\gets (\hat\ab_{1,j},\ldots,\hat\ab_{\tilde r,j})$ for $i\in[\tilde r]$ and $j\in[m]$.

		\end{algorithmic}      
	\end{algorithm}

\subsection{Inference procedure for the factor loading vectors}\label{sec:inference procedure}
To introduce the main idea of our inference procedure based on $\{\hat\ab_{i,j}\}_{i\in[\tilde{r}]}$, we assume $\hat{\ab}_{i,j}$ is consistent to $\ab_{i,j}$ for each $i\in[\tilde{r}]$ to simplify the notation. We consider a function $\Rb_{i,j}^*(\cdot): \mathbb{R}^{d_j}\rightarrow\mathbb{R}^{d_j}$ defined as 
$$
\Rb_{i,j}^*(\ab)= \bSigma_{\yb_{i,j},\xi}(1) - \{\ab^{\T}\bSigma_{\yb_{i,j},\xi}(1)\}\ab\,,\quad\ab \in \mathbb{R}^{d_j}\,.
$$
Since $\bSigma_{\yb_{i,j},\xi}(1) = g_{1,i,\xi}\ab_{i,j}$ and $ | \ab_{i,j} |_2 = 1$, we have $\Rb^*_{i,j}(\ab_{i,j}) = {\bf 0}$. 
For any deterministic vector $\hb\in\mathbb{R}^{d_j}$, under some regularity conditions, it follows from the Taylor expansion that 
\begin{align}\label{eq: debias taylor iterative}
        \hb^{\T}\bigg\{\frac{\partial \Rb^*_{i,j}(\ab_{i,j})}{\partial \ab^\T}\bigg\}^{-1}\Rb_{i,j}^*(\hat\ab_{i,j})&= \hb^{\T}\bigg\{\frac{\partial \Rb^*_{i,j}(\ab_{i,j})}{\partial \ab^\T}\bigg\}^{-1}\Rb_{i,j}^*(\ab_{i,j})+\hb^{\T}(\hat\ab_{i,j}-\ab_{i,j})\,\\
                &\quad +O(|\hb|_2|\hat\ab_{i,j}-\ab_{i,j}|_2^2)\,. \nonumber
\end{align}
Since $\Rb^*_{i,j}(\ab_{i,j}) = {\bf 0}$, 
 we have
  \begin{equation}\label{eq: iterative debis expansion 0}
     		\hb^{\T}(\hat\ab_{i,j} -\ab_{i,j}) = \hb^{\T}\bigg\{\frac{\partial \Rb^*_{i,j}(\ab_{i,j})}{\partial \ab^\T}\bigg\}^{-1} \Rb^*_{i,j}(\hat\ab_{i,j})+O(|\hb|_2|\hat\ab_{i,j}-\ab_{i,j}|_2^2)\,.
 \end{equation}
Therefore, the asymptotic representation of $\hb^\T\hat{\ab}_{i,j}$ is primarily driven by the leading term on the right-hand side of \eqref{eq: iterative debis expansion 0}. However, directly deriving the asymptotic distribution of this term creates significant difficulties. The thresholding technique involved in defining $T_{\delta_{2,j}}\{\tilde\bSigma^{(\textit{v},j)}_{\tilde\yb_{i,j},\tilde\xi_{i}}(1)\}$ in Algorithm \ref{alg1} introduces additional bias whose impact is difficult to characterize, making the derivation of an asymptotic distribution intractable. To guarantee a tractable asymptotic distribution,  we construct a quantity $\hat{\bvartheta}_{i,j}$ to account for the bias induced by thresholding, and consider the asymptotic distribution of $\hb^\T (\hat\ab_{i,j} -\ab_{i,j}- \hat{\bvartheta}_{i,j})$, where
\begin{equation}\label{eq:iterative debias}
    \hb^{\T}(\hat\ab_{i,j} -\ab_{i,j}-\hat{\bvartheta}_{i,j})=\hb^{\T}\bigg\{\frac{\partial \Rb^*_{i,j}(\ab_{i,j})}{\partial \ab^\T}\bigg\}^{-1} \Rb^*_{i,j}(\hat\ab_{i,j})-\hb^{\T}\hat{\bvartheta}_{i,j}+O(|\hb|_2|\hat\ab_{i,j}-\ab_{i,j}|_2^2)\,.
\end{equation}
Notice that
\begin{align}\label{eq: iterative debis expansion}
        		\hb^{\T}\bigg\{\frac{\partial \Rb^*_{i,j}(\ab_{i,j})}{\partial \ab^\T}\bigg\}^{-1} \Rb^*_{i,j}(\hat\ab_{i,j})&= \hb^{\T}\bigg[\frac{\{\hat\ab_{i,j}^{\T}\bSigma_{\yb_{i,j},\xi}(1)\}\hat\ab_{i,j}-\bSigma_{\yb_{i,j},\xi}(1)}{\ab_{i,j}^{\T}\bSigma_{\yb_{i,j},\xi}(1)}\bigg]\, \\
            &\quad +O(|\hb|_2|\hat\ab_{i,j}-\ab_{i,j}|_2^2)\,. \nonumber
\end{align}
To obtain a tractable asymptotic distribution in \eqref{eq:iterative debias}, 
we construct the bias-correction term $ \hat{\bvartheta}_{i,j}$ based on the leading term on the right-hand side of 
\eqref{eq: iterative debis expansion}. Specifically, we replace the unknown quantities 
$\ab_{i,j}$ and $\bSigma_{\yb_{i,j},\xi}(1)$ with their plug-in estimators 
$\hat\ab_{i,j}$ and 
$\tilde\bSigma^{(\textit{v}_{\max},j)}_{\tilde\yb_{i,j},\tilde\xi_i}(1)$, respectively, 
where $\textit{v}_{\max}$ denotes the stopping iteration of Algorithm~\ref{alg1}.
This leads to the following estimator 
\[
 \hat{\bvartheta}_{i,j}=
 \frac{
 \{\hat\ab_{i,j}^{\T} \tilde\bSigma^{(\textit{v}_{\max},j)}_{\tilde\yb_{i,j},\tilde\xi_i}(1)\}
 \hat\ab_{i,j}
 - \tilde\bSigma^{(\textit{v}_{\max},j)}_{\tilde\yb_{i,j},\tilde\xi_i}(1)
 }{
 \hat\ab_{i,j}^{\T} \tilde\bSigma^{(\textit{v}_{\max},j)}_{\tilde\yb_{i,j},\tilde\xi_i}(1)
 }\,.
\]  
Theorem \ref{thm: debias iterative} in Section \ref{sec: iterative theorem} shows that 
 $\sqrt{n}\{w_{i}\bar{\tau}^{-1}_{i,j}(\hb)\} \hb^\T (\hat\ab_{i,j} -\ab_{i,j}- \hat{\bvartheta}_{i,j})$ is asymptotically standard normal for $\bar{\tau}_{i,j}(\hb)$ specified in \eqref{iterative not degenerate}.

Finally, we provide two estimators of the asymptotic variance $w_i^{-2}\bar{\tau}_{i,j}^2(\hb)$ so that statistical inference based on the iterative estimator can be implemented in practice. The estimation of $w_i^{-2} \bar{\tau}_{i,j}^{2}(\hb)$ is essentially a long-run variance estimation problem. Section~\ref{sec:variance iter} in the supplementary material provides an estimator $\hat{w}^{-2}_{i,j}\tilde{\tau}^{2}_{i,j}(\hb)$ for $w_i^{-2} \bar{\tau}_{i,j}^{2}(\hb)$  based on the kernel-type long-run variance estimator $\tilde{\tau}^{2}_{i,j}(\hb)$. The consistency of such kernel-type long-run variance estimator is well-known. See, for example, \cite{andrews1991heteroskedasticity} and \cite{chang2018confidence}. Therefore,
\begin{equation}\label{eq:normality-plugin-longrun}
   \sqrt{n}\, \{\hat w_{i,j}\tilde\tau^{-1}_{i,j}(\hb) \} \hb^\T(\hat\ab_{i,j}-\ab_{i,j}-\hat{\bvartheta}_{i,j})
   \overset{{\rm d}}{\rightarrow}
   \mathcal{N}(0,1)\,.
\end{equation}
Furthermore, if the error process $\{\mathcal{E}_t\}_{t\ge 1}$ is independent of the factor process $\{\mathbf{f}_t\}_{t\ge 1}$, the asymptotic variance $w_i^{-2}\bar{\tau}_{i,j}^2(\hb)$ admits a simple form, which motivates a plug-in estimation method. Section~\ref{sec:variance iter} in the supplementary material further provides such plug-in estimator $\hat{w}^{-2}_{i,j}\hat{\tau}^{2}_{i,j}(\hb)$ for $w_i^{-2}\bar{\tau}_{i,j}^2(\hb)$.
Theorem \ref{thm: estimation of iteration variance} in the supplementary material establishes the consistency of this plug-in estimator, and hence
\begin{equation}\label{eq:normality-plugin}
   \sqrt{n}\, \{\hat w_{i,j}\hat\tau^{-1}_{i,j}(\hb) \} \hb^\T(\hat\ab_{i,j}-\ab_{i,j}-\hat{\bvartheta}_{i,j})
   \overset{{\rm d}}{\rightarrow}
   \mathcal{N}(0,1)\,.
\end{equation}
The simulation results in Table~\ref{table:var-iter-all} in the supplementary material further demonstrate the effectiveness of the proposed estimators for the asymptotic variance.

\subsection{Selection of tuning parameters}\label{sec:tuning}

There are some tuning parameters that need to be determined in our proposed methods. The key quantities include  the number of factors $r$ specified in 
\eqref{model cp}, the linear combination  $\xi_t$ used to construct $\bSigma_{\Yb_j,\xi}(k)$ 
in \eqref{Xi and Sigma Y}, the lag parameter $K$ specified in \eqref{tilde Mj}, and two threshold levels: 
$\delta_1$, used in the one-pass estimation as defined in 
\eqref{hat Sigma kj}, and $\delta_{2,j}$, employed in the 
iterative procedure described in Algorithm \ref{alg1}.
Write $d_{\min}=\min_{j\in[m]}d_j$ and $\Yb = \{\text{vec}(\mathcal{Y}_1),\ldots,\text{vec}(\mathcal{Y}_n)\}^{\T} \in \mathbb{R}^{n \times \prod_{j=1}^m d_j}$. 

First, we determine the lag parameter $K$. As discussed in Remark 3 of \cite{chang2023modelling}, choosing a larger $K$ makes it more likely that the condition $\textup{rank}(\mathbf{M}_j)=r$ holds, since more lagged information is incorporated. On the other hand, as shown in Section 5.1 of \cite{chang2023modelling}, an excessively large $K$ may reduce the estimation accuracy of both the number of factors and the factor loading vectors. Balancing these two considerations, \cite{chang2023modelling} recommend choosing $K \le 10$ and show through simulations that the estimation performance is robust to the choice of $K$ within a moderate range. Our additional simulations, reported in Figures \ref{fig:Krobust-acc} and \ref{fig:Krobust-error} in the supplementary material, further support this recommendation. Specifically, the estimation accuracy improves as $K$ increases initially and then stabilizes, with almost no visible change once $K > 10$. Therefore, in practice, we recommend setting $K = 10$.

Second, we introduce how to determine $r$. When $\xi_t$ is specified, for given $\delta_1$ and $j\in[m]$, \cite{chang2023modelling} employ the eigenvalue-ratio (ER) method to estimate $r$ in the matrix CP-factor model ($m = 2$):
\begin{equation}\label{hat rj}
\tilde r^{(\textup{er})}_j(\delta_1) =\arg\min_{1 \le i \le \lfloor 0.5d_{\min} \rfloor}\frac{ \sigma_{i+1}(\tilde{\mathbf{M}}_{j})+c_n}{ \sigma_i(\tilde{\mathbf{M}}_{j})+c_n}\,,\quad j\in[m]\,,
\end{equation}
where $c_n \rightarrow 0^{\MP}$ as $n\rightarrow\infty$, and $\tilde{\mathbf{M}}_{j}$ is the plug-in estimator of $\Mb_j$ specified above \eqref{tilde K21j}. Such defined ER method has also been used in \cite{chang2015high,chang2018principal,chang2025modeling} for solving other problems. In practice,  we can set $c_n=n^{-1}\hat\sigma_0^2$ with $\hat\sigma_0^2=(n\prod_{j=1}^m d_j)^{-1}\|\Yb\|_{\rm F}^2$. 
Notice that $\textup{rank}(\Mb_j) = r$, and Theorem 1 of \cite{chang2023modelling} implies that $\mathbb{P}\{\tilde r^{(\textup{er})}_j(\delta_1) = r\} \to 1$ as $n \to \infty$ under certain regularity conditions for each $j \in [m]$. For the tensor CP-factor model \eqref{model cp} with more than two modes ($m > 2$), to aggregate the information from the estimators $\tilde r^{(\textup{er})}_j(\delta_1)$ across all modes,  we may consider selecting $\tilde{r}$ as $\max_{j \in [m]}\tilde r^{(\textup{er})}_j(\delta_1)$. 
 However, when the factor loading vectors or the factor processes are highly correlated, the largest eigenvalue of $\tilde{\mathbf{M}}_{j}$ may be inflated relative to the remaining eigenvalues. As pointed out by \cite{brown1989number}, this may lead to the so-called ``one-factor bias'', under which the conventional ER method tends to favor a one-factor model even when the true number of factors is larger than one. This phenomenon is particularly pronounced when the sample size $n$ is small and is also consistent with the simulation results reported in Table \ref{table: rank} in Section \ref{sec:numerical}.
To avoid this issue, we suggest estimating $r$ by $\max_{j \in [m]}\tilde r_j^{(\log)}(\delta_1)$, where 
\begin{equation}\label{hat rj-log}
\tilde r^{(\text{log})}_j(\delta_1)=\arg\min_{1 \le i \le \lfloor 0.5d_{\min} \rfloor}\frac{ \log\{1+\sigma_{i+1}(\tilde{\mathbf{M}}_{j})\}+c_n}{ \log\{1+\sigma_{i}(\tilde{\mathbf{M}}_{j})\}+c_n}\,,\quad j\in[m]\,,
\end{equation}
with the same setting as in \eqref{hat rj}. Table \ref{table: rank} in Section \ref{sec:numerical} shows that the logarithmic eigenvalue-ratio (log-ER) method \eqref{hat rj-log} exhibits better finite-sample performance than the ER method \eqref{hat rj}. Specifically, when there is a high degree of correlation among factor loading vectors, the ER method tends to  underestimate the number of factors, whereas the log-ER method performs stably across all scenarios. Theorem~\ref{thm: factor number consistency} in the supplementary material establishes the consistency of the ER and log-ER estimators. Section~\ref{sec:misspecify r} in the supplementary material further examines the robustness of the proposed estimation procedures in Sections \ref{sec: initial} and \ref{sec: Double projection iterations} to misspecification of $r$.

Next, we consider how to select $\xi_t$. For the special case of the tensor CP-factor model \eqref{model cp} with $m = 2$, \cite{chang2023modelling} suggest selecting $\xi_t$ as the average of the principal components of $\{\text{vec}(\mathcal{Y}_t)\}_{t=1}^n$. Here, we propose a randomized projection approach to select $\xi_t$, which can be viewed as the extension of the method suggested by  \cite{chang2023modelling}. 
 For a prescribed integer $p > 1$, perform PCA on $\Yb$ and then obtain the first $p$ principal components, denoted by $\{\tilde{\eta}_{t,1},\ldots,\tilde{\eta}_{t,p}\}_{t=1}^n$. We then randomly generate a set of $p \times p$ orthonormal matrices $\bOmega^{(1)},\ldots,\bOmega^{(M)}$ and define $\xi^{(l)}_t = p^{-1} \sum_{i = 1}^{p}\eta^{(l)}_{t,i}$ with $(\eta^{(l)}_{t,1},\ldots,\eta^{(l)}_{t,p})^{\T} = \bOmega^{(l)}(\tilde{\eta}_{t,1},\ldots,\tilde{\eta}_{t,p})^{\T}$. Our goal is to choose the optimal candidate from $\xi^{(1)}_t,\ldots,\xi^{(M)}_{t}$ as the final $\xi_t$. For each $l \in [M]$ and a given $\breve r\geq r$, we obtain $\tilde{\ab}_{i,j}(l)$ in the same manner as $\tilde{\ab}_{i,j}$ defined in Section \ref{sec: initial} for $i \in [\breve{r}]$ and $j \in [m]$ but with replacing $(\xi_t,\tilde{r})$ by $(\xi^{(l)}_t,\breve{r})$. 
 Our guiding principle is to choose the index 
 $l$ for which the associated estimates $\{\tilde{\ab}_{i,j}(l)\}_{i\in[\breve{r}],j\in[m]}$ are most similar to the other estimates $\{\tilde{\ab}_{i,j}(\tilde{l})\}_{i\in[\breve{r}],j\in[m],\tilde{l}\neq l}$. 
 For any $l\in[M]$, we consider the measure 
$$D(l) = \sum_{\tilde{l} \neq l}\sum_{i = 1}^{\breve{r}} I\bigg\{\max_{j\in[m]}\min_{\ell \in[\breve{r}]}[1-|\{\tilde{\ab}_{i,j}(l)\}^\T \tilde{\ab}_{\ell,j}(\tilde{l}) |^2] < \varepsilon  \bigg\}
\,,$$
where $\varepsilon>0$ is a prescribed distance threshold. The measure $D(l)$ quantifies the similarity between $\{\tilde{\ab}_{i,j}(l)\}_{i\in[\breve{r}],\,j\in[m]}$ and $\{\tilde{\ab}_{i,j}(\tilde{l})\}_{i\in[\breve{r}],\,j\in[m]}$ with $\tilde{l}\neq l$, where larger values of $D(l)$ indicate higher similarity.  We then select $\xi_t$ as $\xi^{(l^*)}_t$ with $l^* = \arg\max_{l \in [M]} D(l)$. In practice, we set $p = 10$, $\breve r = 2 \tilde{r}^{*}$, $M = 50$ and  $\varepsilon = 0.1$, where $\tilde{r}^{*}$ is an initial estimate of $r$ obtained via the log-ER method \eqref{hat rj-log} with $\xi_t$ selected using the approach proposed in \cite{chang2023modelling}.

Finally, we determine the threshold levels $\delta_{1}$ and $\{\delta_{2,j}\}_{j = 1}^m$. 
Let $\tilde v^{(\textup{log})}_j(\delta_1)$ be the minimal ratio in \eqref{hat rj-log} corresponding to the $j$-th mode for a given $\delta_1$.
We can select $\delta_1$ as
\[
\delta_1 = \arg\min_{0<\delta < 0.1\hat\sigma_0 (n^{-1}\sum_{j = 1}^m\log d_j)^{1/2}} \frac{1}{m}\sum_{j=1}^m \tilde v^{(\textup{log})}_j(\delta)\,.
\]
Additionally, we suggest setting $\delta_{2,j} = \tilde C_* \hat\sigma_0 (n^{-1}\log d_j)^{1/2}$ with some prescribed constant $\tilde C_* \ge 0$. Extensive simulation studies demonstrate that the performance of Algorithm \ref{alg1} introduced in Section \ref{sec: Double projection iterations} with such selected $\delta_{2,j}$ is robust with respect to $\tilde C_* \in [0,1]$. We therefore recommend setting $\tilde C_* = 1$ in practice.



\section{Numerical studies}\label{sec:numerical}

We generate the observations $\{\mathcal{Y}_t\}_{t=1}^n$ via the tensor CP-factor model \eqref{model cp}. 
For each $j \in [m]$, we generate $\Ab^*_j = (\ab^*_{1,j},\ldots,\ab^*_{r,j}) \in \mathbb{R}^{d_j \times r}$ with elements drawn independently from the uniform distribution $U(-1,1)$, subject to the restriction $\text{rank}(\Ab^*_j) = r$, and let $\breve{\ab}_{1,j} = \ab^*_{1,j}$ and $\breve{\ab}_{i,j} = \ab^*_{i,j} + \phi \ab^*_{i-1,j}$ for $2 \le i \le r$. For each $i \in [r]$ and $j\in[m]$, we obtain $\bar{\ab}_{i,j}$ based on $\breve{\ab}_{i,j}$ by randomly setting its $\lfloor s d_j \rfloor$ components to be zero, and let $\ab_{i,j} = \bar{\ab}_{i,j}/|\bar{\ab}_{i,j}|_2$. Here, the parameters $s$ and $\phi$, respectively, control the sparsity of $\ab_{i,j}$ and the correlations among $(\ab_{1,j},\ldots,\ab_{r,j})$. We generate $\{f^*_{t,i}\}_{t = 1}^n$ for $i\in [r]$ as $r$ independent AR(1) sequences, i.e.\ $f^*_{t,i} = \beta_i f^*_{t-1,i} + v_{t,i}$, where the innovations $v_{t,i}$ are independently drawn from the standard normal distribution $\mathcal{N}(0,1)$, and let $(f_{t,1},\ldots,f_{t,r})^{\T} = \Jb^{1/2}(f^*_{t,1},\ldots,f^*_{t,r})^{\T}$, where $\Jb$ is an $r \times r$ matrix with $[\Jb]_{i,j} = I(i  = j) + \rho I(i \neq j)$. Here, the parameter $\rho$ governs the correlation among the factor processes. The elements of the error term sequence $\{\mathcal{E}_t\}_{t=1}^n$ are independently drawn from $\mathcal{N}(0,1)$. We set $m=2$ (matrix time series), $r=3$ (three factors), $w_i = 15$, $d_j = 20$, $\beta_i = 0.85 - 0.05i$, $n \in \{400,800\}$, $s \in \{0,0.3,0.6\}$, $\phi \in \{0.25,0.75\}$ and $\rho \in \{0,0.75\}$. We follow the methods described in Section \ref{sec:tuning} to select the tuning parameters involved in our proposed methods. 

Table \ref{table: rank} compares the performance of two estimation methods (the ER estimator and the log-ER estimator) introduced in Section \ref{sec:tuning} and the unfolded eigenvalue-ratio (Unfolded-ER) estimator considered in \cite{chen2026estimation} for estimating $r$. We can find that log-ER outperforms ER in estimating $r$ across all scenarios, and that, except for the case $(\rho,\phi,s) = (0.75,0.75,0)$, the performance of log-ER is comparable to that of Unfolded-ER. When $(\rho,\phi,s) = (0.75,0.75,0)$, both ER and Unfolded-ER tend to underestimate $r$, whereas log-ER still maintains high accuracy in estimating $r$.

\begin{table}[htbp]
\caption{
Relative frequency estimates of $\mathbb{P}(\tilde{r} < r)$, $\mathbb{P}(\tilde{r} = r)$ and $\mathbb{P}(\tilde{r} > r)$  with $\tilde{r}$ determined by the ER estimator \eqref{hat rj}, the log-ER estimator \eqref{hat rj-log} and the Unfolded-ER estimator based on 2000 repetitions. All numbers reported below are multiplied by 100.
}
\renewcommand\tabcolsep{2pt}
\label{table: rank}
\footnotesize
\centering
\begin{tabular}{c|c|c|c|ccc|ccc|ccc}
\hline\hline
\multirow{2}{*}{$\rho$} & \multirow{2}{*}{$\phi$} & \multirow{2}{*}{$s$} & \multirow{2}{*}{$n$} & \multicolumn{3}{c|}{\textbf{log-ER}}                                                    & \multicolumn{3}{c|}{\textbf{ER}}                                                        & \multicolumn{3}{c}{\textbf{Unfolded-ER}}                                                  \\ \cline{5-13} 
                        &                         &                      &                      & $\mathbb{P}(\tilde{r} < r)$ & $\mathbb{P}(\tilde{r} = r)$ & $\mathbb{P}(\tilde{r} > r)$ & $\mathbb{P}(\tilde{r} < r)$ & $\mathbb{P}(\tilde{r} = r)$ & $\mathbb{P}(\tilde{r} > r)$ & $\mathbb{P}(\tilde{r} < r)$ & $\mathbb{P}(\tilde{r} = r)$ & $\mathbb{P}(\tilde{r} > r)$ \\ \hline
\multirow{12}{*}{0}     & \multirow{6}{*}{0.25}   & \multirow{2}{*}{0}   & 400                  & 0.25                        & 99.75                       & 0.00                        & 6.30                        & 93.70                       & 0.00                        & 0.00                        & 100.00                      & 0.00                        \\
                        &                         &                      & 800                  & 0.65                        & 99.35                       & 0.00                        & 7.55                        & 92.45                       & 0.00                        & 0.00                        & 100.00                      & 0.00                        \\ \cline{3-13} 
                        &                         & \multirow{2}{*}{0.3} & 400                  & 0.40                        & 99.60                       & 0.00                        & 4.80                        & 95.20                       & 0.00                        & 0.00                        & 100.00                      & 0.00                        \\
                        &                         &                      & 800                  & 0.40                        & 99.60                       & 0.00                        & 6.85                        & 93.15                       & 0.00                        & 0.00                        & 100.00                      & 0.00                        \\ \cline{3-13} 
                        &                         & \multirow{2}{*}{0.6} & 400                  & 0.10                        & 99.90                       & 0.00                        & 2.80                        & 97.20                       & 0.00                        & 0.00                        & 100.00                      & 0.00                        \\
                        &                         &                      & 800                  & 0.55                        & 99.45                       & 0.00                        & 6.05                        & 93.95                       & 0.00                        & 0.00                        & 100.00                      & 0.00                        \\ \cline{2-13} 
                        & \multirow{6}{*}{0.75}   & \multirow{2}{*}{0}   & 400                  & 4.55                        & 95.45                       & 0.00                        & 44.60                       & 55.40                       & 0.00                        & 0.00                        & 100.00                      & 0.00                        \\
                        &                         &                      & 800                  & 1.60                        & 98.40                       & 0.00                        & 39.05                       & 60.95                       & 0.00                        & 0.00                        & 100.00                      & 0.00                        \\ \cline{3-13} 
                        &                         & \multirow{2}{*}{0.3} & 400                  & 0.60                        & 99.40                       & 0.00                        & 14.05                       & 85.95                       & 0.00                        & 0.00                        & 100.00                      & 0.00                        \\
                        &                         &                      & 800                  & 0.35                        & 99.65                       & 0.00                        & 11.85                       & 88.15                       & 0.00                        & 0.00                        & 100.00                      & 0.00                        \\ \cline{3-13} 
                        &                         & \multirow{2}{*}{0.6} & 400                  & 0.10                        & 99.90                       & 0.00                        & 4.25                        & 95.75                       & 0.00                        & 0.00                        & 100.00                      & 0.00                        \\
                        &                         &                      & 800                  & 0.20                        & 99.80                       & 0.00                        & 5.20                        & 94.80                       & 0.00                        & 0.00                        & 100.00                      & 0.00                        \\ \hline
\multirow{12}{*}{0.75}  & \multirow{6}{*}{0.25}   & \multirow{2}{*}{0}   & 400                  & 0.10                        & 99.90                       & 0.00                        & 5.30                        & 94.70                       & 0.00                        & 0.10                        & 99.90                       & 0.00                        \\
                        &                         &                      & 800                  & 0.00                        & 100.00                      & 0.00                        & 1.25                        & 98.75                       & 0.00                        & 0.00                        & 100.00                      & 0.00                        \\ \cline{3-13} 
                        &                         & \multirow{2}{*}{0.3} & 400                  & 0.00                        & 100.00                      & 0.00                        & 2.05                        & 97.95                       & 0.00                        & 0.00                        & 100.00                      & 0.00                        \\
                        &                         &                      & 800                  & 0.00                        & 100.00                      & 0.00                        & 0.80                        & 99.20                       & 0.00                        & 0.00                        & 100.00                      & 0.00                        \\ \cline{3-13} 
                        &                         & \multirow{2}{*}{0.6} & 400                  & 0.15                        & 99.85                       & 0.00                        & 1.75                        & 98.25                       & 0.00                        & 0.05                        & 99.95                       & 0.00                        \\
                        &                         &                      & 800                  & 0.00                        & 100.00                      & 0.00                        & 0.95                        & 99.05                       & 0.00                        & 0.00                        & 100.00                      & 0.00                        \\ \cline{2-13} 
                        & \multirow{6}{*}{0.75}   & \multirow{2}{*}{0}   & 400                  & 13.80                       & 86.20                       & 0.00                        & 71.20                       & 28.80                       & 0.00                        & 66.25                       & 33.75                       & 0.00                        \\
                        &                         &                      & 800                  & 0.55                        & 99.45                       & 0.00                        & 17.50                       & 82.50                       & 0.00                        & 37.90                       & 62.10                       & 0.00                        \\ \cline{3-13} 
                        &                         & \multirow{2}{*}{0.3} & 400                  & 0.65                        & 99.35                       & 0.00                        & 15.95                       & 84.05                       & 0.00                        & 7.40                        & 92.60                       & 0.00                        \\
                        &                         &                      & 800                  & 0.00                        & 100.00                      & 0.00                        & 2.55                        & 97.45                       & 0.00                        & 0.50                        & 99.50                       & 0.00                        \\ \cline{3-13} 
                        &                         & \multirow{2}{*}{0.6} & 400                  & 0.15                        & 99.85                       & 0.00                        & 3.45                        & 96.55                       & 0.00                        & 0.50                        & 99.50                       & 0.00                        \\
                        &                         &                      & 800                  & 0.00                        & 100.00                      & 0.00                        & 1.10                        & 98.90                       & 0.00                        & 0.05                        & 99.95                       & 0.00                        \\ \hline\hline
\end{tabular}
\end{table}

We also compare the performance of our proposed one-pass initial estimate (Pro.init) introduced in Section \ref{sec: initial} and iterative estimate (Pro.iter) introduced in Section \ref{sec: Double projection iterations} with the composite PCA method (cPCA) and High-Order Projection Estimator (HOPE) proposed by \cite{han2024cp}, the methods of Randomized Projection PCA (RP-PCA) and  Contemporary Covariance-based Iterative Simultaneous Orthogonalization (CC-ISO) proposed by \cite{chen2026estimation}, and the refined estimate for the matrix CP-factor model (RCP) proposed by \cite{chang2023modelling}. Notice that cPCA, RP-PCA and RCP are one-pass estimates, and HOPE and CC-ISO are iterative estimates. We set the tuning parameter $h = 1$ in cPCA and HOPE, as in the simulation studies of \cite{han2024cp}, and the tuning parameter $K = 10$ in RCP as suggested by \cite{chang2023modelling}. As shown in Section \ref{sec: addtional simulation results in main paper} in the supplementary material, our proposed methods are robust to the selection of $K$. For each method, the estimation error between the obtained estimates $\{\check{\ab}_{i,j}\}_{i\in[\tilde{r}],j\in[m]}$ and 
the true factor loading vectors $\{\ab_{i,j}\}_{i\in[r],j\in[m]}$ is measured by 
\begin{equation}\label{eq:estimation error}
 \psi^2(\{\check{\ab}_{i,j}\}_{i\in[\tilde{r}],j\in[m]}, \{\ab_{i,j}\}_{i\in[r],j\in[m]})  =  \max_{j\in[m]}\max_{\ell\in[r]}\min_{i\in[\tilde{r}]}(1-|\check{\ab}_{i,j}^\T \ab_{\ell, j} |^2)\,,
\end{equation}
where $\tilde{r}$ is the associated estimate of $r$. For methods without a dedicated procedure for estimating $r$, we substitute the value obtained from the log-ER estimator when implementing their methods. As shown in Table \ref{table:rf-all}, when $\rho = 0$, Pro.iter performs comparably to CC-ISO and significantly outperforms the other methods. When $\rho = 0.75$, both CC-ISO and HOPE exhibit poor performance, whereas Pro.iter remains effective across all scenarios. Moreover, Pro.init outperforms all other one-pass estimators in all scenarios. Given $\tilde{r} = r$, we further evaluate the iterative efficiency of Pro.iter against CC-ISO and HOPE. For Pro.iter, we consider three choices of the initialization: Pro.init, cPCA, and RP-PCA. As shown in Figure \ref{fig:iter-step}, the estimation errors of Pro.iter converge to nearly zero in very few iterations across different scenarios, irrespective of the initial estimates used. However, HOPE and CC-ISO  require more steps for iterative convergence. When $\rho=0.75$, the estimation errors of HOPE and CC-ISO cannot converge to zero even after a large number of iterations. This suggests that these two methods break down under such scenarios, whereas our iterative algorithm remains effective. 
Recall that $\rho$ measures the degree of correlation among factors, with larger values corresponding to stronger factor correlations. The simulation results demonstrate that HOPE and CC-ISO perform poorly in scenarios with highly correlated factors. Notice that HOPE proposed by \cite{han2024cp} explicitly requires the uncorrelated factor assumption, i.e. $\mathbb{E}(f_{t,i}f_{t-k,j})=0$ for all $i\neq j$ and $k\ge 1$, while our proposed methods do not rely on this assumption.

\begin{table} 
\scriptsize
\caption{
The averages and standard deviations (in parentheses) of the estimation errors \eqref{eq:estimation error} for different methods based on 2000 repetitions. Bold numbers indicate the smallest average estimation error among all competing methods. All numbers reported below are multiplied by 100.}

\centering
\renewcommand{\arraystretch}{1.15}
\setlength{\tabcolsep}{3pt}
\label{table:rf-all}
 
\begin{tabular}{c|c|c|c|ccc|cccc}
\hline\hline
\multirow{2}{*}{\textbf{$\rho$}} & \multirow{2}{*}{\textbf{$\phi$}} & \multirow{2}{*}{\textbf{$s$}} & \multirow{2}{*}{\textbf{$n$}} & \multicolumn{3}{c|}{\textbf{Iterative estimates}} & \multicolumn{4}{c}{\textbf{One-pass estimates}} \\ \cline{5-11}
 & & & & \textbf{Pro.iter} & \textbf{HOPE} & \textbf{CC-ISO} & \textbf{Pro.init} & \textbf{cPCA} & \textbf{RP-PCA} & \textbf{RCP} \\ \hline
\multirow{12}{*}{0} & \multirow{6}{*}{0.25} & \multirow{2}{*}{0} & 400 & \textbf{0.26} (4.47) & 0.67 (7.42) & 0.75 (7.67) & 4.44 (8.52) & 17.01 (17.16) & 19.28 (17.99) & 31.77 (38.92) \\
 & & & 800 & 0.63 (7.59) & 0.78 (8.34) & \textbf{0.39} (5.52) & 2.80 (8.82) & 14.40 (16.01) & 16.79 (16.27) & 27.06 (37.09) \\ \cline{3-11}
 & & \multirow{2}{*}{0.3} & 400 & \textbf{0.40} (5.87) & 0.93 (9.05) & 0.47 (6.43) & 4.23 (8.86) & 14.94 (17.15) & 16.54 (17.88) & 29.74 (38.40) \\
 & & & 800 & \textbf{0.36} (5.53) & 0.47 (6.23) & 0.48 (6.30) & 2.24 (6.81) & 11.72 (14.86) & 13.47 (15.81) & 26.22 (37.14) \\ \cline{3-11}
 & & \multirow{2}{*}{0.6} & 400 & \textbf{0.12} (2.89) & 0.82 (8.46) & 0.69 (7.64) & 3.46 (6.20) & 13.61 (17.82) & 14.85 (17.92) & 28.83 (37.90) \\
 & & & 800 & 0.50 (6.66) & 0.63 (7.46) & \textbf{0.48} (6.67) & 2.14 (7.65) & 9.99 (15.05) & 10.98 (15.13) & 25.11 (36.79) \\ \cline{2-11}
 & \multirow{6}{*}{0.75} & \multirow{2}{*}{0} & 400 & 1.55 (7.31) & 1.99 (7.91) & \textbf{0.53} (3.88) & 12.15 (15.88) & 32.07 (10.40) & 33.59 (10.33) & 52.67 (31.25) \\
 & & & 800 & \textbf{0.53} (4.39) & 1.35 (6.33) & 0.83 (5.01) & 5.09 (10.28) & 33.49 (10.27) & 34.49 (10.27) & 54.65 (31.55) \\ \cline{3-11}
 & & \multirow{2}{*}{0.3} & 400 & \textbf{0.45} (5.50) & 0.74 (6.73) & 0.54 (5.48) & 6.59 (11.90) & 25.73 (15.45) & 28.57 (14.94) & 40.10 (38.10) \\
 & & & 800 & \textbf{0.27} (4.26) & 0.57 (5.52) & 0.64 (5.76) & 2.92 (7.37) & 26.25 (14.52) & 30.18 (14.13) & 42.32 (39.31) \\ \cline{3-11}
 & & \multirow{2}{*}{0.6} & 400 & \textbf{0.12} (2.62) & 0.72 (7.72) & 0.37 (5.10) & 4.17 (7.60) & 17.99 (17.77) & 19.52 (17.66) & 31.90 (38.83) \\
 & & & 800 & \textbf{0.20} (4.13) & 0.40 (5.54) & 0.30 (4.73) & 2.10 (7.01) & 15.18 (16.13) & 17.47 (16.75) & 28.32 (37.64) \\ \hline
\multirow{12}{*}{0.75} & \multirow{6}{*}{0.25} & \multirow{2}{*}{0} & 400 & \textbf{0.37} (2.51) & 24.74 (37.81) & 27.41 (38.74) & 8.75 (13.39) & 48.35 (14.97) & 49.65 (14.48) & 22.31 (25.08) \\
 & & & 800 & \textbf{0.12} (0.05) & 23.64 (37.38) & 24.29 (37.29) & 4.50 (8.96) & 48.54 (14.26) & 48.89 (13.65) & 21.08 (25.36) \\ \cline{3-11}
 & & \multirow{2}{*}{0.3} & 400 & \textbf{0.22} (0.11) & 27.39 (39.77) & 29.57 (40.26) & 7.80 (12.18) & 49.57 (16.08) & 50.56 (15.15) & 21.85 (25.60) \\
 & & & 800 & \textbf{0.09} (0.04) & 29.66 (40.68) & 29.79 (40.63) & 3.58 (7.33) & 50.31 (15.64) & 51.27 (15.06) & 19.13 (23.97) \\ \cline{3-11}
 & & \multirow{2}{*}{0.6} & 400 & \textbf{0.32} (3.66) & 30.52 (41.50) & 31.97 (41.82) & 6.74 (10.93) & 51.31 (18.01) & 52.42 (16.77) & 20.94 (24.68) \\
 & & & 800 & \textbf{0.08} (0.03) & 32.94 (42.23) & 32.56 (41.97) & 3.15 (6.75) & 51.05 (17.24) & 52.10 (16.47) & 20.31 (25.42) \\ \cline{2-11}
 & \multirow{6}{*}{0.75} & \multirow{2}{*}{0} & 400 & \textbf{4.07} (9.69) & 6.57 (15.32) & 30.07 (21.34) & 21.52 (18.99) & 38.28 (8.65) & 42.07 (9.09) & 27.61 (20.44) \\
 & & & 800 & \textbf{0.34} (1.39) & 4.02 (14.52) & 18.10 (21.57) & 10.20 (13.27) & 38.59 (9.06) & 39.92 (8.32) & 23.73 (20.67) \\ \cline{3-11}
 & & \multirow{2}{*}{0.3} & 400 & \textbf{0.49} (2.52) & 13.68 (28.22) & 17.83 (30.08) & 12.23 (16.41) & 44.22 (10.84) & 45.84 (10.59) & 23.49 (23.76) \\
 & & & 800 & \textbf{0.12} (0.05) & 12.97 (27.65) & 14.70 (28.94) & 5.03 (8.92) & 44.37 (10.81) & 45.29 (10.61) & 20.89 (23.60) \\ \cline{3-11}
 & & \multirow{2}{*}{0.6} & 400 & \textbf{0.28} (2.41) & 22.73 (37.14) & 24.11 (37.76) & 8.39 (13.31) & 49.42 (14.94) & 50.52 (14.37) & 22.16 (25.61) \\
 & & & 800 & \textbf{0.09} (0.04) & 25.04 (38.18) & 25.42 (38.16) & 3.27 (6.39) & 50.41 (14.25) & 50.68 (13.61) & 19.13 (23.68) \\ \hline\hline
\end{tabular}
 
\end{table} 

\begin{figure}[htbp]
\centerline{\includegraphics[width= 12cm]{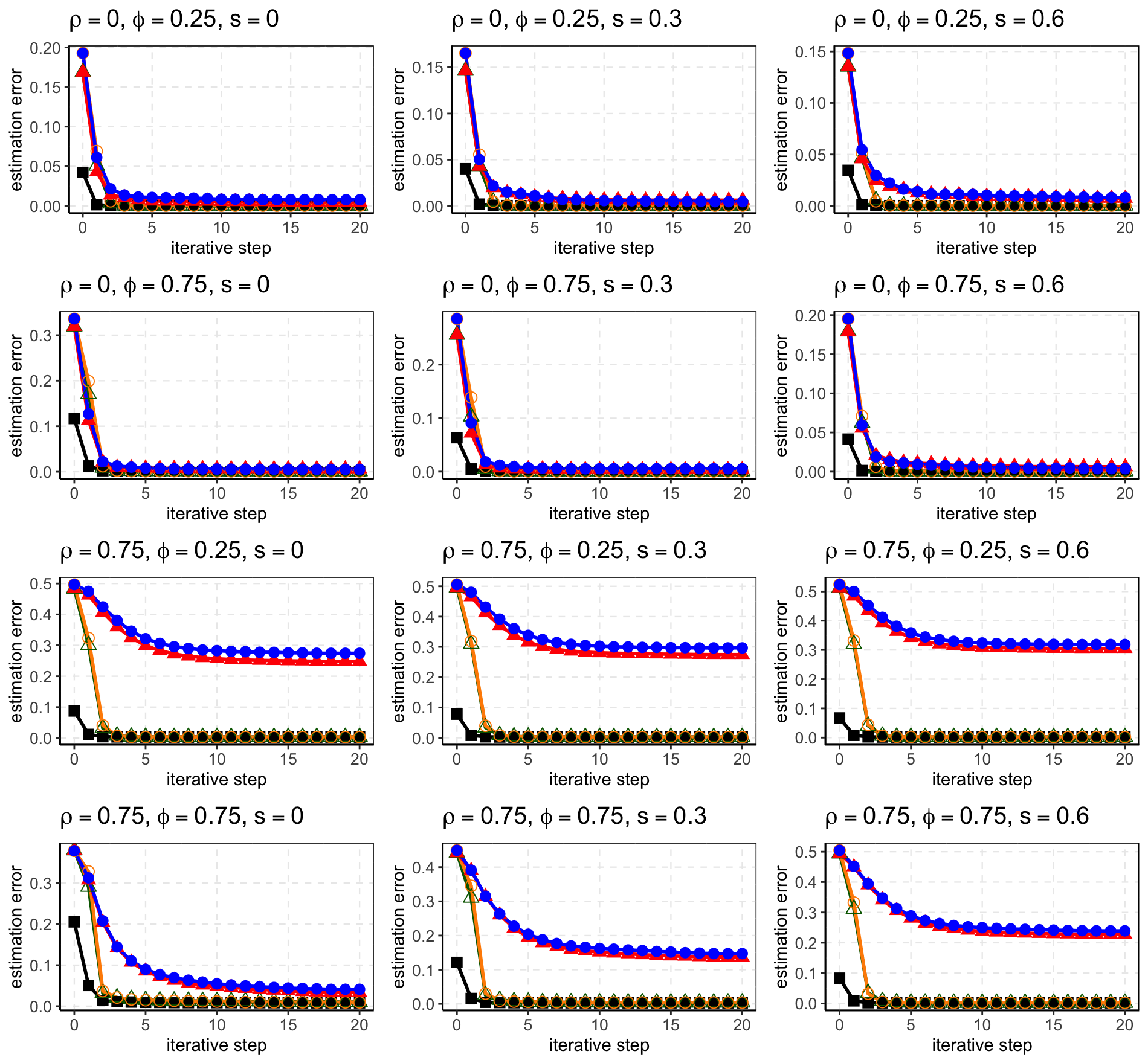}}
\caption{The lineplots for the averages of estimation errors \eqref{eq:estimation error} in the first 20 iterative steps based on 2000 repetitions. The sample size $n = 400$. The legend is defined as follows: (i) Pro.iter initialized with Pro.init (\color{black}{$-$\scalebox{0.75}{$\blacksquare$}$-$}), (ii) Pro.iter initialized with cPCA ($\color{green}{-\vartriangle-}$), (iii) Pro.iter initialized with RP-PCA ({\color{brown}$- \circ-$}), (iv) HOPE ($\color{red}{-\blacktriangle-}$), and (v) CC-ISO ($\color{blue}{-\bullet-}$).}
\label{fig:iter-step}
\end{figure}

We also evaluate the asymptotic normality \eqref{eq:normality-plugin-longrun} and \eqref{eq:normality-plugin} for the iterative estimator $\hat{\ab}_{i,j}$ obtained in Algorithm \ref{alg1} for two choices of $\hb$: (i) $\hb_{1} = (1,0,\ldots,0)^\T$ and (ii) $\hb_{2} = (d_j^{-1/2},\ldots, d_j^{-1/2})^\T$. It should be noted that there exists the reflection and permutation indeterminacy between the estimates and the true factor loadings. Here we set $(i,j) = (1,1)$ and  impose $z_1 = \arg\min_{i \in [\tilde r]}\{1 - |\ab_{1,1}^{\T}\hat{\ab}_{i,1}|^2\}$, thereby eliminating the reflection and permutation indeterminacy between $\hat{\ab}_{z_1,1}$ and $\text{sign}(\ab_{1,1}^{\T}\hat{\ab}_{z_1,1}) \cdot \ab_{1,1}$. We exclude the replications with $\tilde{r} \neq r$ to avoid outliers. 
 Figures \ref{fig:normality-h1-hist-iter-est} and \ref{fig:normality-h2-hist-iter-est} 
present  the histograms of $\{\hat w_{z_1,1}\hat \tau_{z_1,1}^{-1}(\hb_k)\}\sqrt{n}\,\hb_{k}^{\T}\{\hat{\ab}_{z_1,1}-\text{sign}(\ab_{1,1}^{\T}\hat{\ab}_{z_1,1})\cdot \ab_{1,1}-\hat{\bvartheta}_{z_1,1}\}$ for $k \in \{1,2\}$ based on 2000 repetitions, which verify the asymptotic normality of our iterative estimator based on the asymptotic variance estimation $\hat w_{z_1,1}^{-2} \hat\tau_{z_1,1}^{2}(\hb)$.  Figures 
\ref{fig:normality-h1-hist-iter-est-longrun} and 
\ref{fig:normality-h2-hist-iter-est-longrun} in the supplementary material also verify the asymptotic normality of our iterative estimator based on the asymptotic variance estimation 
$\hat w_{z_1,1}^{-2}\tilde\tau_{z_1,1}^{2}(\hb)$.

\begin{figure}[htbp]
\centerline{\includegraphics[width= 12cm]{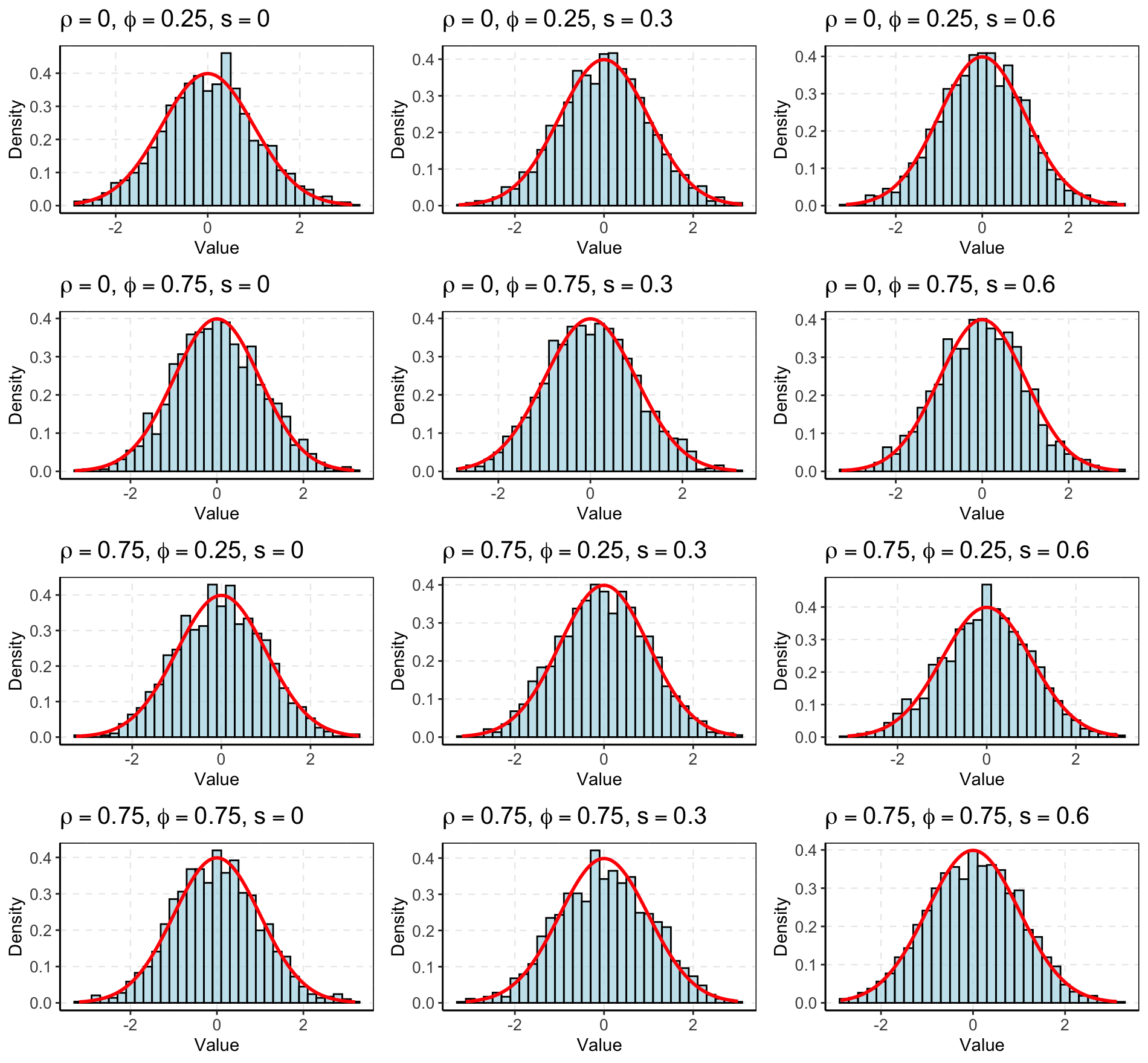}}
\caption{The histograms of $\{\hat{w}_{z_1,1} \hat \tau_{z_1,1}^{-1}(\hb_1)\} \sqrt{n}\, \hb^{\T}_{1}\{\hat{\ab}_{z_1,1} - \text{sign}(\ab_{1,1}^{\T}\hat{\ab}_{z_1,1} ) \cdot \ab_{1,1} - \hat{\bvartheta}_{z_1,1} \}$  based on 2000 repetitions. The sample size $n = 400$. The red curve plots the density of $\mathcal{N}(0,1)$.}
\label{fig:normality-h1-hist-iter-est}
\end{figure}

\begin{figure}[htbp]
\centerline{\includegraphics[width= 12cm]{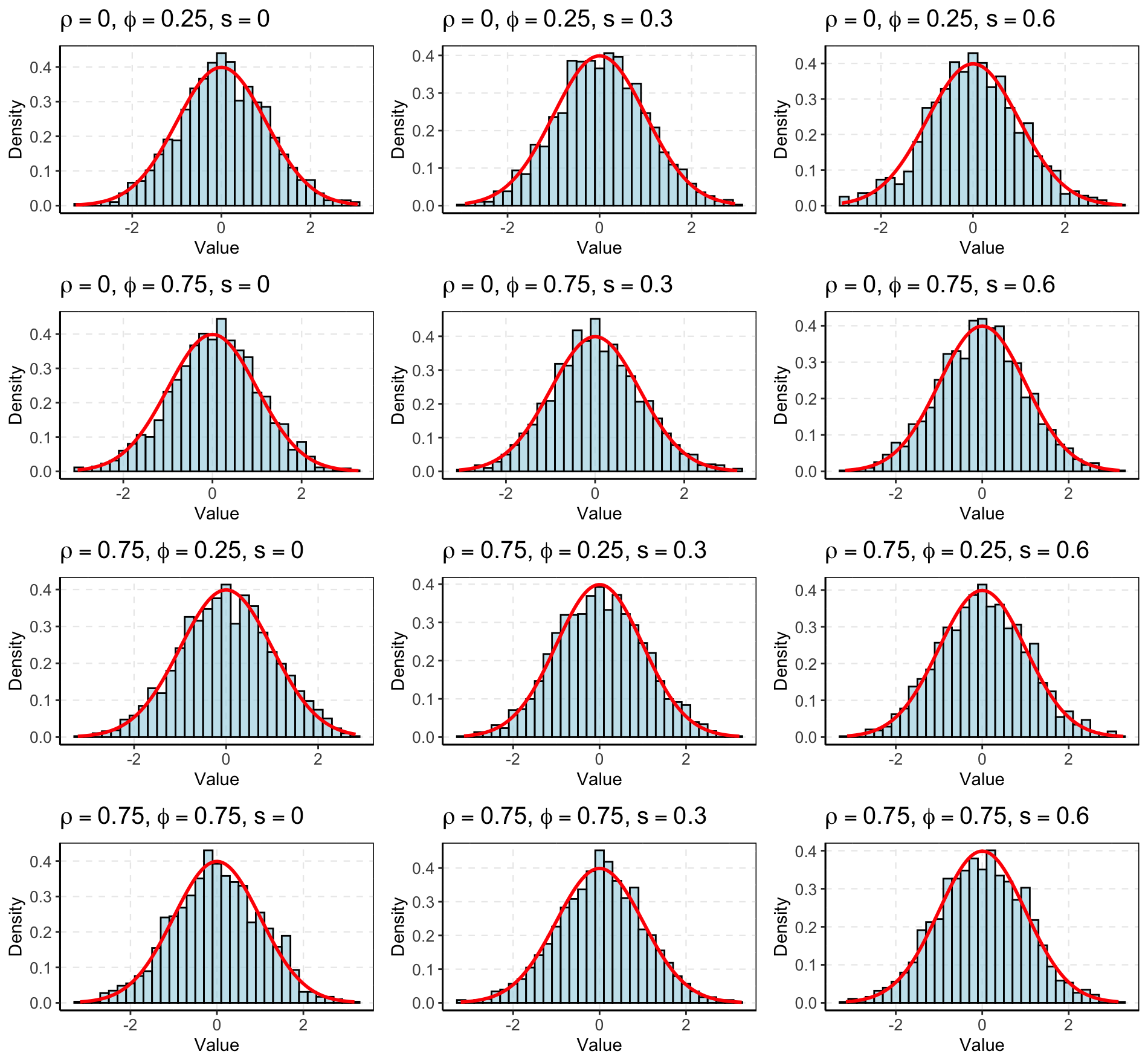}}
\caption{The histograms of $\{\hat{w}_{z_1,1} \hat \tau_{z_1,1}^{-1}(\hb_2)\} \sqrt{n}\, \hb^{\T}_{2}\{\hat{\ab}_{z_1,1} - \text{sign}(\ab_{1,1}^{\T}\hat{\ab}_{z_1,1} ) \cdot \ab_{1,1} - \hat{\bvartheta}_{z_1,1} \}$  based on 2000 repetitions. The sample size $n = 400$. The red curve plots the density of $\mathcal{N}(0,1)$.}
\label{fig:normality-h2-hist-iter-est}
\end{figure}

We finally evaluate the computational speed and cost of the proposed iterative  method initialized with our one-pass estimator. We fix $n=400$ and vary $(d_1,d_2) \in \{(20,20),(40,40),$ $(60,60),(80,80)\}$. 
Across multiple scenarios, we benchmark the proposed Pro.iter (initialized with Pro.init) against HOPE and CC-ISO in terms of runtime and peak RAM, accounting for both initialization and iterative-phase costs.
Figure~\ref{fig:runtime} plots the averages of runtime with standard deviation bands (shaded regions) for the proposed method, HOPE, and CC-ISO across different scenarios, based on 100 replications. Results for peak RAM are similar and can be found in Figure~\ref{fig:peak RAM} in the supplementary material. When the target tensor is low dimensional, the runtime and peak RAM of Pro.iter are comparable to HOPE and CC-ISO. As dimensionality increases, Pro.iter attains markedly shorter runtime and lower peak RAM usage than HOPE and CC-ISO.

\begin{figure}[htbp]
\centerline{\includegraphics[width= 12cm]{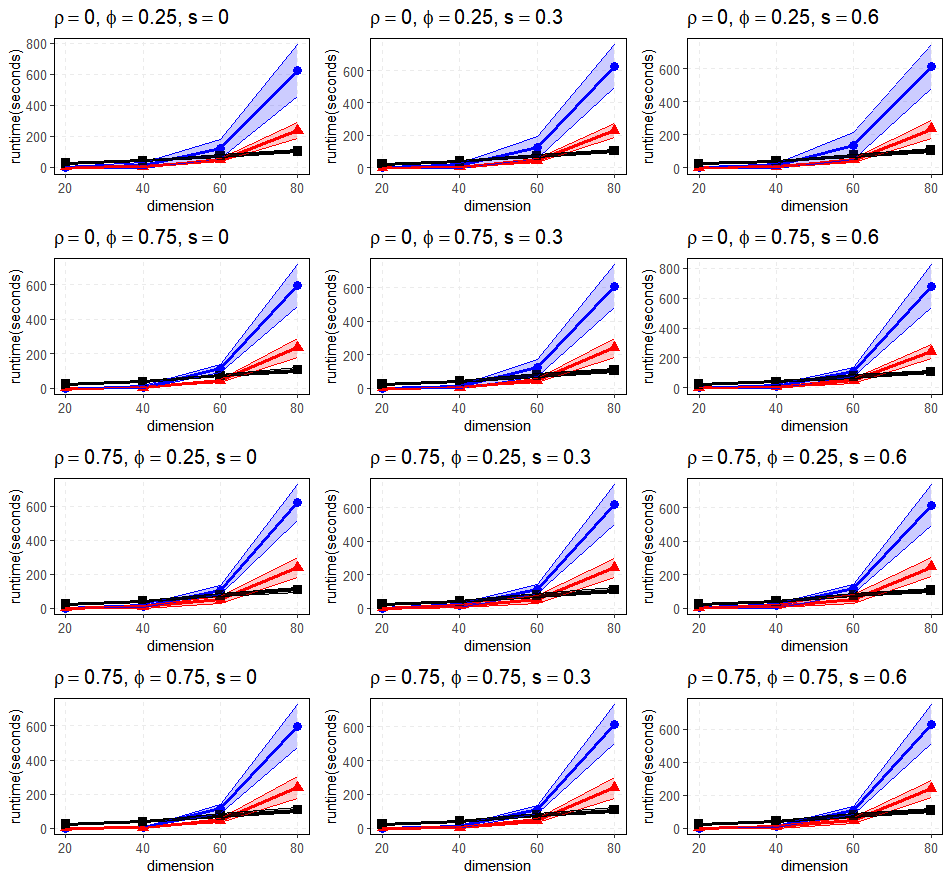}}
\caption{The lineplots for the averages and standard deviations (shaded region) of runtime based on 100 repetitions. The sample size $n = 400$. The legend is defined as follows: (i) Pro.iter initialized with Pro.init (\color{black}{$-$\scalebox{0.75}{$\blacksquare$}$-$}), (ii) HOPE ($\color{red}{-\blacktriangle-}$), and (iii) CC-ISO ($\color{blue}{-\bullet-}$).}
\label{fig:runtime}
\end{figure}



\section{Real data analysis: Air pollution data}\label{sec:application}
In this section, we analyze the spatio-temporal structure of air pollution in Beijing using a multi-dimensional representation of the monitoring data. This dataset contains six hourly air-pollution variables ($\textup{PM}_{2.5},\textup{PM}_{10},\textup{SO}_2,\textup{NO}_2,\textup{CO}$ and $\textup{O}_3$) from 12 nationally controlled air-quality monitoring stations in Beijing, which can be downloaded from \url{https://archive.ics.uci.edu/dataset/501/beijing+multi+site+air+quality+data}. 
The observation period spans from March 1, 2013, to February 28, 2017.

The dataset contains some missing and extreme values, which were handled through interpolation. Since our focus is on the spatio-temporal structure of air-pollution variations, we apply differencing to the hourly pollutant observations for each monitoring site and pollutant type. All series are standardized to remove the impact of different measurement scales. The aforementioned procedures result  in a tensor time series $\mathcal{Y}_t = (y_{t,\ell_1,\ell_2,\ell_3})_{12 \times 6 \times 24}$ for $t \in [1461]$ (i.e. $m=3,d_1 = 12, d_2 =6, d_3 = 24, n = 1461$), where $y_{t,\ell_1,\ell_2,\ell_3}$ records the concentration change of pollutant $\ell_2$ at station $\ell_1$ during the $\ell_3$-th hour of day $t$.  Figure \ref{fig:app-air-timeseries} in the supplementary material shows the time series plots of $\{\mathcal{Y}_t\}_{t=1}^n$.  This tensor representation enables the exploration of multi-way dependencies in Beijing’s air quality data, revealing how pollution intensity co-varies across space, time, and pollutant dimensions.



We use the tensor CP-factor model \eqref{model cp} to fit $\{\mathcal{Y}_t\}_{t=1}^n$, where $\ab_{i,j}$ represents the factor loading vector of the $i$-th factor in the $j$-th mode. To estimate the factor loading vectors based on our proposed method, we set the tuning parameters following Section~\ref{sec:tuning}. Using the proposed log-ER method, we obtain $\tilde{r} = 2$, indicating the presence of two latent factors. Initialized with the one-pass estimator introduced in Section~\ref{sec: initial}, the proposed iterative estimator in Algorithm~\ref{alg1} converges successfully.  

Table \ref{table:app-loading-a2} presents the estimations of the factor loadings $\ab_{i,2} \in \mathbb{R}^6$  based on Pro.iter, which reveal two main patterns of pollutant variation. The first loading vector ($i = 1$) has a very high value for O$_3$ (0.953) but small values for other pollutants, indicating that this factor mainly reflects changes in ozone concentration, which vary differently from other pollutants. 
The second loading vector ($i = 2$) has positive values for PM$_{2.5}$, PM$_{10}$, SO$_2$, NO$_2$, and CO, suggesting a common pollution pattern where several pollutants increase or decrease together. Therefore, we refer to the first estimated factor as the \textit{ozone-related factor}, 
which mainly captures variations driven by O$_3$, 
and the second as the \textit{general pollution factor}, 
representing the joint fluctuation of multiple pollutants.

\begin{table}[htbp]
 
\centering
\renewcommand{\arraystretch}{1.15}
\setlength{\tabcolsep}{6pt}

\caption{Estimations of  the loading vectors $\ab_{i,2}\in\mathbb{R}^6$ for the pollution-variable mode based on Pro.iter. Standard errors reported in parentheses  are calculated based on the asymptotic variance estimation $\hat{w}_{i,j}^{-2}\hat \tau_{i,j}^{2}(\hb)$. $^{*}$, $^{**}$, and $^{***}$ indicate significance at the levels 5\%, 1\%, and 1\textperthousand, respectively, based on two-sided $t$-tests. }
\label{table:app-loading-a2}
\begin{tabular}{c|cc}
\hline\hline
Pollutant & $i=1$ & $i=2$ \\
\hline
PM$_{2.5}$ & 0.008 (0.015) & 0.659$^{***}$ (0.035) \\
PM$_{10}$  & $-$0.021 (0.013) & 0.430$^{***}$ (0.025) \\
SO$_2$     & 0.049$^{**}$ (0.017) & 0.304$^{***}$ (0.036) \\
NO$_2$     & $-$0.236$^{***}$ (0.016) & 0.289$^{***}$ (0.053) \\
CO         & 0.182$^{***}$ (0.012) & 0.452$^{***}$ (0.031) \\
O$_3$      & 0.953$^{***}$ (0.002) & 0.009 (0.083) \\
\hline\hline
\end{tabular}
\end{table}
 




\begin{figure}[htbp]
    \centering
\subfigure[ozone-related factor $(i = 1)$]{\includegraphics[width=0.45\textwidth]{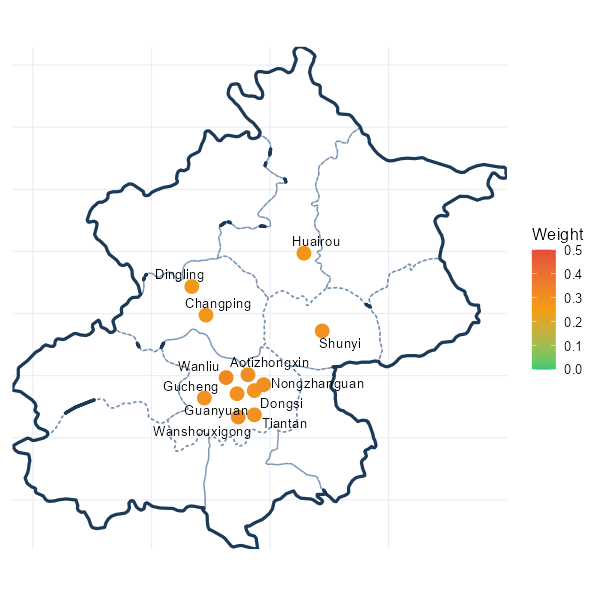}}
\subfigure[general pollution factor $(i = 2)$]{\includegraphics[width=0.45\textwidth]{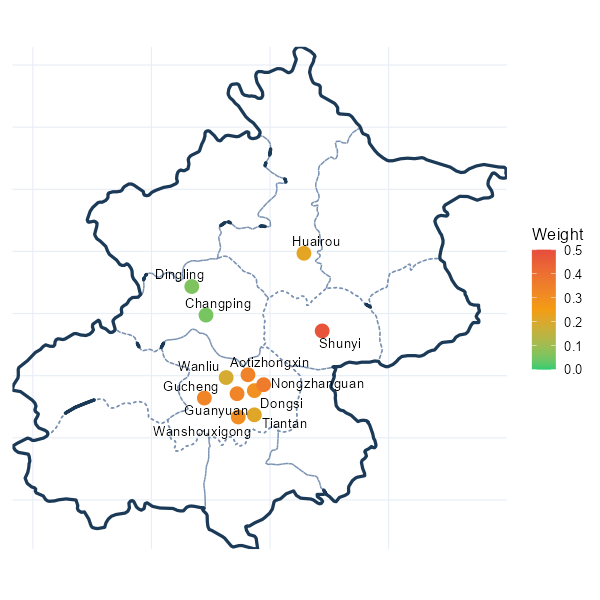}}
    \caption{Estimations of the  loading vectors $\ab_{i,1} \in \mathbb{R}^{12}$ for the monitoring-station mode based on Pro.iter.}
    \label{fig:beijing-station}
\end{figure}

Figure \ref{fig:beijing-station} illustrates the estimations of the factor loadings $\ab_{i,1} \in \mathbb{R}^{12}$ based on Pro.iter for the monitoring-station mode. Figure \ref{fig:beijing-station}(a) shows the factor loadings of the \textit{ozone-related factor} to be nearly uniform across stations, indicating the dominance of regional photochemistry and synoptic meteorology rather than local emissions. Figure \ref{fig:beijing-station}(b) reveals pronounced spatial heterogeneity for the \textit{general pollution factor}: Dingling and Changping (northern mountains) have small loadings due to sparse population, limited sources, and effective ventilation, whereas Shunyi (eastern plain, downwind of the urban core and influenced by airport, traffic, and industry) exhibits the largest loading, consistent with higher emissions and advective transport. Moreover, as reported in Table \ref{table:app-loading-a1-all} in the supplementary material, all estimated loadings based on Pro.iter for the \textit{ozone-related factor} are statistically significant, while for the \textit{general pollution factor}, all estimated loadings are statistically significant except those corresponding to Dingling and Changping. These results provide additional inferential support for the corresponding spatial interpretation. 

\begin{figure}[htbp]
 
    \centering
\subfigure[ozone-related factor $(i = 1)$]{\includegraphics[width=0.45\textwidth]{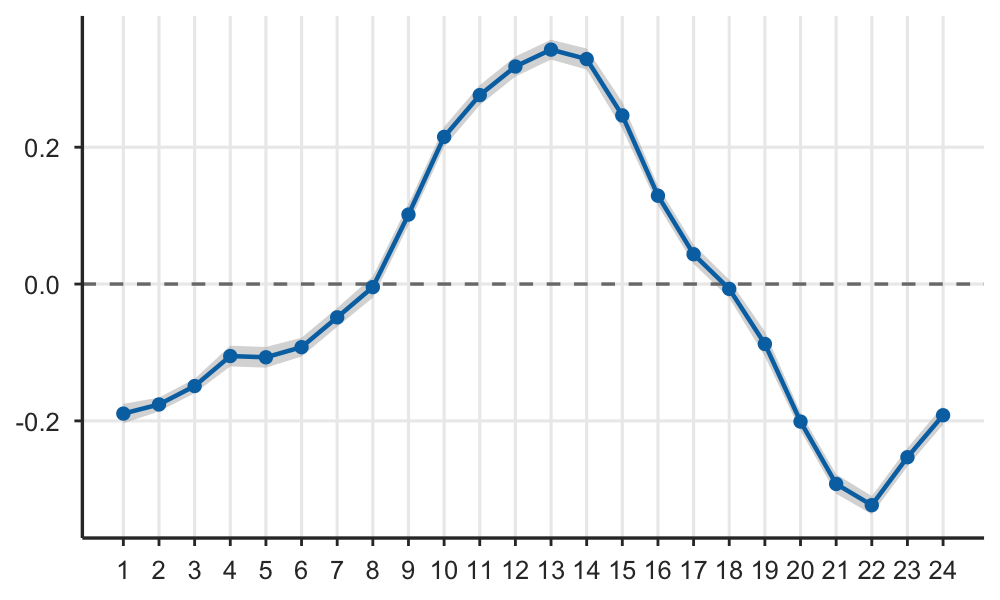}}
\subfigure[general pollution factor $(i = 2)$]{\includegraphics[width=0.45\textwidth]{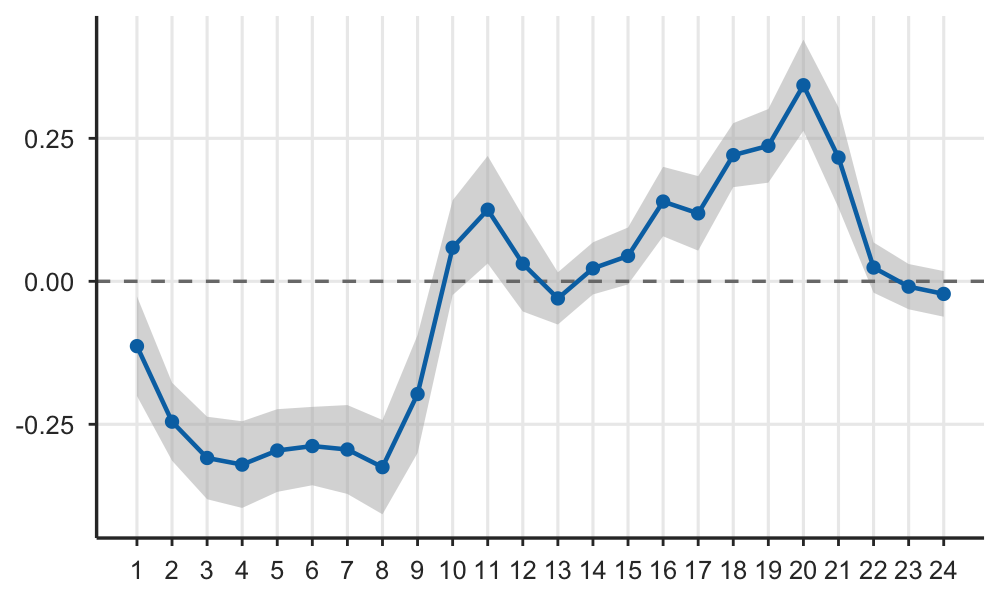}}
    \caption{Estimations of the  loading vectors $\ab_{i,3} \in \mathbb{R}^{24}$ for the diurnal mode based on Pro.iter. The gray shaded region represents the pointwise 95\% confidence interval for the estimated loadings.  Standard errors are calculated based on the asymptotic variance estimation  $\hat{w}_{i,j}^{-2}\hat \tau_{i,j}^{2}(\hb)$.} 
    \label{fig:beijing-hourly}

\end{figure}

Figure \ref{fig:beijing-hourly} illustrates the estimations of the factor loadings $\ab_{i,3}\in\mathbb{R}^{24}$ based on Pro.iter for the diurnal mode. The loading vector  of the \textit{ozone-related factor} rises after sunrise, peaks around 13:00–14:00, and turns negative at night, tracking the canonical photochemical cycle of daytime production and nocturnal loss via deposition \citep{li2015diurnal}. By contrast, the \textit{general pollution factor} is distinctly bimodal—minimal before dawn, a first peak near 10:00–11:00, and a higher evening peak (around 19:00–20:00).  This bimodal profile accords with established diurnal emission and mixing cycles: a morning peak from traffic and industrial start-up, and a higher evening peak from rush-hour emissions combined with boundary-layer stabilization that suppresses dispersion. Analogous morning–evening bimodality for urban aerosols (e.g., $\textup{PM}_{2.5}$ and $\textup{PM}_{10}$ in Beijing) is well documented \citep{Liu2015}. Overall, the two factors capture complementary diurnal dynamics—one driven by photochemical reactions (ozone-related) and the other by human emission activities (general pollution).

\begin{figure}[htbp]
    \centering
\subfigure[ozone-related factor $(i = 1)$]{\includegraphics[width=0.45\textwidth]{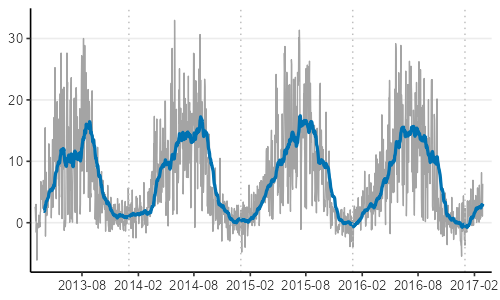}}
\subfigure[general pollution factor $(i = 2)$]{\includegraphics[width=0.45\textwidth]{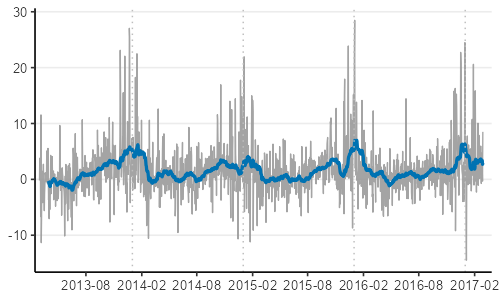}}
    \caption{The time series plots  of the two estimated latent factors based on Pro.iter. The dark blue solid line represents the 30-day one-sided simple moving average.}
    \label{fig:beijing-seasonal}
\end{figure}

Figure \ref{fig:beijing-seasonal} presents the time series of the two estimated latent factors based on Pro.iter; see Section \ref{sec:factor and cp estimation} in the supplementary material for details on the factor estimation procedure. Figure \ref{fig:beijing-seasonal}(a) shows the \textit{ozone-related factor}, which exhibits a distinct seasonal cycle with higher values in summer and lower values in winter. This pattern aligns with ozone's photochemical formation mechanism, which is strongly dependent on solar radiation and temperature. Figure \ref{fig:beijing-seasonal}(b) depicts the \textit{general pollution factor}, displaying the opposite seasonal pattern—higher values in cold seasons and lower values in warm seasons. This arises because cold months see increased coal/fuel combustion for heating (releasing more fine particles and gases) and stable atmospheric conditions; in contrast, summer features stronger air movement, higher wind speeds, and frequent rainfall, which stir, disperse, and scavenge pollutants to improve air quality.

Overall, by representing the multi-site, multi-pollutant, and hourly observations as a tensor time series, the proposed CP-factor estimation method successfully identifies two dominant latent components that capture distinct physical mechanisms underlying Beijing’s air pollution. The \textit{ozone-related factor} reflects photochemical processes, whereas the \textit{general pollution factor} represents anthropogenic emission activities. Together, these two factors provide a concise and interpretable characterization of the complementary seasonal dynamics of photochemically and emission-driven pollution in Beijing.
Section~\ref{sec: addtional empirical results in main paper} in the supplementary material reports the results of the real data analysis based on the other three methods (Pro.init, HOPE, and CC-ISO). The findings suggest  that the estimates of Pro.iter are more interpretable and more consistent with well-established pollution mechanisms than those produced by Pro.init, HOPE, and CC-ISO.  To assess the reliability of the above empirical results, we also show in Section~\ref{sec: addtional empirical results in main paper} in the supplementary material that these results remain essentially unchanged when using the winsorized data, indicating the robustness of our  conclusions.

Section \ref{sec: app-famafrench} in the supplementary material provides another real data analysis for financial data. It is well known that financial data exhibit strong cross-sectional dependence, and often involve highly correlated latent factors.  Table \ref{table:app-forecast} in the supplementary material reports the average forecasting errors for financial returns based on different methods. It can be observed that the tensor CP-factor methods without uncorrelated factor assumption consistently outperform the methods that rely on this assumption, which provides further evidence for the applicability of our proposed methods in practice.

\section{Theoretical analysis}\label{sec: theoretical}
\subsection{Assumptions}\label{sec: asmp}

 We first present some technical assumptions for our theoretical analysis. Assumptions \ref{tail}--\ref{gap new} are imposed to guarantee the consistency of the one-pass estimator introduced in Section \ref{sec: initial}. 
Given a general consistent initial estimator, Assumptions \ref{tail}--\ref{sparsity} and \ref{cross} are required to establish the theoretical guarantees for the associated iterative estimator introduced in Section \ref{sec: Double projection iterations}. 
    
	\begin{assumption}\label{tail}
		There exist universal constants $C_1>1$, $C_2>0$, and  $c_1\in(0,2]$  such that $\max_{i\in[r]}\max_{t\in[n]}\mathbb{P}(|f_{t,i}|>x)\le C_1\exp(-C_2x^{c_1})$, $\max_{j\in[m]}\max_{p_j\in[d_j]}\max_{t\in[n]}\mathbb{P}(|[\mathcal{E}_t]_{p_1,\ldots,p_m}|>x)\le C_1\exp(-C_2x^{c_1})$, and $\max_{t\in[n]}\mathbb{P}(|\xi_{t}|>x)\le  C_1\exp(-C_2x^{c_1})$
		for  any $x>0$.  There also exists a universal constant $C_3>0$ such that  $w_1\ge\cdots \ge w_r\ge C_3$. 
	\end{assumption}
	
	\begin{assumption}\label{mixing}
		Define the $\alpha$-mixing coefficients associated with the factors and error processes $\{f_{t,1},\ldots,f_{t,r},\mathcal{E}_t\}_{t\geq1}$ as
		$$\alpha(k)=\sup_t\sup_{A\in \mathcal{F}_{-\infty}^t,\,B\in\mathcal{F}_{t+k}^\infty}|\mathbb{P}(A\cap B)-\mathbb{P}(A)\mathbb{P}(B)|\,, ~~~k \ge 1\,,$$ 
		where $\mathcal{F}_{t_1}^{t_2}$ is the $\sigma$-field generated by $\{f_{t,1},\ldots,f_{t,r},\mathcal{E}_t\}_{t=t_1}^{t_2}$.  There exist some universal constants $C_4>1$, $C_5> 0$ and $c_2\in(0,1]$ such that $\alpha(k)\le C_4\exp(-C_5 k^{c_2})$ for any $k\ge 1$. 
	\end{assumption}

\begin{assumption}\label{sparsity}
		There exists a universal constant $C_6>1$ such that $C_6^{-1}\le \sigma_r(\Ab_j)\le \sigma_1(\Ab_j)\le C_6$ for any $j\in[m]$.  It holds that  
		$ \max_{i\in[r]}|\ab_{i,j}|_0 \le s_{j}$ for any $j \in [m]$. 
	\end{assumption}
    
	  As pointed out in \cite{chang2023modelling,chang2021central}, Assumptions \ref{tail} and \ref{mixing} are standard in the literature on high-dimensional data analysis, which are satisfied for a wide range of time series models. These assumptions ensure exponential-type upper bounds for the tail probabilities of the statistics concerned. Assumption \ref{tail} focuses on the cases where all the factors, idiosyncratic errors, and the linear combination $\xi_t$ have exponentially decaying tails. Assumption \ref{mixing} is a standard $\alpha$-mixing condition allowing the data to be serially dependent but not necessarily stationary.  As shown in \eqref{Xi and Sigma Y}, our procedure is based on lag-$k$ auto-covariances averaged over the sampling periods, rather than on a fixed stationary auto-covariance structure. Therefore, the factor process need not be stationary where $\mathbb{E}(f_{t,i}^2)$ is allowed to vary with $t$.  
     Assumption \ref{sparsity} can accommodate both sparse and dense loadings within a unified formulation. For example, if $s_j \ll d_j$, Assumption \ref{sparsity} corresponds to the sparse case; if $s_j=d_j$, it covers the dense case. Sparse loadings arise naturally in several important settings in the recent factor model literature \citep{uematsu2022estimation}.
     Properly handling the sparsity via thresholding can improve estimation efficiency. We only require that $\sigma_1(\Ab_j), \ldots, \sigma_r(\Ab_j)$ are uniformly bounded away from $0$ and $\infty$ for $j \in [m]$, which is weaker than the requirement $\max_{j \in [m]}\|\Ab_j^{\T} \Ab_j - \mathbf{I}_r\|_2 < 1$ imposed in \cite{han2024cp} and \cite{chen2026estimation}.

	\begin{assumption}\label{eigenvalue}
		Define  $\ubar\sigma_{\xi}^2=\min_{i\in[r],j\in[m]}\sigma_{i}(\mathbf{M}_{j})$ and $\bar\sigma_{\xi}^2=\max_{i\in[r],j\in[m]}\sigma_{i}(\mathbf{M}_{j})$. Assume that $\ubar\sigma_{\xi}\ge C_7$ for some universal constant $C_7>0$ and $ n^{-1/2}\bar\sigma_{\xi}w_1\ll\ubar\sigma_{\xi}^2$. 
	\end{assumption}
	
We can regard $\xi_{t}$ as a projection of $\text{vec}(\mathcal{Y}_t)$ to a lower dimension. Certainly, we expect that the projection can retain the signal of the factor process, which is guaranteed by Assumption \ref{eigenvalue}. Specifically, by Theorem 7 of \cite{horn2026positivity}, we have 
$$\ubar\sigma_{\xi}^2 \ge \bigg(\min_{i \in [r]} \sum_{k = 1}^K |g_{k,i,\xi}|^2 \bigg) \bigg\{\min_{j \in [m]} \frac{\sigma_{r}^2(\Ab_j)\sigma_{r}^2(\Bb_j)}{\sigma_{1}^{2}(\Ab_j)}\bigg\}\,.$$
Therefore,  for each $i \in [r]$, if there exists some $k \in [K]$ such that $|g_{k,i,\xi}|$ is bounded away from $0$, then it follows from Assumption \ref{sparsity} that $\ubar\sigma_{\xi}$ is bounded away from $0$. 
According to the definition of  $\mathbf{M}_{j}$, we have $\ubar\sigma_{\xi} \lesssim \max_{k\in[K],i\in[r]}|g_{k,i,\xi}|\lesssim  \bar\sigma_{\xi} \lesssim  w_1$. If $w_1,\ldots,w_r$ are fixed constants, Assumption \ref{eigenvalue} holds when both $\bar\sigma_{\xi}$ and $\ubar\sigma_{\xi}$ are uniformly bounded away from $0$ and $\infty$. 
    For the toy example mentioned in Section \ref{sec: model} with $\beta\ne 0$, if we select $\xi_t=(\prod_{j=1}^{m}d_j)^{-1}\sum_{h_1 = 1 }^{d_1}\cdots \sum_{h_m = 1 }^{d_m} [\mathcal{Y}_t]_{h_1,\ldots,h_m}$, then $\bar\sigma_{\xi}\asymp \ubar\sigma_{\xi}\asymp w_1\asymp (\prod_{j=1}^{m}d_j)^{1/2}$ provided that $|(n-k)^{-1}\sum_{t = k+1}^n\mathbb{E}[\{f_t - \mathbb{E}(\bar{f})\}\{f_{t-k} - \mathbb{E}(\bar{f})\}]|$ with $\bar{f} = n^{-1}\sum_{t=1}^n f_t$ is uniformly bounded away from $0$ and $\infty$, which implies that Assumption \ref{eigenvalue} holds automatically. For more general scenarios,  as long as $\xi_{t}$ is properly selected, it is expected that $\ubar\sigma_{\xi}\asymp w_r$ and $\bar\sigma_{\xi}\asymp w_1$. Then, Assumption \ref{eigenvalue} will hold when $w_1^2\ll w_r^2 \sqrt{n}$, which is a requirement on the relative strength of the factors. A similar condition also appears in Theorem 2 of \cite{han2024cp}. 

    \begin{assumption}\label{gap new}
		All the eigenvalues $\bar\lambda_1,\ldots,\bar\lambda_r$  in \eqref{generalized eigenequation} are   uniformly bounded away from $0$ and $\infty$. Moreover, $\min_{i \neq \ell}|\bar\lambda_i - \bar\lambda_{\ell}| \ge C_8$ and $\min_{i\in[r]}|g_{2,i,\xi}|\ge C_8\ubar\sigma_{\xi}$ for some universal constant $C_8>0$, where $g_{2,i,\xi}$ is defined in \eqref{gki}.
	\end{assumption}
	To identify the $r$ eigenvectors of $\Kb_{1,2,j}$ defined in \eqref{Kbj} corresponding to its $r$ nonzero eigenvalues, it is crucial that these eigenvalues are distinct. This is a typical assumption in the literature on eigen-analysis; see also Condition 5 in \cite{chang2023modelling}.

	\begin{assumption}\label{cross} For any deterministic vector $\bbeta\in\mathbb{R}^{\prod_{j=1}^{m}d_j}$, it holds that
		\[
		\max_{t\in[n]}\mathbb{P}\{|\bbeta^\T{\rm vec}(\mathcal{E}_t)|>x|\bbeta|_2\}\le C_1\exp(-C_2x^{c_1})
		\] 
		for any $x > 0$, where $C_1$, $C_2$, and $c_1$ are the same constants as those in Assumption \textup{\ref{tail}}.
	\end{assumption}
	Assumption \ref{cross} provides a tail bound for any linear combination  of the idiosyncratic error tensor. It allows for cross-sectional dependence among the errors. This assumption can hold under very general scenarios, e.g., when $\textup{vec}(\mathcal{E}_t)$ follows a multivariate Gaussian distribution with a covariance matrix bounded in spectral norm; see also Assumption 1 in \cite{han2024cp}.
    
	\subsection{Theoretical guarantees of the proposed methods}\label{sec: iterative theorem}
	
	Let $D_n=\prod_{j=1}^{m}d_j$ and  $S_{n}=\prod_{j=1}^{m}s_{j}$, and further define 
 \begin{equation*} 
 \Pi_{n} = \frac{\bar\sigma_{\xi} }{\ubar\sigma_{\xi}^2}\bigg(\frac{S_n \log D_n}{n}\bigg)^{1/2} \,.
 \end{equation*}
Set the threshold level $\delta_1=C_* (n^{-1} \log D_n)^{1/2}$ in \eqref{hat Sigma kj} for some constant $C_*>0$.  
    Theorem \ref{thm: aij} shows the consistency (up to the reflection and permutation indeterminacy) of the one-pass estimator $\{\tilde\ab_{i,j}\}_{i\in[\tilde{r}],j\in[m]}$ 
    introduced in Section \ref{sec: initial}.

	\begin{theorem}\label{thm: aij}
		Under Assumptions \textup{\ref{error}}--\textup{\ref{gap new}}, if $\Pi_{n} \ll 1$ and $\tilde r=r$,  there  exists a permutation of $[r]$, denoted by $\{z_1,\ldots,z_r\}$, such that
		$$|\tilde\ab_{z_i,j}- \tilde\kappa_{i,j} \ab_{i,j}|_2  =  \ubar\sigma_{\xi}\,\bar\sigma_{\xi}^{-1}\, O_{\rm p}(\Pi_{n})$$ for any  $i\in[r]$ and $j\in[m]$ with some constants $ \tilde\kappa_{i,j} \in\{-1,1\}$, provided that $\log D_n\ll n^c$ for some constant $c\in(0,1)$ depending only on $c_1$ and $c_2$ specified in Assumptions {\rm\ref{tail}} and {\rm\ref{mixing}}.   
	\end{theorem}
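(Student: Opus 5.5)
The approach is to reduce everything to a spectral-norm perturbation bound for the thresholded lagged cross-covariances $\tilde\bSigma_{k,j}$, and then push it through the two algebraic steps of the procedure: the projection $\tilde\Qb_j$ and the eigen-decomposition of $\tilde\Kb_{1,2,j}$. Write $\Delta_n=\max_{k\in[K],\,j\in[m]}\|\tilde\bSigma_{k,j}-\bSigma_{\Yb_j,\xi}(k)\|_2$.

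\textbf{Step 1: bound on $\Delta_n$.} The target is $\Delta_n=O_{\rm p}\{(S_n\log D_n/n)^{1/2}\}$.
\begin{itemize}
\item Using Assumptions \ref{tail} and \ref{mixing}, I would apply a Fuk--Nagaev/Bernstein-type inequality for $\alpha$-mixing sequences with sub-Weibull tails (as in \cite{chang2023modelling}). Together with a union bound over the $D_n$ entries, this gives $|\tilde\bSigma_{\Yb_j,\xi}(k)-\bSigma_{\Yb_j,\xi}(k)|_{\max}=O_{\rm p}\{(n^{-1}\log D_n)^{1/2}\}$. The condition $\log D_n\ll n^c$ is exactly what makes the polynomial part of the bound negligible.
\item Since $\bSigma_{\Yb_j,\xi}(k)=\Ab_j\Gb_{k,\xi}\Bb_j^\T$ and Assumption \ref{sparsity} holds, the true matrix is supported on the union of $r$ blocks of size $s_j\times(S_n/s_j)$. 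So its support has cardinality at most $rS_n$.
\item Take $\delta_1=C_*(n^{-1}\log D_n)^{1/2}$ with $C_*$ large. With probability tending to one, every entry off the support is killed. Every entry on the support has error at most a constant times $\delta_1$.
\item Bounding the spectral norm by the Frobenius norm on the support then gives the target rate for $\Delta_n$.
\end{itemize}

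\textbf{Step 2: the projection $\tilde\Qb_j$.}
\begin{itemize}
\item Since $\|\bSigma_{\Yb_j,\xi}(k)\|_2\lesssim\bar\sigma_\xi$, we have $\|\tilde\Mb_j-\Mb_j\|_2\lesssim\bar\sigma_\xi\Delta_n+\Delta_n^2$.
\item Davis--Kahan, with eigengap $\ubar\sigma_\xi^2$ (Assumption \ref{eigenvalue}), gives $\|\tilde\Qb_j\tilde\Qb_j^\T-\Qb_j\Qb_j^\T\|_2=O_{\rm p}(\Pi_n)=o_{\rm p}(1)$.
\end{itemize}

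The key observation is that this projection error does not enter the leading error of $\tilde\Kb_{1,2,j}$. Suppose one plugs any $\tilde\Qb_j$ for which $\Bb_j^\T\tilde\Qb_j$ is invertible into \eqref{Kbj}, while keeping the population $\bSigma$'s. Then the expression collapses to $\Ab_j\Gb_{1,\xi}\Gb_{2,\xi}^{-1}(\Ab_j^\T\Ab_j)^{-1}\Ab_j^\T$, exactly as in the derivation after \eqref{Kbj}. Hence $\tilde\Qb_j$ only needs $\sigma_r(\Bb_j^\T\tilde\Qb_j)\gtrsim\sigma_r(\Bb_j)$. This follows from $\Pi_n\ll1$ and the fact that $\Bb_j$ has full rank (Lemma \ref{pro:rank-B}), with $\sigma_r(\Bb_j)$ bounded below via Assumption \ref{sparsity}.

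\textbf{Step 3: the matrix $\tilde\Kb_{1,2,j}$.}
\begin{itemize}
\item I define the intermediate $\check\Kb_{1,2,j}$ as \eqref{Kbj} with $\Qb_j$ replaced by $\tilde\Qb_j$; by Step 2, $\check\Kb_{1,2,j}=\Kb_{1,2,j}$.
\item Then $\tilde\Kb_{1,2,j}-\Kb_{1,2,j}$ is the pure plug-in error from replacing $\bSigma_{\Yb_j,\xi}(k)\tilde\Qb_j$ by $\tilde\bSigma_{k,j}\tilde\Qb_j$.
\item The minimum eigenvalue of $\tilde\Qb_j^\T\bSigma_{\Yb_j,\xi}(2)^\T\bSigma_{\Yb_j,\xi}(2)\tilde\Qb_j$ is $\gtrsim\min_i g_{2,i,\xi}^2\gtrsim\ubar\sigma_\xi^2$, by Assumption \ref{gap new}.
\item The relevant factors $\Sigma_1\tilde Q$ and $\Sigma_2\tilde Q$ are of order $\ubar\sigma_\xi$ relative to this normalisation, since $\bar\lambda_i=g_{1,i,\xi}/g_{2,i,\xi}$ is bounded.
\item A first-order expansion should therefore give $\|\tilde\Kb_{1,2,j}-\Kb_{1,2,j}\|_2=O_{\rm p}(\Delta_n/\ubar\sigma_\xi)=\ubar\sigma_\xi\bar\sigma_\xi^{-1}O_{\rm p}(\Pi_n)$.
\end{itemize}

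\textbf{Step 4: eigenvectors of the non-symmetric $\Kb_{1,2,j}$.}
\begin{itemize}
\item $\Kb_{1,2,j}$ is diagonalizable on its range, with $\Kb_{1,2,j}=\Ab_j\,\mathrm{diag}(\bar\lambda_i)\,\Ab_j^{\MP}$ where $\Ab_j^{\MP}=(\Ab_j^\T\Ab_j)^{-1}\Ab_j^\T$. Its other $d_j-r$ eigenvalues are zero, and the nonzero eigenvalues $\bar\lambda_i$ are distinct, separated by $C_8$, and bounded away from zero (Assumption \ref{gap new}).
\item The eigenbasis has condition number bounded by $C_6^2$ (Assumption \ref{sparsity}).
\item A Bauer--Fike argument, together with an eigenprojector perturbation (Kato resolvent integral over small circles around each $\bar\lambda_i$), matches the $r$ leading eigenvalues of $\tilde\Kb_{1,2,j}$ to the $\bar\lambda_i$. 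This produces the permutation $z$.
\item The same argument gives $|\tilde\ab_{z_i,j}-\tilde\kappa_{i,j}\ab_{i,j}|_2\lesssim\|\tilde\Kb_{1,2,j}-\Kb_{1,2,j}\|_2$, possibly with a complex phase.
\item Taking real parts and renormalising preserves the rate, and turns the phase into the sign $\tilde\kappa_{i,j}\in\{-1,1\}$.
\end{itemize}

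\textbf{Main obstacle.} I expect Step 1 to be the hardest part. The entries of $\Yb_{t,j}$ carry the possibly divergent strengths $w_i$, so the entrywise concentration must be stated for products of sub-Weibull, $\alpha$-mixing variables, uniformly over $D_n$ indices. It must also ensure that off-support noise entries fall below $\delta_1$ while the support count stays at $O(S_n)$. A second delicate point is showing that the $\tilde\Qb_j$ error enters only through conditioning (Step 2), rather than contributing a larger term of order $\bar\sigma_\xi\Pi_n/\ubar\sigma_\xi$; this is what makes the final rate $\ubar\sigma_\xi\bar\sigma_\xi^{-1}\Pi_n$ rather than $\Pi_n$.
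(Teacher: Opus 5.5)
\textbf{Gap in Step~1.} The target $\max_{k,j}\|\tilde\bSigma_{k,j}-\bSigma_{\Yb_j,\xi}(k)\|_2=O_{\rm p}\{(S_n\log D_n/n)^{1/2}\}$ is false in general, and the error carries through to the rate. Assumption~\ref{tail} controls the tails of $f_{t,i}$, of the entries of $\mathcal{E}_t$, and of $\xi_t$. It does not control the entries of $\Yb_{t,j}$, whose signal part carries the possibly divergent $w_i$. Write $\tilde\bSigma_{\Yb_j,\xi}(k)=\tilde\bSigma_{\Cb_j,\xi}(k)+\tilde\bSigma_{\Eb_j,\xi}(k)$, where $\tilde\bSigma_{\Cb_j,\xi}(k)=\Ab_j\tilde\Gb_{k,\xi}\Bb_j^\T$ and $\tilde\Gb_{k,\xi}$ holds the sample analogues $\tilde g_{k,i,\xi}$ of \eqref{gki}. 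Then your error contains $\Ab_j(\tilde\Gb_{k,\xi}-\Gb_{k,\xi})\Bb_j^\T$. Its spectral norm is at least $\sigma_r(\Ab_j)\sigma_r(\Bb_j)\max_i|\tilde g_{k,i,\xi}-g_{k,i,\xi}|$, which is generically of exact order $w_1n^{-1/2}$ (Lemma~\ref{lem: gkixi}).

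Nothing ties $w_1$ to $(S_n\log D_n)^{1/2}$. For example, take bounded $S_n$, $\log D_n\asymp\log n$, $w_1\asymp\cdots\asymp w_r\asymp n^{1/4}$ and $\ubar\sigma_\xi\asymp\bar\sigma_\xi\asymp w_r$; this is compatible with Assumptions~\ref{eigenvalue}--\ref{gap new} and with $\Pi_n\ll1$. In this example both your entrywise bound and your claim that on-support errors are $O(\delta_1)$ fail. A sharper concentration inequality cannot fix this, because the error is genuinely there.

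Your Step~2 survives, since it only needs $o_{\rm p}(1)$, and the extra term $\bar\sigma_\xi w_1n^{-1/2}/\ubar\sigma_\xi^2$ is negligible by Assumption~\ref{eigenvalue}. Steps~3--4 do not survive. Split $\tilde\Kb_{1,2,j}-\Kb_{1,2,j}$ as $(\tilde\Kb_{1,2,j}-\hat\Kb_{1,2,j})+(\hat\Kb_{1,2,j}-\Kb_{1,2,j})$ with $\hat\Kb_{1,2,j}=\Ab_j\tilde\Gb_{1,\xi}\tilde\Gb_{2,\xi}^{-1}(\Ab_j^\T\Ab_j)^{-1}\Ab_j^\T$. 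The second piece has norm of order $\max_i|\tilde g_{1,i,\xi}/\tilde g_{2,i,\xi}-\bar\lambda_i|$, which is $\asymp n^{-1/2}$ in the example. So a perturbation bound in terms of $\|\tilde\Kb_{1,2,j}-\Kb_{1,2,j}\|_2$ gives at best $n^{-1/2}$, whereas the theorem claims $\ubar\sigma_\xi\bar\sigma_\xi^{-1}\Pi_n=\ubar\sigma_\xi^{-1}(S_n\log D_n/n)^{1/2}\asymp n^{-3/4}(\log n)^{1/2}$ there.

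\textbf{The fix.} The missing idea is the invariance you already exploit for $\tilde\Qb_j$, now applied to $\tilde\Gb_{k,\xi}$: never concentrate the signal part, only the noise part. Centre Step~1 at $\tilde\bSigma_{\Cb_j,\xi}(k)$ rather than at $\bSigma_{\Yb_j,\xi}(k)$. The remaining error then comes only from $\tilde\bSigma_{\Eb_j,\xi}(k)$, whose entries are averages of products of entries of $\mathcal{E}_t$ with $\xi_{t-k}$. For these, the entrywise $O_{\rm p}\{(n^{-1}\log D_n)^{1/2}\}$ bound is legitimate. Combined with the support of $\tilde\bSigma_{\Cb_j,\xi}(k)$ (at most $rS_n$ entries), your thresholding argument gives $\|T_{\delta_1}\{\tilde\bSigma_{\Yb_j,\xi}(k)\}-\tilde\bSigma_{\Cb_j,\xi}(k)\|_2=O_{\rm p}\{(S_n\log D_n/n)^{1/2}\}$; this is Lemma~\ref{lem: l2 truncated covariance}.

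Next, run your Step~3 collapse with $\tilde\bSigma_{\Cb_j,\xi}(k)$ in place of the population matrices. It yields exactly $\hat\Kb_{1,2,j}$, whose eigenvectors are exactly $\ab_{i,j}$; the sampling error in $\tilde g_{k,i,\xi}$ moves only the eigenvalues $\hat\lambda_i=\tilde g_{1,i,\xi}/\tilde g_{2,i,\xi}$. These satisfy $\hat\lambda_i=\bar\lambda_i+o_{\rm p}(1)$ and keep the gap, because $\max_i|\tilde g_{k,i,\xi}-g_{k,i,\xi}|=O_{\rm p}(w_1n^{-1/2})$, $|g_{2,i,\xi}|\ge C_8\ubar\sigma_\xi$, and $w_1n^{-1/2}\ll\ubar\sigma_\xi^2/\bar\sigma_\xi\le\ubar\sigma_\xi$ by Assumption~\ref{eigenvalue}. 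The same bounds give $\min_i|\tilde g_{2,i,\xi}|\gtrsim\ubar\sigma_\xi$, which your normalisation needs. Your first-order expansion then gives $\|\tilde\Kb_{1,2,j}-\hat\Kb_{1,2,j}\|_2=O_{\rm p}\{\ubar\sigma_\xi^{-1}(S_n\log D_n/n)^{1/2}\}$, and Step~4 applied to $\hat\Kb_{1,2,j}$ delivers the stated rate.

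With this change your argument is essentially the paper's. The only substantive difference is at the end: the paper uses the non-symmetric eigenvector perturbation lemma of Chang et al.\ (2023) together with the separation bound of Lemma~\ref{pro: singular K21j}, while you use Bauer--Fike and a resolvent argument. Your route works equally well, given the bounded condition number of the eigenbasis $[\Ab_j,\Ab_{j,\perp}]$.
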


 	In Algorithm \ref{alg1}, the estimated factor series $\{\check f_{t,i}^{(\textit{v},j)}\}_{t=1}^n$ can be regarded as linear combinations of $\{\text{vec}(\mathcal{Y}_t)\}_{t=1}^n$ with some plug-in estimators of the coefficients $(\ab_{i,m}^{\MP}\otimes \cdots\otimes\ab_{i,1}^{\MP})$, where $(\ab_{1,j}^{\MP},\ldots,\ab_{ r,j}^{\MP})^{\T} = (\Ab_j^\T\Ab_j)^{-1}{\Ab}_j^{\T}$. To study the statistical error of the iterative estimator, we define the oracle linear combinations as $\xi_{t,i}=(\ab_{i,m}^{\MP}\otimes \cdots\otimes\ab_{i,1}^{\MP})^\T \text{vec}(\mathcal{Y}_t)=w_if_{t,i}+(\ab_{i,m}^{\MP}\otimes \cdots\otimes\ab_{i,1}^{\MP})^\T \text{vec}(\mathcal{E}_t)$, and write  $$\xi_{t,i}^{\textup{s}}=\bigg[\frac{1}{n}\sum_{s=1}^n\mathbb{E}\{(\xi_{s,i}-\bar\xi_i)^2\}\bigg]^{-1/2}(\xi_{t,i}-\bar\xi_i)$$ with $\bar \xi_i=n^{-1}\sum_{t=1}^n \xi_{t,i}$. If $r=1$, let $\xi_{t,i}^{\textup{sp}}=\xi_{t,i}^{\textup{s}}$ for $t\in[n]$ and $i\in[r]$. If $r\ge 2$, let $\bxi_i^{\textup{s}} = (\xi_{1,i}^{\textup{s}},\ldots,\xi_{n-1,i}^{\textup{s}})^{\T}$, and $\Fb_{\xi,\mminus i}^{\textup{s}}$ be a $(n-1)\times (r-1)$ matrix of which the columns are composed of $(\xi_{2,\ell}^{\textup{s}},\ldots,\xi_{n,\ell}^{\textup{s}})^{\T}$ for $\ell \ne i$. Then, following the double projection step, we define
\begin{equation}\label{xi it sp}
    	(\xi_{1,i}^{\textup{sp}},\ldots,\xi_{n-1,i}^{\textup{sp}})^{\T}=\bxi_i^{\textup{s}}-\Fb_{\xi,\mminus i}^{\textup{s}}[\mathbb{E}\{(\Fb_{\xi,\mminus i}^{\textup{s}})^{\T}\Fb_{\xi,\mminus i}^{\textup{s}}\}]^{-1}\mathbb{E}\{(\Fb_{\xi,\mminus i}^{\textup{s}})^{\T}\bxi_i^{\textup{s}}\}\,.
\end{equation}
Let $\bar\bvarphi_i=(\bar\varphi_{i,1},\ldots,\bar\varphi_{i,r})^{\T}$ be the $r$-dimensional vector with the $i$-th entry equal to 1, while the remaining $r-1$ entries form the vector $-[\mathbb{E}\{(\Fb_{\xi,\mminus i}^{\textup{s}})^{\T}\Fb_{\xi,\mminus i}^{\textup{s}}\}]^{-1}
    \mathbb{E}\{(\Fb_{\xi,\mminus i}^{\textup{s}})^{\T}\bxi_i^{\textup{s}}\}$. Then, $\xi_{t,i}^{\textup{sp}}=\xi_{t,i}^{\textup{s}}+\sum_{\ell\ne i}\bar\varphi_{i,\ell}\xi_{t+1,\ell}^{\textup{s}}$ for $t\in[n-1]$.




Set the threshold levels $\delta_{2,j}=\tilde C_*(n^{-1}\log d_j)^{1/2}$ for $j\in[m]$ in Algorithm \ref{alg1} with some sufficiently large constant $\tilde C_*>0$. Write     
	\[
\Phi_{n,j}=\frac{1 }{w_r}\sqrt{\frac{s_j\log d_j}{n}}~~\textrm{and}~~\bUpsilon_k = (\Upsilon_{k,i,\ell})_{r \times r}\,,
	\]
    where $\Upsilon_{k,i,\ell}= (n-k)^{-1} \sum_{t=k+1}^n\mathbb{E}\{(f_{t,i}-\bar f_i)(f_{t-k,\ell}-\bar f_{\ell})\}$ for $i,\ell\in[r]$ and $k\in \{0,1\}$. Let   
\begin{equation}\label{auto cross f xi}
\begin{split}
    &\gamma_{\max}=\max_{i\ne \ell}|\Upsilon_{1,i,\ell}|\,, ~~ \sigma_{f_i,\xi_i}  =  \mathbb{E}\bigg\{\frac{1}{n-1}\sum_{t=2}^n(f_{t,i}-\bar f_i)\xi_{t-1,i}^{\textup{sp}}\bigg\}\,,\\
   & ~~~~~~\textup{and}~~   L_n= \bigg(\frac{\sum_{j=1}^m d_j\log d_j}{n}\bigg)^{1/2}+\frac{(\sum_{j=1}^m d_j)^{1/\tilde c}}{n} \,,
\end{split} 
		\end{equation} 
        where $\tilde c^{-1}=1+2c_1^{-1}+c_2^{-1}$ for $c_1$ and $c_2$ in Assumptions \ref{tail} and \ref{mixing}. Theorem \ref{thm: iterative} gives the convergence rate of the iterative estimator obtained by Algorithm \ref{alg1}.
	\begin{theorem}\label{thm: iterative}
		Let Assumptions \textup{\ref{error}}--\textup{\ref{sparsity}} and \textup{\ref{cross}} hold. Assume that $\tilde r=r$ and the initial estimates in Algorithm \textup{\ref{alg1}} satisfy $\max_{i\in[r],j\in[m]}w_r^{-1}w_1|\tilde\ab_{z_i,j}^{(0)}-  \tilde\kappa_{i,j}  \ab_{i,j}|_2 = o_{\rm p}(1)$ for some permutation $\{z_1,\ldots,z_r\}$ of $[r]$ and some  constants $\tilde\kappa_{i,j}  \in \{-1,1\}$. If  $D_n\rightarrow \infty$ as $n\rightarrow \infty$, $C_{9}^{-1}\le\sigma_r(\bUpsilon_0)\le \sigma_1(\bUpsilon_0)\le C_{9}$ for some universal constant $C_{9}>1$,
               		 \begin{equation}\label{strong factor condition}
			\frac{w_1}{w_r^2} \bigg( \frac{\gamma_{\max} }{w_r}+\frac{1}{w_r\sqrt{n}}+ L_n\bigg) \ll 1   ~~\text{and}~~|\sigma_{f_i,\xi_i}| \ge C_{9}^{-1} 
		\end{equation}
 for all $i\in[r]$,  then we have
		\[
		\max_{i\in[r],j\in[m]}|  \hat\ab_{z_i,j}- \kappa_{i,j} \ab_{i,j}|_2 =  O_{\rm p}\bigg(\max_{j\in[m]}\Phi_{n,j}+\frac{\gamma_{\max}}{w_r^2}\bigg)\,
		\]
       for some constants $\kappa_{i,j} \in \{-1,1\}$, provided that  
        the number of iterations satisfies 
        $\textit{v}_{\max}\gtrsim-\log(\max_{j\in[m]}\Phi_{n,j}+\gamma_{\max} w_r^{-2})$ and $\max_{j \in [m]}\log d_j \ll n^c$ for some constant $c\in(0,1)$ depending only on $c_1$ and $c_2$ specified in Assumptions {\rm\ref{tail}} and {\rm\ref{mixing}}.
	\end{theorem}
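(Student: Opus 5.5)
The plan is a one-step contraction argument followed by induction over the iterations $v$. Without loss of generality take $z_i=i$ and $\tilde\kappa_{i,j}=1$. Define the current error
\[
\epsilon^{(v)}=\max_{i\in[r],j\in[m]}|\tilde\ab^{(v)}_{i,j}-\ab_{i,j}|_2 ,
\]
where modes updated earlier within a sweep enter through their $v$-th versions. The main step is to show that, with probability tending to one, uniformly over all inputs with $w_r^{-1}w_1\epsilon\ll 1$,
\begin{equation*}
\epsilon^{(v)}\le C\Big(\max_{j\in[m]}\Phi_{n,j}+\frac{\gamma_{\max}}{w_r^2}\Big)+\rho_n\,\epsilon^{(v-1)},\qquad \rho_n=o(1)\ \text{or at most a constant}<1 .
\end{equation*}
Iterating this $v_{\max}\gtrsim-\log(\max_j\Phi_{n,j}+\gamma_{\max}w_r^{-2})$ times drives the geometric term below the statistical term, which gives the stated rate. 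Uniformity over the input matters: the iterates depend on the data, so every stochastic bound must hold simultaneously over a neighbourhood of the truth, not just at the true loadings.

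\textbf{Step 1: perturbation of the projections.} Since $\sigma_r(\Ab_j)\ge C_6^{-1}$ and Lemma~\ref{pro:rank-B} gives full column rank of $\Bb_j$, standard pseudo-inverse perturbation yields
\[
|\tilde\bbb^{\MP}_{i,j}-\bbb^{\MP}_{i,j}|_2\lesssim\epsilon, \qquad |\tilde\ab^{\MP}_{i,j}-\ab^{\MP}_{i,j}|_2\lesssim\epsilon .
\]
The Kronecker structure keeps this $O(m\epsilon)$ with no dimension factors. Consequently $\bbb^\T_{\ell,j}\tilde\bbb^{\MP}_{i,j}=I(\ell=i)+O(\epsilon)$, and the estimated factor satisfies $\check f_{t,i}=\xi_{t,i}+w_1O(\epsilon)\cdot(\text{factor terms})+\text{error terms}$. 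The error terms have the form $\bbeta^\T{\rm vec}(\mathcal{E}_t)$, and I would bound them uniformly over the data-dependent $\bbeta$ using Assumption~\ref{cross}, an $\epsilon$-net argument, and the mixing Bernstein inequality (Assumption~\ref{mixing}). This is exactly where $L_n$ arises: the nets have log-cardinality of order $\sum_j d_j$, giving the $(\sum_j d_j\log d_j/n)^{1/2}$ part, and the heavy-tail remainder gives the $(\sum_j d_j)^{1/\tilde c}/n$ part. It follows that $\tilde f_{t,i}-\xi^{\rm s}_{t,i}$ and $\tilde\xi_{t,i}-\xi^{\rm sp}_{t,i}$ are small in an empirical $L_2$ sense, of order $w_r^{-1}w_1\epsilon+L_n+n^{-1/2}$. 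The condition $C_9^{-1}\le\sigma_r(\bUpsilon_0)\le C_9$ keeps the Gram matrix in \eqref{eq: tilde xi} well conditioned.

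\textbf{Step 2: decomposing the new estimate.} Write
\[
\tilde\bSigma_{\tilde\yb_{i,j},\tilde\xi_i}(1)=\sum_{\ell}w_\ell(\bbb_{\ell,j}^\T\tilde\bbb^{\MP}_{i,j})\Big\{\frac{1}{n-1}\sum_t(f_{t,\ell}-\bar f_\ell)\tilde\xi_{t-1,i}\Big\}\ab_{\ell,j}+\frac{1}{n-1}\sum_t\Eb_{t,j}\tilde\bbb^{\MP}_{i,j}\tilde\xi_{t-1,i}.
\]
The $\ell=i$ term is approximately $w_i\sigma_{f_i,\xi_i}\ab_{i,j}$, with $|\sigma_{f_i,\xi_i}|\ge C_9^{-1}$. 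For $\ell\ne i$ the coefficient is $w_\ell\,O(\epsilon)$ multiplied by a cross-covariance. By construction of $\tilde\xi$, the sample quantity $\sum_t\tilde\xi_{t-1,i}\tilde f_{t,\ell}=0$, so the cross-covariance equals the discrepancy between $f_{t,\ell}$ and $\tilde f_{t,\ell}$. That discrepancy is of order $w_r^{-1}(\gamma_{\max}+\text{noise})+\epsilon$. The noise term is controlled coordinatewise at rate $(\log d_j/n)^{1/2}$ by Assumption~\ref{cross}, which relies on $\mathbb{E}(f\mathcal{E})=0$ and serial uncorrelation (Assumption~\ref{error}). Thresholding at $\delta_{2,j}$ then kills the coordinates outside the support (Assumption~\ref{sparsity}), so the retained error has $\ell_2$ norm of order $(s_j\log d_j/n)^{1/2}$. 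Finally, dividing by the signal size $w_i|\sigma_{f_i,\xi_i}|\gtrsim w_r$ and normalizing gives the recursion, with $\rho_n\lesssim w_1w_r^{-2}(\gamma_{\max}w_r^{-1}+w_r^{-1}n^{-1/2}+L_n)+w_1w_r^{-1}\epsilon$. Condition \eqref{strong factor condition} makes this $o(1)$.

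\textbf{Main obstacle.} I expect the hardest part to be Step 1/2's uniform control of $\tilde\xi_{t-1,i}$, which is a nonlinear and data-dependent function (through the projection in \eqref{eq: tilde xi}) of the current iterate. It interacts with the noise in the cross term $\sum_t\Eb_{t,j}\tilde\bbb^{\MP}_{i,j}\tilde\xi_{t-1,i}$. I would first try to linearize $\tilde\xi$ around $\xi^{\rm sp}$ using $\bar\bvarphi_i$, writing it as $\xi^{\rm sp}$ plus terms linear in the perturbation directions, and then apply net-based uniform bounds to each bilinear form.
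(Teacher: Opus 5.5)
Your architecture is the paper's: a one-step bound $\theta_j^{(v)}\le A\tilde\Delta_n+\alpha\bar\theta_j^{(v)}$ followed by induction over $j$ and $v$, and a split into target, noisy-factor and noise terms. You also use the exact orthogonality $\sum_t\tilde\xi_{t-1,i}\tilde f_{t,\ell}=0$ to neutralize the noisy factors, net bounds over Kronecker-product directions producing $L_n$, and thresholding. The gap is in Step 2, and it is exactly where the $\gamma_{\max}/w_r^2$ term comes from.

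You assert that the noise term $(n-1)^{-1}\sum_t\Eb_{t,j}\tilde\bbb^{\MP}_{i,j}\tilde\xi_{t-1,i}$ is centred, with coordinatewise fluctuations of order $(n^{-1}\log d_j)^{1/2}$, by serial uncorrelation. This fails. The projected series $\tilde\xi_{t-1,i}=\tilde f_{t-1,i}+\sum_{\ell\ne i}\tilde\varphi_{i,\ell}\tilde f_{t,\ell}$ contains $\tilde f_{t,\ell}$ at the \emph{same} time as $\Eb_{t,j}$. That term carries the noise $u_{t,\ell}/\tilde\sigma_{\check f,\ell}$, where $u_{t,\ell}=(\ab^{\MP}_{\ell,m}\otimes\cdots\otimes\ab^{\MP}_{\ell,1})^{\T}\textup{vec}(\mathcal{E}_t)$.

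So the noise term has mean approximately $\sum_{\ell\ne i}\tilde\varphi_{i,\ell}\tilde\sigma_{\check f,\ell}^{-1}\mathbb{E}\{\Eb_{t,j}\bbb^{\MP}_{i,j}u_{t,\ell}\}$; this is the paper's $\bzeta_{4,4}^{(i,j,v)}$. This mean has the following properties:
\begin{itemize}
\item it is generically nonzero when the loadings are not orthogonal or the errors are cross-sectionally correlated;
\item its $\ell_2$ norm is of order up to $\max_{\ell\ne i}|\tilde\varphi_{i,\ell}|/w_r$;
\item it does not shrink with the iterate error;
\item its coordinates need not fall below $\delta_{2,j}$, so thresholding does not remove it.
\end{itemize}
Followed literally, your argument would give the rate $\max_j\Phi_{n,j}$ alone and miss a genuine bias of the estimator. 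Your recursion states the $\gamma_{\max}/w_r^2$ term, but no step of your sketch produces it.

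To close the gap you need two things.
\begin{enumerate}
\item Split this mean off and carry it in $\ell_2$ norm through the thresholding step. Bound it using $\|\mathbb{E}\{\textup{vec}(\mathcal{E}_t)\textup{vec}(\mathcal{E}_t)^{\T}\}\|_2\le C$ (from Assumption \ref{cross}) and $\tilde\sigma_{\check f,\ell}\asymp w_\ell$.
\item Prove a sharp bound on the projection coefficients: $\max_{\ell\ne i}|\tilde\varphi_{i,\ell}|\lesssim\gamma_{\max}+\Delta_{0,n}+(nw_r^2)^{-1}$, plus terms vanishing with $\bar\theta_j^{(v)}$. This is the paper's Lemma \ref{lemma: E varphi} together with \eqref{zeta 222}.
\end{enumerate}
Well-conditioning of the Gram matrix via $\bUpsilon_0$ only makes $\tilde\varphi_{i,\ell}$ close to $\bar\varphi_{i,\ell}$. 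The $\gamma_{\max}$ order comes from showing that the lag-one cross moments of the standardized oracle factors are $O(\gamma_{\max}+n^{-1}w_r^{-2})$. Dividing by the signal $w_i|\sigma_{f_i,\xi_i}|$ then gives the $(\gamma_{\max}+n^{-1/2})/w_r^2$ part of the statistical term.

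The $\gamma_{\max}$ part of \eqref{strong factor condition} is used to keep this statistical term inside the basin $w_1w_r^{-1}\bar\theta_j^{(v)}\le\tilde C^{-2}$ across iterations. Your plan should verify that invariance explicitly rather than assume it.

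On your stated main obstacle: the paper avoids uniform control of $\tilde\xi$ in the leading noise term ($\bzeta_3^{(i,j,v)}$). It factors the iterate dependence into scalar weights $\Lambda_{\xi,i,i}^{1/2}/\tilde\sigma_{\check f,i}$ and $\tilde\varphi_{i,\ell}$, which multiply the iterate-free vectors $\tilde\bSigma_{\eb_{i,j},\xi_\ell}(k)$. Coordinatewise union bounds then suffice for that term, and net bounds are needed only for the $O(\bar\theta_j^{(v)})$ remainder.
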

        The requirement on the convergence rate of $\{\tilde\ab_{i,j}^{(0)}\}_{i\in[\tilde{r}],j\in[m]}$ can be easily satisfied if taking the one-pass estimator introduced in Section \ref{sec: initial} as the initial estimator of Algorithm \ref{alg1}. As discussed below Assumption \ref{eigenvalue}, when the initial linear combination is properly selected, it is expected that $\ubar{\sigma}_{\xi}\asymp w_r$ and $\bar{\sigma}_{\xi}\asymp w_1$. Theorem \ref{thm: aij} implies that the convergence rate of the one-pass estimator satisfies this requirement automatically. The first part of condition \eqref{strong factor condition} is mainly to control the plug-in error of $\{\tilde{\ab}^{(\textit{v})}_{i,j}\}_{i\in[\tilde{r}],j\in[m]}$ in the iterations. 
         The requirement $w_1w_r^{-2}n^{-1}(\sum_{j=1}^m d_j)^{1/\tilde c} \ll 1$ originates from the serial dependence of the error process $\{\mathcal{E}_t\}_{t\ge 1}$, and is unnecessary if  $\{\mathcal{E}_t\}_{t\ge 1}$ are serially independent sub-Gaussian tensors, and are also independent of the factor process $\{\fb_t\}_{t \ge 1}$.
        The second part of  condition \eqref{strong factor condition} is similar to Assumption \ref{eigenvalue}, which requires the lag-one cross-correlation between $\{\xi_{t,i}^{\textup{sp}}\}_{t=1}^n$ and $\{f_{t,i}\}_{t=1}^n$ to be non-vanishing. The convergence rate of the iterative estimator includes two parts. The first part depends on $\Phi_{n,j}$, which is a typical rate under sparsity. The second part depends on the lag-one cross-correlations of the factors $\gamma_{\max}$ and the factor strength $w_r$, which is mainly from the estimation error of the  factors when we decorrelate them in the double projection step. Under Assumption \ref{eigenvalue}, we can show that $\ubar{\sigma}_{\xi}\lesssim w_r$. When $\mathbb{E}(f_{t,i})=0$ for all $t,i$ and $\mathbb{E}(f_{t,i}f_{t-k,j})=0$ for all $i\ne j, k\ge 1$ as assumed in \cite{han2024cp}, we have $\gamma_{\max}=O(n^{-1})$ and the convergence rate in Theorem \ref{thm: iterative} can be simplified as $O_{\rm p}(\max_{j\in[m]}\Phi_{n,j})$, which implies the iterative estimator in this scenario is more accurate in comparison to the one-pass estimator introduced in Section \ref{sec: initial}.

	  Let  $\bar\eb_{i,j}=n^{-1}\sum_{t=1}^n\eb_{t,i,j}$  with $\eb_{t,i,j}$ defined in \eqref{ytij}. With $\xi_{t,i}^{\textup{s}}$ defined above \eqref{xi it sp}, write
	\begin{equation}\label{sigma ei xi}
\begin{split}
			\tilde\bSigma_{\eb_{\ell,j},\xi_i}(1)&=\frac{1}{n-1}\sum_{t=2}^n(\eb_{t,\ell,j}-\bar \eb_{\ell,j})\xi_{t-1,i}^{\textup{s}}\,,\\
	\tilde\bSigma_{\eb_{\ell,j},\xi_i}(0)&=\frac{1}{n-1}\sum_{t=2}^n[(\eb_{t,\ell,j}-\bar \eb_{\ell,j})\xi_{t,i}^{\textup{s}}-\mathbb{E}\{(\eb_{t,\ell,j}-\bar \eb_{\ell,j})\xi_{t,i}^{\textup{s}}\}] 
\end{split}
	\end{equation}	
 for $j \in [m]$ and any $i,\ell \in [r]$. Theorem \ref{thm: debias iterative} provides a limiting representation for the iterative estimator $\hat\ab_{i,j}$. 
	 
	  \begin{theorem}\label{thm: debias iterative}
	 	Let $\hb \in\mathbb{R}^{d_j}$ be any non-random  vector satisfying  $|\hb|_2 = 1$, and the  conditions in Theorem \textup{\ref{thm: iterative}} hold. For any $i\in[r]$ and $j\in[m]$, it holds that 
\begin{align*}
    	 	\hb^{\T} ( \hat\ab_{z_i,j} & -  \kappa_{i,j} \ab_{i,j}-  \hat{\bvartheta}_{z_i,j})\\
            =\,& \frac{ \kappa_{i,j}}{w_{i}\sigma_{f_{i},\xi_{i}}}\hb^{\T}(\Ib_{d_j}-\ab_{{i},j}\ab_{{i},j}^{\T})\bigg\{\tilde\bSigma_{\eb_{i,j},\xi_{i}}(1)+\sum_{\ell \neq i} \bar\varphi_{i,\ell} \tilde\bSigma_{\eb_{i,j},\xi_\ell}(0)\bigg\}\\
    	 	&+ O_{\rm p}\bigg(\frac{\gamma_{\max}}{w_{i}w_r}+\frac{1}{\sqrt{n}w_{i}w_r}+\frac{w_1}{w_{i}w_r}  L_n \max_{j\in[m]}\Phi_{n,j}\bigg)
         +o_{\rm p}\bigg(\frac{1}{w_i\sqrt{n}}\bigg)\,,
\end{align*}	 
    where $z_i$ and $\kappa_{i,j}$ are specified in Theorem \textup{\ref{thm: iterative}}, and $\sigma_{f_{i},\xi_{i}}$ and  $\bar\varphi_{i,\ell}$ are defined, respectively, in \eqref{auto cross f xi}  and below \eqref{xi it sp}. Furthermore, if 
\begin{equation}\label{iterative not degenerate}
		 	\lim_{n\rightarrow\infty}{\rm Var}\bigg[\frac{\sqrt{n}}{\sigma_{f_{i},\xi_{i}}}\hb^{\T}(\Ib_{d_j}-\ab_{i,j}\ab_{i,j}^{\T})\bigg\{\tilde\bSigma_{\eb_{i,j},\xi_{i}}(1)+\sum_{\ell\ne i}\bar\varphi_{i,\ell}\tilde\bSigma_{\eb_{{i},j},\xi_\ell}(0)\bigg\}\bigg] = \bar{\tau}^2_{i,j}(\hb) 
\end{equation}
	 	for some deterministic positive number $\bar{\tau}^2_{i,j}(\hb)$ and 
	 	\begin{align}
	 		&\frac{\gamma_{\max}}{w_r}+\frac{1}{\sqrt{n}w_r}+ \frac{w_1}{w_r}  L_n \max_{j\in[m]} \Phi_{n,j} \ll n^{-1/2}\,,\label{negligible error 2}
	 	\end{align}
	then 
	\[
\sqrt{n}\{w_{i}\bar{\tau}^{-1}_{i,j}(\hb)\} \hb^{\T}(\hat\ab_{z_i,j}-\kappa_{i,j}\ab_{i,j}- \hat{\bvartheta}_{z_i,j})\overset{{\rm d}}{\rightarrow} \mathcal{N}(0,1)\,.
	\]
	 \end{theorem}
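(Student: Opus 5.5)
The starting point is an algebraic identity that removes the thresholding from the problem. Write $\Sbb_{i,j}=\tilde\bSigma^{(\textit{v}_{\max},j)}_{\tilde\yb_{i,j},\tilde\xi_i}(1)$ (unthresholded) and let $\hat\ab_{i,j}$ be the thresholded, normalized version. From the definition of $\hat{\bvartheta}_{i,j}$,
\[
\hat\ab_{i,j}-\hat{\bvartheta}_{i,j}=\frac{\Sbb_{i,j}}{\hat\ab_{i,j}^\T\Sbb_{i,j}}\,.
\]
So $\hb^\T(\hat\ab_{z_i,j}-\kappa_{i,j}\ab_{i,j}-\hat{\bvartheta}_{z_i,j})=\hb^\T\Sbb_{z_i,j}/(\hat\ab_{z_i,j}^\T\Sbb_{z_i,j})-\kappa_{i,j}\hb^\T\ab_{i,j}$. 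The thresholding then enters only through $\hat\ab$ in a scalar denominator, and there it acts at second order.

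I would decompose $\Sbb_{z_i,j}=\kappa_{i,j}w_i\sigma_{f_i,\xi_i}\ab_{i,j}+\bDelta_{i,j}$ and expand the ratio around $\ab_{i,j}$. Using $|\hat\ab_{z_i,j}-\kappa_{i,j}\ab_{i,j}|_2=O_{\rm p}(\max_j\Phi_{n,j}+\gamma_{\max}w_r^{-2})$ from Theorem~\ref{thm: iterative}, this gives
\[
\frac{\kappa_{i,j}}{w_i\sigma_{f_i,\xi_i}}\,\hb^\T(\Ib_{d_j}-\ab_{i,j}\ab_{i,j}^\T)\bDelta_{i,j}
\]
plus terms of order $|\bDelta_{i,j}|_2\,|\hat\ab_{z_i,j}-\kappa_{i,j}\ab_{i,j}|_2/w_i$ and $|\bDelta_{i,j}|_2^2/w_i^2$. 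The factor $|\sigma_{f_i,\xi_i}|\ge C_9^{-1}$ from \eqref{strong factor condition} keeps the denominator bounded away from zero.

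Next I split $\bDelta_{i,j}$ into an oracle part and a plug-in part.

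The oracle part uses $\bbb^{\MP}_{i,j}$ and $\xi^{\textup{sp}}_{t,i}$ in place of $\tilde\bbb^{\MP}_{z_i,j}$ and $\tilde\xi_{t,z_i}$:
\begin{itemize}
\item The signal component gives $w_i\sigma_{f_i,\xi_i}\ab_{i,j}$ up to a fluctuation $w_i\ab_{i,j}\{O_{\rm p}(n^{-1/2})\}$, which is killed by $\Ib_{d_j}-\ab_{i,j}\ab_{i,j}^\T$.
\item The error component $(n-1)^{-1}\sum_t(\eb_{t,i,j}-\bar\eb_{i,j})\xi^{\textup{sp}}_{t-1,i}$ becomes exactly $\tilde\bSigma_{\eb_{i,j},\xi_i}(1)+\sum_{\ell\ne i}\bar\varphi_{i,\ell}\tilde\bSigma_{\eb_{i,j},\xi_\ell}(0)$ after substituting $\xi^{\textup{sp}}_{t-1,i}=\xi^{\textup{s}}_{t-1,i}+\sum_{\ell\ne i}\bar\varphi_{i,\ell}\xi^{\textup{s}}_{t,\ell}$.
\item This leaves a deterministic mean $\sum_{\ell\ne i}\bar\varphi_{i,\ell}\mathbb{E}\{(\eb_{t,i,j}-\bar\eb_{i,j})\xi^{\textup{s}}_{t,\ell}\}$. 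Using Assumption~\ref{error} and the biorthogonality of $\bbb^{\MP}_{i,j}$ with $\bbb_{\ell,j}$, I would bound it by $O(1/(w_i w_r\sqrt n))$ after division by $w_i$, or show it is absorbed.
\item The terms involving $\bar\varphi$ and the lag-one cross-covariances of distinct factors produce the $\gamma_{\max}/(w_iw_r)$ term, because $\xi^{\textup{sp}}$ is built from population, not sample, projections.
\end{itemize}

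The plug-in part is the main obstacle. It collects the differences from replacing $\bbb^{\MP}_{i,j}$ and the oracle $\xi^{\textup{sp}}$ with their estimates, which depend on the same data through the previous iterate. I would linearize $\tilde\bbb^{\MP}$ and $\tilde\xi$ in the loading errors of the other modes, $\tilde\ab^{(\textit{v}_{\max}-1)}$ and $\tilde\ab^{(\textit{v}_{\max})}$. The resulting cross terms are products of a loading error, of size $\max_j\Phi_{n,j}+\gamma_{\max}w_r^{-2}$, and a centered empirical process indexed by directions in $\mathbb{R}^{d_k}$. To handle the dependence without sample splitting, I would bound these processes uniformly over unit vectors in each $\mathbb{R}^{d_k}$ with an $\epsilon$-net. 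Combined with Assumption~\ref{cross}, the $\alpha$-mixing Bernstein inequality under Assumptions~\ref{tail}--\ref{mixing} gives a uniform rate $L_n$. The $(\sum_jd_j)^{1/\tilde c}/n$ part of $L_n$ comes from the mixing tail. The signal part of the cross terms carries an extra factor $w_1$. Together these give the remainder
\[
O_{\rm p}\bigg(\frac{\gamma_{\max}}{w_iw_r}+\frac{1}{\sqrt n\,w_iw_r}+\frac{w_1}{w_iw_r}L_n\max_j\Phi_{n,j}\bigg).
\]
The quadratic terms from the first step are $o_{\rm p}(1/(w_i\sqrt n))$ under the first part of \eqref{strong factor condition}. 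This establishes the representation.

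For the CLT, the leading term times $\sqrt n\,w_i$ is $n^{-1/2}$ times a sum of scalar, $\alpha$-mixing, sub-Weibull variables $\hb^\T(\Ib_{d_j}-\ab_{i,j}\ab_{i,j}^\T)\{\cdot\}$ built from $\eb_{t,i,j}$ and $\xi^{\textup{s}}$. Assumption~\ref{cross} controls the tails of these linear functionals, with variance converging to $\bar\tau^2_{i,j}(\hb)>0$ by \eqref{iterative not degenerate}. A central limit theorem for $\alpha$-mixing triangular arrays with exponentially decaying coefficients, via the Bernstein blocking argument, gives asymptotic normality. Condition \eqref{negligible error 2} makes the remainder $o_{\rm p}(n^{-1/2}w_i^{-1})$, and Slutsky's lemma concludes.
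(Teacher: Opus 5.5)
Your overall route is the paper's. You use the identity $\hat\ab_{i,j}-\hat\bvartheta_{i,j}=\Sbb_{i,j}/(\hat\ab_{i,j}^\T\Sbb_{i,j})$, take the leading term from $\tilde\bSigma_{\eb_{i,j},\xi_i}(1)+\sum_{\ell\ne i}\bar\varphi_{i,\ell}\tilde\bSigma_{\eb_{i,j},\xi_\ell}(0)$, control plug-in remainders uniformly over Kronecker directions at rate $L_n$, and finish with a CLT for $\alpha$-mixing triangular arrays.

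\textbf{Main gap: the noisy-factor term.} The problem is the noisy-factor part of $\Sbb_{i,j}$,
\[
\sum_{\ell\ne i}\bigg\{\frac{w_\ell}{n-1}\sum_{t=2}^n(f_{t,\ell}-\bar f_\ell)\tilde\xi_{t-1,i}\bigg\}\{\bbb_{\ell,j}^\T\tilde\bbb_{i,j}^{\MP}\}\,\ab_{\ell,j}\,.
\]
It vanishes in your oracle but not after plug-in, because $\bbb_{\ell,j}^\T\tilde\bbb_{i,j}^{\MP}$ is only $O_{\rm p}(\Phi_n)$, where $\Phi_n=\max_j\Phi_{n,j}+\gamma_{\max}w_r^{-2}$.

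Your plan linearizes $\tilde\xi$ around the population-projected $\xi^{\textup{sp}}$ and bounds each product of a loading error and a centered process. That leaves the sample cross-covariance of $f_{t,\ell}$ and $\xi^{\textup{sp}}_{t-1,i}$, a centered scalar of exact order $n^{-1/2}$. The best bound you get for this term is therefore $w_1\Phi_n n^{-1/2}$, i.e. $(w_1/w_i)\Phi_n n^{-1/2}$ in the representation. This is neither inside the stated $O_{\rm p}$ remainder nor $o_{\rm p}(w_i^{-1}n^{-1/2})$ unless $w_1\Phi_n\to0$. That is not a hypothesis here; it is imposed separately in the first part of \eqref{condition for variance estimator}.

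The missing idea is the exact finite-sample orthogonality built into the second projection: $\sum_{t=2}^n\tilde\xi_{t-1,i}\tilde f_{t,\ell}=0$ for $\ell\ne i$ (stated below \eqref{eq: tilde xi}).
\begin{itemize}
\item Subtracting this zero replaces $w_\ell(f_{t,\ell}-\bar f_\ell)$ by the factor-estimation error $w_\ell(f_{t,\ell}-\bar f_\ell)-\tilde\sigma_{\check f,\ell}\tilde f_{t,\ell}$.
\item The sample covariance of that error with $\tilde\xi_{t-1,i}$ is $O_{\rm p}(w_1\Phi_n+L_n)$ up to lower-order terms.
\item The noisy-factor term is then $O_{\rm p}\{(w_1\Phi_n+L_n)\Phi_n\}$, which \eqref{strong factor condition} absorbs into $\gamma_{\max}/w_r+(w_1/w_r)L_n\max_j\Phi_{n,j}$.
\end{itemize}
Relatedly, linearizing ``in the loading errors'' is incomplete. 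Even at the true loadings, $\tilde\xi$ is a sample projection. It differs from $\xi^{\textup{sp}}$ by $O_{\rm p}(n^{-1/2})$ through $\tilde\varphi_{i,\ell}-\bar\varphi_{i,\ell}$ and through the sample standardization. The first of these, times the $O(w_r^{-1})$ contemporaneous error covariance, is what produces the $1/(\sqrt n\,w_iw_r)$ term.

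\textbf{Second gap: the ratio expansion.} Your bounds $|\bm{\Delta}_{i,j}|_2|\hat\ab_{z_i,j}-\kappa_{i,j}\ab_{i,j}|_2/w_i$ and $|\bm{\Delta}_{i,j}|_2^2/w_i^2$ are too crude, and the second-order terms are not all $o_{\rm p}(w_i^{-1}n^{-1/2})$ under \eqref{strong factor condition}.
\begin{itemize}
\item Your $\bm{\Delta}_{i,j}$ contains the random component along $\ab_{i,j}$ (the fluctuation you mention, generically of order $w_in^{-1/2}$). It also contains the whole noise vector, of Euclidean norm of order $(d_j/n)^{1/2}$.
\item So $|\bm{\Delta}_{i,j}|_2^2/w_i^2$ is generically at least of order $n^{-1}$. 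That is not $o(w_i^{-1}n^{-1/2})$ once $w_i\gtrsim\sqrt n$, for example with strong factors.
\item The component along $\ab_{i,j}$ rescales numerator and denominator together and cancels. What you need is $\hat\ab_{z_i,j}^\T\Sbb_{z_i,j}=w_i|\sigma_{f_i,\xi_i}|\{1+o_{\rm p}(1)\}$. After that, the $o_{\rm p}(1)$ multiplies only $\hb^\T\tilde\bSigma_{\eb_{i,j},\xi_\ell}(k)=O_{\rm p}(n^{-1/2})$ for fixed $\hb$.
\item The cross term $(\hat\ab_{z_i,j}-\kappa_{i,j}\ab_{i,j})^\T\tilde\bSigma_{\eb_{i,j},\xi_i}(1)=O_{\rm p}\{\Phi_n(d_j/n)^{1/2}\}$ belongs in the $O_{\rm p}$ remainder, not in $o_{\rm p}(w_i^{-1}n^{-1/2})$. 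So does $|\hat\ab_{z_i,j}-\kappa_{i,j}\ab_{i,j}|_2^2$, which is missing from your list. This uses $(d_j/n)^{1/2}\lesssim L_n$ and $\max_j\Phi_{n,j}\le L_n/w_r$. For example, $(\gamma_{\max}w_r^{-2})^2$ is only $o\{\gamma_{\max}/(w_iw_r)\}$.
\end{itemize}

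\textbf{A smaller point: the deterministic mean.} The mean $\sum_{\ell\ne i}\bar\varphi_{i,\ell}\mathbb{E}\{(\eb_{t,i,j}-\bar\eb_{i,j})\xi^{\textup{s}}_{t,\ell}\}$ is not removed by biorthogonality. It is a contemporaneous error covariance, which Assumption \ref{cross} only bounds. Hence it is $O(\gamma_{\max}/w_r+n^{-1}w_r^{-3})$ by Lemma \ref{lemma: E varphi}. This, rather than an $O(n^{-1/2})$ bound, is where the $\gamma_{\max}/(w_iw_r)$ term comes from.
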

Condition \eqref{iterative not degenerate} is to ensure that the asymptotic variance is not degenerate. 
Condition \eqref{negligible error 2} is to control the estimation error of the factors and  the plug-in error of $\{\tilde{\ab}^{(\textit{v})}_{i,j}\}_{i\in[\tilde{r}],j\in[m]}$ in the iterations. If all the factors are strong factors such that $ w_1\asymp w_r\asymp \sqrt{D_n}$, condition \eqref{negligible error 2} holds provided that $\gamma^2_{\max}\ll D_n/n$, $D_n\gg 1$, $(\max_{j\in[m]} s_j\log d_j)(\sum_{j=1}^md_j\log d_j)\ll nD_n$, and $(\max_{j\in[m]} s_j\log d_j)(\sum_{j=1}^md_j)^{2/\tilde c}\ll n^2D_n$. 

\section{Discussion}\label{sec: discuss}

In this paper, we develop new estimation methods for tensor CP-factor models that explicitly exploit the tensor structure and allow for correlated factors and loadings, thereby providing useful tools for analyzing high-dimensional tensor-valued data. Several assumptions adopted in this paper can be further relaxed.  Assumption \ref{error}, which requires the error process $\{\mathcal{E}_t\}_{t \ge 1}$ in \eqref{model cp} to be serially uncorrelated, is a key condition for the validity of our auto-covariance-based procedures. Once serial correlation is present in the error process, extending the proposed methods is challenging. In particular, the key identity $\Kb_{1,2,j} = \Ab_j\Gb_{1,\xi}\Gb_{2,\xi}^{-1}(\Ab_j^\T\Ab_j)^{-1}\Ab_j^\T$ with $\Ab_j=(\ab_{1,j},\ldots,\ab_{r,j})$ used to identify the factor loading vectors $\ab_{1,j},\ldots,\ab_{r,j}$ does not hold. How to identify and estimate  $\{\ab_{i,j}\}_{j\in[m],i \in [r]}$ in the setting with serially correlated error process deserves further investigation. Section \ref{sec: relax error serial dependence} in the supplementary material provides some further discussion for this. We also discuss in Section \ref{sec: relax tail condition} in the supplementary material that our theoretical results can be extended from the exponential-decay assumptions in Assumptions \ref{tail} and \ref{mixing} to polynomial-decay conditions.
 Assumption \ref{mixing} requires weak serial dependence among the observed tensor process $\{\mathcal{Y}_t\}_{t \ge 1}$ which does not cover the cases with unit-root tensor process. It would be interesting to extend the proposed methods to handle unit-root tensor process $\{\mathcal{Y}_t\}_{t \ge 1}$. We plan to investigate it in our future research.

\bibliographystyle{jasa}


\begingroup
\setlength{\bibsep}{2pt}      
\linespread{0.9}\selectfont 
\bibliography{Ref-abbreviation}

\begin{thebibliography}{17}
\newcommand{\enquote}[1]{``#1''}
\expandafter\ifx\csname natexlab\endcsname\relax\def\natexlab#1{#1}\fi
\expandafter\ifx\csname url\endcsname\relax
  \def\url#1{{\tt #1}}\fi
\expandafter\ifx\csname urlprefix\endcsname\relax\def\urlprefix{URL }\fi

\bibitem[{Andrews(1991)}]{andrews1991heteroskedasticity-app}
Andrews, D.~W. (1991).
\newblock Heteroskedasticity and autocorrelation consistent covariance matrix
  estimation.
\newblock {\em Econometrica\/}, 59, 817--858.

\bibitem[{Athreya and Lahiri(2006)}]{athreya2006measure-app}
Athreya, K.~B. and Lahiri, S.~N. (2006).
\newblock {\em Measure Theory and Probability Theory\/}.
\newblock Springer.

\bibitem[{Bai(2003)}]{bai2003inferential-app}
Bai, J. (2003).
\newblock Inferential theory for factor models of large dimensions.
\newblock {\em Econometrica\/}, 71, 135--171.

\bibitem[{Chang et~al.(2024)Chang, Chen, and Wu}]{chang2021central-app}
Chang, J., Chen, X., and Wu, M. (2024).
\newblock Central limit theorems for high dimensional dependent data.
\newblock {\em Bernoulli\/}, 30, 712--742.

\bibitem[{Chang et~al.(2026)Chang, Du, Huang, and Yao}]{chang2024unified-app}
Chang, J., Du, Y., Huang, G., and Yao, Q. (2026).
\newblock Identification and estimation for matrix time series CP-factor
  models.
\newblock {\em Ann. Stat.\/}, 54, 1372--1397.

\bibitem[{Chang et~al.(2018{\natexlab{a}})Chang, Guo, and
  Yao}]{chang2018principal-app}
Chang, J., Guo, B., and Yao, Q. (2018{\natexlab{a}}).
\newblock Principal component analysis for second-order stationary vector time
  series.
\newblock {\em Ann. Stat.\/}, 46, 2094--2124.

\bibitem[{Chang et~al.(2023)Chang, He, Yang, and Yao}]{chang2023modelling-app}
Chang, J., He, J., Yang, L., and Yao, Q. (2023).
\newblock Modelling matrix time series via a tensor CP-decomposition.
\newblock {\em J. R. Stat. Soc. Ser. B Stat. Methodol.\/}, 85, 127--148.

\bibitem[{Chang et~al.(2018{\natexlab{b}})Chang, Qiu, Yao, and
  Zou}]{chang2018confidence-app}
Chang, J., Qiu, Y., Yao, Q., and Zou, T. (2018{\natexlab{b}}).
\newblock Confidence regions for entries of a large precision matrix.
\newblock {\em J. Econom.\/}, 206, 57--82.

\bibitem[{Chen et~al.(2026)Chen, Han, and Yu}]{chen2026estimation-app}
Chen, B., Han, Y., and Yu, Q. (2026).
\newblock Estimation and inference for CP tensor factor models.
\newblock {\em J. Econom.\/}, 253, 106167.

\bibitem[{Chen et~al.(2021)Chen, Xiao, and Yang}]{chen2021autoregressive-app}
Chen, R., Xiao, H., and Yang, D. (2021).
\newblock Autoregressive models for matrix-valued time series.
\newblock {\em J. Econom.\/}, 222, 539--560.

\bibitem[{Ekstr{\"o}m(2014)}]{ekstrom2014general-app}
Ekstr{\"o}m, M. (2014).
\newblock A general central limit theorem for strong mixing sequences.
\newblock {\em Stat. Probab. Lett.\/}, 94, 236--238.

\bibitem[{Han et~al.(2024{\natexlab{a}})Han, Chen, Yang, and
  Zhang}]{han2024tensor-app}
Han, Y., Chen, R., Yang, D., and Zhang, C.-H. (2024{\natexlab{a}}).
\newblock Tensor factor model estimation by iterative projection.
\newblock {\em Ann. Stat.\/}, 52, 2641--2667.

\bibitem[{Han et~al.(2022)Han, Chen, and Zhang}]{han2022rank-app}
Han, Y., Chen, R., and Zhang, C.-H. (2022).
\newblock Rank determination in tensor factor model.
\newblock {\em Electron. J. Stat.\/}, 16, 1726--1803.

\bibitem[{Han et~al.(2024{\natexlab{b}})Han, Yang, Zhang, and
  Chen}]{han2024cp-app}
Han, Y., Yang, D., Zhang, C.-H., and Chen, R. (2024{\natexlab{b}}).
\newblock CP factor model for dynamic tensors.
\newblock {\em J. R. Stat. Soc. Ser. B Stat. Methodol.\/}, 86, 1383--1413.

\bibitem[{Rao and Rao(1998)}]{rao1998matrix-app}
Rao, C.~R. and Rao, M.~B. (1998).
\newblock {\em Matrix Algebra and Its Applications to Statistics and
  Econometrics\/}.
\newblock World Scientific.

\bibitem[{Trapani(2016)}]{trapani2016testing-app}
Trapani, L. (2016).
\newblock Testing for (in) finite moments.
\newblock {\em J. Econom.\/}, 191, 57--68.

\bibitem[{Wang et~al.(2019)Wang, Liu, and Chen}]{wang2019factor-app}
Wang, D., Liu, X., and Chen, R. (2019).
\newblock Factor models for matrix-valued high-dimensional time series.
\newblock {\em J. Econom.\/}, 208, 231--248.

\end{thebibliography}
\endgroup



\clearpage
\newpage 
\appendix

\spacingset{1.7}\selectfont
\setlength{\abovedisplayskip}{0.2\baselineskip}
\setlength{\belowdisplayskip}{0.2\baselineskip}
\setlength{\abovedisplayshortskip}{0.2\baselineskip}
\setlength{\belowdisplayshortskip}{0.2\baselineskip}
\begin{center}
	{\noindent \bf \Large Supplementary Material for ``CP-Factorization for High Dimensional Tensor Time Series and Double Projection Iterations" by Chang, Huang, Yao and  Yu}\\
\end{center}
\bigskip

\setcounter{page}{1}
\setcounter{section}{0}
\renewcommand\thesection{\Alph{section}}
\setcounter{lemma}{0}
\renewcommand{\thelemma}{L\arabic{lemma}}
\setcounter{theorem}{0}
\renewcommand{\thetheorem}{T\arabic{theorem}}
\renewcommand{\theHtheorem}{T\arabic{theorem}}

\setcounter{equation}{0}
\renewcommand{\theequation}{S.\arabic{equation}}

\setcounter{table}{0}
\setcounter{figure}{0}
\renewcommand{\thefigure}{F\arabic{figure}}
\renewcommand{\thetable}{T\arabic{table}}
\renewcommand{\thepage}{S\arabic{page}} 

 
 Section~\ref{sec: variance} discusses the estimation procedure for the asymptotic variance involved in Theorem \ref{thm: debias iterative}. Section~\ref{sec: supp-addition simulation} presents additional simulation and empirical results. Section~\ref{sec:factor and cp estimation} considers the estimation of the factors and common components. Section~\ref{sec: factor number} provides further discussion on estimating the number of factors.  Section~\ref{sec: supp main proof} contains the proofs of all theorems stated in both the main paper and the supplementary material. Section~\ref{sec:supp-error bounds} provides the proofs of auxiliary lemmas. Section~\ref{sec:relaxation assumptions} discusses possible relaxations of the technical assumptions.

 
\section{Estimation of the asymptotic variance}\label{sec: variance}
\subsection{Explicit form of the asymptotic variance in Theorem \ref{thm: debias iterative}}\label{sec: app-variance}
Under some regularity conditions, the quantity $\bar{\tau}^2_{i,j}(\hb)$ specified in \eqref{iterative not degenerate} can be derived explicitly. 
Specifically, let 
$$f_{t,i}^{\textup{s}} = \bigg[\frac{1}{n}\sum_{s=1}^n \mathbb{E}\{(f_{s,i}-\bar f_i)^2\}\bigg]^{-1/2}(f_{t,i}-\bar f_i)$$
be the standardized version of $f_{t,i}$, and
\begin{equation}\label{f ti sp}
    f^{\textup{sp}}_{t-1,i} = f_{t-1,i}^{\textup{s}} + \sum_{\ell \neq i}\bar\varphi_{i,\ell}f_{t,\ell}^{\textup{s}} 
\end{equation}
be the projection version of $f_{t-1,i}^{\textup{s}}$, where $\bar\varphi_{i,\ell}$ is specified below \eqref{xi it sp}. 
Let
\begin{equation}\label{def: bbeta ij}
    \bbeta_{i,j}(\hb) =  \bbb_{i,j}^{\MP}  \otimes \{\hb^{\T}(\Ib_{d_j}-\ab_{i,j}\ab_{i,j}^{\T})\}^{\T} \,.
\end{equation}
Without loss of generality and for notational simplicity, we ignore the reflection and permutation indeterminacy of $\hat \ab_{i,j}$. Following the limiting representation established in Theorem \ref{thm: debias iterative} and the arguments used in the proof of Theorem \ref{thm: estimation of iteration variance} (see \eqref{simplified expansion for hat a ij} in Section \ref{sec:thm: estimation of iteration variance}),
we have
\begin{equation*}
       	 w_i\hb^{\T}(\hat \ab_{i,j} - \ab_{i,j} - \hat{\bvartheta}_{i,j}) =  \frac{1}{\sigma_{f_i,\xi_i}} \frac{1}{n-1}\sum_{t=2}^n \bbeta_{i,j}(\hb)^{\T} \textup{vec}(\Eb_{t,j}) f_{t-1,i}^{\textup{sp}}  + o_{\rm p}(n^{-1/2})\,.
\end{equation*}
Let $\varsigma_{t,i,j}(\hb)
    =
    f_{t-1,i}^{\textup{sp}} \bbeta_{i,j}(\hb)^{\T}\textup{vec}(\Eb_{t,j})
    $ and write $|\cdot|_{+}=\max(\cdot,0)$. Under the conditions of Theorem \ref{thm: debias iterative}, $\{\varsigma_{t,i,j}(\hb)\}_{t=2}^n$ is an $\alpha$-mixing process with zero mean and mixing coefficients $\{\alpha(|\ell-1|_+)\}_{\ell\ge 1}$.
Hence, 
$\bar\tau_{i,j}^2(\hb)$ is determined by the long-run variance of
$\{\varsigma_{t,i,j}(\hb)\}_{t=2}^n$, i.e., 
\begin{align}\label{eq: long run covariance}
      \bar\tau_{i,j}^2(\hb)
    =
   \lim_{n\to\infty} \sigma_{f_i,\xi_i}^{-2} \textup{Var}\bigg\{\frac{1}{\sqrt{n-1}}
    \sum_{t=2}^n \varsigma_{t,i,j}(\hb)
    \bigg\} \,.
\end{align}
Based on Theorem \ref{thm: debias iterative}, the asymptotic variance of the iterative estimator is $w_i^{-2}\bar\tau_{i,j}^2(\hb)$.

\subsection{Details of the asymptotic variance estimation}\label{sec:variance iter}
Without loss of generality and for notational simplicity, we ignore the permutation indeterminacy of the estimator $\hat{\ab}_{i,j}$ in the introduction of the asymptotic variance estimation. Since $w_i$ and $\sigma_{f_i,\xi_i}$ are not separately identifiable, in order to estimate the asymptotic variance $w_i^{-2}\bar\tau_{i,j}^2(\hb)$, we first estimate $w_i|\sigma_{f_i,\xi_i}|$ and $|\sigma_{f_i,\xi_i}|\bar\tau_{i,j}(\hb)$, respectively.  Notice that  $T_{\delta_{2,j}}\{\tilde\bSigma^{(\textit{v}_{\max},j)}_{\tilde\yb_{i,j},\tilde\xi_{i}}(1)\}$ specified in Algorithm \ref{alg1} serves as an estimator for $(w_i\sigma_{f_i,\xi_i})\ab_{i,j}$. Therefore, we can estimate $w_i |\sigma_{f_i,\xi_i}|$ by
\begin{equation}\label{eq:wijdef}
   \hat w_{i,j}=|\hat\ab_{i,j}^\T T_{\delta_{2,j}}\{\tilde\bSigma^{(\textit{v}_{\max},j)}_{\tilde\yb_{i,j},\tilde\xi_{i}}(1)\}|\,. 
\end{equation}
By \eqref{eq: long run covariance}, estimating $|\sigma_{f_i,\xi_i}|\bar\tau_{i,j}(\hb)$ can be solved via estimating the long-run variance of the process $\{\varsigma_{t,i,j}(\hb)\}_{t=2}^n$. Since   $\bbeta_{i,j}(\hb)^{\T}\textup{vec}(\Eb_{t,j})=\bbeta_{i,j}(\hb)^{\T}\textup{vec}(\Yb_{t,j})$, and $\tilde\xi_{t-1,i}^{(\textit{v}_{\max},j)}$ defined in Algorithm \ref{alg1} provides an approximation of $f_{t-1,i}^{\textup{sp}}$, we define $ \hat{\varsigma}_{t,i,j}(\hb)
    =
    \tilde\xi_{t-1,i}^{(\textit{v}_{\max},j)} \hat{\bbeta}_{i,j}(\hb)^{\T}\textup{vec}(\Yb_{t,j})
    $
as an approximation of $\varsigma_{t,i,j}(\hb)$, where $\hat{\bbeta}_{i,j}(\hb)=(\tilde{\bbb}_{i,j}^{(\textit{v}_{\max})})^{\MP}\otimes \{\hb^\T (\Ib_{d_j}-\hat\ab_{i,j}\hat\ab_{i,j}^\T)\}^\T$ with $(\tilde{\bbb}_{i,j}^{(\textit{v}_{\max})})^{\MP}$ specified in Algorithm \ref{alg1}.  
To estimate $\sigma_{f_i,\xi_i}^2\bar\tau_{i,j}^2(\hb)$, we suggest the kernel-type estimator \citepS{andrews1991heteroskedasticity-app} as follows:
\begin{equation}\label{eq:kernelest}
        \tilde\tau^2_{i,j}(\hb)
    = 
    \sum_{s=-n+2}^{n-2}
    \mathcal{K}\bigg(\frac{s}{b_{n,i,j}}\bigg)
    \hat H_{s,i,j}(\hb)\,,
\end{equation}
where $\mathcal{K}(\cdot)$ is a symmetric kernel function that is continuous at $0$ with $\mathcal{K}(0)=1$, $b_{n,i,j}>0$ is the bandwidth, and
\[
\hat H_{s,i,j}(\hb) = \frac{1}{n-1}\sum_{t = \max(1,-s+1)}^{\min(n-1,n-1-s)}  \mathring{\hat{\varsigma}}_{t+1+s,i,j}(\hb) \mathring{\hat{\varsigma}}_{t+1,i,j}(\hb) 
\]
with $\mathring{\hat{\varsigma}}_{t,i,j}(\hb) = \hat{\varsigma}_{t,i,j}(\hb) - (n-1)^{-1}\sum_{s=2}^n \hat{\varsigma}_{s,i,j}(\hb)$.
When $\{\varsigma_{t,i,j}(\hb)\}_{t=2}^n$ are observed, \citeS{andrews1991heteroskedasticity-app} establishes the consistency of such kernel-type estimator (with replacing $\hat{\varsigma}_{t,i,j}(\hb)$ in \eqref{eq:kernelest} by $\varsigma_{t,i,j}(\hb)$) for long-run variance. When $\{\varsigma_{t,i,j}(\hb)\}_{t=2}^n$ are unobserved, the consistency of the kernel-type estimator \eqref{eq:kernelest} can still be established with some more tedious calculation. See, for example, the proof of Theorem 2 in  \citeS{chang2018confidence-app}.
In practice, the kernel function can be selected as the quadratic spectral kernel
\[
\mathcal{K}_{QS}(x)=\frac{25}{12\pi^2x^2}\bigg\{\frac{\sin(6\pi x/5)}{6\pi x/5}-\cos(6\pi x/5)\bigg\}\,,
\]
and the bandwidth $b_{n,i,j}$ can be selected by the data-driven rule suggested in Section 6 of \citeS{andrews1991heteroskedasticity-app}, i.e., $b_{n,i,j} = 1.3211\{\hat{v}_{i,j}(n-1)\}^{1/5}$  and 
$ \hat v_{i,j}
    =4\hat\rho_{i,j}^2 
    (1-\hat\rho_{i,j})^{-4}$
with $\hat\rho_{i,j}$  being the estimated autoregressive coefficient from fitting an AR(1) model to the time series
$\{\hat{\varsigma}_{t,i,j}(\hb)\}_{t=2}^{n}$.  Combining with $\hat{w}_{i,j}$ specified in \eqref{eq:wijdef}, $\hat{w}_{i,j}^{-2} \tilde\tau_{i,j}^2(\hb)$ provides an estimator for the asymptotic variance $w_{i}^{-2} \bar\tau_{i,j}^2(\hb)$.

Furthermore, if the error process $\{\mathcal{E}_t\}_{t\ge 1}$ is independent of the factor process $\{\mathbf{f}_t\}_{t\ge 1}$, the long-run variance \eqref{eq: long run covariance} admits a simple form. 
 Together with Assumption \ref{error} and the definition of 
 $f_{t-1,i}^{\textup{sp}}$, we have
\begin{align*}
    \bar\tau_{i,j}^2(\hb)
    &=
   \lim_{n\to\infty} \sigma_{f_i,\xi_i}^{-2}
   \textup{Var}\bigg\{
   \frac{1}{\sqrt{n-1}}
    \sum_{t=2}^n \varsigma_{t,i,j}(\hb)
    \bigg\}  \\
     &=   
    \lim_{n\to\infty} 
    \frac{\sigma_{f_i,\xi_i}^{-2}}{n-1}
    \sum_{t=2}^n\sum_{s=2}^n
    \textup{Cov}\{\varsigma_{t,i,j}(\hb),\varsigma_{s,i,j}(\hb)\}  \\
    &=
    \lim_{n\to\infty} 
    \frac{\sigma_{f_i,\xi_i}^{-2}}{n-1}
    \sum_{t=2}^n\sum_{s=2}^n
    \textup{Cov}[
    \bbeta_{i,j}(\hb)^{\T}\textup{vec}(\Eb_{t,j})f_{t-1,i}^{\textup{sp}},
    \bbeta_{i,j}(\hb)^{\T}\textup{vec}(\Eb_{s,j})f_{s-1,i}^{\textup{sp}}
    ] \\
    & =
   \lim_{n\to\infty}  
    \frac{\sigma_{f_i,\xi_i}^{-2} }{n-1}
    \sum_{t=2}^n\sum_{s=2}^n
    \mathbb{E}[
    f_{t-1,i}^{\textup{sp}}f_{s-1,i}^{\textup{sp}}
    \{\bbeta_{i,j}(\hb)^{\T}\textup{vec}(\Eb_{t,j})\}
    \{\bbeta_{i,j}(\hb)^{\T}\textup{vec}(\Eb_{s,j})\}
    ] \\
    &= \lim_{n\to\infty}
    \frac{\sigma_{f_i,\xi_i}^{-2}}{n-1}
    \sum_{t=2}^n
    \mathbb{E}[
    \{f_{t-1,i}^{\textup{sp}}\}^2
    \{\bbeta_{i,j}(\hb)^{\T}\textup{vec}(\Eb_{t,j})\}^2
    ]\,.
\end{align*}
Recall that $\bbeta_{i,j}(\hb)^{\T}\textup{vec}(\Eb_{t,j})=\bbeta_{i,j}(\hb)^{\T}\textup{vec}(\Yb_{t,j})$, and $\tilde\xi_{t-1,i}^{(\textit{v}_{\max},j)}$ defined in Algorithm \ref{alg1} is an approximation of $f_{t-1,i}^{\textup{sp}}$.  Let 
\begin{equation*}\label{hat sigma ij h}
\hat \tau_{i,j}(\hb)
=
\bigg|
\frac{1}{n-1}
\sum_{t=2}^n
(\tilde\xi_{t-1,i}^{(\textit{v}_{\max},j)})^2
\{\hat{\bbeta}_{i,j}(\hb)^{\T}\textup{vec}(\Yb_{t,j})\}^2
\bigg|^{1/2}\,,
\end{equation*}
which serves as a plug-in estimator of $|\sigma_{f_i,\xi_i}|\bar\tau_{i,j}(\hb)$. Theorem \ref{thm: estimation of iteration variance} establishes that
$\hat w_{i,j}^{-1} \hat\tau_{i,j}(\hb)$ is a consistent estimator of 
$w_i^{-1} \bar{\tau}_{i,j}(\hb)$ under the scenario where $\{\mathcal{E}_t\}_{t \ge 1}$ is independent of $\{\mathbf{f}_t\}_{t \ge 1}$. Therefore, we can estimate the asymptotic variance of the iterative estimator by $\hat w_{i,j}^{-2} \hat\tau_{i,j}^2(\hb)$.

\begin{theorem}\label{thm: estimation of iteration variance}
    Under the same assumptions as in Theorem \textup{\ref{thm: debias iterative}} with $(z_1,\ldots,z_r) = (1,\ldots,r)$, if further $\{\mathcal{E}_t\}_{t \ge 1}$ is independent of $\{\mathbf{f}_t\}_{t \ge 1}$, and
  \begin{equation}\label{condition for variance estimator}
   w_1\bigg(\max_{j\in[m]}\Phi_{n,j}+\frac{\gamma_{\max}}{w_r^2}\bigg)\ll 1~~\textrm{and}~~\frac{(\sum_{j=1}^m d_j)^{1/\tilde c_1}}{n}\bigg(\max_{j\in[m]}\Phi_{n,j}+\frac{\gamma_{\max}}{w_r^2}\bigg)\ll 1\,,
    \end{equation}
    where $\tilde c_1^{-1}=1+4c_1^{-1}+c_2^{-1}$ for $c_1$ and $c_2$ specified in Assumptions \ref{tail} and \ref{mixing}. Then
    \[
    \frac{\hat w_{i,j}^{-1}\hat \tau_{i,j}(\hb)}{w_i^{-1} \bar{\tau}_{i,j}(\hb) }-1=o_{\rm p}(1)\,.
    \]
\end{theorem}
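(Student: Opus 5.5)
Since $\hat w_{i,j}^{-1}\hat\tau_{i,j}(\hb)/\{w_i^{-1}\bar\tau_{i,j}(\hb)\}$ factors as the product of $w_i|\sigma_{f_i,\xi_i}|/\hat w_{i,j}$ and $\hat\tau_{i,j}(\hb)/\{|\sigma_{f_i,\xi_i}|\bar\tau_{i,j}(\hb)\}$, I will prove two separate relative consistency statements: (i) $\hat w_{i,j}/(w_i|\sigma_{f_i,\xi_i}|)-1=o_{\rm p}(1)$ and (ii) $\hat\tau^2_{i,j}(\hb)/\{\sigma^2_{f_i,\xi_i}\bar\tau^2_{i,j}(\hb)\}-1=o_{\rm p}(1)$. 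Throughout I take $(z_1,\ldots,z_r)=(1,\ldots,r)$ and absorb the signs $\kappa_{i,j}$. I work on the event where Theorem \ref{thm: iterative} holds, so $|\hat\ab_{i,j}-\ab_{i,j}|_2=O_{\rm p}(\varepsilon_n)$ with $\varepsilon_n=\max_j\Phi_{n,j}+\gamma_{\max}w_r^{-2}$. Since $\tilde\Bb_j^{(v_{\max})}$ is built from Kronecker products of the $\tilde\ab$'s, Assumption \ref{sparsity} and Lemma \ref{pro:rank-B} give $|(\tilde\bbb_{i,j}^{(v_{\max})})^{\MP}-\bbb_{i,j}^{\MP}|_2=O_{\rm p}(\varepsilon_n)$ and $|\tilde\ab_{i,j}^{\MP}-\ab_{i,j}^{\MP}|_2=O_{\rm p}(\varepsilon_n)$.

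For (i), I use the decomposition of $\tilde\bSigma^{(v_{\max},j)}_{\tilde\yb_{i,j},\tilde\xi_i}(1)$ already used in the proof of Theorem \ref{thm: iterative}. It equals $w_i\sigma_{f_i,\xi_i}\ab_{i,j}$ plus a remainder whose $\ell_2$ norm is $w_i\,O_{\rm p}(\varepsilon_n)+O_{\rm p}\{(d_j\log d_j/n)^{1/2}\}$. After thresholding at level $\delta_{2,j}$, the remainder is $w_i\,O_{\rm p}(\varepsilon_n)$ in $\ell_2$ (this is exactly how $\Phi_{n,j}$ arises). Hence $\hat w_{i,j}=|\hat\ab_{i,j}^\T T_{\delta_{2,j}}(\cdot)|=w_i|\sigma_{f_i,\xi_i}|\{1+O_{\rm p}(\varepsilon_n)\}$. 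Because $|\sigma_{f_i,\xi_i}|\ge C_9^{-1}$ and $\varepsilon_n\to0$, (i) follows.

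For (ii), the independence of $\{\mathcal{E}_t\}$ and $\{\fb_t\}$ together with Assumption \ref{error} reduces the target to $\sigma^2_{f_i,\xi_i}\bar\tau^2_{i,j}(\hb)=\lim(n-1)^{-1}\sum_t\mathbb{E}\{(f^{\rm sp}_{t-1,i})^2(\bbeta_{i,j}(\hb)^\T{\rm vec}(\Eb_{t,j}))^2\}$, as computed above. I proceed in three steps.

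\emph{Step 1 (oracle LLN).} Define the oracle statistic $\tau^{*2}=(n-1)^{-1}\sum_t(f^{\rm sp}_{t-1,i})^2\{\bbeta_{i,j}(\hb)^\T{\rm vec}(\Eb_{t,j})\}^2$. The summands are products of sub-Weibull variables: $\bbeta^\T{\rm vec}(\Eb_{t,j})$ is sub-Weibull by Assumption \ref{cross}, since $|\bbeta_{i,j}(\hb)|_2\le|\bbb^{\MP}_{i,j}|_2\lesssim1$. They are also $\alpha$-mixing by Assumption \ref{mixing}. An exponential inequality for mixing sequences then gives $\tau^{*2}-\mathbb{E}\tau^{*2}=o_{\rm p}(1)$, and $\mathbb{E}\tau^{*2}$ converges to the target, which is bounded away from zero by \eqref{iterative not degenerate}.

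\emph{Step 2 (replacing the factor proxy).} I replace $f^{\rm sp}_{t-1,i}$ by $\tilde\xi^{(v_{\max},j)}_{t-1,i}$. The difference $\tilde\xi_{t-1,i}-f^{\rm sp}_{t-1,i}$ has three sources: the error terms $w_\ell^{-1}(\ab^{\MP})^\T{\rm vec}(\mathcal{E}_t)$ in $\xi_{t,i}$, which are $O_{\rm p}(w_r^{-1})$ uniformly up to log factors; the plug-in error in the $\tilde\ab^{\MP}$, of order $w_1\varepsilon_n$ relative to $w_i$; and the estimated standardization and projection coefficients $\bar\varphi$, which are accurate to $O_{\rm p}(n^{-1/2}+\varepsilon_n w_1)$. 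The first part of \eqref{condition for variance estimator}, $w_1\varepsilon_n\ll1$, makes all of these small in mean square.

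\emph{Step 3 (replacing $\bbeta$).} I replace $\bbeta_{i,j}(\hb)^\T{\rm vec}(\Eb_{t,j})$ by $\hat\bbeta_{i,j}(\hb)^\T{\rm vec}(\Yb_{t,j})$. The difference consists of signal leakage $\sum_\ell w_\ell f_{t,\ell}\,\hb^\T(\Ib-\hat\ab\hat\ab^\T)\ab_{\ell,j}\,(\bbb_{\ell,j}^\T\tilde\bbb^{\MP}_{i,j})$, which is $O_{\rm p}(w_1\varepsilon_n)$ because $\hat\ab$ is close to $\ab$ and $\tilde\bbb^{\MP}$ is nearly biorthogonal to $\bbb_{\ell,j}$ for $\ell\ne i$. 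It also contains the error perturbation $(\hat\bbeta-\bbeta)^\T{\rm vec}(\Eb_{t,j})$. Both are small after averaging.

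The main obstacle is Step 3, specifically $(\hat\bbeta-\bbeta)^\T{\rm vec}(\Eb_{t,j})$. Here $\hat\bbeta$ is data-dependent, so Assumption \ref{cross} cannot be applied directly. I will bound $\max_t|(\hat\bbeta-\bbeta)^\T{\rm vec}(\Eb_{t,j})|$ through a union bound over an $\varepsilon$-net of the low-dimensional sets containing $\hat\ab_{\cdot,j'}$. These sets have dimension about $\sum_j d_j$, and taking the maximum over $t$ of fourth-power sub-Weibull terms produces the factor $(\sum_jd_j)^{1/\tilde c_1}/n$ with $\tilde c_1^{-1}=1+4c_1^{-1}+c_2^{-1}$. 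The second part of \eqref{condition for variance estimator} is exactly what makes this cross term $o_{\rm p}(1)$.

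Finally, I combine the three steps using $|a^2b^2-\tilde a^2\tilde b^2|\le$ products of differences and Cauchy–Schwarz, which yields (ii). A continuous-mapping argument for the square root then completes the proof.
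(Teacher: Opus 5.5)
Your architecture matches the paper's. It uses the same split into $\hat w_{i,j}/(w_i|\sigma_{f_i,\xi_i}|)$ and $\hat\tau_{i,j}(\hb)/\{|\sigma_{f_i,\xi_i}|\bar\tau_{i,j}(\hb)\}$. It handles $\hat w_{i,j}$ through the $\bzeta$-decomposition and the thresholding lemma; there you only get relative error $o_{\rm p}(1)$, not $O_{\rm p}(\varepsilon_n)$, since Lemma \ref{lem: zeta_1 lower bound} and the bounds on $\tilde\varphi^{(\textit{v},j)}_{i,\ell}$ are only $o_{\rm p}(1)$, but that suffices. It also uses the same four replacements (oracle LLN, factor proxy, $\bbeta\to\hat\bbeta$, and $\Eb\to\Yb$ via $w_1\varepsilon_n\ll 1$).

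The gap is in the step you call the main obstacle. A union bound over a net of Kronecker directions and over $t\in[n]$ gives $\max_t|(\hat\bbeta_{i,j}(\hb)-\bbeta_{i,j}(\hb))^\T\mathrm{vec}(\Eb_{t,j})|=O_{\rm p}\{\varepsilon_n(\sum_j d_j+\log n)^{1/c_1}\}$. A maximum over $t$ uses only marginal tails, so it cannot produce a factor $1/n$ or any dependence on the mixing exponent $c_2$. The rate $(\sum_j d_j)^{1/\tilde c_1}/n$ with $\tilde c_1^{-1}=1+4c_1^{-1}+c_2^{-1}$ instead comes from a Bernstein-type inequality for a time average of $\alpha$-mixing summands that are products of four sub-Weibull factors, union-bounded over a net of size $\exp(C\sum_j d_j)$. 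That is the paper's route. It proves the $(f^{\rm sp}_{t-1,i})^2$-weighted analogue of $\Xi_{9,n}(\tilde C)$: uniform concentration at rate $\tilde L_n$ of $(n-1)^{-1}\sum_t(f^{\rm sp}_{t-1,i})^2(\otimes^{j=1}_m\tilde\bbeta_j)^\T[\mathrm{vec}(\mathcal{E}_t)\mathrm{vec}(\mathcal{E}_t)^\T-\mathbb{E}\{\mathrm{vec}(\mathcal{E}_t)\mathrm{vec}(\mathcal{E}_t)^\T\}](\otimes^{j=1}_m\bbeta_j)$. It then expands $\hat\bbeta_{i,j}(\hb)-\bbeta_{i,j}(\hb)$ into finitely many Kronecker vectors with $O_{\rm p}(\varepsilon_n)$ coefficients, which bounds the change in the quadratic form by $O_{\rm p}(\varepsilon_n\tilde L_n)+O_{\rm p}(\varepsilon_n)$. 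Your claim that the second part of \eqref{condition for variance estimator} is exactly what kills your cross term is therefore unsupported: that condition says nothing about $\varepsilon_n(\sum_jd_j+\log n)^{1/c_1}$.

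Step 2 has a related hole. The noise part of the plug-in error, $(\otimes_{j'}\check\ab^{\MP}_{i,j'}-\otimes_{j'}\ab^{\MP}_{i,j'})^\T\mathrm{vec}(\mathcal{E}_{t-1})$, has a data-dependent direction. It enters an average weighted by the unbounded, dependent weights $\{\hat\bbeta_{i,j}(\hb)^\T\mathrm{vec}(\Eb_{t,j})\}^2$. Mean-square smallness plus $w_1\varepsilon_n\ll1$ does not transfer to such a weighted average. You need either a uniform fourth-order bound, which is the paper's Lemma \ref{lemma: 4th order cross moment} with rate $(\tilde L_n^2+\tilde L_n+1)\varepsilon_n^2$ and a second use of the second part of \eqref{condition for variance estimator}, or a uniform-in-$t$ bound on one factor.

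To repair the proposal you can adopt the paper's two uniform concentration results. Alternatively, keep your sup-over-$t$ bound, apply it to both noise terms, and verify $\varepsilon_n(\sum_jd_j+\log n)^{1/c_1}\to0$ directly. That verification does go through under the conditions of Theorem \ref{thm: debias iterative}. The computation in the proof of Theorem \ref{lemma: factor iterative} already gives $\varepsilon_n(\sum_jd_j)^{1/c_1}\to0$. For the $\log n$ part, \eqref{negligible error 2} with $L_n\ge(s_j\log d_j/n)^{1/2}$ gives $w_r\Phi_{n,j}^2\ll n^{-1/2}$ and $\gamma_{\max}w_r^{-2}\ll n^{-1/2}w_r^{-1}$. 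Your route then becomes a legitimate alternative resting on \eqref{negligible error 2} rather than on the second part of \eqref{condition for variance estimator}. The paper's remark after the theorem describes this variant, though with a cruder $(\sum_jd_j\log n)^{1/c_1}$ bound and a replacement condition.
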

Condition \eqref{condition for variance estimator} is imposed to control the plug-in error of $\hat\ab_{i,j}$ in the construction of $\hat w_{i,j}$ and $\hat\tau_{i,j}(\hb)$. If all the factors are strong in the sense that $ w_1\asymp w_r\asymp \sqrt{D_n}$, the first part of condition \eqref{condition for variance estimator} holds provided that $\max_{j\in[m]} s_j\log d_j \ll n$, and $\gamma_{\max}\ll \sqrt{D_n}$. The second part of  \eqref{condition for variance estimator} is mainly used to bound the error term $\{\hat\bbeta_{i,j}(\hb)-\bbeta_{i,j}(\hb)\}^{\T}\textup{vec}(\Eb_{t,j})$ in the proof of Theorem \ref{thm: estimation of iteration variance}. Under high-dimensional settings where $d_j\gg n$, one can bound this error based on the covering number argument similarly to (G.3) of Lemma G.1 in \citeS{han2024tensor-app}. Specifically, for any  $i\in[r]$ and $j\in[m]$, we can obtain 
\[
\sup_{t\in[n]}|\{\bbeta_{i,j}(\hb)-\hat\bbeta_{i,j}(\hb)\}^{\T} \textup{vec}(\Eb_{t,j})|=O_{\rm p}\bigg\{\bigg(\max_{j\in[m]}\Phi_{n,j}+\frac{\gamma_{\max}}{w_r^2}\bigg)\bigg(\sum_{j=1}^m d_j \log n \bigg)^{1/c_1}\bigg\}\,.
\]
With this bound, we can show that Theorem \ref{thm: estimation of iteration variance}  still holds if we replace the second part of \eqref{condition for variance estimator} with the following condition $(\max_{j\in[m]} s_j\log d_j)^{1/2}(\log n)^{2/c_1}\lesssim \sum_{j=1}^m d_j$.  Theorem \ref{thm: estimation of iteration variance} actually ignores the permutation indeterminacy by taking $(z_1,\ldots,z_r)=(1,\ldots,r)$. If we consider the permutation indeterminacy, the conclusion in Theorem \ref{thm: estimation of iteration variance} can be modified as
\[
    \frac{\hat w_{z_i,j}^{-1}\hat \tau_{z_i,j}(\hb)}
    {w_i^{-1}\bar{\tau}_{i,j}(\hb)}
    -1=o_{\rm p}(1)
\]
with $z_i$ specified in Theorem \ref{thm: iterative}.

Next, we evaluate the finite-sample properties of the variance estimators $ \hat w_{i,j}^{-2} \tilde \tau_{i,j}^2(\hb)$ and $ \hat w_{i,j}^{-2} \hat \tau_{i,j}^{2}(\hb)$ via simulation studies. The bandwidth $b_{n,i,j}$ and kernel function $\mathcal{K}(\cdot)$ are selected following the procedure described above. The data generation process follows the setup in Section \ref{sec:numerical}, with the sample size $n \in \{400,800,1600,3200\}$. It should be
noted that there exists the permutation indeterminacy between the estimates  and the true loading vectors  in practice. We set $(i,j) = (1,1)$ and $z_1 = \arg\min_{i \in [\tilde r]}\{1 - |\ab_{1,1}^{\T}\hat{\ab}_{i,1}|^2\}$ herein, which eliminates the permutation indeterminacy between the estimates and the true loading vector $\ab_{1,1}$.
Furthermore, we exclude cases that satisfy either $\min_{i \in [\tilde{r}]}\{1 - |\ab_{1,1}^{\T}\hat{\ab}_{i,1}|^2\} > 0.3$ or $\tilde{r} \neq r$. 
The estimation errors of $\hat w_{z_1,1}^{-2}\tilde \tau_{z_1,1}^{2}(\hb)$ and 
$\hat w_{z_1,1}^{-2}\hat \tau_{z_1,1}^{2}(\hb)$  are measured, respectively, by
\begin{equation}\label{eq:estimation error-iter-var}
\begin{split}
 D_{(\textup{LR})}(\hb)  
 &=  
 \{ \hat w_{z_1,1}^{-2} \tilde \tau_{z_1,1}^{2}(\hb) 
 - w_1^{-2} \bar{\tau}^{2}_{1,1}(\hb)  \}^2 \,,\\
 D_{(\textup{PI})}(\hb)  
 &=  
  \{ \hat w_{z_1,1}^{-2} \hat \tau_{z_1,1}^{2}(\hb)
 - w_1^{-2} \bar{\tau}^{2}_{1,1}(\hb) \}^2 \,.
\end{split}
\end{equation}
 As shown in Table \ref{table:var-iter-all}, the average estimation errors for both proposed estimators decrease as $n$ increases across all scenarios, which supports the consistency of the proposed variance estimators.

\begin{table}
\footnotesize
\caption{
The square root of the averages of the estimation errors \eqref{eq:estimation error-iter-var} for the estimators $\hat{w}_{z_1,1}^{-2}\hat \tau_{z_1,1}^{2}(\hb)$ and $\hat{w}_{z_1,1}^{-2}\tilde\tau_{z_1,1}^{2}(\hb)$ of the asymptotic variance based on 2000 repetitions. All numbers reported below are multiplied by $1000$.}

\centering
\renewcommand\tabcolsep{4pt}
\label{table:var-iter-all}

\resizebox{\textwidth}{!}{
\begin{tabular}{
>{\centering\arraybackslash}m{0.8cm}|
>{\centering\arraybackslash}m{0.8cm}|
>{\centering\arraybackslash}m{0.7cm}|
>{\centering\arraybackslash}m{0.8cm}|
cccc|cccc}
\hline\hline
\multirow{2}{*}{$\rho$} 
& \multirow{2}{*}{$\phi$} 
& \multirow{2}{*}{$s$} 
& \multirow{2}{*}{$\hb$} 
& \multicolumn{4}{c|}{$ D_{(\textup{PI})}(\hb)$} 
& \multicolumn{4}{c}{$ D_{(\textup{LR})}(\hb)$} \\
& & & 
& $n=400$ & $n=800$ & $n=1600$ & $n=3200$
& $n=400$ & $n=800$ & $n=1600$ & $n=3200$ \\ 
\hline

\multirow{12}{*}{\centering 0}    
& \multirow{6}{*}{\centering 0.25} 
& \multirow{2}{*}{\centering 0}   
& $\hb_1$ & 0.40 & 0.26 & 0.18 & 0.13 & 0.56 & 0.35 & 0.25 & 0.19 \\
& & & $\hb_2$ & 0.40 & 0.27 & 0.18 & 0.13 & 0.57 & 0.37 & 0.25 & 0.18 \\
\cline{3-12}

& & \multirow{2}{*}{\centering 0.3} 
& $\hb_1$ & 0.39 & 0.26 & 0.17 & 0.12 & 0.53 & 0.36 & 0.24 & 0.17 \\
& & & $\hb_2$ & 0.38 & 0.26 & 0.18 & 0.12 & 0.53 & 0.34 & 0.25 & 0.17 \\
\cline{3-12}

& & \multirow{2}{*}{\centering 0.6} 
& $\hb_1$ & 0.39 & 0.25 & 0.18 & 0.12 & 0.53 & 0.36 & 0.24 & 0.16 \\
& & & $\hb_2$ & 0.39 & 0.25 & 0.17 & 0.12 & 0.50 & 0.35 & 0.24 & 0.17 \\
\cline{2-12}

& \multirow{6}{*}{\centering 0.75} 
& \multirow{2}{*}{\centering 0}   
& $\hb_1$ & 0.59 & 0.42 & 0.29 & 0.20 & 0.83 & 0.58 & 0.40 & 0.28 \\
& & & $\hb_2$ & 0.61 & 0.42 & 0.29 & 0.21 & 0.84 & 0.58 & 0.40 & 0.29 \\
\cline{3-12}

& & \multirow{2}{*}{\centering 0.3} 
& $\hb_1$ & 0.46 & 0.32 & 0.21 & 0.15 & 0.64 & 0.43 & 0.30 & 0.20 \\
& & & $\hb_2$ & 0.46 & 0.32 & 0.21 & 0.14 & 0.62 & 0.43 & 0.31 & 0.20 \\
\cline{3-12}

& & \multirow{2}{*}{\centering 0.6} 
& $\hb_1$ & 0.43 & 0.28 & 0.19 & 0.13 & 0.58 & 0.38 & 0.26 & 0.18 \\
& & & $\hb_2$ & 0.40 & 0.28 & 0.19 & 0.13 & 0.55 & 0.39 & 0.26 & 0.18 \\
\hline

\multirow{12}{*}{\centering 0.75}    
& \multirow{6}{*}{\centering 0.25} 
& \multirow{2}{*}{\centering 0}   
& $\hb_1$ & 3.42 & 1.49 & 0.98 & 0.65 & 4.03 & 1.85 & 1.18 & 0.79 \\
& & & $\hb_2$ & 3.34 & 1.47 & 0.96 & 0.69 & 3.78 & 1.83 & 1.16 & 0.83 \\
\cline{3-12}

& & \multirow{2}{*}{\centering 0.3} 
& $\hb_1$ & 2.71 & 1.45 & 0.92 & 0.61 & 3.31 & 1.73 & 1.13 & 0.74 \\
& & & $\hb_2$ & 2.69 & 1.37 & 0.91 & 0.61 & 3.13 & 1.68 & 1.13 & 0.76 \\
\cline{3-12}

& & \multirow{2}{*}{\centering 0.6} 
& $\hb_1$ & 2.59 & 1.45 & 0.90 & 0.61 & 3.08 & 1.73 & 1.10 & 0.75 \\
& & & $\hb_2$ & 2.60 & 1.38 & 0.89 & 0.60 & 2.92 & 1.67 & 1.05 & 0.72 \\
\cline{2-12}

& \multirow{6}{*}{\centering 0.75} 
& \multirow{2}{*}{\centering 0}   
& $\hb_1$ & 4.36 & 2.50 & 1.52 & 1.11 & 5.40 & 3.03 & 1.81 & 1.27 \\
& & & $\hb_2$ & 4.25 & 2.46 & 1.49 & 1.12 & 5.39 & 2.99 & 1.77 & 1.42 \\
\cline{3-12}

& & \multirow{2}{*}{\centering 0.3} 
& $\hb_1$ & 3.39 & 1.78 & 1.14 & 0.74 & 3.99 & 2.11 & 1.38 & 0.91 \\
& & & $\hb_2$ & 3.36 & 1.84 & 1.10 & 0.75 & 3.79 & 2.15 & 1.33 & 0.90 \\
\cline{3-12}

& & \multirow{2}{*}{\centering 0.6} 
& $\hb_1$ & 2.88 & 1.55 & 0.99 & 0.66 & 3.25 & 1.89 & 1.18 & 0.81 \\
& & & $\hb_2$ & 2.82 & 1.52 & 0.98 & 0.64 & 3.23 & 1.77 & 1.20 & 0.79 \\
\hline\hline
\end{tabular}
}
\end{table}
\begin{figure}[htbp]
\centerline{\includegraphics[width= 12cm]{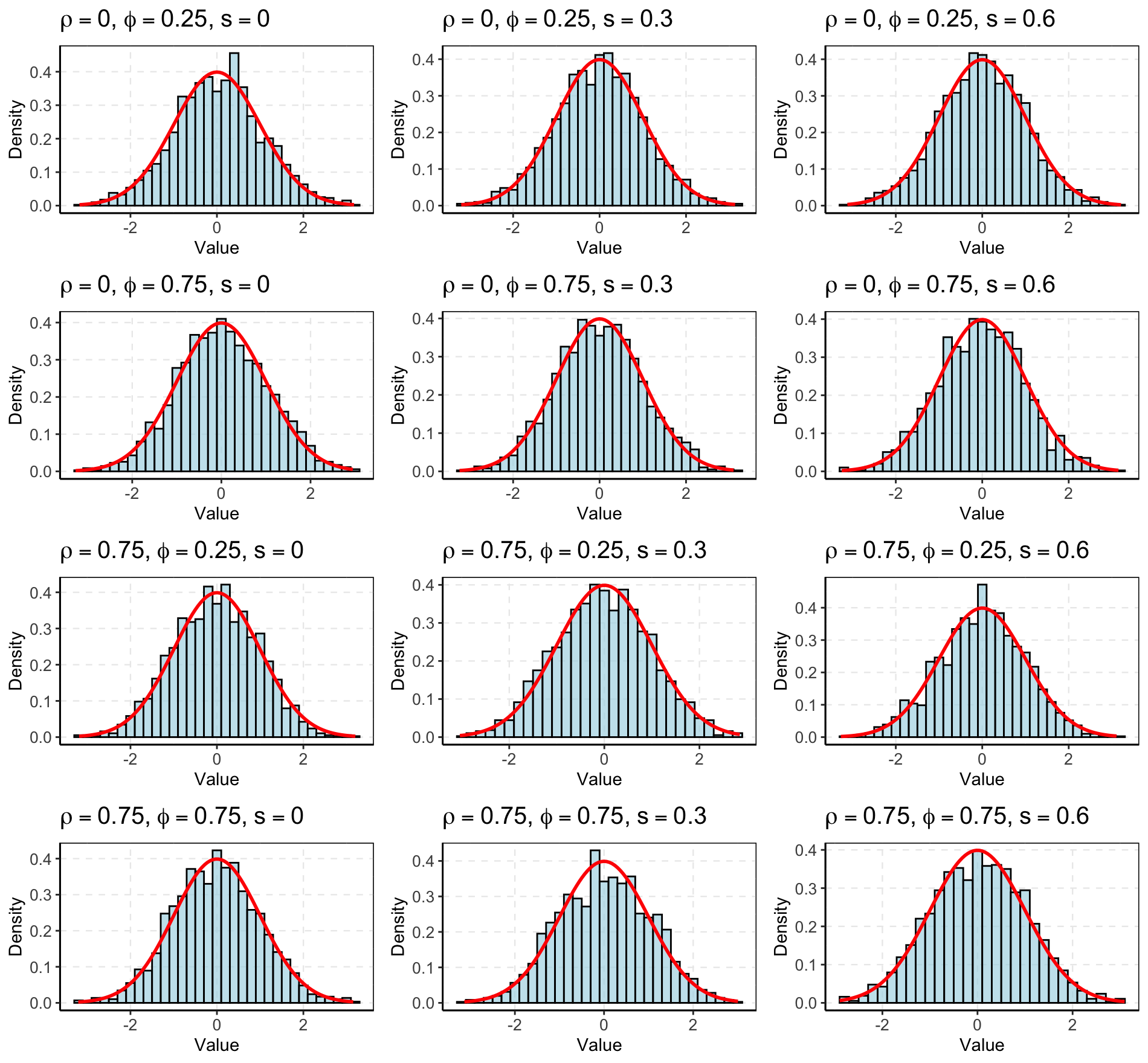}}
\caption{The histograms of $\{\hat{w}_{z_1,1} \tilde \tau_{z_1,1}^{-1}(\hb_1)\} \sqrt{n}\, \hb^{\T}_{1}\{\hat{\ab}_{z_1,1} - \text{sign}(\ab_{1,1}^{\T}\hat{\ab}_{z_1,1} ) \cdot \ab_{1,1} - \hat{\bvartheta}_{z_1,1} \}$  based on 2000 repetitions. The sample size $n = 400$. The red curve plots the density of $\mathcal{N}(0,1)$.}
\label{fig:normality-h1-hist-iter-est-longrun}
\end{figure}

\begin{figure}[htbp]
\centerline{\includegraphics[width= 12cm]{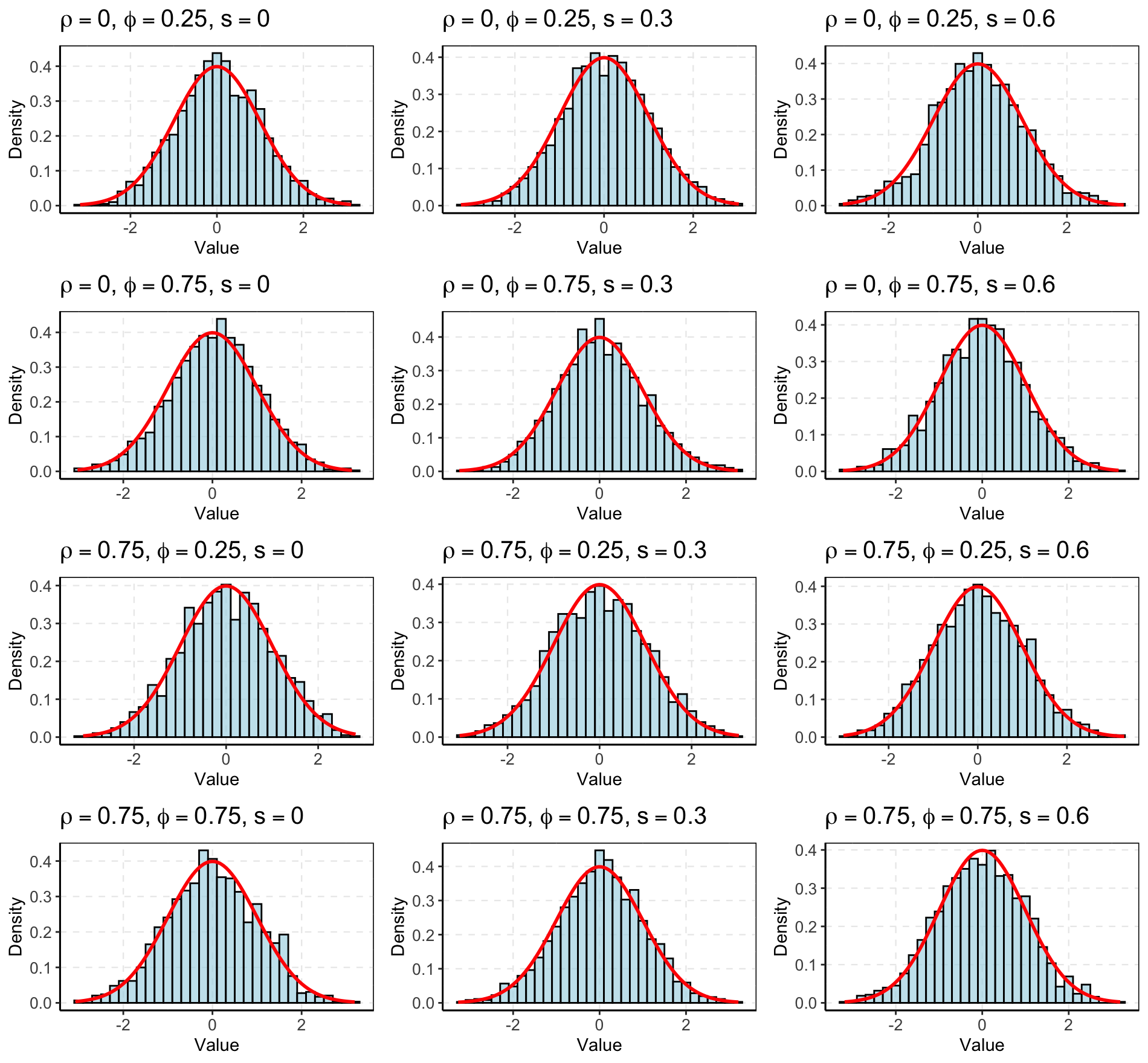}}
\caption{The histograms of $\{\hat{w}_{z_1,1} \tilde \tau_{z_1,1}^{-1}(\hb_2)\} \sqrt{n}\, \hb^{\T}_{2}\{\hat{\ab}_{z_1,1} - \text{sign}(\ab_{1,1}^{\T}\hat{\ab}_{z_1,1} ) \cdot \ab_{1,1} - \hat{\bvartheta}_{z_1,1} \}$  based on 2000 repetitions. The sample size $n = 400$. The red curve plots the density of $\mathcal{N}(0,1)$.}
\label{fig:normality-h2-hist-iter-est-longrun}
\end{figure}

\section{Additional simulation and empirical results}\label{sec: supp-addition simulation}

\subsection{Additional simulation results}\label{sec: addtional simulation results in main paper}

This section presents additional results for the numerical studies  discussed in Section \ref{sec:numerical}. 
 Figures \ref{fig:normality-h1-hist-iter-est-longrun} and \ref{fig:normality-h2-hist-iter-est-longrun} 
present  the histograms of $\{\hat w_{z_1,1}\tilde \tau_{z_1,1}^{-1}(\hb_k)\}\sqrt{n}\,\hb_{k}^{\T}\{\hat{\ab}_{z_1,1}-\text{sign}(\ab_{1,1}^{\T}\hat{\ab}_{z_1,1})\cdot \ab_{1,1}-\hat{\bvartheta}_{z_1,1}\}$ for $k \in \{1,2\}$ and $n = 400$ based on 2000 repetitions, which verify the asymptotic normality of our iterative estimator based on the asymptotic variance estimation $\hat w_{z_1,1}^{-2} \tilde\tau_{z_1,1}^{2}(\hb)$.  
Figure \ref{fig:peak RAM} presents line plots of the average peak RAM values, with the shaded region  representing the corresponding standard deviations, based on 100 repetitions. 
Figures \ref{fig:Krobust-acc} and \ref{fig:Krobust-error} summarize the sensitivity analysis with respect to the tuning parameter $K$. Specifically, Figure \ref{fig:Krobust-acc} shows the relative frequency estimates of correctly selecting the number of factors by the log-ER estimator, and Figure \ref{fig:Krobust-error} reports the averages of  the estimation errors of Pro.iter and Pro.init.



\begin{figure}[htbp]
\centerline{\includegraphics[width= 12cm]{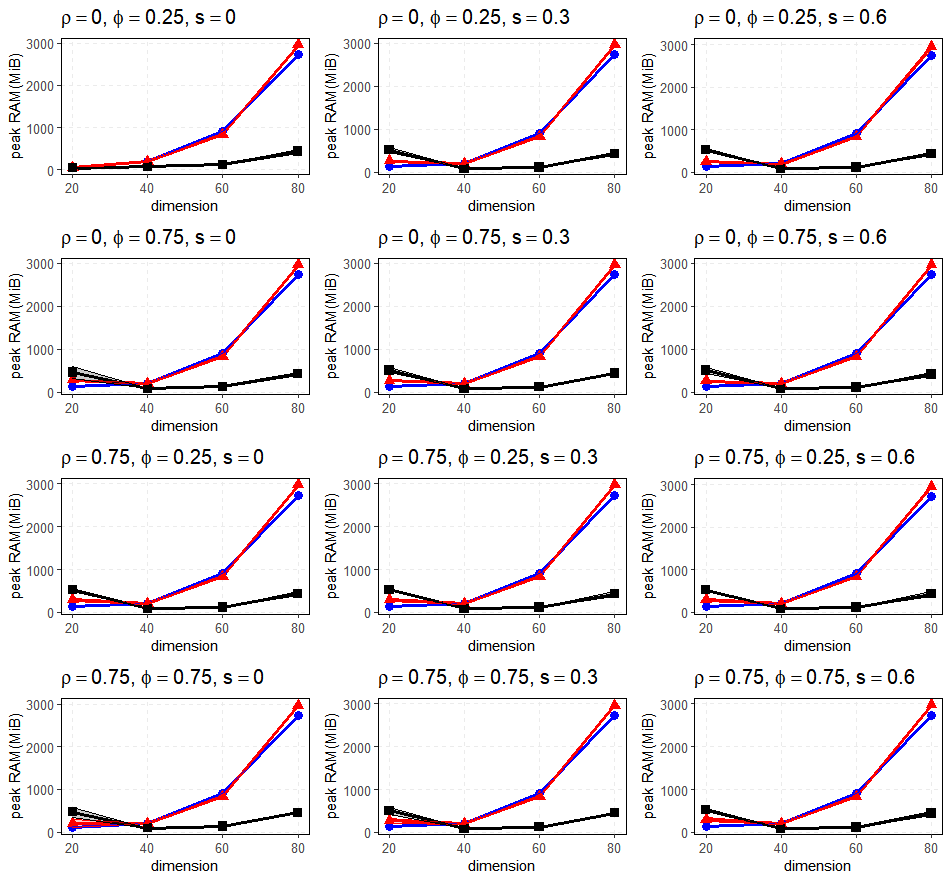}}
\caption{The lineplots for the averages and standard deviations (shaded region) of peak RAM based on 100 repetitions. The sample size $n = 400$. The legend is defined as follows: (i) Pro.iter initialized with Pro.init (\color{black}{$-$\scalebox{0.75}{$\blacksquare$}$-$}), (ii) HOPE ($\color{red}{-\blacktriangle-}$), and (iii) CC-ISO ($\color{blue}{-\bullet-}$).}
\label{fig:peak RAM}
\end{figure}

\begin{figure}[htbp]
\centerline{\includegraphics[width= 10cm]{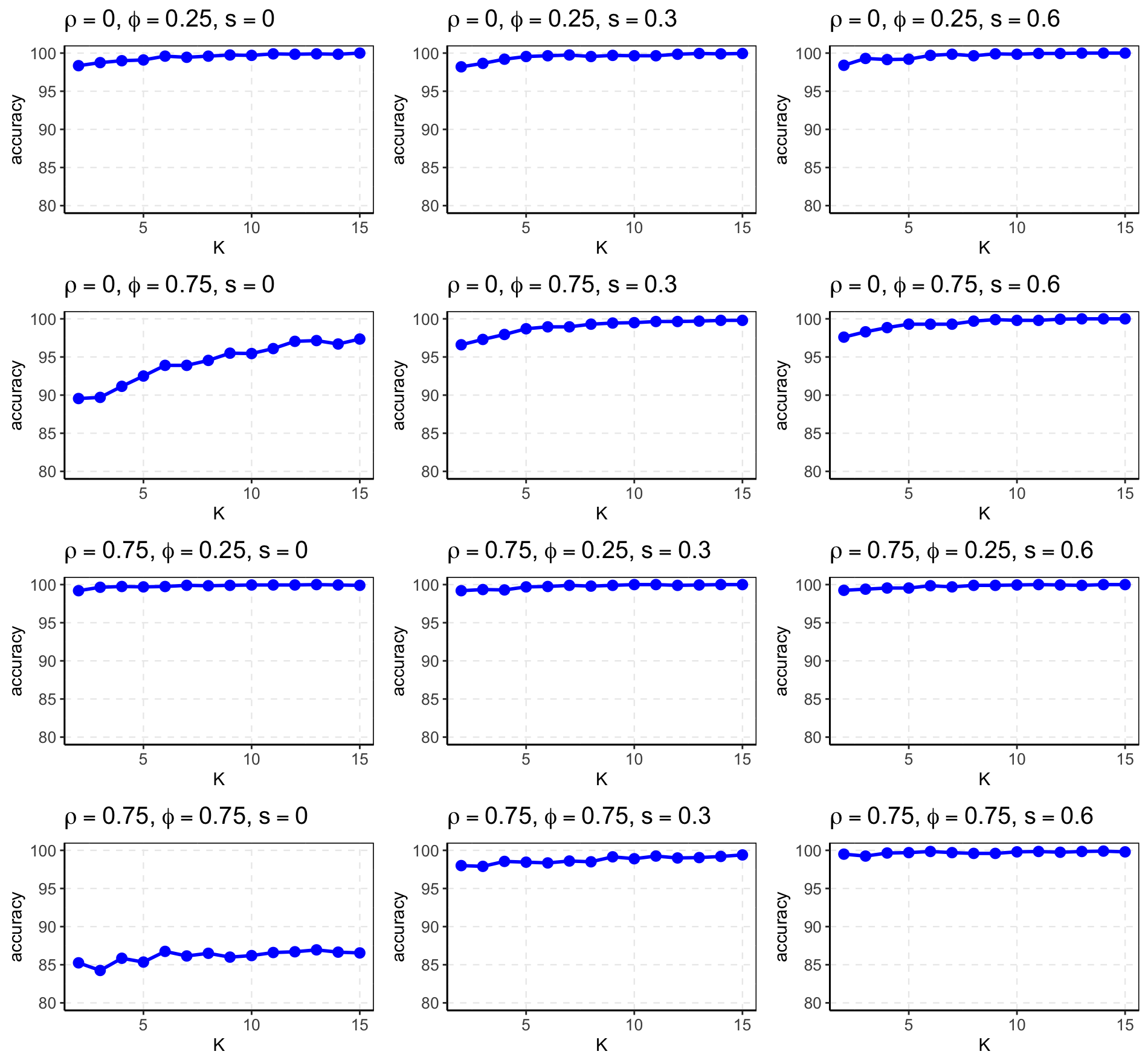}}
\caption{The lineplots for the relative frequency estimates of $\mathbb{P}(\tilde{r} = r)$ with respect to $K \in \{2,\ldots,15\}$ based on 2000 repetitions, where $\tilde{r}$ is determined by the log-ER estimator. The sample size $n = 400$.}
\label{fig:Krobust-acc}
\end{figure}

\begin{figure}[htbp]
\centerline{\includegraphics[width= 10cm]{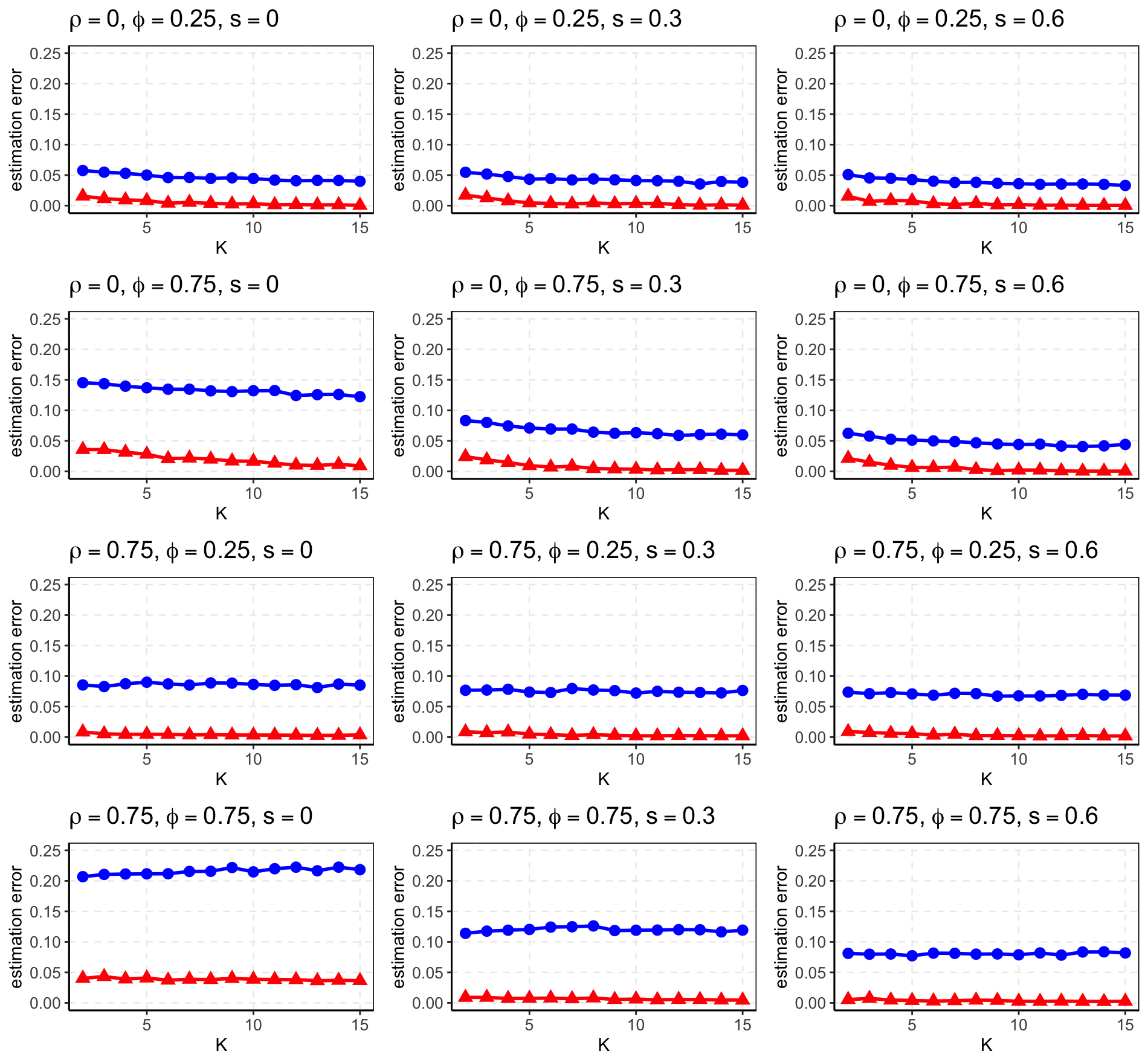}}
\caption{The lineplots for the averages of estimation errors \eqref{eq:estimation error} with respect to $K \in \{2,\ldots,15\}$ based on 2000 repetitions. The sample size $n = 400$. The legend is defined as follows: Pro.iter ($\color{red}{-\blacktriangle-}$) and Pro.init ($\color{blue}{-\bullet-}$).}
\label{fig:Krobust-error}
\end{figure}

\clearpage
\newpage
\subsection{Additional results for the analysis of air pollution data}\label{sec: addtional empirical results in main paper}

This section provides some additional results for the real data analysis in Section~\ref{sec:application}. Figure~\ref{fig:app-air-timeseries} shows the time series plots of pollutant concentration changes used in the analysis.  

\begin{figure}[htbp]
\centerline{\includegraphics[width= 17 cm]{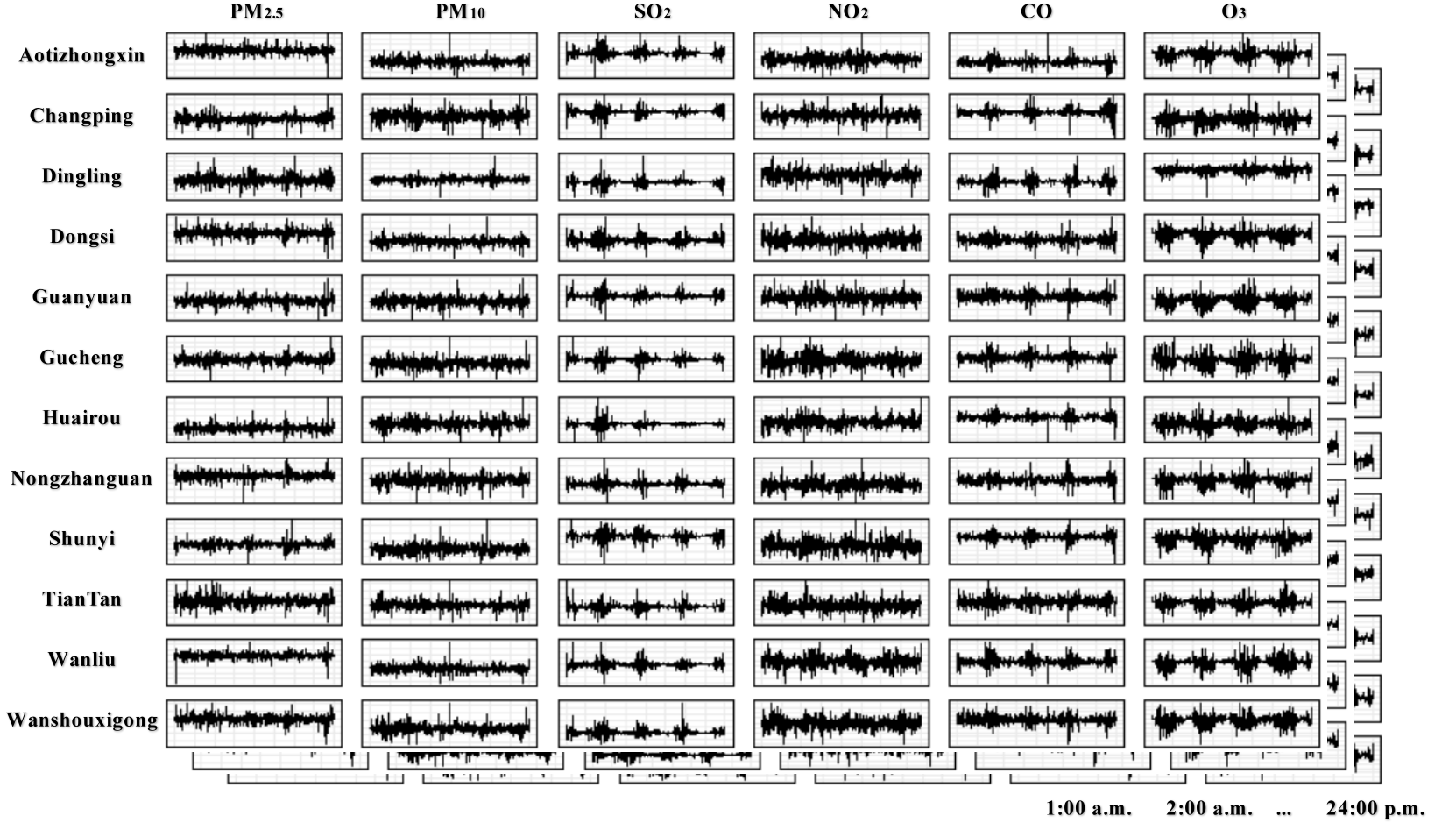}}
\caption{The time series plots of the concentration change of pollutants.}
\label{fig:app-air-timeseries}
\end{figure}

We first compare the results obtained based on different estimation methods.
Since all methods (Pro.iter, Pro.init, HOPE, and CC-ISO) are identifiable only up to the reflection and permutation indeterminacy, we need to apply suitable reordering and sign adjustments for each method  to make the results comparable. 
Based on the analysis in Section~\ref{sec:application}, the proposed iterative method (Pro.iter) identifies two interpretable pollution factors, namely the \textit{ozone-related factor} and the \textit{general pollution factor}. For each of the other three methods (Pro.init, HOPE, and CC-ISO), we align the associated estimates with these two benchmark patterns. 
More specifically, for each of these three methods,  to resolve the permutation indeterminacy, we reorder the estimated factors and estimated loading vectors so that the first factor corresponds to the \textit{ozone-related factor} ($i=1$) and the second factor corresponds to the \textit{general pollution factor} ($i=2$). To resolve the reflection indeterminacy, we consider the following sign conventions.  For the pollution-variable mode ($j=2$), we require the estimated loading of the \textit{ozone-related factor} ($i=1$) on O$_3$ and the estimated loading of the \textit{general pollution factor} ($i=2$) on PM$_{2.5}$ to be positive. For the monitoring-station mode ($j=1$), we require the estimated loadings of both factors to be positive at the first monitoring station (Aotizhongxin). For the diurnal mode ($j=3$), we require the first elements of the estimated loading vectors for both factors to be negative.

\begin{table}[htbp]
\footnotesize
\centering
\renewcommand{\arraystretch}{1.15}
\setlength{\tabcolsep}{4pt}
\caption{Estimations of the loading vectors $\ab_{i,2}\in\mathbb{R}^{6}$ for the pollution-variable mode based on four methods: Pro.iter, Pro.init, HOPE, and CC-ISO. For Pro.iter, standard errors reported in parentheses are calculated based on the asymptotic variance estimation $\hat{w}_{i,j}^{-2}\hat \tau_{i,j}^{2}(\hb)$. $^{*}$, $^{**}$, and $^{***}$ indicate significance at the levels 5\%, 1\%, and 1\textperthousand, respectively, based on two-sided $t$-tests. }
\label{table:app-loading-a2-all}
\begin{tabular}{c|cccc|cccc}
\hline\hline
& \multicolumn{4}{c|}{$i=1$} & \multicolumn{4}{c}{$i=2$} \\
Pollutant
& Pro.iter & Pro.init & HOPE & CC-ISO
& Pro.iter & Pro.init & HOPE & CC-ISO \\
\hline
PM$_{2.5}$
& 0.008 (0.015)
& $-$0.102
& $-$0.030
& 0.015
& 0.659$^{***}$ (0.035)
& 0.625
& 0.485
& 0.511
\\
PM$_{10}$
& $-$0.021 (0.013)
& $-$0.154
& $-$0.071
& $-$0.045
& 0.430$^{***}$ (0.025)
& 0.423
& 0.400
& 0.452
\\
SO$_2$
& 0.049$^{**}$ (0.017)
& $-$0.219
& 0.028
& 0.064
& 0.304$^{***}$ (0.036)
& 0.288
& 0.263
& 0.276
\\
NO$_2$
& $-$0.236$^{***}$ (0.016)
& $-$0.339
& $-$0.317
& $-$0.176
& 0.289$^{***}$ (0.053)
& 0.324
& 0.542
& 0.488
\\
CO
& 0.182$^{***}$ (0.012)
& $-$0.008
& $-$0.132
& $-$0.080
& 0.452$^{***}$ (0.031)
& 0.492
& 0.467
& 0.459
\\
O$_3$
& 0.953$^{***}$ (0.002)
& 0.896
& 0.936
& 0.978
& 0.009 (0.083)
& $-$0.008
& $-$0.151
& $-$0.102
\\
\hline\hline
\end{tabular}
\end{table}

As shown in Table \ref{table:app-loading-a2-all}, the three  methods (Pro.init, HOPE, and CC-ISO) also identify two interpretable pollution patterns for the pollution-variable mode ($j=2$), namely the \textit{ozone-related factor} ($i=1$) and the \textit{general pollution factor} ($i=2$). 
The estimations of the loading vectors $\ab_{i,2}\in\mathbb{R}^6$ for the pollution-variable mode based on different methods (Pro.iter, Pro.init, HOPE, and CC-ISO) have similar patterns. 
The main differences among the four methods arise in the estimated factor loadings for the monitoring-station mode ($j=1$) and the diurnal mode ($j=3$), as well as in the estimated factor process.
 
\begin{table}[htbp]
\centering
\footnotesize
\caption{Estimations of the loading vectors $\ab_{i,1}\in\mathbb{R}^{12}$ for the monitoring-station mode based on four methods: Pro.iter, Pro.init, HOPE, and CC-ISO. For Pro.iter, standard errors reported in parentheses are calculated based on the asymptotic variance estimation $\hat{w}_{i,j}^{-2}\hat \tau_{i,j}^{2}(\hb)$. $^{*}$, $^{**}$, and $^{***}$ indicate significance at the levels 5\%, 1\%, and 1\textperthousand, respectively, based on two-sided $t$-tests. }
\label{table:app-loading-a1-all}
\renewcommand{\arraystretch}{1.15}
\setlength{\tabcolsep}{3pt}
\begin{tabular}{c|cccc|cccc}
\hline\hline
& \multicolumn{4}{c|}{$i=1$} & \multicolumn{4}{c}{$i=2$} \\
Station
& Pro.iter & Pro.init & HOPE & CC-ISO
& Pro.iter & Pro.init & HOPE & CC-ISO \\
\hline
Aotizhongxin
& 0.300$^{***}$ (0.004)
& 0.130
& 0.290
& 0.292
& 0.341$^{***}$ (0.030)
& 0.275
& 0.190
& 0.361
\\
Changping
& 0.265$^{***}$ (0.005)
& 0.392
& 0.274
& 0.276
& 0.049 (0.048)
& 0.234
& $-$0.504
& 0.088
\\
Dingling
& 0.263$^{***}$ (0.006)
& 0.390
& 0.280
& 0.277
& 0.059 (0.043)
& 0.234
& $-$0.142
& 0.044
\\
Dongsi
& 0.278$^{***}$ (0.005)
& 0.039
& 0.293
& 0.290
& 0.288$^{***}$ (0.039)
& 0.308
& 0.166
& 0.395
\\
Guanyuan
& 0.289$^{***}$ (0.005)
& 0.019
& 0.292
& 0.296
& 0.333$^{***}$ (0.032)
& 0.314
& 0.055
& 0.394
\\
Gucheng
& 0.294$^{***}$ (0.005)
& 0.202
& 0.279
& 0.277
& 0.329$^{***}$ (0.042)
& 0.293
& $-$0.300
& 0.174
\\
Huairou
& 0.267$^{***}$ (0.007)
& 0.371
& 0.294
& 0.291
& 0.215$^{***}$ (0.050)
& 0.291
& $-$0.649
& 0.058
\\
Nongzhanguan
& 0.299$^{***}$ (0.005)
& 0.113
& 0.297
& 0.298
& 0.362$^{***}$ (0.034)
& 0.308
& 0.298
& 0.390
\\
Shunyi
& 0.287$^{***}$ (0.006)
& 0.276
& 0.276
& 0.277
& 0.476$^{***}$ (0.059)
& 0.265
& $-$0.088
& 0.187
\\
Tiantan
& 0.289$^{***}$ (0.005)
& 0.380
& 0.292
& 0.293
& 0.211$^{***}$ (0.040)
& 0.317
& 0.175
& 0.357
\\
Wanliu
& 0.325$^{***}$ (0.006)
& 0.509
& 0.296
& 0.295
& 0.194$^{***}$ (0.033)
& 0.272
& $-$0.108
& 0.268
\\
Wanshouxigong
& 0.301$^{***}$ (0.004)
& 0.069
& 0.300
& 0.300
& 0.298$^{***}$ (0.041)
& 0.334
& 0.096
& 0.360
\\
\hline\hline
\end{tabular}
\end{table}

 For the monitoring-station mode ($j=1$), Table~\ref{table:app-loading-a1-all} shows that, for the \textit{general pollution factor}, Pro.iter and CC-ISO produce broadly similar spatial patterns: all estimated loadings are positive, and both methods assign relatively small estimated loadings to suburban stations such as Dingling and Changping. This is consistent with the interpretation in Section \ref{sec:application} that these stations are located in less polluted areas.  
    By contrast, Pro.init yields a more homogeneous pattern for the \textit{general pollution factor}, with the estimated loadings of similar magnitude across stations. The HOPE estimates are less regular and include several negative estimated loadings, making the spatial pattern harder to interpret. For the \textit{ozone-related factor}, all four methods give positive estimated loading vectors.

  For the diurnal mode ($j=3$), Figure~\ref{fig:beijing-hourly-all} shows that the four methods give broadly similar estimated loadings for the \textit{ozone-related factor}.  The difference is more pronounced for the \textit{general pollution factor}. The estimates of Pro.iter exhibit a clear bimodal pattern, with a morning peak and a higher evening peak, which is consistent with the daily cycle of human activity discussed in Section \ref{sec:application}. By contrast, the factor loadings estimated by Pro.init, HOPE, and CC-ISO exhibit less plausible bimodal patterns for the \textit{general pollution factor}. In particular, the estimated factor loadings of these three methods begin to rise as early as around 3:00 a.m., when human activity is still very limited. Moreover, they do not capture the daytime accumulation pattern of general air pollutants; instead, their estimated loadings decline noticeably in the afternoon, which is less consistent with typical daily pollution patterns in urban environments. By comparison, the estimates of Pro.iter are more closely aligned with the known daily patterns of human activity.

\begin{figure}[htbp]
 
    \centering
\subfigure[ozone-related factor $(i = 1)$]{\includegraphics[width=0.45\textwidth]{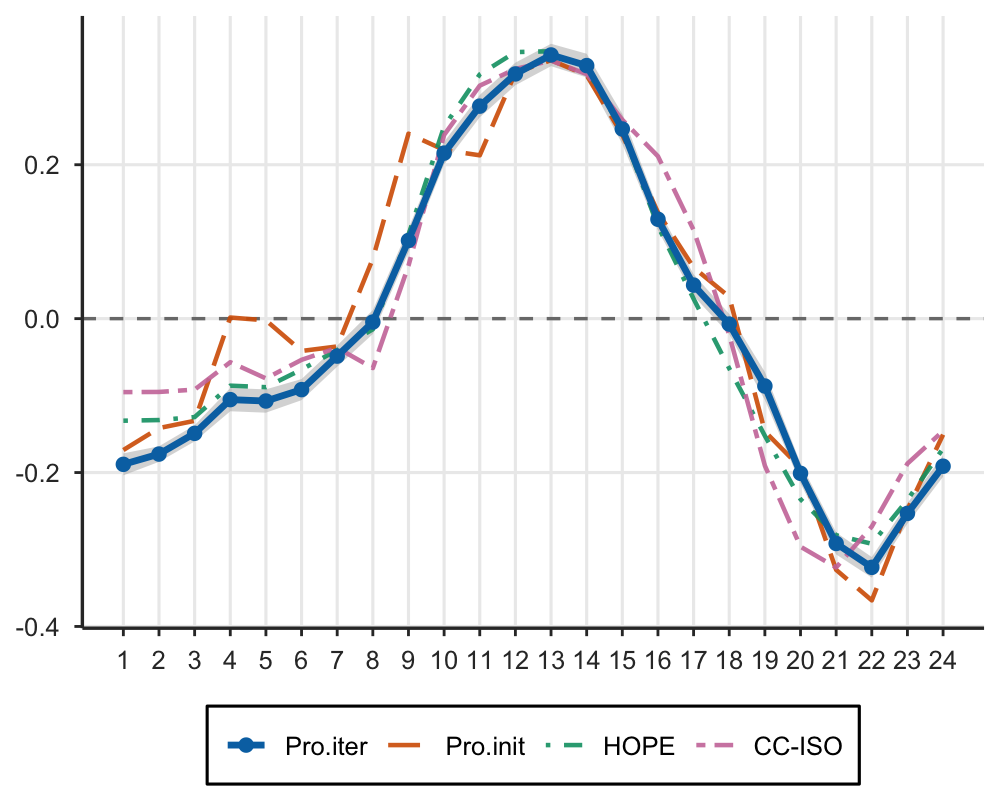}}
\subfigure[general pollution factor $(i = 2)$]{\includegraphics[width=0.45\textwidth]{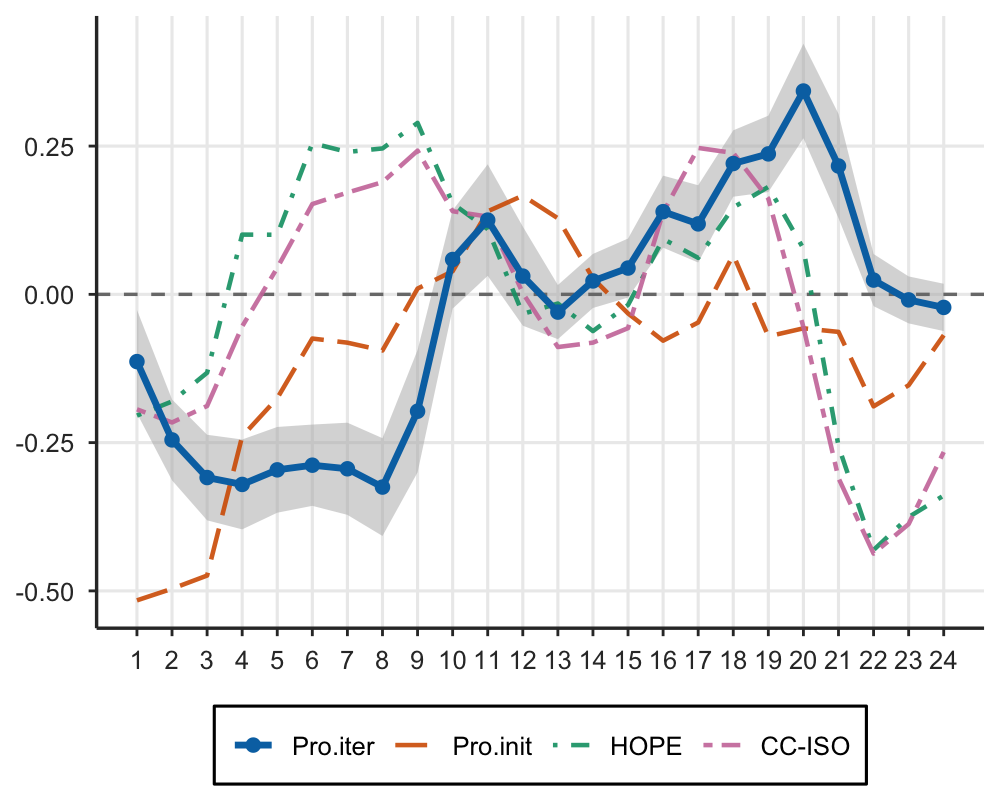}}
   
  \caption{Estimations of the loading vectors $\ab_{i,3}\in\mathbb{R}^{24}$ for the diurnal mode based on four methods: Pro.iter, Pro.init, HOPE, and CC-ISO. The gray shaded region represents the pointwise 95\% confidence interval for Pro.iter. Standard errors are calculated based on the asymptotic variance estimation  $\hat{w}_{i,j}^{-2}\hat \tau_{i,j}^{2}(\hb)$.}
    \label{fig:beijing-hourly-all}

\end{figure}

\begin{figure}[htbp]
    \centering
\subfigure[ozone-related factor $(i = 1)$]{\includegraphics[width=0.48\textwidth]{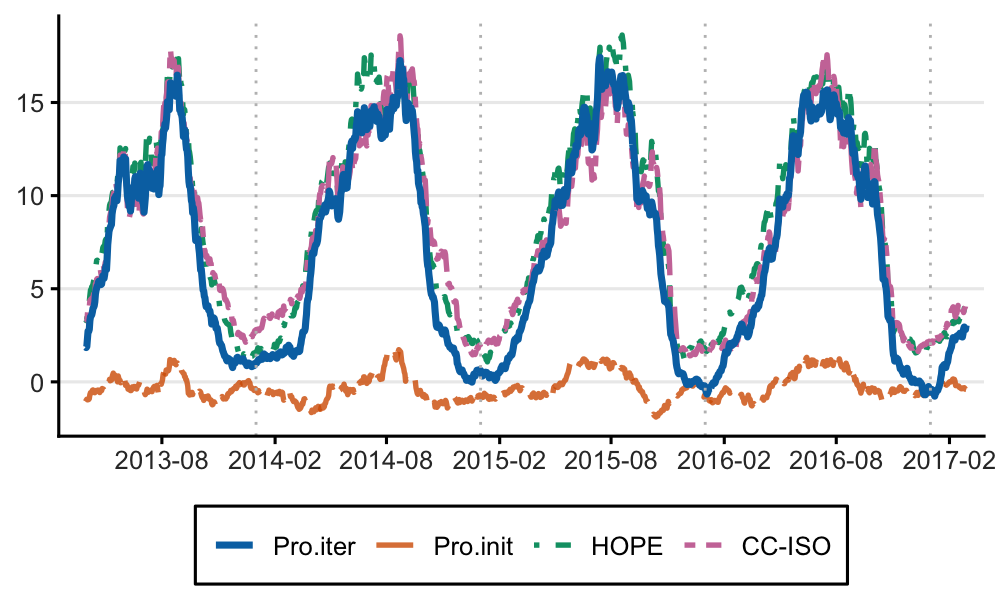}}
\subfigure[general pollution factor $(i = 2)$]{\includegraphics[width=0.48\textwidth]{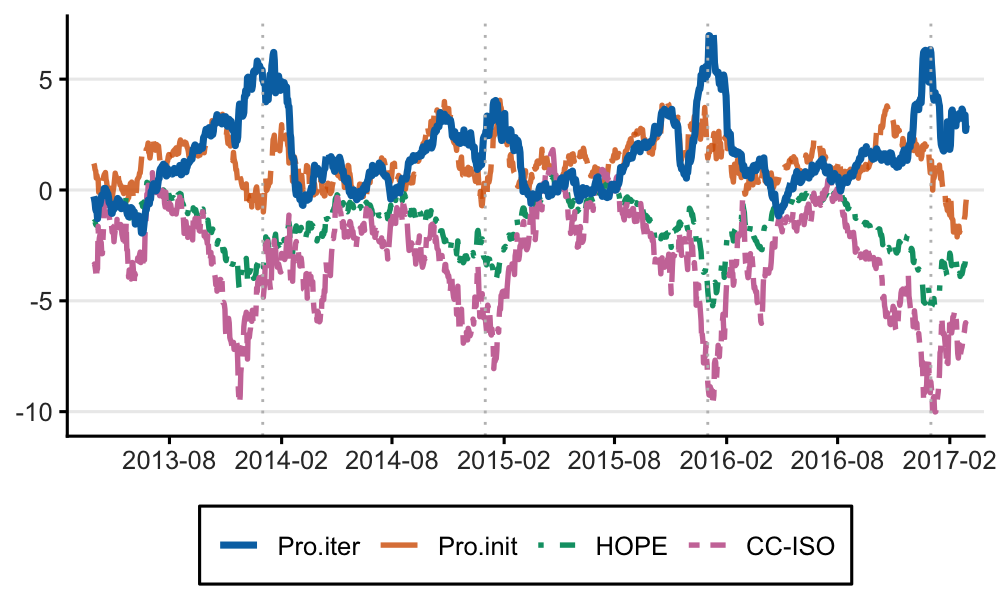}}
    \caption{The time series plots of 30-day one-sided simple moving averages of the two estimated latent factors based on four methods: Pro.iter, Pro.init, HOPE, and CC-ISO.}
    \label{fig:beijing-seasonal-all}
\end{figure}

 As shown in Figure~\ref{fig:beijing-seasonal-all}, the estimations of the \textit{ozone-related factor} based on Pro.iter, HOPE, and CC-ISO  exhibit a clear seasonal cycle,  while this seasonal pattern is much weaker in  that of Pro.init. For the \textit{general pollution factor}, the factor processes estimated by HOPE and CC-ISO exhibit patterns that are nearly opposite to that of Pro.iter. In particular, both methods produce relatively low values in winter, which would imply lower concentrations of general pollutants (PM$_{2.5}$, PM$_{10}$, SO$_2$, NO$_2$, and CO) during the cold season. This is difficult to reconcile with the well-known winter pollution pattern in Beijing and is therefore less plausible.

Above comparisons show that, although  the four methods can recover the same two pollution patterns, the results based on Pro.iter are more interpretable in terms of their spatial, diurnal, and seasonal structures. This provides further empirical support for the effectiveness of our proposed method.

In addition, we apply the empirical moment test of \citeS{trapani2016testing-app} to each marginal series $\{y_{t,\ell_1,\ell_2,\ell_3}\}_{t=1}^n$ ($\ell_1 \in [12], \ell_2 \in [6], \ell_3 \in [24]$).  More specifically, for each marginal series, we proceed sequentially as follows: we first test the existence of the 8-th moment; if it is supported by the data, we stop; otherwise, we test the 6-th moment, then finally the 4-th moment. This allows us to determine, for each marginal series, the highest empirically supported finite moment order.  We also repeat the same analysis for the 1\% winsorized data,  obtained by winsorizing each marginal series at the 5\textperthousand\ lower tail and the 5\textperthousand\ upper tail. The results reported in Table \ref{table:moment test} show that the raw data display some heavy-tailedness. To assess the reliability of our main empirical results, we rerun the real data analysis using the 1\% winsorized data.  The detailed results are reported in Table \ref{table:app-loading-a2-robust} and Figures \ref{fig:beijing-station-robust}--\ref{fig:beijing-seasonal-robust}. In comparison with the results in Table \ref{table:app-loading-a2} and Figures \ref{fig:beijing-station}--\ref{fig:beijing-seasonal}, we can conclude that the resulting estimated factor loadings, factors, and substantive interpretations remain essentially unchanged, indicating that our main empirical conclusions are robust.

\begin{table}[htbp]
\caption{Proportions of marginal series $\{y_{t,\ell_1,\ell_2,\ell_3}\}_{t=1}^n$ ($\ell_1 \in [12], \ell_2 \in [6], \ell_3 \in [24]$) whose highest empirically supported finite moment is at most $k$-th order, based on the test of \protect\citeS{trapani2016testing-app} at the 5\% significance level.}
\centering
\renewcommand\tabcolsep{8pt}
\label{table:moment test}
\begin{tabular}{c|cccc}
\hline\hline
            & $k < 4$ & $k = 4$ & $k = 6$ & $k = 8$  \\ \hline
raw data       & 28.76             & 54.40             & 12.96             & 3.88              \\
1\% winsorized data & 1.68              & 44.04             & 26.91             & 27.37            \\ \hline\hline
\end{tabular}
\end{table}

\begin{table}[htbp]
\centering
 
\renewcommand{\arraystretch}{1.15}
\setlength{\tabcolsep}{6pt}
\caption{Estimations of the loading vectors $\ab_{i,2}\in\mathbb{R}^6$ for the pollution-variable mode based on Pro.iter using the 1\% winsorized data. Standard errors reported in parentheses are calculated based on the asymptotic variance estimation $\hat{w}_{i,j}^{-2}\hat \tau_{i,j}^{2}(\hb)$. $^{*}$, $^{**}$, and $^{***}$ indicate significance at the levels 5\%, 1\%, and 1\textperthousand, respectively, based on two-sided $t$-tests. }  
\label{table:app-loading-a2-robust}
\begin{tabular}{c|cc}
\hline\hline
Pollutant & $i=1$ & $i=2$ \\
\hline
PM$_{2.5}$ & 0.014 (0.015) & 0.655$^{***}$ (0.034) \\
PM$_{10}$  & $-$0.019 (0.013) & 0.436$^{***}$ (0.024) \\
SO$_2$     & 0.048$^{**}$ (0.016) & 0.284$^{***}$ (0.036) \\
NO$_2$     & $-$0.230$^{***}$ (0.015) & 0.295$^{***}$ (0.049) \\
CO         & 0.187$^{***}$ (0.012) & 0.461$^{***}$ (0.029) \\
O$_3$      & 0.954$^{***}$ (0.002) & 0.015 (0.078) \\
\hline\hline
\end{tabular}
\end{table}

\begin{figure}[htbp]
    \centering
\subfigure[ozone-related factor $(i = 1)$]{\includegraphics[width=0.45\textwidth]{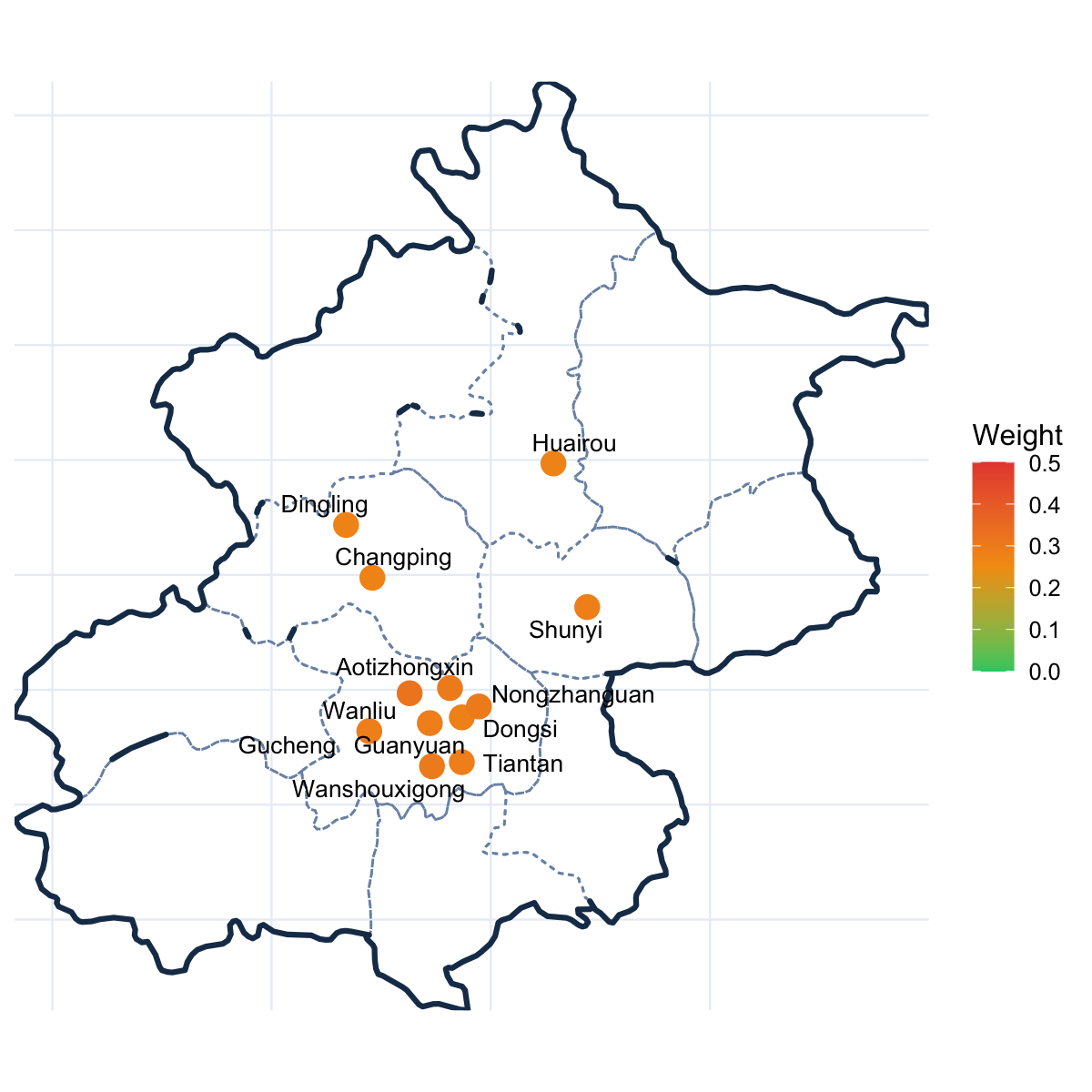}}
\subfigure[general pollution factor $(i = 2)$]{\includegraphics[width=0.45\textwidth]{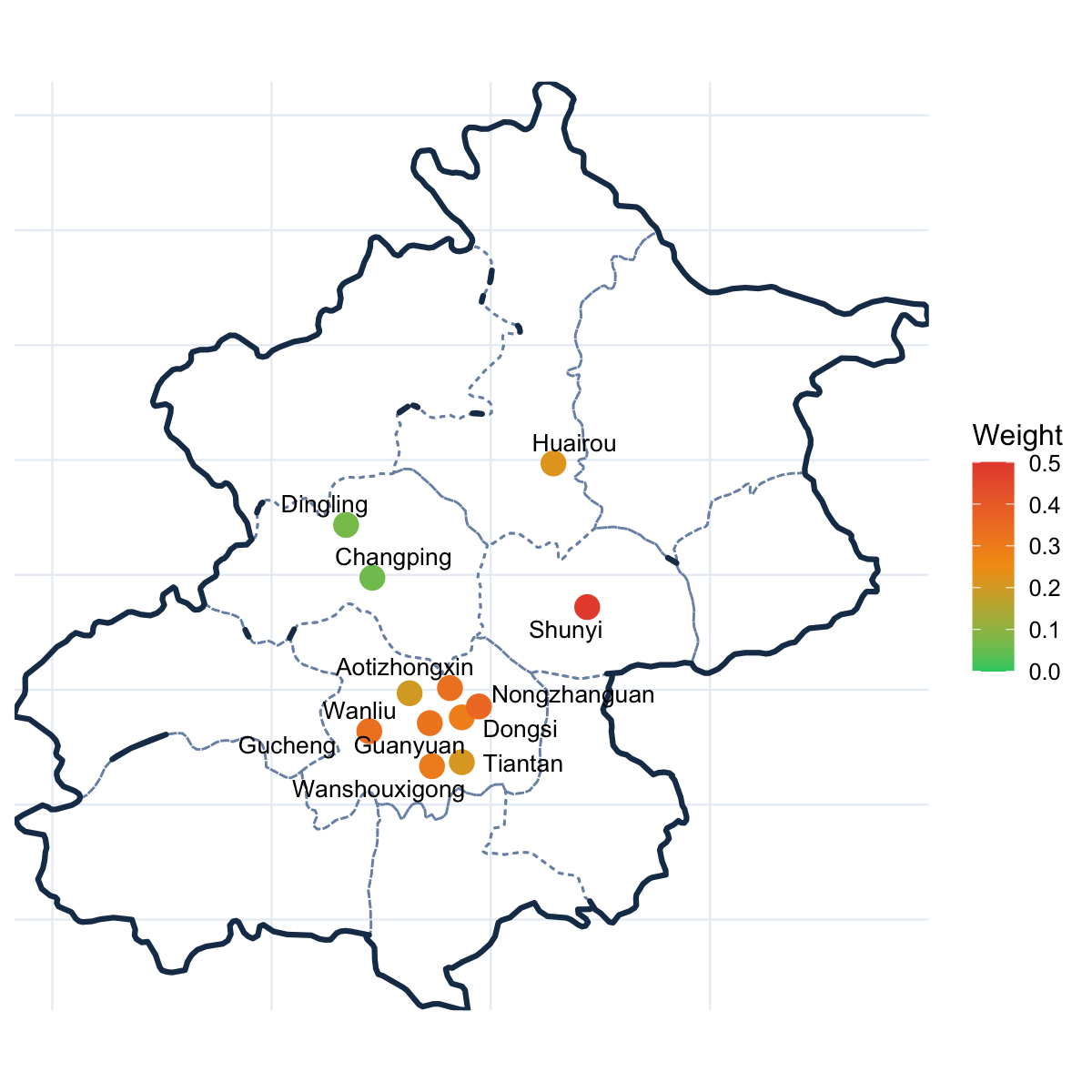}}
    \caption{Estimations of the loading vectors $\ab_{i,1} \in \mathbb{R}^{12}$ for the   monitoring-station mode based on Pro.iter using the 1\% winsorized data.}
    \label{fig:beijing-station-robust}
\end{figure}

\begin{figure}[htbp]
 
    \centering
\subfigure[ozone-related factor $(i = 1)$]{\includegraphics[width=0.45\textwidth]{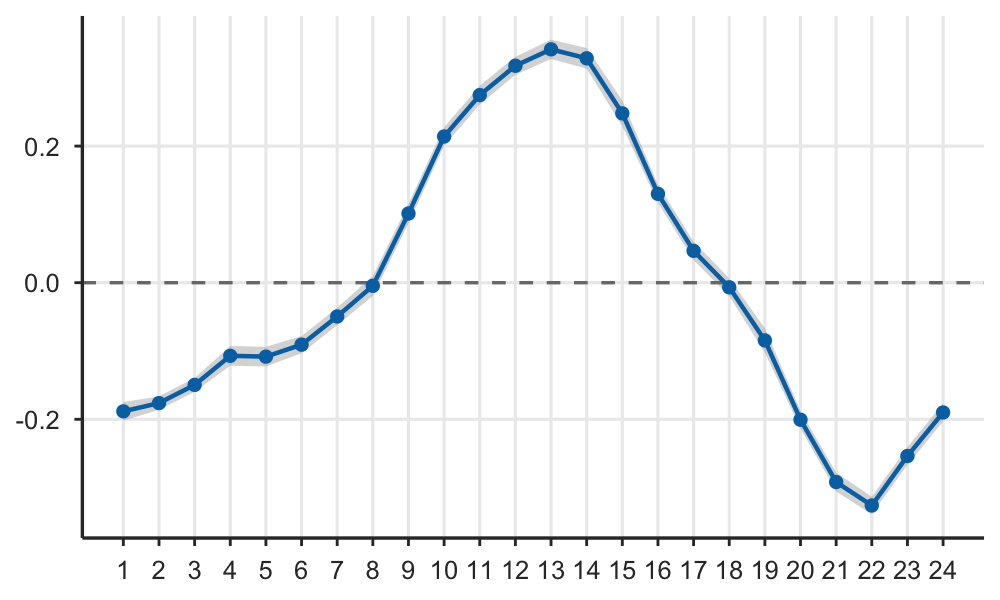}}
\subfigure[general pollution factor $(i = 2)$]{\includegraphics[width=0.45\textwidth]{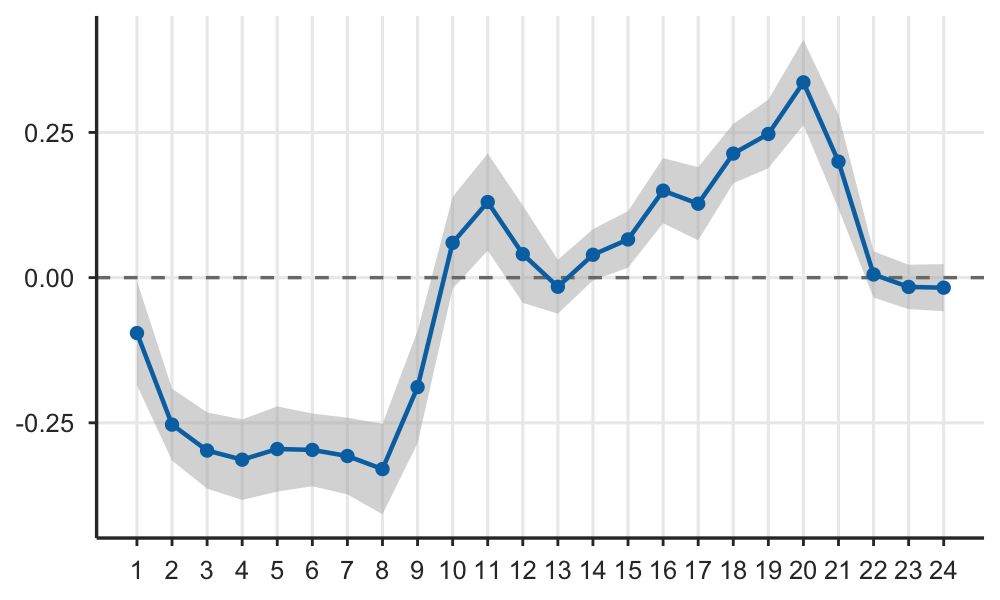}}
    \caption{Estimations of the loading vectors  $\ab_{i,3} \in \mathbb{R}^{24}$ for the diurnal mode based on Pro.iter using the 1\% winsorized data. The gray shaded region represents the pointwise 95\% confidence interval for the estimated loadings.  Standard errors are calculated based on the asymptotic variance estimation  $\hat{w}_{i,j}^{-2}\hat \tau_{i,j}^{2}(\hb)$.}
    \label{fig:beijing-hourly-robust}
 
\end{figure}

\begin{figure}[htbp]
    \centering
\subfigure[ozone-related factor $(i = 1)$]{\includegraphics[width=0.45\textwidth]{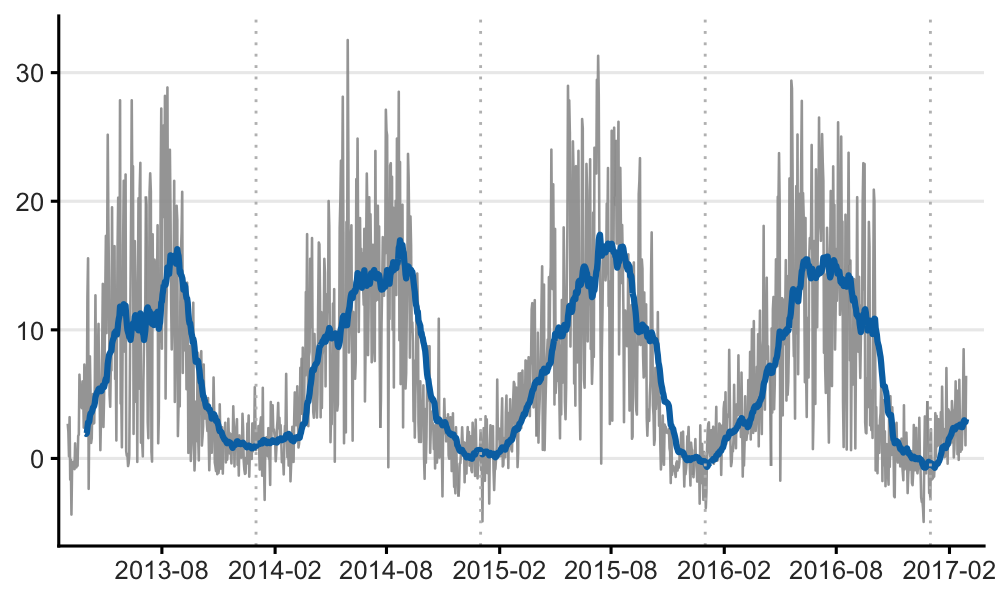}}
\subfigure[general pollution factor $(i = 2)$]{\includegraphics[width=0.45\textwidth]{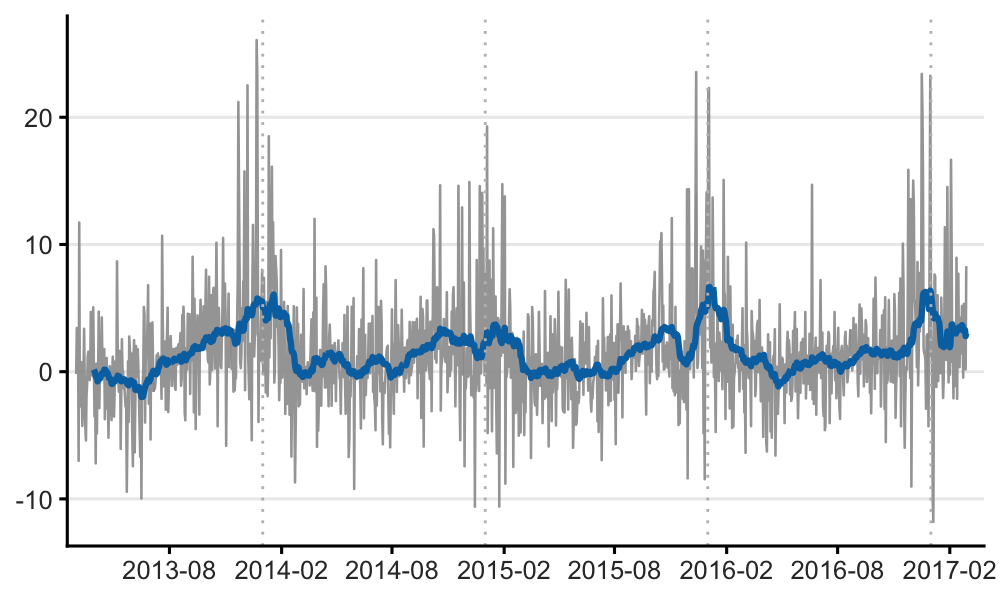}}
    \caption{The time series plots of the two
estimated latent factors based on Pro.iter using the 1\% winsorized data. The dark blue solid line represents the 30-day one-sided simple moving average.}
    \label{fig:beijing-seasonal-robust}
\end{figure}

\newpage

\subsection{Real data analysis: Fama–French 100 return data}\label{sec: app-famafrench}
In this section, we illustrate the proposed methods for the tensor CP-factor model \eqref{model cp} using the Fama–French 100 return series. We collect monthly returns from January 1964 to December 2021, yielding 69,600 observations over a total of 696 months. The dataset is obtained from \url{http://mba.tuck.dartmouth.edu/pages/faculty/ken.french/data_library.html}. The portfolios are constructed from the intersections of 10 size levels, denoted by $({\rm S}_{1}, \ldots, {\rm S}_{10})$, and 10 levels of the book-to-market equity ratio (BE), denoted by $({\rm BE}_{1}, \ldots, {\rm BE}_{10})$. The dataset contains a small number of missing values in the early years, which we set to zero.
Since all 100 series are clearly related to overall market conditions, following \citeS{wang2019factor-app}, we remove the influence of market effects prior to the empirical analysis by subtracting the corresponding monthly excess market return from each series. The market return data are obtained from the same source.

The 100 market-adjusted return series can be represented as a tensor time series $\mathcal{Y}_{t} = (y_{t,i,j})_{10\times 10}$ for $t\in[696]$ (i.e. $m=2$, $d_1=d_2=10$, $n=696$), where $y_{t,i,j}$ is the market-adjusted return at the $i$-th level of size ${\rm S}_{i}$ and the $j$-th level of the BE-ratio ${\rm BE}_{j}$ at time $t$. Figure \ref{fig:app-timeseries} shows the time series plots of the market-adjusted return series $\{y_{t,i,j}\}_{t=1}^n$ for $i,j \in[10]$. The rows in Figure \ref{fig:app-timeseries} correspond to the ten levels of size and the columns correspond to the ten levels of the BE-ratio.

\begin{figure}[htbp]
\centerline{\includegraphics[width= 16cm]{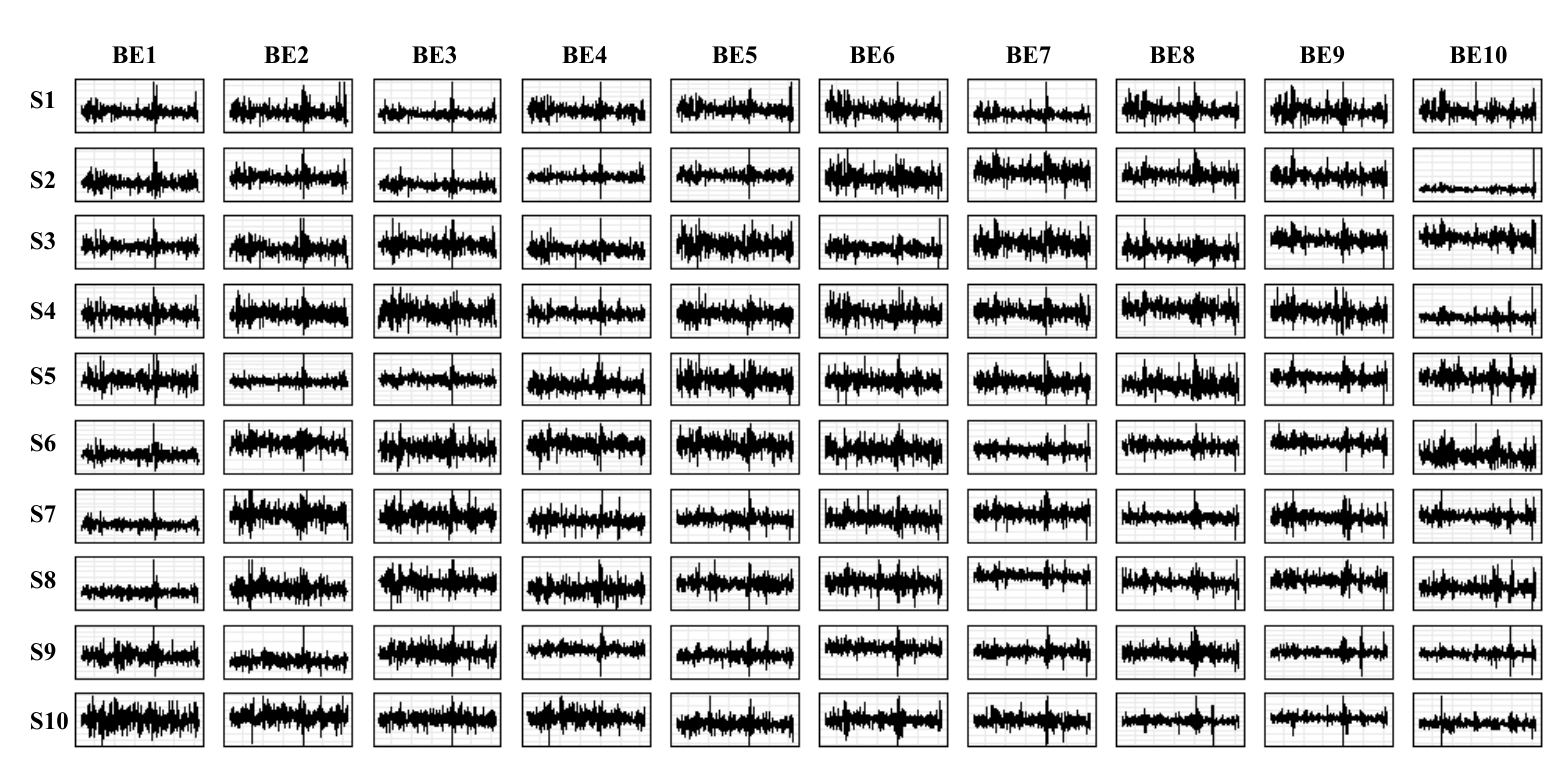}}
\caption{The time series plots of 100 market-adjusted returns formed on different levels of size (by rows) and book equity to market equity ratio (by columns). The horizontal axis represents time and the vertical axis represents the monthly returns.}
\label{fig:app-timeseries}
\end{figure}

We illustrate the usefulness of our methods by performing one- and two-step-ahead rolling forecasts for the 240 monthly observations in the last twenty years (2002--2021). To estimate the number of factors, following the selection of the tuning parameters in Section \ref{sec:tuning}, the log-ER method suggests $\tilde{r} = 1$ based on the data $\{\mathcal{Y}_t\}_{t=1}^{456}$, and we use $\tilde{r} = 1$ throughout the rolling forecasts. For each $s \in [240]$, let $\hat{\Ab}_j = \hat{\ab}_{1,j}$ for $j \in \{1,2\}$ be the estimated loading vectors based on the data $\{\mathcal{Y}_t\}_{t=s}^{455+s}$. We can then obtain the estimated factor series by $\hat f_{t} = (\hat\ab_{1,2}^{\MP} \otimes \hat\ab_{1,1}^{\MP})^{\T}\textup{vec}(\mathcal{Y}_t)$ for  $t \in \{s,\ldots,455+s\}$, where $\hat\ab_{1,j}^{\MP} 
 = \hat\Ab_j(\hat\Ab_j^\T \hat\Ab_j)^{-1}$ for $j \in \{1,2\}$. To produce the one-step-ahead forecast of $\mathcal{Y}_{456+s}$, denoted by 
$\hat{\mathcal{Y}}^{(1)}_{456+s} = (\hat{y}^{(1)}_{456+s,i,j})_{10\times 10}$, 
we model the factor process  $\{\hat{f}_t\}_{t=s}^{455+s}$.  Specifically, for each $s \in [240]$, we fit an AR model for  the factor process $\{\hat{f}_t\}_{t=s}^{455+s}$  with the order selected by the Akaike Information Criterion (AIC). For the two-step-ahead forecast, we repeat the above procedure using  $\{\mathcal{Y}_t\}_{t=s}^{454+s}$, and the forecast  $\hat{\mathcal{Y}}^{(2)}_{456+s} = (\hat{y}^{(2)}_{456+s,i,j})_{10\times 10}$  is then obtained by plugging the one-step-ahead forecasts into the fitted models.  Therefore, for each $s \in [240]$, we can obtain one- and two-step-ahead forecasts of $\mathcal{Y}_{456+s}$  based on the proposed one-pass method (Pro.init) and the iterative method (Pro.iter), respectively. 
For comparison, we can also fit $\{\mathcal{Y}_{t}\}_{t = s}^{455+s}$ and $\{\mathcal{Y}_{t}\}_{t = s}^{454+s}$ by the following methods and obtain the associated one-step  and two-step ahead forecasts:

\begin{itemize}
  \item (cPCA, HOPE) The composite PCA and High-Order Projection Estimator  in \citeS{han2024cp-app} with the recommended tuning parameter $h = 1$ therein. We adopt the rank parameter $\tilde{r} = 1$ based on the result of our proposed method, and fix $\tilde{r}=1$ in the rolling forecasts. We fit the obtained univariate time series by an AR model with the order determined by the AIC.

  \item (RP-PCA, CC-ISO) The Randomized Projection PCA and Contemporary Covariance-based Iterative Simultaneous Orthogonalization in \citeS{chen2026estimation-app}. We estimate the rank parameter $\tilde{r} = 1$ based on the data $\{\mathcal{Y}_{t}\}_{t = 1}^{456}$ through the unfolded eigenvalue ratio method therein, and fix $\tilde{r}=1$ in the rolling forecasts. The obtained univariate time series is fitted by an AR model with the order determined by the AIC.

   \item (RCP) The Refined CP method of \citeS{chang2023modelling-app} with the pre-determined parameter $K = 10$ therein. The associated rank in this method is estimated as $\hat{d} = 1$ based on $\{\mathcal{Y}_t\}_{t=1}^{456}$ and then fixed in the rolling forecasts. We fit the obtained univariate time series by an AR model with the order determined by the AIC. The method is implemented using  the R package \texttt{HDTSA}.

     \item (UCP) The Unified CP method of \citeS{chang2024unified-app} with the pre-determined parameters $K = 20$ and $\tilde{K} = 10$ therein. The associated ranks in this method are estimated as $(\hat{d},\hat{d}_1,\hat{d}_2) = (2,2,1)$ based on $\{\mathcal{Y}_t\}_{t=1}^{456}$ and then fixed in the rolling forecasts. We fit the obtained 2-dimensional time series by a VAR model with the order determined by the AIC. The method is implemented using  the R package \texttt{HDTSA}.

  \item (FAC) The matrix Tucker-factor model with the FAC method proposed by \citeS{wang2019factor-app}  with the pre-determined parameter $h_0 = 1$ as suggested therein. The associated ranks in this model are estimated as $(\hat{k}_1,\hat{k}_2) = (1,1)$ by the ratio estimators suggested therein based on $\{\mathcal{Y}_t\}_{t = 1}^{456}$, and are fixed in the rolling forecasts. We fit the obtained univariate time series by an AR model with the order determined by the AIC.

  \item (TOPUP, TIPUP) The Time series Outer-Product Unfolding Procedure and the Time series Inner-Product Unfolding Procedure proposed by \citeS{han2024tensor-app} for the matrix Tucker-factor model.  The associated ranks in this model are estimated as $(\hat{k}_1,\hat{k}_2) = (2,2)$  by the information criterion considered in \citeS{han2022rank-app} based on $\{\mathcal{Y}_t\}_{t = 1}^{456}$, and are fixed in the rolling forecasts. We fit the obtained 4-dimensional time series by a VAR model with the order determined by the AIC. The methods are implemented using  the R package \texttt{tensorTS}.
 
  \item (MAR) The matrix-AR(1) model of \citeS{chen2021autoregressive-app}.

 \item (TS-PCA) Apply the principal component analysis for time series proposed by \citeS{chang2018principal-app} to the 100-dimensional time series $\{\textup{vec}({\mathcal{Y}}_{t})\}_{t = s}^{455+s}$ and $\{\textup{vec}({\mathcal{Y}}_{t})\}_{t = s}^{454+s}$, respectively, to obtain the associated one-step and two-step ahead forecasts. The method is implemented using the R package \texttt{HDTSA}. For the obtained univariate time series, we fit it by an AR model with the order determined by the AIC. For the obtained multivariate time series, we fit it by a VAR model with the order determined by the AIC. 
 
   \item (UniAR) Fit each of 100 component time series by an AR model with the order determined by the AIC.
\end{itemize}

For each $s \in [240]$, the one-step ahead forecasting performance  is evaluated by the $\textup{rRMSE}(s)$ and $\textup{rMAE}(s)$ defined as
\begin{gather*}
    \text{rRMSE}(s)  = \bigg\{ \frac{1}{100} \sum_{i = 1}^{10}\sum_{j = 1}^{10}|\hat{y}^{(1)}_{456+s,i,j} - y_{456+s,i,j}|^2  \bigg\}^{1/2}\,,\\
    \text{rMAE}(s)   =  \frac{1}{100} \sum_{i = 1}^{10}\sum_{j = 1}^{10} |\hat{y}^{(1)}_{456+s,i,j} - y_{456+s,i,j}| \,.
\end{gather*}
For the two-step-ahead forecast, the corresponding $\textup{rRMSE}(s)$ and $\textup{rMAE}(s)$ are defined analogously. Table \ref{table:app-forecast} reports the averages of $\{\textup{rRMSE}(s)\}_{s=1}^{240}$ and $\{\textup{rMAE}(s)\}_{s=1}^{240}$, denoted by $\textup{rRMSE}$ and $\textup{rMAE}$, respectively. The standard deviations of $\{\textup{rRMSE}(s)\}_{s=1}^{240}$ and $\{\textup{rMAE}(s)\}_{s=1}^{240}$ are reported in parentheses. The results show that our proposed methods are  promising for forecasting financial returns. In particular, Pro.iter achieves the smallest forecasting errors in the one-step-ahead forecasts, and Pro.init performs very competitively to Pro.iter. For the two-step-ahead forecasts, although the best performance is attained by UCP, both Pro.iter and Pro.init perform  very closely to UCP.  More broadly, the tensor CP-factor methods without  uncorrelated factor assumption (Pro.iter, Pro.init, RCP, and UCP) outperform the methods that rely on this assumption (cPCA and HOPE). It is well known that financial data exhibit strong cross-sectional dependence, and often involve highly correlated latent factors. 
The results of Table \ref{table:app-forecast} suggest that allowing correlated factors is important for capturing the underlying dependence structure in practice, which provides further evidence on the applicability of our proposed methods.

\begin{table}[htbp]
\centering
\scriptsize
\renewcommand{\arraystretch}{1.15}
\setlength{\tabcolsep}{2.5pt}
\caption{Average forecasting errors and standard deviations (in parentheses) based on different methods. Bold numbers indicate the smallest average forecasting error among all methods.}
\label{table:app-forecast}
\begin{tabular}{cc|cccc}
\hline \hline 
\multicolumn{2}{c|}{Method}                                                                                                                                & one-step rRMSE                            & one-step rMAE                             & two-step rRMSE                            & two-step rMAE                             \\ \hline
\multicolumn{1}{c|}{\multirow{6}{*}{\begin{tabular}[c]{@{}c@{}}Tensor CP-factor methods\\ without uncorrelated factor assumption\end{tabular}}} & Pro.iter & \textbf{3.4847} (1.6003) & \textbf{2.6623} (1.1455) & 3.4874 (1.5905)                           & 2.6683 (1.1336)                           \\
\multicolumn{1}{c|}{}                                                                                                                           & Pro.init & 3.4890 (1.5960)                           & 2.6646 (1.1409)                           & 3.4977 (1.5846)                           & 2.6735 (1.1305)                           \\
\multicolumn{1}{c|}{}                                                                                                                           & RCP      & 3.5146 (1.5829)                           & 2.6910 (1.1288)                           & 3.5142 (1.5863)                           & 2.6922 (1.1358)                           \\
\multicolumn{1}{c|}{}                                                                                                                           & UCP      & 3.4905 (1.5698)                           & 2.6676 (1.1133)                           & \textbf{3.4869} (1.5685) &  \textbf{2.6674} (1.1170) \\
\multicolumn{1}{c|}{}                                                                                                                           & RP-PCA   & 3.5209 (1.5976)                           & 2.6951 (1.1448)                           & 3.5193 (1.6001)                           & 2.6954 (1.1490)                           \\
\multicolumn{1}{c|}{}                                                                                                                           & CC-ISO   & 3.5223 (1.5972)                           & 2.6968 (1.1439)                           & 3.5189 (1.5996)                           & 2.6951 (1.1482)                           \\ \hline
\multicolumn{1}{c|}{\multirow{2}{*}{\begin{tabular}[c]{@{}c@{}}Tensor CP-factor methods\\ with uncorrelated factor assumption\end{tabular}}}    & cPCA     & 3.5293 (1.5883)                           & 2.7047 (1.1333)                           & 3.5239 (1.5926)                           & 2.7013 (1.1408)                           \\
\multicolumn{1}{c|}{}                                                                                                                           & HOPE     & 3.5255 (1.5876)                           & 2.7008 (1.1327)                           & 3.5208 (1.5943)                           & 2.6982 (1.1422)                           \\ \hline
\multicolumn{1}{c|}{\multirow{3}{*}{Tensor Tucker-factor methods}}                                                                              & FAC      & 3.5057 (1.5952)                           & 2.6834 (1.1402)                           & 3.5018 (1.5954)                           & 2.6805 (1.1403)                           \\
\multicolumn{1}{c|}{}                                                                                                                           & TOPUP    & 3.5268 (1.5826)                           & 2.7022 (1.1283)                           & 3.5269 (1.5899)                           & 2.7036 (1.1367)                           \\
\multicolumn{1}{c|}{}                                                                                                                           & TIPUP    & 3.5303 (1.5891)                           & 2.7036 (1.1319)                           & 3.5294 (1.5961)                           & 2.7038 (1.1406)                           \\ \hline
\multicolumn{1}{c|}{\multirow{3}{*}{Other benchmark methods}}                                                                                   & MAR      & 3.5154 (1.6093)                           & 2.6923 (1.1597)                           & 3.4959 (1.6057)                           & 2.6765 (1.1539)                           \\
\multicolumn{1}{c|}{}                                                                                                                           & TS-PCA   & 3.5244 (1.6005)                           & 2.6999 (1.1586)                           & 3.5124 (1.5881)                           & 2.6919 (1.1483)                           \\
\multicolumn{1}{c|}{}                                                                                                                           & UniAR    & 3.5470 (1.5789)                           & 2.7143 (1.1250)                           & 3.5413 (1.5817)                           & 2.7130 (1.1323)                           \\ \hline\hline
\end{tabular}
\end{table}

\section{Estimation of factors and common components}\label{sec:factor and cp estimation}
The factors and common components may also be of interest in certain scenarios, and they can be estimated by plugging in the estimated factor loading vectors. Specifically, since $w_i$ and $f_{t,i}$ in model \eqref{model cp} cannot be identified separately, we directly estimate their product $w_i f_{t,i}$ for each $i\in[r]$ and $t\in[n]$. Given the iterative estimator $\{\hat\ab_{i,j}\}_{i\in[\tilde{r}],\,j\in[m]}$, we define $\hat f_{t,i}=(\hat\ab_{i,m}^{\MP} \otimes \cdots \otimes \hat\ab_{i,1}^{\MP})^{\T}\textup{vec}(\mathcal{Y}_t)$ for $i \in [\tilde{r}]$ and $t \in [n]$, where $(\hat\ab_{1,j}^{\MP},\ldots,\hat\ab_{\tilde r,j}^{\MP})^{\T}=(\hat\Ab_j^\T\hat\Ab_j)^{-1}\hat{\Ab}_j^{\T}$ with $\hat\Ab_j=(\hat\ab_{1,j},\ldots,\hat\ab_{\tilde r,j})$. Therefore,  the factors $(w_1 f_{t,1},\ldots,w_r f_{t,r})$ are then estimated by $(\hat{f}_{t,1},\ldots,\hat{f}_{t,\tilde{r}})$.
In model \eqref{model cp}, we write $\mathcal{C}_t=\sum_{i=1}^r w_i{f}_{t,i}\,\ab_{i,1} \circ \ab_{i,2} \circ \cdots \circ \ab_{i,m}$ and estimate it by $\hat{\mathcal{C}}_t=\sum_{i=1}^{\tilde{r}}\hat{f}_{t,i}\,\hat\ab_{i,1} \circ \hat\ab_{i,2} \circ \cdots \circ \hat\ab_{i,m}$.  Write $\Phi_{n} =  \max_{j\in[m]}\Phi_{n,j} + \gamma_{\max} w_r^{-2}$. Theorem \ref{lemma: factor iterative} establishes the consistency of these estimators.

\begin{theorem}\label{lemma: factor iterative}
    Let the conditions of Theorem \textup{\ref{thm: debias iterative}} hold. For each fixed $i\in[r]$ and $t\in[n]$, it holds that
    \begin{align*}
        \frac{1}{w_i} \bigg|\bigg(\prod_{j = 1}^m \kappa_{i,j}\bigg) \hat{f}_{t,z_i}- w_i f_{t,i}\bigg| &=O_{\rm p}\bigg(\frac{1}{w_i} + \Phi_{n}\bigg)\,,\\
 \frac{1}{D_n} | \textup{vec}(\hat{\mathcal{C}}_t)-\textup{vec}(\mathcal{C}_t) |^2_2 &= O_{\rm p}\bigg( \frac{1}{D_n}  +\frac{w_1^2}{D_n} \Phi_{n}^2\bigg)\,,
    \end{align*}
    where $z_{i}$ and $\kappa_{i,j}$ are specified in Theorem \textup{\ref{thm: debias iterative}}.
\end{theorem}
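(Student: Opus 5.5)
We start by ignoring the permutation and sign indeterminacy: write $\hat\ab_{i,j}$ for $\kappa_{i,j}\hat\ab_{z_i,j}$. The sign product $\prod_j\kappa_{i,j}$ is absorbed automatically, because flipping the sign of a loading flips the sign of its pseudo-inverse row. The plan has three steps: set up the pseudo-inverse rows, decompose the factor error into three pieces, and then pass from factors to common components.

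\textbf{Step 1 (pseudo-inverse perturbation).} Theorem \ref{thm: iterative} gives
\[
\max_{i,j}|\hat\ab_{i,j}-\ab_{i,j}|_2=O_{\rm p}(\Phi_n).
\]
Assumption \ref{sparsity} bounds $\sigma_r(\Ab_j)$ away from zero, so the map $\Ab_j\mapsto(\Ab_j^\T\Ab_j)^{-1}\Ab_j^\T$ is locally Lipschitz. Hence
\[
\max_{i,j}|\hat\ab_{i,j}^{\MP}-\ab_{i,j}^{\MP}|_2=O_{\rm p}(\Phi_n).
\]
Set $\hat\bbeta_i=\hat\ab_{i,m}^{\MP}\otimes\cdots\otimes\hat\ab_{i,1}^{\MP}$ and $\bbeta_i=\ab_{i,m}^{\MP}\otimes\cdots\otimes\ab_{i,1}^{\MP}$. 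A telescoping argument over the $m$ Kronecker factors, with each factor having bounded norm, gives $|\hat\bbeta_i-\bbeta_i|_2=O_{\rm p}(\Phi_n)$.

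\textbf{Step 2 (decomposition of the factor error).} Write
\[
\hat f_{t,i}=\hat\bbeta_i^\T\textup{vec}(\mathcal{Y}_t)=\sum_{\ell}w_\ell f_{t,\ell}\prod_{j}(\hat\ab_{i,j}^{\MP})^\T\ab_{\ell,j}+\hat\bbeta_i^\T\textup{vec}(\mathcal{E}_t).
\]
Since $(\ab_{i,j}^{\MP})^\T\ab_{\ell,j}=I(i=\ell)$, Step 1 gives
\[
\prod_j(\hat\ab_{i,j}^{\MP})^\T\ab_{\ell,j}=I(i=\ell)+O_{\rm p}(\Phi_n).
\]
The difference $\hat f_{t,i}-w_if_{t,i}$ therefore splits into three terms.
\begin{enumerate}
\item A factor term of order $O_{\rm p}(\sum_\ell w_\ell|f_{t,\ell}|\Phi_n)$.
\item The oracle noise $\bbeta_i^\T\textup{vec}(\mathcal{E}_t)$. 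Assumption \ref{cross} with the deterministic vector $\bbeta_i$, whose norm $|\bbeta_i|_2$ is bounded, makes it $O_{\rm p}(1)$.
\item The perturbed noise $(\hat\bbeta_i-\bbeta_i)^\T\textup{vec}(\mathcal{E}_t)$.
\end{enumerate}
Dividing by $w_i$, the first term needs $w_\ell/w_i$ bounded. I would use the relative-strength condition \eqref{strong factor condition} to get this. If it does not hold, I would state the factor term as $w_1/w_i\cdot\Phi_n$ and note that the target rate presumes comparable strengths. The second term gives the $w_i^{-1}$ in the target rate.

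\textbf{Main obstacle: the perturbed noise term.} The vector $\hat\bbeta_i$ depends on all the data, including $\mathcal{E}_t$, so Assumption \ref{cross} cannot be applied to it directly. I would first try a leave-one-out argument for the fixed $t$. Since $t$ is a single fixed time point, removing $\mathcal{Y}_t$ perturbs every sample covariance in Algorithm \ref{alg1} by $O_{\rm p}(n^{-1})$. Together with the mixing in Assumption \ref{mixing}, this makes $\hat\bbeta_i$ nearly independent of $\mathcal{E}_t$. Conditioning, the term is then $O_{\rm p}(\Phi_n)$.

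If that fails, I would fall back on a covering-number bound over the set $\{\bbeta:|\bbeta-\bbeta_i|_2\le C\Phi_n\}$ restricted to rank-one Kronecker structures. This is the approach indicated after Theorem \ref{thm: estimation of iteration variance}, and it gives
\[
O_{\rm p}\{\Phi_n(\textstyle\sum_j d_j)^{1/2}\}.
\]
Under \eqref{negligible error 2} this is absorbed into $O_{\rm p}(1)$. Dividing everything by $w_i$ then yields the first display of Theorem \ref{lemma: factor iterative}.

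\textbf{Step 3 (common components).} Decompose
\[
\hat f_{t,i}\,\hat\ab_{i,1}\circ\cdots\circ\hat\ab_{i,m}-w_if_{t,i}\,\ab_{i,1}\circ\cdots\circ\ab_{i,m}
=(\hat f_{t,i}-w_if_{t,i})\,\hat\ab_{i,1}\circ\cdots\circ\hat\ab_{i,m}+w_if_{t,i}\,(\hat\ab_{i,1}\circ\cdots\circ\hat\ab_{i,m}-\ab_{i,1}\circ\cdots\circ\ab_{i,m}).
\]
The outer product $\hat\ab_{i,1}\circ\cdots\circ\hat\ab_{i,m}$ has unit Frobenius norm, so by Step 2 the first piece has squared norm $O_{\rm p}(1+w_i^2\Phi_n^2)$. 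The second piece telescopes over the $m$ modes, giving squared norm $O_{\rm p}(w_i^2\Phi_n^2)$. Summing over the fixed number $r$ of components, bounding $w_i\le w_1$, and dividing by $D_n$ gives the second display.
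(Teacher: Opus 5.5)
Your outline matches the paper: pseudo-inverse perturbation, the three-term split of $\hat f_{t,i}-w_if_{t,i}$, then outer-product telescoping for $\hat{\mathcal{C}}_t$. Step 3 and the sign bookkeeping are fine, given the first display. The first display itself is not established, for two reasons.

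\textbf{Cross-factor term.} For $\ell\ne i$ you bound $\prod_j(\hat\ab_{i,j}^{\MP})^\T\ab_{\ell,j}$ only by $O_{\rm p}(\Phi_n)$, and then need $w_\ell/w_i=O(1)$, which you attribute to \eqref{strong factor condition}. That condition gives no such bound. It only controls $w_1w_r^{-2}$ times vanishing quantities, while \eqref{negligible error 2} forces $w_r\to\infty$ and leaves $w_1/w_r$ free. For example, $w_r=n^{0.1}$, $w_1=n^{0.3}$, uncorrelated factors and $d_j\asymp\log n$ satisfy all the hypotheses. Your fallback rate $(w_1/w_i)\Phi_n$ is therefore not the stated one.

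The missing observation is that for $\ell\ne i$ \emph{each} of the $m$ factors vanishes at the truth: $(\hat\ab_{i,j}^{\MP})^\T\ab_{\ell,j}=(\hat\ab_{i,j}^{\MP}-\ab_{i,j}^{\MP})^\T\ab_{\ell,j}=O_{\rm p}(\Phi_n)$. So the coefficient is $O_{\rm p}(\Phi_n^m)$, and after dividing by $w_i$ the term is $O_{\rm p}\{(w_1w_r^{-1}\Phi_n)\Phi_n^{m-1}\}$. Now $w_1w_r^{-1}\Phi_n=o(1)$ does follow from \eqref{strong factor condition}: use $s_j\le d_j$ to get $w_1w_r^{-1}\Phi_{n,j}\le w_1w_r^{-2}L_n$, and use $w_1\gamma_{\max}w_r^{-3}\ll1$. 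With $m\ge2$ the cross term is then $o_{\rm p}(\Phi_n)$. Only the diagonal coefficient, which multiplies $w_if_{t,i}$, contributes at rate $\Phi_n$.

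\textbf{Perturbed noise term.} The leave-one-out route is not a proof. $\hat\bbeta_i$ is the output of a thresholded iteration, and deleting $\mathcal{Y}_t$ does not decouple it from $\mathcal{E}_t$, since the remaining observations are only $\alpha$-mixing. You would also still have to control $\hat\bbeta_i-\hat\bbeta_i^{(-t)}$ against $|\textup{vec}(\mathcal{E}_t)|_2$, which is of order $D_n^{1/2}$.

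The covering route is what the paper uses, but your rate is off. A net of size $(17m)^{\sum_jd_j}$ over unit rank-one Kronecker vectors, combined with the $\exp(-C_2x^{c_1})$ tail of Assumption \ref{cross}, gives $\max|\mathcal{E}_t\times_{j=1}^m\bbeta_j^\T|=O_{\rm p}\{(\sum_jd_j)^{1/c_1}\}$. This equals $(\sum_jd_j)^{1/2}$ only when $c_1=2$; the remark after Theorem \ref{thm: estimation of iteration variance} also has exponent $1/c_1$.

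Absorbing $\Phi_n(\sum_jd_j)^{1/c_1}$ then needs the second component $(\sum_jd_j)^{1/\tilde c}/n$ of $L_n$:
\begin{itemize}
\item For the $\Phi_{n,j}$ part, \eqref{negligible error 2} gives $(\sum_jd_j)^{1/\tilde c}(s_j\log d_j)^{1/2}\ll nw_r^2$. Combined with $\tilde c^{-1}\ge1+2c_1^{-1}$ and $s_j\log d_j\lesssim(\sum_jd_j)^2$, this yields $(\sum_jd_j)^{2/c_1}s_j\log d_j\ll nw_r^2$.
\item For the $\gamma_{\max}w_r^{-2}$ part, use $\gamma_{\max}\ll w_rn^{-1/2}$ from \eqref{negligible error 2} and bound $(\sum_jd_j)^{1/c_1}n^{-1/2}\le\{(\sum_jd_j)^{1/\tilde c}/n\}^{1/2}$. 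Then split on whether $(\sum_jd_j)^{1/\tilde c}\ll n$, invoking \eqref{strong factor condition} in the other case.
\end{itemize}
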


Theorem \ref{lemma: factor iterative} indicates that the convergence rate of the factor estimator comprises two components. 
The first component, $O_{\rm p}(w_i^{-1})$, originates from the noise $\mathcal{E}_t$ and also appears in \citeS{han2024cp-app}. The second component arises from the plug-in errors of $\{\hat\ab_{i,j}\}_{i\in[\tilde{r}],j\in[m]}$. When $w_i \max_{j\in[m]}(s_j\log d_j)^{1/2}\ll w_r\sqrt{n}$ and $\gamma_{\max}w_i \ll w_r^2$, the rate $O_{\rm p}(w_i^{-1})$ dominates. 
The convergence rate of the estimated common component tensor depends on both the estimation errors of the loadings and the factors.

Next, we conduct simulation studies to compare the finite-sample performance of the methods discussed in the paper (Pro.iter, HOPE, CC-ISO, Pro.init, cPCA, RP-PCA, and RCP) in estimating  the common components. 
The data-generating process follows the setup in Section~\ref{sec:numerical}. The estimation error between the estimated common components $\{\check{\mathcal{C}}_{t}\}_{t\in [n]}$ and the true common components $\{\mathcal{C}_{t}\}_{t\in [n]}$ is measured by
\begin{equation}\label{eq: cp estimation error}
  \psi^2_{\textup{cp}} (\{\check{\mathcal{C}}_{t}\}_{t\in [n]} ,  \{\mathcal{C}_{t}\}_{t\in [n]}  )
  =
  \bigg\{\frac{1}{n D_n} \sum_{t = 1}^n |\textup{vec}(\check{\mathcal{C}}_t)-\textup{vec}(\mathcal{C}_t) |_2^2\bigg\}^{1/2}\,.
\end{equation}
As shown in Table \ref{table:cp-all}, when $\rho = 0$, Pro.iter performs comparably with CC-ISO and HOPE, and significantly outperforms the other methods. When $\rho = 0.75$, Pro.iter outperforms both CC-ISO and HOPE. Moreover, Pro.init outperforms all other one-pass estimators across all scenarios.  These results confirm that the proposed methods also have good performance in estimating the common components in finite samples.

\begin{table}[htbp]
\centering
\scriptsize
\renewcommand{\arraystretch}{1.15}
\setlength{\tabcolsep}{3pt}
\caption{
The averages and standard deviations (in parentheses) of the estimation errors \eqref{eq: cp estimation error} of common components for different methods based on 2000 repetitions. Bold numbers indicate the smallest average estimation error among all competing methods.}
\label{table:cp-all}
\begin{tabular}{c|c|c|c|ccc|cccc}
\hline\hline
\multirow{2}{*}{\textbf{$\rho$}} & \multirow{2}{*}{\textbf{$\phi$}} & \multirow{2}{*}{\textbf{$s$}} & \multirow{2}{*}{\textbf{$n$}} & \multicolumn{3}{c|}{\textbf{Iterative estimates}} & \multicolumn{4}{c}{\textbf{One-pass estimates}} \\
\cline{5-11}
& & & & \textbf{Pro.iter} & \textbf{HOPE} & \textbf{CC-ISO} & \textbf{Pro.init} & \textbf{cPCA} & \textbf{RP-PCA} & \textbf{RCP} \\
\hline
\multirow{12}{*}{0}
& \multirow{6}{*}{0.25}
& \multirow{2}{*}{0}
& 400 & \textbf{0.12} (0.05) & 0.12 (0.09) & 0.12 (0.11) & 0.39 (1.14) & 0.87 (0.64) & 0.96 (0.67) & 0.62 (0.49) \\
& & & 800 & \textbf{0.11} (0.07) & 0.12 (0.08) & 0.11 (0.09) & 0.26 (0.23) & 0.76 (0.56) & 0.86 (0.60) & 0.54 (0.46) \\
\cline{3-11}
& & \multirow{2}{*}{0.3}
& 400 & \textbf{0.11} (0.06) & 0.12 (0.11) & 0.11 (0.08) & 0.35 (0.19) & 0.79 (0.63) & 0.86 (0.67) & 0.58 (0.47) \\
& & & 800 & \textbf{0.11} (0.05) & 0.11 (0.07) & 0.11 (0.09) & 0.26 (0.75) & 0.68 (0.58) & 0.75 (0.62) & 0.52 (0.45) \\
\cline{3-11}
& & \multirow{2}{*}{0.6}
& 400 & \textbf{0.10} (0.03) & 0.11 (0.11) & 0.11 (0.10) & 0.33 (0.18) & 0.73 (0.67) & 0.79 (0.69) & 0.57 (0.46) \\
& & & 800 & \textbf{0.10} (0.03) & 0.10 (0.05) & 0.11 (0.09) & 0.23 (0.48) & 0.58 (0.58) & 0.65 (0.63) & 0.51 (0.45) \\
\cline{2-11}
& \multirow{6}{*}{0.75}
& \multirow{2}{*}{0}
& 400 & 0.25 (0.15) & 0.26 (0.19) & \textbf{0.23} (0.12) & 0.80 (1.11) & 1.35 (0.35) & 1.42 (0.38) & 1.05 (0.47) \\
& & & 800 & \textbf{0.23} (0.09) & 0.25 (0.17) & 0.24 (0.15) & 0.59 (2.19) & 1.40 (0.32) & 1.45 (0.37) & 1.07 (0.48) \\
\cline{3-11}
& & \multirow{2}{*}{0.3}
& 400 & \textbf{0.14} (0.06) & 0.15 (0.10) & 0.15 (0.10) & 0.49 (0.72) & 1.18 (0.53) & 1.28 (0.51) & 0.76 (0.53) \\
& & & 800 & \textbf{0.14} (0.05) & 0.15 (0.11) & 0.15 (0.12) & 0.32 (0.44) & 1.21 (0.49) & 1.34 (0.48) & 0.79 (0.55) \\
\cline{3-11}
& & \multirow{2}{*}{0.6}
& 400 & \textbf{0.12} (0.04) & 0.12 (0.10) & 0.12 (0.08) & 0.36 (0.31) & 0.88 (0.64) & 0.94 (0.66) & 0.62 (0.49) \\
& & & 800 & \textbf{0.11} (0.05) & 0.12 (0.08) & 0.12 (0.07) & 0.24 (0.25) & 0.77 (0.55) & 0.88 (0.61) & 0.56 (0.47) \\
\hline
\multirow{12}{*}{0.75}
& \multirow{6}{*}{0.25}
& \multirow{2}{*}{0}
& 400 & \textbf{0.15} (0.03) & 0.32 (0.31) & 0.33 (0.31) & 0.39 (0.42) & 1.81 (0.72) & 1.84 (0.73) & 0.42 (0.42) \\
& & & 800 & \textbf{0.13} (0.01) & 0.30 (0.30) & 0.31 (0.30) & 0.24 (0.28) & 1.79 (0.66) & 1.81 (0.67) & 0.40 (0.41) \\
\cline{3-11}
& & \multirow{2}{*}{0.3}
& 400 & \textbf{0.14} (0.01) & 0.32 (0.31) & 0.34 (0.32) & 0.35 (0.26) & 1.87 (0.79) & 1.90 (0.79) & 0.43 (0.44) \\
& & & 800 & \textbf{0.12} (0.01) & 0.34 (0.32) & 0.34 (0.32) & 0.22 (0.21) & 1.89 (0.75) & 1.91 (0.75) & 0.39 (0.40) \\
\cline{3-11}
& & \multirow{2}{*}{0.6}
& 400 & \textbf{0.13} (0.03) & 0.34 (0.33) & 0.35 (0.33) & 0.34 (0.24) & 1.96 (0.88) & 1.99 (0.87) & 0.43 (0.95) \\
& & & 800 & \textbf{0.11} (0.02) & 0.36 (0.33) & 0.36 (0.33) & 0.21 (0.15) & 1.97 (0.89) & 1.99 (0.86) & 0.40 (0.41) \\
\cline{2-11}
& \multirow{6}{*}{0.75}
& \multirow{2}{*}{0}
& 400 & 0.30 (0.13) & \textbf{0.29} (0.19) & 0.68 (0.34) & 0.71 (0.76) & 1.40 (0.38) & 1.09 (0.37) & 0.51 (0.41) \\
& & & 800 & \textbf{0.24} (0.06) & 0.26 (0.17) & 0.48 (0.33) & 0.48 (0.66) & 1.41 (0.34) & 1.21 (0.41) & 0.42 (0.40) \\
\cline{3-11}
& & \multirow{2}{*}{0.3}
& 400 & \textbf{0.18} (0.07) & 0.26 (0.25) & 0.32 (0.32) & 0.48 (0.58) & 1.58 (0.55) & 1.55 (0.59) & 0.41 (0.39) \\
& & & 800 & \textbf{0.15} (0.02) & 0.25 (0.24) & 0.27 (0.25) & 0.32 (0.97) & 1.56 (0.54) & 1.56 (0.55) & 0.39 (0.41) \\
\cline{3-11}
& & \multirow{2}{*}{0.6}
& 400 & \textbf{0.14} (0.04) & 0.30 (0.30) & 0.31 (0.31) & 0.43 (1.44) & 1.76 (0.69) & 1.77 (0.72) & 0.42 (0.43) \\
& & & 800 & \textbf{0.12} (0.02) & 0.32 (0.31) & 0.32 (0.31) & 0.23 (0.26) & 1.77 (0.71) & 1.78 (0.70) & 0.38 (0.40) \\
\hline\hline
\end{tabular}
\end{table}

\section{Further discussion on the estimated number of factors}\label{sec: factor number}

\subsection{Consistency of the estimators of the number of factors}\label{sec:factor number consistency}

Theorem \ref{thm: factor number consistency} shows that the ER and log-ER estimators specified in \eqref{hat rj} and \eqref{hat rj-log} are consistent estimators for the number of factors $r$.

 \begin{theorem}\label{thm: factor number consistency}
     Set the threshold level $\delta_1=C_* (n^{-1} \log D_n)^{1/2}$ in \eqref{hat Sigma kj} for some constant $C_*>0$. Under Assumptions \textup{\ref{error}}--\textup{\ref{gap new}}, if   $\log D_n\ll n^c$ for some constant $c\in(0,1)$ depending only on $c_1$ and $c_2$ specified in Assumptions {\rm\ref{tail}} and {\rm\ref{mixing}}, as $n \to \infty$, the following two assertions hold.
     
     \textup{(i)} Let $\ubar\sigma_{\xi}^2\Pi_n  \ll c_n \ll  \bar\sigma_{\xi}^{-2}\ubar\sigma_{\xi}^4$ for the $c_n$ in \eqref{hat rj}. Then 
     \[
     \mathbb{P}\Big\{\max_{j\in[m]}\tilde r^{(\textup{er})}_j(\delta_1)=r\Big\}\rightarrow 1 \,.
     \]
   
    \textup{(ii)} Let $\log(1+\ubar\sigma_{\xi}^2\Pi_n)
    \ll
    c_n
    \ll \{\log(1+\bar\sigma_{\xi}^2)\}^{-1}\{\log(1+\ubar\sigma_{\xi}^2)\}^2$
for the $c_n$ in \eqref{hat rj-log}. Then
\[
   \mathbb{P}\Big\{\max_{j\in[m]}\tilde r^{(\textup{log})}_j(\delta_1)=r\Big\} \rightarrow 1\,.
\] 
 \end{theorem}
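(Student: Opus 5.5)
The plan is to reduce the statement to a perturbation bound for the eigenvalues of $\tilde{\mathbf{M}}_j$. The key input is the spectral-norm error bound already implicit in the proof of Theorem \ref{thm: aij}, namely
\[
\max_{j\in[m]}\|\tilde{\mathbf{M}}_j-\mathbf{M}_j\|_2=O_{\rm p}(\ubar\sigma_{\xi}^2\Pi_n)=O_{\rm p}\{\bar\sigma_\xi(S_n\log D_n/n)^{1/2}\}.
\]
To obtain it, first show that $\max_k\|\tilde\bSigma_{k,j}-\bSigma_{\Yb_j,\xi}(k)\|_2=O_{\rm p}\{(S_n\log D_n/n)^{1/2}\}$. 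This comes from elementwise Bernstein-type inequalities for $\alpha$-mixing sequences with sub-Weibull tails (Assumptions \ref{tail}--\ref{mixing}), which give a uniform entrywise error $O_{\rm p}\{(\log D_n/n)^{1/2}\}$ whenever $\log D_n\ll n^c$. The thresholding at $\delta_1$ and the sparsity of $\bSigma_{\Yb_j,\xi}(k)=\Ab_j\Gb_{k,\xi}\Bb_j^\T$ (its rows and columns have at most $s_j$ and $S_n/s_j$ nonzeros) then convert this into a spectral bound of order $\sqrt{s_j\cdot S_n/s_j}$ times the entrywise rate. Next, expand $\tilde{\mathbf{M}}_j-\mathbf{M}_j=\sum_k\{\Delta_k^\T\bSigma_k+\bSigma_k^\T\Delta_k+\Delta_k^\T\Delta_k\}$ and use $\|\bSigma_{\Yb_j,\xi}(k)\|_2\lesssim\bar\sigma_\xi$. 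Since $\Pi_n\ll1$ is implied by the rate conditions on $c_n$, the quadratic term is negligible.

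With this bound, Weyl's inequality gives $|\sigma_i(\tilde{\mathbf{M}}_j)-\sigma_i(\mathbf{M}_j)|\le\epsilon_n:=O_{\rm p}(\ubar\sigma_\xi^2\Pi_n)$ for all $i$. Because $\textup{rank}(\mathbf{M}_j)=r$, this yields
\[
\sigma_i(\tilde{\mathbf{M}}_j)\in[\ubar\sigma_\xi^2-\epsilon_n,\ \bar\sigma_\xi^2+\epsilon_n]\ \text{ for } i\le r,\qquad \sigma_i(\tilde{\mathbf{M}}_j)\le\epsilon_n\ \text{ for } i>r.
\]
For part (i), compare the ratios $R_i=\{\sigma_{i+1}+c_n\}/\{\sigma_i+c_n\}$ in three ranges.
\begin{itemize}
\item At $i=r$: $R_r\le(\epsilon_n+c_n)/\ubar\sigma_\xi^2\lesssim c_n/\ubar\sigma_\xi^2$, using $\epsilon_n\ll c_n$.
\item For $i<r$: $R_i\ge\ubar\sigma_\xi^2/(2\bar\sigma_\xi^2)$. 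This is much larger than $R_r$ precisely because $c_n\ll\ubar\sigma_\xi^4/\bar\sigma_\xi^2$.
\item For $i>r$: both numerator and denominator lie in $[c_n,\,c_n+\epsilon_n]$, so $R_i\ge c_n/(c_n+\epsilon_n)\to1$, which again dominates $R_r\to0$. Note that $R_r\to0$ follows from $c_n\ll\ubar\sigma_\xi^4\bar\sigma_\xi^{-2}\le\ubar\sigma_\xi^2$.
\end{itemize}
Hence $\tilde r^{(\rm er)}_j=r$ for each $j$ with probability tending to one. Since $m$ is fixed, a union bound gives the result for $\max_j$.

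Part (ii) follows the same scheme applied to $x\mapsto\log(1+x)$, which is monotone and $1$-Lipschitz on $[0,\infty)$. This gives $\log(1+\sigma_{i}(\tilde{\mathbf{M}}_j))\le\log(1+\epsilon_n)$ for $i>r$. For $i\le r$, these values lie within $\log(1+\bar\sigma_\xi^2+\epsilon_n)$ from above and $\log(1+\ubar\sigma_\xi^2-\epsilon_n)\gtrsim\log(1+\ubar\sigma_\xi^2)$ from below, since $\epsilon_n\ll\ubar\sigma_\xi^2$. Repeating the three-range comparison with $\log(1+\epsilon_n)\lesssim\log(1+\ubar\sigma_\xi^2\Pi_n)\ll c_n$ in place of $\epsilon_n\ll c_n$ gives two conclusions. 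First, the ratio at $i=r$ is $\lesssim c_n/\log(1+\ubar\sigma_\xi^2)$. Second, the ratios for $i<r$ are bounded below by $\log(1+\ubar\sigma_\xi^2)/\log(1+2\bar\sigma_\xi^2)$, and the stated upper bound on $c_n$ is exactly what separates these.

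The main obstacle is the first step: making the spectral-norm bound for the thresholded $d_j\times d_{\mminus j}$ matrices rigorous. This requires controlling false positives, i.e.\ zero entries surviving the threshold, and bias from true entries below $\delta_1$, uniformly over $D_n$ entries under mixing. I would rely on the concentration lemmas used for Theorem \ref{thm: aij} rather than reprove them. If those lemmas only give $\tilde\Qb_j$ subspace rates, I would extract the matrix-level bound from their proof.
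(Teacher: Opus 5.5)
\textbf{There is a genuine gap: the perturbation bound you start from is false in general, and the tail eigenvalues $i>r$ cannot be controlled from it.} Your three-range ratio comparison (and its log version) is the same as in the paper; the problem is the input.

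You claim $\max_k\|\tilde\bSigma_{k,j}-\bSigma_{\Yb_j,\xi}(k)\|_2=O_{\rm p}\{(S_n\log D_n/n)^{1/2}\}$, but this ignores the signal-part error. Write
$\tilde\bSigma_{k,j}-\bSigma_{\Yb_j,\xi}(k)=[T_{\delta_1}\{\tilde\bSigma_{\Yb_j,\xi}(k)\}-\tilde\bSigma_{\Cb_j,\xi}(k)]+\Ab_j(\tilde\Gb_{k,\xi}-\Gb_{k,\xi})\Bb_j^\T$.
By Lemma \ref{lem: gkixi}, the second term has spectral norm of order $w_1n^{-1/2}$. Entrywise concentration of the noise does not capture it, and nothing in the assumptions makes $w_1n^{-1/2}\lesssim(S_n\log D_n/n)^{1/2}$.

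For example, take bounded $S_n$, $\log D_n\asymp\log n$, and $w_1\asymp\bar\sigma_\xi\asymp\ubar\sigma_\xi\asymp n^{0.2}$. All assumptions hold, and $\ubar\sigma_\xi^2\Pi_n\asymp n^{-0.3}(\log n)^{1/2}$. However, $\|\tilde{\mathbf{M}}_j-\mathbf{M}_j\|_2$ is generically of order $w_1\bar\sigma_\xi n^{-1/2}\asymp n^{-0.1}$. So $c_n=n^{-0.2}$ is admissible in (i), yet your $\epsilon_n\gg c_n$. With the correct $\epsilon_n$, your steps $\sigma_i(\tilde{\mathbf{M}}_j)\le\epsilon_n\ll c_n$ for $i>r$, hence $R_i\to1$ for $i>r$ and $R_r\ll\ubar\sigma_\xi^2/\bar\sigma_\xi^2$, no longer follow from the hypotheses. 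The same failure occurs in (ii) through $\log(1+\epsilon_n)\lesssim\log(1+\ubar\sigma_\xi^2\Pi_n)$.

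\textbf{The missing idea is to compare $\tilde{\mathbf{M}}_j$ with the sample signal matrix rather than with $\mathbf{M}_j$.} That matrix is
$\sum_k\tilde\bSigma_{\Cb_j,\xi}(k)^\T\tilde\bSigma_{\Cb_j,\xi}(k)=\Qb_j\Vb_j(\sum_k\tilde\Gb_{k,\xi}\Ab_j^\T\Ab_j\tilde\Gb_{k,\xi})\Vb_j^\T\Qb_j^\T$,
which has rank at most $r$. So the estimation error in $\Gb_{k,\xi}$ cannot leak into eigenvalues beyond $r$.
\begin{itemize}
\item For $i>r$: Weyl's inequality against this matrix, together with \eqref{tilde M and tilde C}, gives $\max_{i>r}\sigma_i(\tilde{\mathbf{M}}_j)=O_{\rm p}(\ubar\sigma_\xi^2\Pi_n)$. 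This is exactly what $c_n\gg\ubar\sigma_\xi^2\Pi_n$ is designed to dominate.
\item For $i\le r$: the $O_{\rm p}(w_1\bar\sigma_\xi n^{-1/2})$ discrepancy \eqref{tilde C and Sigma Y} to $\mathbf{M}_j$ is $o(\ubar\sigma_\xi^2)$ by Assumption \ref{eigenvalue}. Use it only here, to get $\ubar\sigma_\xi^2\lesssim\sigma_i(\tilde{\mathbf{M}}_j)\lesssim\bar\sigma_\xi^2$.
\end{itemize}
With these two separate bounds, your three-range comparison and its logarithmic version go through unchanged; this is the paper's argument.
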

  The conditions imposed on $c_n$ ensure a proper separation between the signal part and the noise part. In particular, the requirement $\ubar\sigma_{\xi}^2\Pi_n  \ll c_n \ll  \bar\sigma_{\xi}^{-2}\ubar\sigma_{\xi}^4$ (and its counterpart for the log-ER criterion) guarantees that the estimation error is asymptotically negligible relative to the eigen-gap. 

\subsection{Effects of misspecifying the number of factors}\label{sec:misspecify r}
As shown in Theorem \ref{thm: factor number consistency}, our proposed estimators for the number of factors are consistent. However, the estimated number of factors  may still deviate from the true value in finite samples.  It is therefore critical to examine the robustness of our proposed estimation procedures against such misspecification.

  Let $\tilde{r}$ be the estimate of $r$ involved in our estimation procedures. Recall $r = 3$ in our simulation studies considered in Section \ref{sec:numerical}.  To mimic the misspecification issue of $r$, we vary $\tilde r$ from $1$ to $10$, and continue to evaluate the estimation error defined in \eqref{eq:estimation error} in Section \ref{sec:numerical}. The corresponding results based on  Pro.iter and Pro.init are reported in Figure \ref{fig:WrongR-error}.   It can be observed that (i) when $\tilde r<r$, both methods perform poorly because some true factors are omitted; (ii) when $\tilde r=r$, both methods achieve their best performance; and (iii) when $\tilde r>r$, the error of the one-pass method (Pro.init) increases, whereas the iterative method (Pro.iter) is almost unaffected. 
Note that, even when $\tilde r>r$, the measure \eqref{eq:estimation error}  still remains small as long as the  estimated loading vectors contain accurate estimates of the true loading vectors.  The results in Figure~\ref{fig:WrongR-error} suggest that, although $\tilde r>r$ leads to an over-fitted factor structure, our iterative estimator (Pro.iter)  can still recover all the true loading vectors well, which indicates that our proposed iterative estimator is reasonably robust to the issue with an overestimated number of factors.

As shown in Table \ref{table: rank} in Section \ref{sec:numerical}, the proposed log-ER estimator may underestimate $r$ in finite samples, although the frequency is small. 
In practice, to reduce the chance that the estimated number of factors is smaller than the true value $r$, we can apply a two-stage procedure.  In the first stage, we apply the log-ER estimator and the Pro.iter method to $\{\mathcal{Y}_t\}_{t=1}^n$ to get the first-stage estimated number of factors $\tilde{r}^{(1)}$ and the associated estimation of the loading vectors, and then obtain the estimated idiosyncratic error tensor sequence $\{\hat{\mathcal{E}}_t\}_{t=1}^n$. In the second stage, we apply the log-ER estimator to $\{\hat{\mathcal{E}}_t\}_{t=1}^n$ to obtain the second-stage estimated number of factors $\tilde r^{(2)}$.  Based on these two stages, we can select $\tilde r=\tilde r^{(1)}+\tilde r^{(2)}$ as the estimate of  $r$. Table \ref{table: rank-misspecify} reports the performance of  one-stage procedure (log-ER estimator) and two-stage procedure introduced above, respectively.  The results show that the proposed two-stage procedure rarely underestimates $r$ in finite samples. Combining these findings with the results in Figure~\ref{fig:WrongR-error}, we conclude that the proposed iterative method, with $\tilde r$ obtained by the two-stage procedure, is reasonably robust to misspecification of $r$.

\begin{figure}[htbp]
\centerline{\includegraphics[width= 12cm]{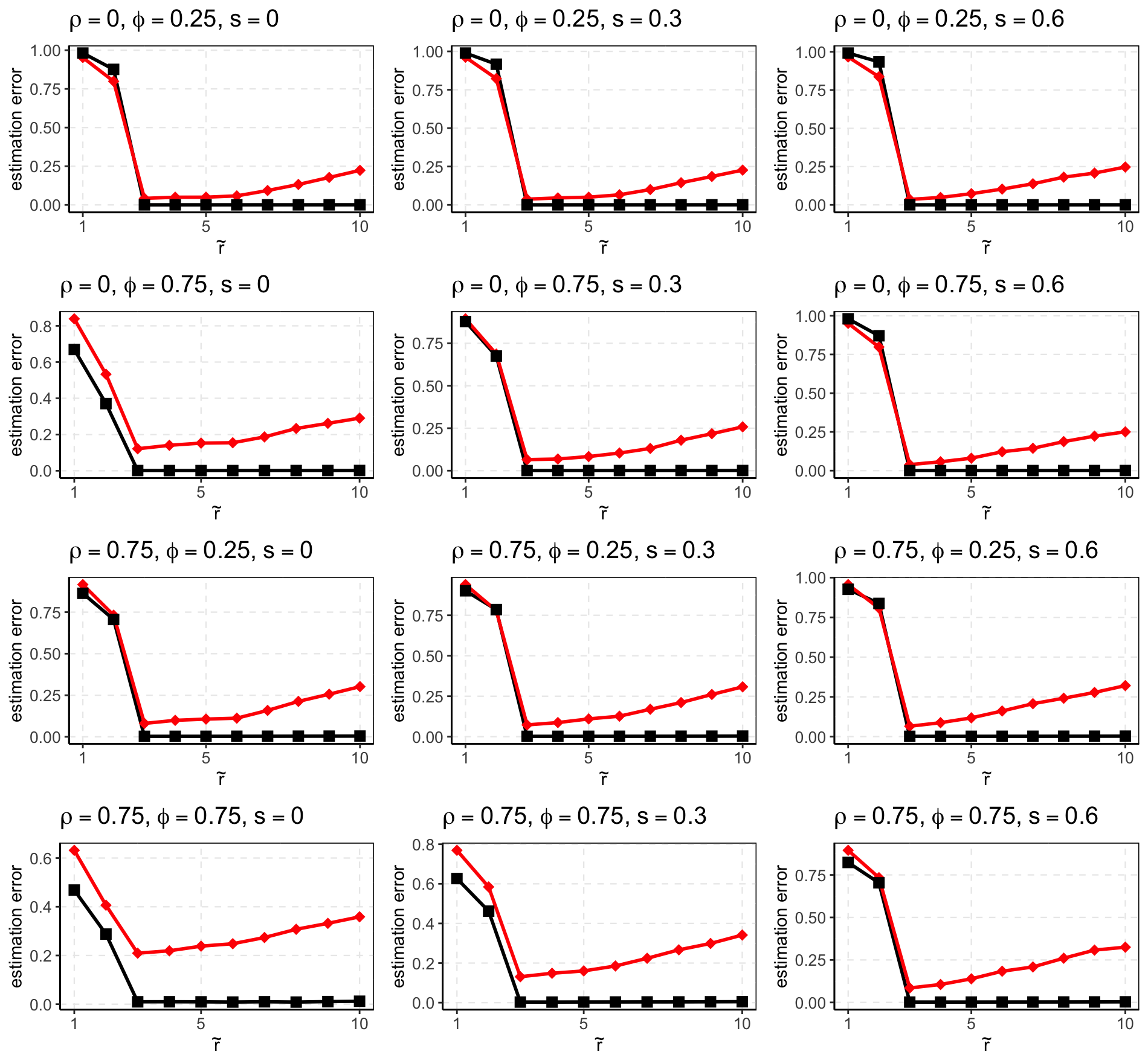}}
\caption{The lineplots for the averages of estimation errors \eqref{eq:estimation error} with respect to $\tilde{r} \in \{1,\ldots,10\}$ based on 2000 repetitions. The sample size $n = 400$. The legend is defined as follows: Pro.iter ($\color{black}{-\blacksquare-}$) and Pro.init ($\color{red}{-\blacklozenge-}$).}
\label{fig:WrongR-error}
\end{figure}
 
\begin{table}[htbp]
\caption{
Relative frequency estimates of $\mathbb{P}(\tilde{r} < r)$ and $\mathbb{P}(\tilde{r} \ge r)$ based on 2000 repetitions, where $\tilde{r}$ is obtained from the one-stage and two-stage procedures. All numbers reported below are multiplied by 100.
}
\renewcommand\tabcolsep{6pt}
\label{table: rank-misspecify}
\footnotesize
\centering
\begin{tabular}{c|c|c|c|cc|cc}
\hline\hline
\multirow{2}{*}{$\rho$} & \multirow{2}{*}{$\phi$} & \multirow{2}{*}{$s$} & \multirow{2}{*}{$n$} & \multicolumn{2}{c|}{\textbf{One-stage procedure}}           & \multicolumn{2}{c}{\textbf{Two-stage procedure}}            \\ \cline{5-8} 
                        &                         &                      &                      & $\mathbb{P}(\tilde{r} < r)$ & $\mathbb{P}(\tilde{r} \ge r)$ & $\mathbb{P}(\tilde{r} < r)$ & $\mathbb{P}(\tilde{r} \ge r)$ \\ \hline
\multirow{12}{*}{0}     & \multirow{6}{*}{0.25}   & \multirow{2}{*}{0}   & 400                  & 0.25                        & 99.75                         & 0.00                        & 100.00                        \\
                        &                         &                      & 800                  & 0.65                        & 99.35                         & 0.00                        & 100.00                        \\ \cline{3-8} 
                        &                         & \multirow{2}{*}{0.3} & 400                  & 0.40                        & 99.60                         & 0.00                        & 100.00                        \\
                        &                         &                      & 800                  & 0.40                        & 99.60                         & 0.00                        & 100.00                        \\ \cline{3-8} 
                        &                         & \multirow{2}{*}{0.6} & 400                  & 0.10                        & 99.90                         & 0.00                        & 100.00                        \\
                        &                         &                      & 800                  & 0.55                        & 99.45                         & 0.00                        & 100.00                        \\ \cline{2-8} 
                        & \multirow{6}{*}{0.75}   & \multirow{2}{*}{0}   & 400                  & 4.55                        & 95.45                         & 0.00                        & 100.00                        \\
                        &                         &                      & 800                  & 1.60                        & 98.40                         & 0.00                        & 100.00                        \\ \cline{3-8} 
                        &                         & \multirow{2}{*}{0.3} & 400                  & 0.60                        & 99.40                         & 0.00                        & 100.00                        \\
                        &                         &                      & 800                  & 0.35                        & 99.65                         & 0.00                        & 100.00                        \\ \cline{3-8} 
                        &                         & \multirow{2}{*}{0.6} & 400                  & 0.10                        & 99.90                         & 0.00                        & 100.00                        \\
                        &                         &                      & 800                  & 0.20                        & 99.80                         & 0.00                        & 100.00                        \\ \hline
\multirow{12}{*}{0.75}  & \multirow{6}{*}{0.25}   & \multirow{2}{*}{0}   & 400                  & 0.10                        & 99.90                         & 0.00                        & 100.00                        \\
                        &                         &                      & 800                  & 0.00                        & 100.00                        & 0.00                        & 100.00                        \\ \cline{3-8} 
                        &                         & \multirow{2}{*}{0.3} & 400                  & 0.00                        & 100.00                        & 0.00                        & 100.00                        \\
                        &                         &                      & 800                  & 0.00                        & 100.00                        & 0.00                        & 100.00                        \\ \cline{3-8} 
                        &                         & \multirow{2}{*}{0.6} & 400                  & 0.15                        & 99.85                         & 0.00                        & 100.00                        \\
                        &                         &                      & 800                  & 0.00                        & 100.00                        & 0.00                        & 100.00                        \\ \cline{2-8} 
                        & \multirow{6}{*}{0.75}   & \multirow{2}{*}{0}   & 400                  & 13.80                       & 86.20                         & 0.10                        & 99.90                         \\
                        &                         &                      & 800                  & 0.55                        & 99.45                         & 0.00                        & 100.00                        \\ \cline{3-8} 
                        &                         & \multirow{2}{*}{0.3} & 400                  & 0.65                        & 99.35                         & 0.00                        & 100.00                        \\
                        &                         &                      & 800                  & 0.00                        & 100.00                        & 0.00                        & 100.00                        \\ \cline{3-8} 
                        &                         & \multirow{2}{*}{0.6} & 400                  & 0.15                        & 99.85                         & 0.00                        & 100.00                        \\
                        &                         &                      & 800                  & 0.00                        & 100.00                        & 0.00                        & 100.00                        \\ \hline\hline
\end{tabular}
\end{table}

\section{Proofs of Theorems \ref{thm: aij}--\ref{thm: debias iterative}, and  Theorems \ref{thm: estimation of iteration variance}--\ref{thm: factor number consistency}}\label{sec: supp main proof}
 To simplify the notation, we use $\mathbf{o}_{\rm p}(\cdot)$ and $\mathbf{O}_{\rm p}(\cdot)$ to denote matrices (or vectors) whose spectral norms are $o_{\rm p}(\cdot)$ and $O_{\rm p}(\cdot)$, respectively. For a matrix $\Hb$, denote by $\mathcal{M}(\Hb)$ the linear space spanned by the columns of $\Hb$.

\subsection{Proof of Theorem \ref{thm: aij}}
 The proof is divided into two steps. Step 1 shows  the consistency of $\tilde{\Qb}_j$ specified in Section \ref{sec: initial}. Step 2 completes the proof of the consistency of $\tilde{\ab}_{i,j}$.

\subsubsection{Step 1: Consistency of \texorpdfstring{$\tilde{\Qb}_j$}{Qj}}
			
			We start with the truncated matrix $\tilde\bSigma_{k,j}$ defined in \eqref{hat Sigma kj}. By definition, $\tilde\bSigma_{k,j}=T_{\delta_1}\{\tilde\bSigma_{\Yb_j,\xi}(k)\}$, where $\tilde\bSigma_{\Yb_j,\xi}(k)=\tilde\bSigma_{\Cb_j,\xi}(k)+\tilde\bSigma_{\Eb_j,\xi}(k)$ with 
\begin{align}
    \tilde\bSigma_{\Cb_j,\xi}(k)&= \frac{1}{n-k}\sum_{t=k+1}^n(\Cb_{t,j}-\bar\Cb_j)(\xi_{t-k}-\bar\xi)\,,~~ \bar\Cb_j=\frac{1}{n}\sum_{t=1}^n\Cb_{t,j}\,, \label{sigma C}\\
    \tilde\bSigma_{\Eb_j,\xi}(k)&= \frac{1}{n-k}\sum_{t=k+1}^n(\Eb_{t,j}-\bar\Eb_j)(\xi_{t-k}-\bar\xi)\,,~~\bar\Eb_j=\frac{1}{n}\sum_{t=1}^n\Eb_{t,j}\,,  \label{tilde Sigma Ej} 
\end{align}            
			and $\Cb_{t,j}$, $\Eb_{t,j}$ are  specified in \eqref{unfold X}. Let $\tilde\Gb_{k,\xi}=\textup{diag}(\tilde g_{k,1,\xi},\ldots,\tilde g_{k,r,\xi})$ be the sample version of $\Gb_{k,\xi}$ defined above \eqref{gki}, where
	\begin{equation}\label{hat g ki}
		\tilde g_{k,i,\xi}=\frac{1}{n-k}\sum_{t=k+1}^nw_i(f_{t,i}-\bar f_i)(\xi_{t-k}-\bar\xi)\,.
	\end{equation}
			  Then, it follows that $	\tilde\bSigma_{\Cb_j,\xi}(k)=\Ab_j\tilde\Gb_{k,\xi}\Bb_j^\T$. To
prove the consistency of $\tilde{\Qb}_j$, we need Lemmas \ref{pro:rank-B}--\ref{lem: l2 truncated covariance}.  The proofs of Lemmas \ref{pro:rank-B}--\ref{lem: l2 truncated covariance} are given in
Sections \ref{sec:pro:rank-B}--\ref{sec:lem: l2 truncated covariance}, respectively.

 \begin{lemma}\label{pro:rank-B}
 If $\textup{rank}(\Ab_j) = r$ for $j \in [m]$, it holds that $\textup{rank}(\Bb_j) = r$ for $j \in [m]$. Moreover, under Assumption \textup{\ref{sparsity}}, it holds that $\sigma_1(\Bb_j),\ldots,\sigma_r(\Bb_j)$ are uniformly bounded away from $0$ and $\infty$.
\end{lemma}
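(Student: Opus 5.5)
The plan is to compute the Gram matrix $\Bb_j^\T\Bb_j$ explicitly and reduce both claims to properties of Hadamard products of Gram matrices. Since $\bbb_{i,j}=\ab_{i,m}\otimes\cdots\otimes\ab_{i,j+1}\otimes\ab_{i,j-1}\otimes\cdots\otimes\ab_{i,1}$, the mixed-product rule for Kronecker products gives, for $i,\ell\in[r]$,
\[
\bbb_{i,j}^\T\bbb_{\ell,j}=\prod_{j'\ne j}\ab_{i,j'}^\T\ab_{\ell,j'}\,.
\]
Hence $\Bb_j^\T\Bb_j=\circledast_{j'\ne j}(\Ab_{j'}^\T\Ab_{j'})$, the entrywise (Hadamard) product of the $m-1$ Gram matrices $\Ab_{j'}^\T\Ab_{j'}$ with $j'\ne j$. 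Since $m\ge2$, this product has at least one factor.

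For the rank statement, each $\Ab_{j'}^\T\Ab_{j'}$ is positive definite because $\textup{rank}(\Ab_{j'})=r$. Each also has unit diagonal, since $|\ab_{i,j'}|_2=1$. By the Schur product theorem, the Hadamard product of positive definite matrices is positive definite. So $\Bb_j^\T\Bb_j$ is an $r\times r$ positive definite matrix, and therefore $\textup{rank}(\Bb_j)=r$.

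For the quantitative statement I will use the standard refinement of Schur's theorem (e.g.\ from Horn and Johnson, or the Rao--Rao matrix algebra text already cited in the supplement). If $\Pb$ and $\Rb$ are positive semidefinite, then
\[
\lambda_{\min}(\Pb\circ\Rb)\ge\lambda_{\min}(\Pb)\min_i R_{ii}\,,\qquad \lambda_{\max}(\Pb\circ\Rb)\le\lambda_{\max}(\Pb)\max_i R_{ii}\,.
\]
Both inequalities follow by writing $\Pb-\lambda_{\min}(\Pb)\Ib_r\succeq0$ and applying Schur's theorem to its Hadamard product with $\Rb$. Now take $\Pb=\Ab_{j_0}^\T\Ab_{j_0}$ for some $j_0\ne j$, and let $\Rb$ be the Hadamard product of the remaining Gram matrices. $\Rb$ is positive semidefinite and has unit diagonal (it is the all-ones matrix if $m=2$). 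This gives
\[
\sigma_r^2(\Ab_{j_0})\le\lambda_{\min}(\Bb_j^\T\Bb_j)\le\lambda_{\max}(\Bb_j^\T\Bb_j)\le\sigma_1^2(\Ab_{j_0})\,.
\]
By Assumption \ref{sparsity}, this sandwiches every $\sigma_i(\Bb_j)$ between $C_6^{-1}$ and $C_6$, uniformly in $j$ and $n$.

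The only step needing care is this eigenvalue bound for Hadamard products. The unit-diagonal normalization of the loadings is what makes the bound dimension-free, so the proof should state explicitly where $|\ab_{i,j}|_2=1$ is used. Apart from that, the argument is elementary linear algebra.
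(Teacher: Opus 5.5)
Your proposal is correct and follows essentially the same route as the paper. Both write $\Bb_j^\T\Bb_j$ as the Hadamard product of the Gram matrices $\Ab_{j'}^\T\Ab_{j'}$, $j'\ne j$, invoke the Schur product theorem for full rank, and bound the eigenvalues between $\sigma_r^2(\Ab_{j'})$ and $\sigma_1^2(\Ab_{j'})$. The only difference is that the paper cites Proposition 6.3.4 of Rao and Rao for that eigenvalue bound. You instead derive it directly from Schur's theorem, using that the remaining Hadamard factor has unit diagonal, which makes the argument self-contained.
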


               \begin{lemma}\label{lem: gkixi}
               Under Assumptions \textup{\ref{tail}} and \textup{\ref{mixing}}, for any $i \in [r]$, it holds that
        \begin{equation}\label{gkji}
				\max_{k \in [K]}|\tilde g_{k,i,\xi}-g_{k,i,\xi}| =  O_{\rm p}(w_i n^{-1/2})\,,
			\end{equation}
           provided that $ \log D_n \ll n^c$ for some constant $c\in(0,1)$ depending only on $c_1$ and $c_2$ specified in Assumptions {\rm\ref{tail}} and {\rm\ref{mixing}}, where $g_{k,i,\xi}$ and  $\tilde g_{k,i,\xi}$  are defined in \eqref{gki} and \eqref{hat g ki}, respectively.
           \end{lemma}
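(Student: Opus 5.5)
The plan is to write $\tilde g_{k,i,\xi}-g_{k,i,\xi}$ as a centered sample average plus a demeaning remainder, and then apply an exponential-type concentration inequality for $\alpha$-mixing sequences with stretched-exponential tails. Put $\mu_{f,i}=\mathbb{E}(\bar f_i)$ and $\mu_\xi=\mathbb{E}(\bar\xi)$. Expanding \eqref{hat g ki},
\[
\frac{\tilde g_{k,i,\xi}}{w_i}=\frac{1}{n-k}\sum_{t=k+1}^n(f_{t,i}-\mu_{f,i})(\xi_{t-k}-\mu_\xi)-(\bar f_i-\mu_{f,i})\bar\xi^{(k)}_{-}-(\bar\xi-\mu_\xi)\bar f^{(k)}_{i,+}+(\bar f_i-\mu_{f,i})(\bar\xi-\mu_\xi),
\]
where $\bar\xi^{(k)}_{-}=(n-k)^{-1}\sum_{t=k+1}^n(\xi_{t-k}-\mu_\xi)$ and $\bar f^{(k)}_{i,+}=(n-k)^{-1}\sum_{t=k+1}^n(f_{t,i}-\mu_{f,i})$. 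The same expansion holds for $g_{k,i,\xi}/w_i$ with expectations taken, and by \eqref{gki} the first term has expectation exactly $g_{k,i,\xi}/w_i$ up to these cross terms. Hence $w_i^{-1}(\tilde g_{k,i,\xi}-g_{k,i,\xi})$ is a sum of four pieces:
\begin{enumerate}
\item the centered average $(n-k)^{-1}\sum_t[Z_t-\mathbb{E}Z_t]$, with $Z_t=(f_{t,i}-\mu_{f,i})(\xi_{t-k}-\mu_\xi)$;
\item products of two centered sample means such as $(\bar f_i-\mu_{f,i})\bar\xi^{(k)}_-$, together with the corresponding expectations of these products;
\item boundary corrections of size $O(k/n)$ coming from the mismatch between $\bar f_i$ and $\bar f^{(k)}_{i,+}$;
\item deterministic terms $(\mathbb{E}\bar f^{(k)}_{i,+}-\mu_{f,i})$, which are $O(k/n)$ because the tails in Assumption \ref{tail} give uniformly bounded moments.
\end{enumerate}

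The $\xi_t$ enter as linear combinations of $\mathcal{Y}_t$, so $\{f_{t,i},\xi_{t-k}\}$ is $\alpha$-mixing with coefficients $\alpha(|s-k|_+)$ by Assumption \ref{mixing}. By Assumption \ref{tail}, each of $f_{t,i}$ and $\xi_t$ has tail $\exp(-Cx^{c_1})$. Their product $Z_t$ therefore has tail $\exp(-Cx^{c_1/2})$, and $\mathbb{E}Z_t^2$ is uniformly bounded.

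For piece 1, I would invoke a Bernstein-type inequality for $\alpha$-mixing sequences with stretched-exponential tails (for instance Lemma 2 of \cite{chang2015high} or Merlevède–Peligrad–Rio). This gives $\mathbb{P}\{|\sum_t(Z_t-\mathbb{E}Z_t)|>nx\}\lesssim \exp(-Cn x^2)+\exp(-C(nx)^{\tilde c'})$ for some $\tilde c'$ depending on $c_1$ and $c_2$. Taking $x\asymp n^{-1/2}$ times a large constant makes the probability small, and a variance bound alone would already give $O_{\rm p}(n^{-1/2})$ via Chebyshev. The condition $\log D_n\ll n^c$ is not really needed for a single fixed $i$, since $K$ and $r$ are fixed; it only matters if one wants uniformity over many linear combinations.

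For piece 2, the same inequality applied to $f_{t,i}$ and to $\xi_t$ gives $\bar f_i-\mu_{f,i}=O_{\rm p}(n^{-1/2})$ and $\bar\xi^{(k)}_-=O_{\rm p}(n^{-1/2})$. Each product is then $O_{\rm p}(n^{-1})$. Its expectation is at most the product of the two standard deviations by Cauchy–Schwarz, which is also $O(n^{-1})$. Pieces 3 and 4 are $O_{\rm p}(n^{-1})$ as noted above.

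Because $K$ is fixed, the maximum over $k\in[K]$ costs only a constant factor. Multiplying back by $w_i$ yields $O_{\rm p}(w_in^{-1/2})$.

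The main obstacle is piece 1: one has to verify that the product $Z_t$ inherits a usable tail index and mixing rate so that the concentration lemma applies. Beyond that, the only care needed is bookkeeping of the nonstationary means, since $\mathbb{E}f_{t,i}$ may vary with $t$ and so every centering must be done at the averaged mean $\mu_{f,i}$.
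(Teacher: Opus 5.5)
Your proposal is correct and follows essentially the paper's route: decompose $\tilde g_{k,i,\xi}-g_{k,i,\xi}$ into a centered lag-$k$ product average plus sample-mean terms, then bound the main term with an exponential inequality for $\alpha$-mixing sequences with stretched-exponential tails (tail index $c_1/2$ for the product, mixing coefficients $\alpha(|\ell-k|_+)$). Your centering at the deterministic means $\mathbb{E}(\bar f_i)$ and $\mathbb{E}(\bar\xi)$ is slightly tidier than the paper's expansion. By \eqref{gki}, $g_{k,i,\xi}/w_i$ is exactly the mean of your first term, so no expectations of cross products need subtracting, and the remainder is just three products of centered averages, each $O_{\rm p}(n^{-1})$.
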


    \begin{lemma}\label{lem: l2 truncated covariance}
		Let Assumptions \textup{\ref{error}}--\textup{\ref{eigenvalue}}  hold and  $\delta_1=C_* (n^{-1} \log D_n)^{1/2}$  for some sufficiently large constant $C_*>0$. It holds that  
			\begin{equation*}
			  		\|T_{\delta_1}\{\tilde\bSigma_{\Yb_j,\xi}(k)\}-\tilde\bSigma_{\Cb_j,\xi}(k)\|_2  =  \ubar\sigma_{\xi}^2\bar\sigma_{\xi}^{-1} O_{\rm p}(\Pi_{n})  
			\end{equation*}
			for any $k\in[K]$,	provided that $ \log D_n \ll n^c$ for some constant $c\in(0,1)$ depending only on $c_1$ and $c_2$ specified in Assumptions {\rm\ref{tail}} and {\rm\ref{mixing}}, where $\tilde\bSigma_{\Cb_j,\xi}(k)$ is defined in \eqref{sigma C}. 
		\end{lemma}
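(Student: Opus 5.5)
The plan is to control $\tilde\bSigma_{\Eb_j,\xi}(k)$ entrywise, and then use the sparsity of $\Ab_j$ and $\Bb_j$ to turn the entrywise bound into a spectral-norm bound on a small block.

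\textbf{Step 1: entrywise bound on the noise part.} Since $\tilde\bSigma_{\Yb_j,\xi}(k)=\tilde\bSigma_{\Cb_j,\xi}(k)+\tilde\bSigma_{\Eb_j,\xi}(k)$ with $\tilde\bSigma_{\Cb_j,\xi}(k)=\Ab_j\tilde\Gb_{k,\xi}\Bb_j^\T$, the first step is to show
\[
|\tilde\bSigma_{\Eb_j,\xi}(k)|_{\max}=O_{\rm p}\{(n^{-1}\log D_n)^{1/2}\}.
\]
Each entry is $(n-k)^{-1}\sum_t([\Eb_{t,j}]_{p,q}-[\bar\Eb_j]_{p,q})(\xi_{t-k}-\bar\xi)$. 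The term $\xi_{t-k}$ is a linear combination of $\fb_{t-k}$ and $\mathcal{E}_{t-k}$. By Assumption \ref{error}, $\mathbb{E}\{[\Eb_{t,j}]_{p,q}\xi_{t-k}\}=0$, because $\mathbb{E}(\mathcal{E}_t\otimes\mathcal{E}_{t-k})={\bf 0}$ and $\mathbb{E}(f_{t-k,i}\mathcal{E}_t)={\bf 0}$. The centering by $\bar\Eb_j$ and $\bar\xi$ only adds products of two sample means, and these are of order $O_{\rm p}(n^{-1}\log D_n)$ uniformly.

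The main term is a sum of mean-zero products. The summands are $\alpha$-mixing, with coefficients $\alpha(|\ell-k|_+)$ by Assumption \ref{mixing}. Each is a product of two variables with tails $\exp(-Cx^{c_1})$ by Assumption \ref{tail}, so it has tail $\exp(-Cx^{c_1/2})$. A Bernstein-type inequality for $\alpha$-mixing sequences with Weibull tails (Merlevède--Peligrad--Rio type) gives
\[
\mathbb{P}(|\cdot|>x)\lesssim \exp(-Cnx^2)+n\exp\{-C(nx)^{\gamma}\}
\]
for some $\gamma$ depending on $c_1,c_2$. A union bound over the $D_n$ entries and the $K$ lags, with $x=C(n^{-1}\log D_n)^{1/2}$, then gives the claim, provided $\log D_n\ll n^c$. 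This concentration step is where the condition on $D_n$ enters, and I expect it to be the main technical obstacle: the heavier product tails and the serial dependence must be balanced so that the second term remains negligible after the union bound.

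\textbf{Step 2: support structure.} By Assumption \ref{sparsity}, the rows on which $\Ab_j$ is nonzero form a set $\mathcal{R}_j$ with $|\mathcal{R}_j|\le rs_j$. Since $\bbb_{i,j}$ is a Kronecker product of the $\ab_{i,j'}$ with $j'\ne j$, the columns on which $\Bb_j$ is nonzero form a set $\mathcal{C}_j$ with $|\mathcal{C}_j|\le rS_n/s_j$. Hence $\tilde\bSigma_{\Cb_j,\xi}(k)$ vanishes outside the block $\mathcal{R}_j\times\mathcal{C}_j$. Outside this block $\tilde\bSigma_{\Yb_j,\xi}(k)=\tilde\bSigma_{\Eb_j,\xi}(k)$. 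Choosing $C_*$ large enough, Step 1 gives $|\tilde\bSigma_{\Eb_j,\xi}(k)|_{\max}<\delta_1$ with probability tending to one. On this event every entry outside the block is thresholded to zero, and so is the target $\tilde\bSigma_{\Cb_j,\xi}(k)$ there.

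\textbf{Step 3: the block.} Fix an entry in $\mathcal{R}_j\times\mathcal{C}_j$ and write $C,E,Y=C+E$ for the corresponding entries of $\tilde\bSigma_{\Cb_j,\xi}(k)$, $\tilde\bSigma_{\Eb_j,\xi}(k)$ and $\tilde\bSigma_{\Yb_j,\xi}(k)$.
\begin{itemize}
\item If $|Y|\ge\delta_1$, the entry of $T_{\delta_1}\{\tilde\bSigma_{\Yb_j,\xi}(k)\}-\tilde\bSigma_{\Cb_j,\xi}(k)$ equals $E$.
\item If $|Y|<\delta_1$, the entry equals $-C$, and $|C|\le|Y|+|E|<\delta_1+|E|$.
\end{itemize}
So every entry of the difference on the block is at most $\delta_1+|\tilde\bSigma_{\Eb_j,\xi}(k)|_{\max}=O_{\rm p}\{(n^{-1}\log D_n)^{1/2}\}$. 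Bounding the spectral norm by the Frobenius norm of the block gives
\[
\|T_{\delta_1}\{\tilde\bSigma_{\Yb_j,\xi}(k)\}-\tilde\bSigma_{\Cb_j,\xi}(k)\|_2\le\{|\mathcal{R}_j||\mathcal{C}_j|\}^{1/2}\,O_{\rm p}\{(n^{-1}\log D_n)^{1/2}\}=O_{\rm p}\{(n^{-1}S_n\log D_n)^{1/2}\},
\]
using $|\mathcal{R}_j||\mathcal{C}_j|\le r^2S_n$ with $r$ fixed. By the definition of $\Pi_n$, this rate equals $\ubar\sigma_{\xi}^2\bar\sigma_{\xi}^{-1}\Pi_n$, which is the stated bound. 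The bound holds for each of the finitely many $k\in[K]$.
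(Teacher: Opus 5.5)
Your argument is correct. Step 1 matches the paper: entrywise concentration for the $\alpha$-mixing products, followed by a union bound. Steps 2--3 take a shorter route than the paper.

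\textbf{How the paper proceeds.} It splits $T_{\delta_1}\{\tilde\bSigma_{\Yb_j,\xi}(k)\}-\tilde\bSigma_{\Cb_j,\xi}(k)$ into $T_{\delta_1}\{\tilde\bSigma_{\Cb_j,\xi}(k)\}-\tilde\bSigma_{\Cb_j,\xi}(k)$ plus five pieces. The pieces are indexed by which of $|\tilde\sigma_{p,q}^{(j,k)}|$ and $|\tilde\sigma_{1,p,q}^{(j,k)}|$ exceed $\delta_1$. Each piece is bounded by $\|\Mb\|_2^2\le(\max_p\sum_q|M_{p,q}|)(\max_q\sum_p|M_{p,q}|)$. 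This uses the fact that the signal has at most $s_j$ nonzeros per column and $\prod_{j'\ne j}s_{j'}$ per row. False positives, where $\tilde\sigma_{1,p,q}^{(j,k)}=0$ but $|\tilde\sigma_{p,q}^{(j,k)}|\ge\delta_1$, are handled by a Markov argument: their number per row and per column is $O_{\rm p}(1)$.

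\textbf{How you proceed.} You replace all of this with three moves:
\begin{itemize}
\item The single entrywise fact that each entry of the difference is either $E$ or $-C$, with $|C|<\delta_1+|E|$.
\item The observation that the support of $\Ab_j\tilde\Gb_{k,\xi}\Bb_j^\T$ lies in a rectangle with $|\mathcal{R}_j||\mathcal{C}_j|\le r^2S_n$. Because of this, the Frobenius norm on the rectangle costs nothing relative to the row/column-sum bound.
\item Removing false positives entirely, by showing no noise entry reaches the threshold.
\end{itemize}

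\textbf{What each approach buys.} The paper's version is more general in two ways. First, the Schur-type bound needs only row and column sparsity, not a product-form support. This is what lets it be reused in Section~\ref{sec: relax error serial dependence}, where $\bSigma_{\Eb_j,\xi}(k)$ is only row/column sparse and a Frobenius bound on its support would lose. Second, the Markov count needs only $\sum_{p,q}\mathbb{P}(|[\tilde\bSigma_{\Eb_j,\xi}(k)]_{p,q}|\ge\delta_1)=O(1)$. Your version is shorter and avoids the case analysis.

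\textbf{Two small points.}
\begin{itemize}
\item Your claim that no noise entry reaches $\delta_1$ with probability tending to one needs the tail sum above to tend to zero. For fixed $C_*$ this effectively requires $D_n\to\infty$, which Theorem~\ref{thm: aij} does not assume. The claim also does not follow from the $O_{\rm p}$ statement of Step 1 alone, so state Step 1 as the explicit tail-sum bound. If $D_n$ stays bounded, apply your entrywise bound to every entry and take the Frobenius norm of the whole matrix, or count exceedances by Markov as the paper does.
\item $\xi_t$ and $f_{t,i}$ need not have mean zero, so $\bar\xi$ is only $O_{\rm p}(1)$. The centering terms are therefore $O_{\rm p}\{(n^{-1}\log D_n)^{1/2}\}$, not $O_{\rm p}(n^{-1}\log D_n)$. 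The conclusion of Step 1 is unaffected.
\end{itemize}
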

           Since $\tilde\bSigma_{\Cb_j,\xi}(k)=\Ab_j\tilde\Gb_{k,\xi}\Bb_j^\T$, following Lemmas \ref{pro:rank-B} and \ref{lem: gkixi}, and Assumptions \ref{sparsity} and \ref{eigenvalue}, by Triangle inequality and Cauchy--Schwarz inequality, we have 
			\begin{equation}\label{Sigma C upper bound}
            \|\tilde\bSigma_{\Cb_j,\xi}(k)\|_2 \lesssim   \max_{i\in[r]}|\tilde g_{k,i,\xi}| \lesssim \bar\sigma_{\xi}\{1 + o_{\rm p}(1)\}\,.
			\end{equation}
Following Lemma \ref{lem: l2 truncated covariance} and \eqref{Sigma C upper bound}, by the definition of $\tilde{\mathbf{M}}_{j}$, Triangle inequality, and  Cauchy--Schwarz inequality, and using the fact that $(\ubar\sigma_{\xi}^2\bar\sigma_{\xi}^{-1}\Pi_n)^2\lesssim \ubar\sigma_{\xi}^2\Pi_n$ under the condition $\Pi_n\ll1$, it holds that
			\begin{equation}\label{tilde M and tilde C}
				\bigg\|\tilde{\mathbf{M}}_{j}-\sum_{k=1}^K \tilde\bSigma_{\Cb_j,\xi}(k)^\T \tilde\bSigma_{\Cb_j,\xi}(k) \bigg\|_2 = \ubar\sigma_{\xi}^2\, O_{\rm p}(\Pi_{n})\,.
			\end{equation}
			On the other hand, by Lemma \ref{lem: gkixi} and Triangle inequality,  we also have
			\begin{equation}\label{tilde C and Sigma Y}
			\bigg\|\sum_{k=1}^K \tilde\bSigma_{\Cb_j,\xi}(k)^\T \tilde\bSigma_{\Cb_j,\xi}(k) - \sum_{k=1}^K \bSigma_{\Yb_j,\xi}(k)^\T \bSigma_{\Yb_j,\xi}(k) \bigg\|_2 = O_{\rm p}(w_1 \bar\sigma_{\xi} n^{-1/2})\,.
			\end{equation}
            Under Assumption \ref{eigenvalue}, we have $w_1 \bar\sigma_{\xi} n^{-1/2} \ll \ubar\sigma_{\xi}^2 $. Meanwhile, all the nonzero eigenvalues of $\sum_{k=1}^K \bSigma_{\Yb_j,\xi}(k)^\T \bSigma_{\Yb_j,\xi}(k)$ are lower bounded by $\ubar\sigma_{\xi}^2$.
			 Therefore, by  Weyl's theorem,  we conclude that the minimum nonzero eigenvalue of $	\sum_{k=1}^K \tilde\bSigma_{\Cb_j,\xi}(k)^\T \tilde\bSigma_{\Cb_j,\xi}(k)$ is larger than $\ubar\sigma_{\xi}^2\{1-
            o_{\rm p}(1)\}$. Furthermore, note that 
			\[
			\sum_{k=1}^K \tilde\bSigma_{\Cb_j,\xi}(k)^\T \tilde\bSigma_{\Cb_j,\xi}(k) = \Qb_j\Vb_j\bigg(\sum_{k=1}^K\tilde\Gb_{k,\xi}\Ab_j^\T\Ab_j\tilde\Gb_{k,\xi}\bigg)\Vb_j^\T\Qb_j^\T\,,
			\]
			whose  $r$ leading eigenvectors are in $\mathcal{M}(\Qb_j)$. Recall $\tilde{r} = r$. Therefore, by \eqref{tilde M and tilde C} and standard results in perturbation theory, see e.g. Lemma 1 of \citeS{chang2018principal-app}, we conclude that
            \begin{equation}\label{thm: initial}
                			\|\tilde\Qb_j-\Qb_j\tilde{\mathcal{H}}_j\|_2 = O_{\rm p}(\Pi_{n})
            \end{equation}
			for some orthogonal matrix $\tilde{\mathcal{H}}_j$, provided that $\log D_n \ll n^c$ for some constant $c\in(0,1)$ depending only on $c_1$ and $c_2$ specified in Assumptions {\rm\ref{tail}} and {\rm\ref{mixing}}.

\subsubsection{Step 2: Consistency of \texorpdfstring{$\tilde{\ab}_{i,j}$}{aij}}
By definition, $\tilde\ab_{i,j}$ is an eigenvector of $\tilde\Kb_{1,2,j}$, while $\ab_{i,j}$ is an associated eigenvector of $\Kb_{1,2,j}$. Following the perturbation theory for eigen-analysis, such as Lemma 4 in \citeS{chang2023modelling-app}, to prove the consistency of $\tilde\ab_{i,j}$, we should start with the consistency of $\tilde\Kb_{1,2,j}$ under spectral norm.
			
			We will not bound the error $\|\tilde\Kb_{1,2,j}-\Kb_{1,2,j}\|_2$ directly. Instead, similarly to the definition of $\Kb_{1,2,j}$ in \eqref{Kbj}, we define $\hat\Kb_{1,2,j}$  by replacing $\bSigma_{\Yb_j,\xi}(k)$ with $\tilde\bSigma_{\Cb_j,\xi}(k)=\Ab_j\tilde\Gb_{k,\xi}\Bb_j^\T$  for $k\in \{1,2\}$, where the diagonal entries of $\tilde\Gb_{k,\xi}$ are defined in \eqref{hat g ki}. Similarly to $\Kb_{1,2,j} = \Ab_j\Gb_{1,\xi}\Gb_{2,\xi}^{-1}(\Ab_j^\T\Ab_j)^{-1}\Ab_j^\T$, we have
			\begin{equation}\label{define hat K21j}
				\hat\Kb_{1,2,j}=\Ab_j\tilde\Gb_{1,\xi}\tilde\Gb_{2,\xi}^{-1}(\Ab_j^\T\Ab_j)^{-1}\Ab_j^\T\,.
			\end{equation}
			Hence, $\ab_{i,j}$ is also an eigenvector of $\hat\Kb_{1,2,j}$ with the associated eigenvalue $\hat\lambda_i=\tilde g_{2,i,\xi}^{-1}\tilde g_{1,i,\xi}$.  To complete the proof of   Theorem \ref{thm: aij}, we need the following lemma with its proof given in Section \ref{sec:lem:gap K21j}.

            	\begin{lemma}\label{pro: singular K21j}
	 		Under Assumptions \textup{\ref{error}}--\textup{\ref{gap new}}, there exist a $d_j\times(d_j-1)$ matrix $\Ob_{j,\mminus i}$ and a universal constant $C_{10}>0$ such that $(\ab_{i,j},\Ob_{j,\mminus i})$ is an orthogonal matrix and
		\begin{equation*}\label{gap K21j}
	\begin{split}
	    		\sigma_{\min}\{\Ob_{j,\mminus i}^\T(\Kb_{1,2,j}-\bar\lambda_i\Ib_{d_j})\Ob_{j,\mminus i}\} &\ge C_{10}\,,\\
                \sigma_{\min}\{\Ob_{j,\mminus i}^\T(\hat\Kb_{1,2,j}-\hat\lambda_i\Ib_{d_j})\Ob_{j,\mminus i}\}&\ge  C_{10}\{1-o_{\rm p}(1)\}\,,
	\end{split}
            \end{equation*}
            where $\hat{\Kb}_{1,2,j}$ is defined in \eqref{define hat K21j} with the associated  eigenvalues $\hat\lambda_i=\tilde g_{2,i,\xi}^{-1}\tilde g_{1,i,\xi}$ for $i \in [r]$. In addition, $\hat\lambda_i=\bar\lambda_i + o_{\rm p}(1)$ for $i \in [r]$.
	\end{lemma}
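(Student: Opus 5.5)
The plan is to construct $\Ob_{j,\mminus i}$ explicitly and exhibit an explicit inverse of $\Ob_{j,\mminus i}^\T(\Kb_{1,2,j}-\bar\lambda_i\Ib_{d_j})\Ob_{j,\mminus i}$. A lower bound on $\sigma_{\min}$ then reduces to an upper bound on the spectral norm of that inverse. Let $\Ob_{j,\mminus i}$ be any $d_j\times(d_j-1)$ matrix whose columns form an orthonormal basis of the orthogonal complement of $\ab_{i,j}$; since $|\ab_{i,j}|_2=1$, the matrix $(\ab_{i,j},\Ob_{j,\mminus i})$ is orthogonal.

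\textbf{Step 1: algebraic identity.} Write $\Ab_j^{\MP}=(\Ab_j^\T\Ab_j)^{-1}\Ab_j^\T$, whose $\ell$-th row is $(\ab^{\MP}_{\ell,j})^\T$, and $\bLambda=\textup{diag}(\bar\lambda_1,\ldots,\bar\lambda_r)$, so that $\Kb_{1,2,j}=\Ab_j\bLambda\Ab_j^{\MP}$. Let $(\bLambda-\bar\lambda_i\Ib_r)^{\dagger}$ be the diagonal matrix with entries $(\bar\lambda_\ell-\bar\lambda_i)^{-1}$ for $\ell\neq i$ and $0$ at position $i$. Define
\[
\Nb_i=\Ab_j(\bLambda-\bar\lambda_i\Ib_r)^{\dagger}\Ab_j^{\MP}-\bar\lambda_i^{-1}(\Ib_{d_j}-\Ab_j\Ab_j^{\MP})\,.
\]
Using $\Ab_j^{\MP}\Ab_j=\Ib_r$ and $\Ab_j^{\MP}(\Ib-\Ab_j\Ab_j^{\MP})=\mathbf{0}$, a direct computation gives
\[
(\Kb_{1,2,j}-\bar\lambda_i\Ib)\Nb_i=\Ab_j(\Ib_r-\eb_i\eb_i^\T)\Ab_j^{\MP}+\Ib-\Ab_j\Ab_j^{\MP}=\Ib-\ab_{i,j}(\ab^{\MP}_{i,j})^\T\,.
\]
Since $(\Kb_{1,2,j}-\bar\lambda_i\Ib)\ab_{i,j}=\mathbf{0}$, we have $\Kb_{1,2,j}-\bar\lambda_i\Ib=(\Kb_{1,2,j}-\bar\lambda_i\Ib)\Ob_{j,\mminus i}\Ob_{j,\mminus i}^\T$. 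Combining this with the previous display and $\Ob_{j,\mminus i}^\T\ab_{i,j}=\mathbf{0}$,
\[
\{\Ob_{j,\mminus i}^\T(\Kb_{1,2,j}-\bar\lambda_i\Ib)\Ob_{j,\mminus i}\}(\Ob_{j,\mminus i}^\T\Nb_i\Ob_{j,\mminus i})=\Ob_{j,\mminus i}^\T\{\Ib-\ab_{i,j}(\ab^{\MP}_{i,j})^\T\}\Ob_{j,\mminus i}=\Ib_{d_j-1}\,.
\]
Hence $\sigma_{\min}\{\Ob_{j,\mminus i}^\T(\Kb_{1,2,j}-\bar\lambda_i\Ib)\Ob_{j,\mminus i}\}\ge\|\Nb_i\|_2^{-1}$.

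\textbf{Step 2: bounding $\|\Nb_i\|_2$.} By Assumption \ref{sparsity}, $\|\Ab_j\|_2\le C_6$ and $\|\Ab_j^{\MP}\|_2\le C_6$. Also $\|\Ib-\Ab_j\Ab_j^{\MP}\|_2\le 1+C_6^2$. By Assumption \ref{gap new}, $\|(\bLambda-\bar\lambda_i\Ib_r)^{\dagger}\|_2\le C_8^{-1}$ and $|\bar\lambda_i|^{-1}$ is bounded. So $\|\Nb_i\|_2\le C_6^2C_8^{-1}+(1+C_6^2)\max_i|\bar\lambda_i|^{-1}$, which yields the universal constant $C_{10}$.

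\textbf{Step 3: the perturbed matrix.} Since $\hat\Kb_{1,2,j}=\Ab_j\hat\bLambda\Ab_j^{\MP}$ with $\hat\bLambda=\textup{diag}(\hat\lambda_1,\ldots,\hat\lambda_r)$, the identical argument applies with $\bar\lambda_\ell$ replaced by $\hat\lambda_\ell$. It therefore suffices to show $\hat\lambda_\ell=\bar\lambda_\ell+o_{\rm p}(1)$; the gaps and the lower bound on $|\hat\lambda_i|$ then persist with probability tending to one, giving the bound $C_{10}\{1-o_{\rm p}(1)\}$ by continuity of the constants.

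This consistency is the main point requiring care. By Lemma \ref{lem: gkixi}, $|\tilde g_{k,\ell,\xi}-g_{k,\ell,\xi}|=O_{\rm p}(w_1n^{-1/2})$. Assumption \ref{eigenvalue} gives $w_1n^{-1/2}\ll\ubar\sigma_\xi^2/\bar\sigma_\xi\le\ubar\sigma_\xi$. Assumption \ref{gap new} gives $|g_{2,\ell,\xi}|\ge C_8\ubar\sigma_\xi$ and $|g_{1,\ell,\xi}|=|\bar\lambda_\ell||g_{2,\ell,\xi}|$. Hence $\tilde g_{k,\ell,\xi}/g_{2,\ell,\xi}=g_{k,\ell,\xi}/g_{2,\ell,\xi}+o_{\rm p}(1)$ for $k\in\{1,2\}$. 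Taking the ratio, $\hat\lambda_\ell=\tilde g_{1,\ell,\xi}/\tilde g_{2,\ell,\xi}=\bar\lambda_\ell+o_{\rm p}(1)$, which completes the argument.
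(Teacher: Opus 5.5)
Your proof is correct, but it follows a different route from the paper's. The paper fixes a particular orthonormal basis of the complement of $\ab_{i,j}$, namely $\Ob_{j,\mminus i}=(\Pb_{j,-i},\Ob_{j,-r})$. Here $\Pb_{j,-i}$ collects the left singular vectors of $\Db_{j,-i}=(\Ib_{d_j}-\ab_{i,j}\ab_{i,j}^\T)\Ab_{j,-i}=\Pb_{j,-i}\Ub_{j,-i}$, and $\Ob_{j,-r}$ spans the orthogonal complement of the column space of $\Ab_j$. In this basis the compressed matrix is block diagonal, with blocks $\Ub_{j,-i}(\Gb_{1,\xi,-i}\Gb_{2,\xi,-i}^{-1}-\bar\lambda_i\Ib_{r-1})\Ub_{j,-i}^{-1}$ and $-\bar\lambda_i\Ib_{d_j-r}$. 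Most of the paper's work is a Schur-complement argument showing $C_6^{-1}\le\sigma_{r-1}(\Ub_{j,-i})\le\sigma_1(\Ub_{j,-i})\le C_6$, which effectively gives the constant $\min\{C_8C_6^{-2},\min_\ell|\bar\lambda_\ell|\}$.

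You instead write down the inverse of the compression explicitly. Your $\mathbf{N}_i$ is the group inverse (reduced resolvent) of $\Kb_{1,2,j}-\bar\lambda_i\Ib_{d_j}$, assembled from the oblique spectral projectors $\ab_{\ell,j}(\ab^{\MP}_{\ell,j})^\T$ and the complementary projector. The singular-value bound then reduces to a norm bound that needs only $\|\Ab_j\|_2$, $\|(\Ab_j^\T\Ab_j)^{-1}\Ab_j^\T\|_2$, the eigen-gap and $\min_\ell|\bar\lambda_\ell|$.

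What your route buys:
\begin{itemize}
\item It is basis-free. This is legitimate, since the singular values of the compression do not depend on which orthonormal basis of the complement of $\ab_{i,j}$ is chosen.
\item It bypasses the Schur complement and the auxiliary matrices $\Pb_{j,-i}$ and $\Ub_{j,-i}$.
\item It carries over verbatim to $\hat\Kb_{1,2,j}$.
\end{itemize}
The price is a constant that is worse than the paper's only by a factor depending on $C_6$.

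One small tightening: $\Ib_{d_j}-\Ab_j(\Ab_j^\T\Ab_j)^{-1}\Ab_j^\T$ is an orthogonal projector, so its spectral norm is at most $1$ rather than $1+C_6^2$. Your consistency argument for $\hat\lambda_\ell$ is the same as the paper's.
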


            By Lemma \ref{pro: singular K21j}, following the perturbation theory from Lemma 4 of \citeS{chang2023modelling-app}, it remains to bound the error $\|\tilde\Kb_{1,2,j}-\hat\Kb_{1,2,j}\|_2$. According to the definition of $\tilde\Kb_{1,2,j}$, we first investigate the minimum eigenvalue of $\tilde\Qb_j^\T\tilde\bSigma_{2,j}^\T\tilde\bSigma_{2,j}\tilde\Qb_j$. By  Lemma \ref{lem: l2 truncated covariance} and \eqref{Sigma C upper bound}, we have
		\begin{equation}\label{eq: QSigmaSigmaQ-QSigmaCSigmaCQ}       \|\tilde\Qb_j^\T\tilde\bSigma_{2,j}^\T\tilde\bSigma_{2,j}\tilde\Qb_j-\tilde\Qb_j^\T\tilde\bSigma_{\Cb_j,\xi}(2)^\T\tilde\bSigma_{\Cb_j,\xi}(2)\tilde\Qb_j\|_2 = \ubar\sigma_{\xi}^2 O_{\rm p}(\Pi_{n})\,.
		\end{equation}
			Therefore, we should consider the minimum eigenvalue of $\tilde\Qb_j^\T\tilde\bSigma_{\Cb_j,\xi}(2)^\T\tilde\bSigma_{\Cb_j,\xi}(2)\tilde\Qb_j$. By \eqref{thm: initial}, $\tilde\Qb_j$ is a consistent estimator of $\Qb_j\tilde{\mathcal{H}}_j$. Then, because  $\sigma_{\min}(\tilde{\mathcal{H}}_j^\T\Qb_j^\T\Bb_j) \ge C$ for some universal constant $C>0$, we can conclude that  $\sigma_{\min}(\tilde{\Qb}_j^\T\Bb_j) \ge C\{1-o_{\rm p}(1)\}$.  It follows that
			\begin{equation}\label{middle Qj population}
   \begin{split}
       	\sigma_{\min}\{\tilde\Qb_j^\T\tilde\bSigma_{\Cb_j,\xi}(2)^\T\tilde\bSigma_{\Cb_j,\xi}(2)\tilde\Qb_j\}&=\sigma_{\min}(\tilde\Qb_j^\T\Bb_j\tilde\Gb_{2,\xi}\Ab_j^\T\Ab_j\tilde\Gb_{2,\xi}\Bb_j^\T\tilde\Qb_j)\\
        &\gtrsim \ubar\sigma_{\xi}^2\{1-o_{\rm p}(1)\}\,.
   \end{split}			
			\end{equation}
			Then, by \eqref{eq: QSigmaSigmaQ-QSigmaCSigmaCQ}, \eqref{middle Qj population} and Weyl's Theorem, we can further conclude that
			\begin{equation}\label{middle Qj}
				\sigma_{\min}(\tilde\Qb_j^\T\tilde\bSigma_{2,j}^\T\tilde\bSigma_{2,j}\tilde\Qb_j)\gtrsim \ubar\sigma_{\xi}^2\{1-o_{\rm p}(1)\}\,.
			\end{equation}
			 Next, for any $p \times p$ invertible matrices $\Wb_1$ and $\Wb_2$, it holds that 
			\begin{equation}\label{matrix inverse}
				\Wb_1^{-1}=\Wb_2^{-1}-\Wb_1^{-1}(\Wb_1-\Wb_2)\Wb_2^{-1}=\Wb_2^{-1}\{\Ib_p+(\Wb_1-\Wb_2)\Wb_2^{-1}\}^{-1} \,.
			\end{equation}
			Combining \eqref{eq: QSigmaSigmaQ-QSigmaCSigmaCQ} with \eqref{matrix inverse}, we have
			\[
			\begin{split}
				&\tilde\bSigma_{\Cb_j,\xi}(1)\tilde\Qb_j(\tilde\Qb_j^\T\tilde\bSigma_{2,j}^\T\tilde\bSigma_{2,j}\tilde\Qb_j)^{-1}\\
				& ~~~~~~~~ = \tilde\bSigma_{\Cb_j,\xi}(1)\tilde\Qb_j\{\tilde\Qb_j^\T\tilde\bSigma_{\Cb_j,\xi}(2)^\T\tilde\bSigma_{\Cb_j,\xi}(2)\tilde\Qb_j\}^{-1}\{\Ib_r+\mathbf{o}_{\rm p}(1)\}^{-1}\,.
			\end{split}
			\]
			 Then,  by direct calculation, it holds that
			\[
			\begin{split}
				\|\tilde\bSigma_{\Cb_j,\xi}(1)\tilde\Qb_j(\tilde\Qb_j^\T\tilde\bSigma_{2,j}^\T\tilde\bSigma_{2,j}\tilde\Qb_j)^{-1}\|_2
			 = O_{\rm p}(\ubar\sigma_{\xi}^{-1})\,. 
			\end{split}
			\]
			Further by Lemma \ref{lem: l2 truncated covariance}, \eqref{middle Qj}, and Triangle inequality, we also have
			\begin{equation}\label{tilde K 123}
				\|\tilde\bSigma_{k,j}\tilde\Qb_j(\tilde\Qb_j^\T\tilde\bSigma_{2,j}^\T\tilde\bSigma_{2,j}\tilde\Qb_j)^{-1}\|_2  = O_{\rm p}(\ubar\sigma_{\xi}^{-1})\,, \quad k \in \{1,2\}\,.
			\end{equation}
Now, by  Lemma \ref{lem: l2 truncated covariance} and \eqref{tilde K 123}, it follows that
			\begin{equation}\label{tilde K 123 plus error}
				\begin{split}
					\tilde\Kb_{1,2,j}=\tilde\bSigma_{1,j}\tilde\Qb_j(\tilde\Qb_j^\T\tilde\bSigma_{2,j}^\T\tilde\bSigma_{2,j}\tilde\Qb_j)^{-1}\tilde\Qb_j^\T\tilde\bSigma_{\Cb_j,\xi}(2)^\T+ \frac{\ubar\sigma_{\xi}}{\bar\sigma_{\xi}}\mathbf{O}_{\rm p}(\Pi_{n})\,.
				\end{split}
			\end{equation}
			Using \eqref{matrix inverse} once again, we obtain
\begin{align*}
    & (\tilde\Qb_j^\T \tilde\bSigma_{2,j}^\T\tilde\bSigma_{2,j}\tilde\Qb_j)^{-1}\tilde\Qb_j^\T\tilde\bSigma_{\Cb_j,\xi}(2)^\T\\
				&~~~~=\{\tilde\Qb_j^\T\tilde\bSigma_{\Cb_j,\xi}(2)^\T\tilde\bSigma_{\Cb_j,\xi}(2)\tilde\Qb_j\}^{-1}\tilde\Qb_j^\T\tilde\bSigma_{\Cb_j,\xi}(2)^\T\\
				&~~~~\quad-(\tilde\Qb_j^\T\tilde\bSigma_{2,j}^\T\tilde\bSigma_{2,j}\tilde\Qb_j)^{-1}\tilde\Qb_j^\T\{\tilde\bSigma_{2,j}^\T\tilde\bSigma_{2,j}-\tilde\bSigma_{\Cb_j,\xi}(2)^\T\tilde\bSigma_{\Cb_j,\xi}(2)\}\tilde\Qb_j\\
				&~~~~\quad \quad  \cdot  \{\tilde\Qb_j^\T\tilde\bSigma_{\Cb_j,\xi}(2)^\T\tilde\bSigma_{\Cb_j,\xi}(2)\tilde\Qb_j\}^{-1}\tilde\Qb_j^\T\tilde\bSigma_{\Cb_j,\xi}(2)^\T\\
				&~~~~= (\Bb_j^\T\tilde\Qb_j)^{-1}\tilde\Gb_{2,\xi}^{-1}(\Ab_j^\T\Ab_j)^{-1}\Ab_j^\T\\
				&~~~~\quad-(\tilde\Qb_j^\T\tilde\bSigma_{2,j}^\T\tilde\bSigma_{2,j}\tilde\Qb_j)^{-1}\tilde\Qb_j^\T\{\tilde\bSigma_{2,j}-\tilde\bSigma_{\Cb_j,\xi}(2)\}^\T\Ab_j(\Ab_j^\T\Ab_j)^{-1}\Ab_j^\T\\
				&~~~~\quad-(\tilde\Qb_j^\T\tilde\bSigma_{2,j}^\T\tilde\bSigma_{2,j}\tilde\Qb_j)^{-1}\tilde\Qb_j^\T\tilde\bSigma_{2,j}^\T\{\tilde\bSigma_{2,j}-\tilde\bSigma_{\Cb_j,\xi}(2)\}\tilde\Qb_j(\Bb_j^\T\tilde\Qb_j)^{-1}\tilde\Gb_{2,\xi}^{-1}(\Ab_j^\T\Ab_j)^{-1}\Ab_j^\T\,.
\end{align*}
			Left-multiplying both sides by $\tilde\bSigma_{1,j}\tilde\Qb_j$, and combining with Lemma \ref{lem: l2 truncated covariance}, \eqref{tilde K 123} and \eqref{tilde K 123 plus error}, we can conclude that 
			\[
			\tilde\Kb_{1,2,j}- \frac{\ubar\sigma_{\xi}}{\bar\sigma_{\xi}} \mathbf{O}_{\rm p}(\Pi_{n})=\hat\Kb_{1,2,j}+ \frac{\ubar\sigma_{\xi}}{\bar\sigma_{\xi}}\mathbf{O}_{\rm p}(\Pi_{n})+\tilde\Kb_{1,2,j} \cdot \frac{\ubar\sigma_{\xi}}{\bar\sigma_{\xi}}\mathbf{O}_{\rm p}(\Pi_{n})\,,
			\]
			where  we use the fact  $\hat\Kb_{1,2,j}=\tilde\bSigma_{\Cb_j,\xi}(1)\tilde\Qb_j\{\tilde\Qb_j^\T\tilde\bSigma_{\Cb_j,\xi}(2)^\T\tilde\bSigma_{\Cb_j,\xi}(2)\tilde\Qb_j\}^{-1}\tilde\Qb_j^\T\tilde\bSigma_{\Cb_j,\xi}(2)^\T$. Consequently, we have
			\begin{equation}\label{tilde K hat K error}
			\|	\tilde\Kb_{1,2,j} - \hat\Kb_{1,2,j}\|_2= \frac{\ubar\sigma_{\xi}}{\bar\sigma_{\xi}} O_{\rm p}(\Pi_{n})\,.
			\end{equation}
			Then, Theorem \ref{thm: aij} follows from Lemma \ref{pro: singular K21j}, \eqref{tilde K hat K error}, and standard perturbation theory; see Lemma 4 of \citeS{chang2023modelling-app}.  
$\hfill\Box$

		\subsection{Proof of Theorem \ref{thm: iterative}}
        The proof is divided into three steps. Step 1 constructs an event to control some random quantities. Step 2 shows how the iterations improve the convergence rate in each round. Step 3 completes the proof of Theorem \ref{thm: iterative} by letting the number of iterations grow gradually.  For notational simplicity, the proofs of Steps 1--3 in Sections \ref{sec:proofT2-step1}--\ref{sec:proofT2-step3} ignore the reflection and permutation indeterminacy and focus on the case $r \ge 2$. Section \ref{sec:proofT2-step4} further discusses the impact of the reflection and permutation indeterminacy, and shows that the proof for the case $r = 1$ is trivial and follows directly as a specialization of the argument for $r\ge 2$.

\subsubsection{Step 1: Construct an event to bound some random quantities}\label{sec:proofT2-step1}

         Recall that $\tilde{r} = r$. Define $\theta_j^{(\textit{v})}= \max_{i \in [r]}|\tilde\ab_{i,j}^{(\textit{v})}-\ab_{i,j}|_2$, which measures the estimation error of $\tilde\Ab_j^{(\textit{v})}= (\tilde\ab_{1,j}^{(\textit{v})},\ldots,\tilde\ab_{r,j}^{(\textit{v})})$ obtained  after  the $\textit{v}$-th round of the iteration in Algorithm \ref{alg1}. Write
\begin{equation}\label{theta jv}
    	\bar\theta_j^{(\textit{v})}=\max(\theta_1^{(\textit{v})},\ldots,\theta_{j-1}^{(\textit{v})},\theta_{j}^{(\textit{v}-1)},\ldots,\theta_{m}^{(\textit{v}-1)})\,,\quad \textit{v} \ge 1,\, j\in[m]\,.
\end{equation}
     It is important to construct the relationship between $\theta_j^{(\textit{v})}$ and $\bar\theta_j^{(\textit{v})}$. To do this, for some constant $\tilde C > 0$ that specified later, we first define a series of events 
            \begin{align}\label{Omega 1n}
          \nonumber       \Xi_{1,n}(\tilde C)&=\bigg\{\bigg|\frac{1}{n-k-k_1-k_2}\sum_{t=k+k_1+1}^{n-k_2}(f_{t,i}-\bar f_i)(f_{t-k,\ell} - \bar f_\ell) - \Upsilon_{k,i,\ell} \bigg| < \tilde C^{-1}  \\ \nonumber
         &\qquad \qquad\qquad \qquad\qquad\qquad   \textup{for any} ~ i,\ell \in[r]~\textup{and}~ k,k_1,k_2 \in \{0,1\}\bigg\}\,,\\ \nonumber
					\Xi_{2,n}(\tilde C)&=\bigg\{\bigg|\frac{1}{n-k-k_1-k_2}\sum_{t=k+k_1+1}^{n-k_2}[(f_{t,i}-\bar f_i)\xi_{t-k,\ell}^{\textup{s}}-\mathbb{E}\{(f_{t,i}-\bar f_i)\xi_{t-k,\ell}^{\textup{s}}\}]\bigg|< \tilde C^{-1}  \\ \nonumber
                    &\qquad \qquad\qquad \qquad\qquad\qquad   \textup{for any} ~i,\ell \in[r] ~\textup{and}~ k,k_1,k_2 \in \{0,1\} \bigg\}\,,\\ \nonumber
					\Xi_{3,n}(\tilde C)&=\bigg\{\frac{1}{n-1}\|(\Fb_{\xi,\mminus i}^{\textup{s}})^{\T}\Fb_{\xi,\mminus i}^{\textup{s}}-\mathbb{E}\{(\Fb_{\xi,\mminus i}^{\textup{s}})^{\T}\Fb_{\xi,\mminus i}^{\textup{s}}\}\|_2 < \tilde C^{-1} ~\textup{for any}~ i\in[r]\bigg\}\,,\\ \nonumber
					\Xi_{4,n}(\tilde C)&=\bigg\{\frac{1}{n-1}|(\Fb_{\xi,\mminus i}^{\textup{s}})^{\T}\bxi_i^{\textup{s}}-\mathbb{E}\{(\Fb_{\xi,\mminus i}^{\textup{s}})^{\T}\bxi_i^{\textup{s}}\}|_2  < \tilde C^{-1} ~\textup{for any}~ i\in[r]\bigg\}\,,\\  
					\Xi_{5,n}(\tilde C)&=\bigg\{\bigg|\frac{1}{n-k_1-k_2}\sum_{t=k_1+1}^{n-k_2}[(\xi_{t,i}-\bar\xi_i)^2-\mathbb{E}\{(\xi_{t,i}-\bar\xi_i)^2\}]\bigg| <  w_i^2 \tilde  C^{-1} \\ \nonumber
                      &\qquad \qquad\qquad \qquad\qquad\qquad   \textup{for any} ~ i \in[r]~\textup{and}~  k_1,k_2 \in \{0,1\}\bigg\}\,. 
            \end{align}
          For $L_n$ specified in \eqref{auto cross f xi}, let $\Xi_{6,n}(\tilde C)$ be the event such that
			\begin{equation}\label{Omega 6n}
				\begin{split}
				    &\max_{\substack{|\bbeta_j|_2=1,\, \bbeta_j\in\mathbb{R}^{d_j},\\\forall j\in[m]}}\bigg|\frac{1}{n-k-k_1-k_2}\sum_{t=k+k_1+1}^{n-k_2}(f_{t-k,i} - \bar f_i)(\mathcal{E}_t\times_{j=1}^m\bbeta_j^{\T})\bigg|\,
                    \le \tilde C L_n\,
				\end{split}
			\end{equation}
            for any $i\in[r]$ and $k,k_1,k_2\in\{0,1\}$, 
   where $\times_{j=1}^m\bbeta_j^{\T}$ is shorthand for $\times_1\bbeta_1^{\T}\times_2\cdots\times_m\bbeta_m^{\T}$ with $\times_j$ defined as the $j$-mode product of a tensor and a matrix. In detail, for a tensor $\mathcal{A}\in\mathbb{R}^{d_1\times \cdots\times d_m}$ and a matrix $\bbeta\in\mathbb{R}^{d_0\times d_j}$, $\mathcal{A}\times_j\bbeta$ is still an $m$-mode tensor of size $d_1\times \cdots \times d_{j-1}\times d_0\times d_{j+1}\times \cdots \times d_m$ with the entries being
   \[
   [\mathcal{A}\times_j\bbeta]_{i_1,\ldots,i_m}=\sum_{h=1}^{d_j}[\mathcal{A}]_{i_1,\ldots,i_{j-1},h,i_{j+1},\ldots,i_m}[\bbeta]_{i_j,h}\,.
   \]
			Let $\Xi_{7,n}(\tilde C)$ be the event such that
			\begin{equation}\label{Omega 7n}
				\begin{split}
					&\max_{\substack{|\tilde\bbeta_j|_2=1 = |\bbeta_j|_2,\\
                    \tilde\bbeta_j,\, \bbeta_j \in\mathbb{R}^{d_j},\forall j\in[m]}}\bigg|\frac{1}{n-1}\sum_{t=2}^n(\mathcal{E}_{t-1}\times_{j=1}^m\tilde\bbeta_j^{\T})(\mathcal{E}_t\times_{j=1}^m\bbeta_j^{\T})\bigg|
					\le  \tilde C L_n\,.
				\end{split}
			\end{equation}
			Let $\Xi_{8,n}(\tilde C)$ be the event such that
			\begin{equation}\label{Omega 8n}
				\max_{|\bbeta_j|_2=1,\,\bbeta_j \in\mathbb{R}^{d_j},\forall j\in[m]}\bigg|\frac{1}{n-k_1-k_2}\sum_{t=k_1+1}^{n-k_2}\mathcal{E}_t\times_{j=1}^m\bbeta_j^{\T}\bigg|\le  \tilde C L_n 
			\end{equation}
			for $ k_1,k_2 \in \{0,1\}$. Let $\Xi_{9,n}(\tilde C)$ be the event such that
            \begin{align}
                	&\max_{\substack{|\tilde\bbeta_j|_2=1 = |\bbeta_j|_2,\\
                    \tilde\bbeta_j,\, \bbeta_j \in\mathbb{R}^{d_j},\forall j\in[m]}}\bigg|\frac{1}{n-k_1-k_2}\sum_{t=k_1 +1}^{n-k_2}(\otimes^{j=1}_m\tilde\bbeta_j)^{\T}[\text{vec}(\mathcal{E}_t)\text{vec}(\mathcal{E}_t)^{\T} \nonumber\\
                    &\qquad \qquad \qquad \qquad \qquad \qquad \qquad -\mathbb{E}\{\text{vec}(\mathcal{E}_t)\text{vec}(\mathcal{E}_t)^{\T}\}](\otimes^{j=1}_m\bbeta_j)\bigg|  \le \tilde C L_n \label{Omega 9n}
            \end{align}
            for $ k_1,k_2 \in \{0,1\}$,  where $\otimes^{j=1}_m\bbeta_j$ is shorthand for $\bbeta_m \otimes \cdots \otimes \bbeta_1$.
   
			 Let $\Xi_n(\tilde C)=\cap_{h=1}^9 \Xi_{h,n}(\tilde C)$. 
			Lemma \ref{lem: Omega n} shows that $\mathbb{P}\{\Xi_n(\tilde C)\}\rightarrow 1$ as $n\rightarrow \infty$ for some sufficiently large constant $\tilde C > 0$.   The proof of Lemma \ref{lem: Omega n} is given in Section \ref{sec:lem: Omega n}. 
       
        \begin{lemma}\label{lem: Omega n}
			Under the assumptions of Theorem \textup{\ref{thm: iterative}}, there exists a sufficiently large constant $C_0>0$ such that $\mathbb{P}\{\Xi_n(\tilde C)\}\rightarrow 1$ as $n\rightarrow \infty$ for any constant $\tilde C>C_0$.
		\end{lemma}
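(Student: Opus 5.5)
Since $\Xi_n(\tilde C)$ is a finite intersection, the plan is to bound $\mathbb{P}\{\Xi_{h,n}(\tilde C)^c\}$ separately for each $h\in[9]$ and show each tends to zero for $\tilde C$ large. The events fall into two groups: $\Xi_{1,n}$--$\Xi_{5,n}$ involve only finitely many ($r$-dimensional, $k,k_1,k_2\in\{0,1\}$) scalar sample averages and need only $o_{\rm p}(1)$ deviations, while $\Xi_{6,n}$--$\Xi_{9,n}$ are suprema over products of unit spheres and require the sharp rate $L_n$.

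\emph{Events $\Xi_{1,n}$--$\Xi_{5,n}$.} Each quantity is a centered average of a product of two $\alpha$-mixing series with sub-Weibull tails. By Assumption \ref{tail}, $f_{t,i}$ has tail $\exp(-Cx^{c_1})$. By Assumption \ref{cross} and $|\ab_{i,m}^{\MP}\otimes\cdots\otimes\ab_{i,1}^{\MP}|_2\lesssim 1$ (Assumption \ref{sparsity}), the noise part of $\xi_{t,i}$ has the same type of tail, and $\xi_{t,i}=w_if_{t,i}+O_{\rm p}(1)$. After the normalization in $\xi^{\textup{s}}_{t,i}$ (the normalizing constant is $\asymp w_i$ because $\sigma_r(\bUpsilon_0)\ge C_9^{-1}$ and $w_i\ge C_3$), every summand is a product of sub-Weibull variables with parameter $c_1/2$. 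These products are mixing with coefficients $\alpha(k-1)$ by Assumption \ref{mixing}. A Bernstein-type inequality for $\alpha$-mixing sub-Weibull sequences (Merlevède--Peligrad--Rio type, as used in \cite{chang2023modelling}) then gives deviations $O_{\rm p}(n^{-1/2})$ for each fixed index. The centering by $\bar f_i,\bar\xi_i$ contributes $O_{\rm p}(n^{-1})$ corrections handled the same way. A union bound over finitely many indices shows each event holds with probability tending to one for any fixed $\tilde C$. For $\Xi_{3,n}$ and $\Xi_{4,n}$ the matrices are $(r-1)$-dimensional, so entrywise bounds suffice.

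\emph{Events $\Xi_{6,n}$--$\Xi_{9,n}$.} These are the main obstacle. Take $\Xi_{6,n}$ first. Use an $\epsilon$-net $\mathcal{N}_j$ of the unit sphere in $\mathbb{R}^{d_j}$ with $\epsilon=1/(4m)$, so $|\mathcal{N}_j|\le 9^{d_j}$ roughly. Multilinearity gives the usual reduction: the supremum is at most $2$ times the maximum over the product net, whose cardinality is $\exp(C\sum_j d_j)$. For a fixed $(\bbeta_1,\ldots,\bbeta_m)$, the variable $\mathcal{E}_t\times_{j=1}^m\bbeta_j^\T=(\otimes_m^{j=1}\bbeta_j)^\T\textup{vec}(\mathcal{E}_t)$ has a unit-norm coefficient, so Assumption \ref{cross} gives it a uniform sub-Weibull($c_1$) tail. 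Its product with $f_{t-k,i}-\bar f_i$ has mean zero by Assumption \ref{error}. The mixing Bernstein inequality then bounds its average by $x$ with probability at least
\[
1-C\exp(-Cnx^2)-Cn\exp(-C(nx)^{\tilde c}),
\]
where $\tilde c^{-1}=1+2c_1^{-1}+c_2^{-1}$ is exactly the exponent governing the heavy-tail (large-deviation) part.

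Choose $x=\tilde C L_n$. The Gaussian part beats the net cardinality because $nx^2\gtrsim \tilde C^2\sum_j d_j\log d_j$. The tail part beats it because $(nx)^{\tilde c}\ge \tilde C^{\tilde c}\sum_j d_j$. Together with $D_n\to\infty$, a union bound over the net then yields $\mathbb{P}(\Xi_{6,n}^c)\to0$ for $\tilde C$ large. The $\log d_j$ factor supplies room to absorb the $\log n$ from the $n\exp(\cdot)$ prefactor; if it does not, I would use a slightly finer split of the tail term. For $\Xi_{8,n}$ the argument is identical with $f$ replaced by $1$.

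For $\Xi_{7,n}$ and $\Xi_{9,n}$ two net arguments are needed, over $\tilde\bbeta$ and $\bbeta$ jointly, so the cardinality squares; this only changes constants. The summands $(\mathcal{E}_{t-1}\times\tilde\bbeta)(\mathcal{E}_t\times\bbeta)$ are products of two sub-Weibull($c_1$) variables. They have mean zero for $\Xi_{7,n}$ by serial uncorrelatedness in Assumption \ref{error}, and are centered explicitly for $\Xi_{9,n}$. The same inequality with exponent $\tilde c$ applies. The delicate point throughout is checking that the tail exponent of the product and of the lag-shifted mixing coefficients matches $\tilde c$, and that the $O(1)$ edge effects from $k_1,k_2$ are negligible.

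Intersecting all nine events gives $\mathbb{P}\{\Xi_n(\tilde C)\}\to1$ for every $\tilde C>C_0$.
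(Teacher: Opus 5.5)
Your strategy is the paper's. You use fixed-dimension mixing concentration at rate $O_{\rm p}(n^{-1/2})$ for $\Xi_{1,n}$--$\Xi_{5,n}$. For $\Xi_{6,n}$--$\Xi_{9,n}$ you use a per-mode net of size $e^{O(d_j)}$, reduce the multilinear supremum to a maximum over the net, apply a Bernstein-type bound with tail exponent $\tilde c$ at each fixed net point, and take a union bound at level $x=\tilde C L_n$. That last step works because $D_n\to\infty$ forces $\sum_j d_j\to\infty$.

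The one step that fails as written is $\Xi_{6,n}$. The summands $(f_{t-k,i}-\bar f_i)(\mathcal{E}_t\times_{j=1}^m\bbeta_j^{\T})$ all involve the full-sample mean $\bar f_i$. So they do not form an $\alpha$-mixing sequence in $t$, and having mean zero does not license the mixing Bernstein inequality. You accounted for this centering in $\Xi_{1,n}$--$\Xi_{5,n}$, where only $o_{\rm p}(1)$ is needed, but not here, where the sharp rate $L_n$ is required. The paper's fix is to write $f_{t-k,i}-\bar f_i=\{f_{t-k,i}-\mathbb{E}(\bar f_i)\}-\{\bar f_i-\mathbb{E}(\bar f_i)\}$. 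The first piece gives a genuinely mixing, mean-zero sequence (by Assumption \ref{error}), and your net argument applies to it verbatim. The second piece is the scalar $\bar f_i-\mathbb{E}(\bar f_i)=O_{\rm p}(n^{-1/2})$ times the supremum already controlled on $\Xi_{8,n}$, so you should establish $\Xi_{8,n}$ first and feed it into $\Xi_{6,n}$.

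Separately, if your inequality carries a prefactor $n$ on the tail term, the $\log d_j$ in $L_n$ does not absorb $\log n$. Instead use $(nx)^{\tilde c}\ge \tilde C^{\tilde c}\max\{\sum_j d_j,(n\sum_j d_j\log d_j)^{\tilde c/2}\}$ and treat the cases $\sum_j d_j\ge\log n$ and $\sum_j d_j<\log n$ separately. Alternatively, use the version without that prefactor (Lemma L5 of Chang, Chen and Wu), as the paper does.
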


\subsubsection{Step 2: Error reduction across iterations}\label{sec:proofT2-step2}

Now, for the $\textit{v}$-th iteration round and the $j$-th mode, let $\bar{\check f}_{i}^{(\textit{v},j)} = n^{-1}\sum_{t=1}^n \check f_{t,i}^{(\textit{v},j)}$, where $\check f_{t,i}^{(\textit{v},j)}$ is defined in Algorithm \ref{alg1}. Further let
   \[
   (\tilde{\sigma}^{(\textit{v},j)}_{\check f,i})^2 = \frac{1}{n-1}\sum_{t=1}^n(\check f_{t,i}^{(\textit{v},j)} - \bar{\check f}_{i}^{(\textit{v},j)})^2\,.
   \]
   Define $\tilde\Fb_{\mminus i}^{(\textit{v},j)}$ and $\tilde\fb_{i}^{(\textit{v},j)}$ by similar steps above \eqref{eq: tilde xi} but replacing $\check f_{t,i}$ with $\check f_{t,i}^{(\textit{v},j)}$. Furthermore,  let $\tilde\bvarphi_{i}^{(\textit{v},j)}=(\tilde\varphi_{i,1}^{(\textit{v},j)},\ldots,\tilde\varphi_{i,r}^{(\textit{v},j)})^{\T}$ be the $r$-dimensional vector with the $i$-th entry equal to 1, and the remaining $r-1$ entries form the vector $-\{(\tilde\Fb_{\mminus i}^{(\textit{v},j)})^{\T}\tilde\Fb_{\mminus i}^{(\textit{v},j)}\}^{-1}(\tilde\Fb_{\mminus i}^{(\textit{v},j)})^{\T}\tilde\fb_{i}^{(\textit{v},j)}$.
			 Let $\bLambda_{\xi} = (\Lambda_{\xi,i,j})_{r\times r}$ be the $r\times r$ diagonal matrix with the $i$-th diagonal entry being $n^{-1}\sum_{t = 1}^n\mathbb{E}\{(\xi_{t,i}-\bar\xi_i)^2\}$.  Define
             \begin{align*}
                 	\bzeta_1^{(i,j,\textit{v})}&=\bigg\{\frac{w_i}{n-1}\sum_{t=2}^n(f_{t,i}-\bar f_i)\tilde\xi_{t-1,i}^{(\textit{v},j)}\bigg\}\{\bbb_{i,j}^{\T}(\tilde\bbb_{i,j}^{(\textit{v})})^{\MP}\}\ab_{i,j}\,,\\
			 	\bzeta_2^{(i,j,\textit{v})}&= \sum_{\ell\ne i}\bigg\{\frac{w_{\ell}}{n-1}\sum_{t=2}^n(f_{t,\ell}-\bar f_{\ell})\tilde\xi_{t-1,i}^{(\textit{v},j)}\bigg\}\{\bbb_{\ell,j}^{\T}(\tilde\bbb_{i,j}^{(\textit{v})})^{\MP}\}\ab_{\ell,j}\,,\\
			 	\bzeta_3^{(i,j,\textit{v})}&= \frac{\Lambda_{\xi,i,i}^{1/2}}{\tilde{\sigma}^{(\textit{v},j)}_{\check f,i} }\tilde\bSigma_{\eb_{i,j},\xi_i}(1) +\sum_{\ell\ne i}\frac{\tilde\varphi_{i,\ell}^{(\textit{v},j)}\Lambda_{\xi,\ell,\ell}^{1/2}}{\tilde{\sigma}_{\check f, \ell}^{(\textit{v},j)}}\tilde\bSigma_{\eb_{i,j},\xi_{\ell}}(0)\,,\\
			 	\bzeta_4^{(i,j,\textit{v})}&= \bigg\{\frac{1}{n-1}\sum_{t=2}^n\tilde\xi_{t-1,i}^{(\textit{v},j)}(\Eb_{t,j}-\bar\Eb_j)(\tilde\bbb_{i,j}^{(\textit{v})})^{\MP}\bigg\}-\bzeta_3^{(i,j,\textit{v})}\,.
             \end{align*}
		Then, for each $i\in[r]$ and $j\in[m]$,  we have
			\begin{equation}\label{iterative sigma decompose}
				\begin{split}
					\tilde\bSigma^{(\textit{v},j)}_{\tilde\yb_{i,j},\tilde\xi_{i}}(1)
					=\bzeta_1^{(i,j,\textit{v})}+\bzeta_2^{(i,j,\textit{v})}+\bzeta_3^{(i,j,\textit{v})}+\bzeta_4^{(i,j,\textit{v})}\,,
				\end{split}
			\end{equation}
			where $	\tilde\bSigma^{(\textit{v},j)}_{\tilde\yb_{i,j},\tilde\xi_{i}}(1)$ is defined in Algorithm \ref{alg1}. Recall the estimator 
            \[
            \tilde\ab_{i,j}^{(\textit{v})}= T_{\delta_{2,j}}\{\tilde\bSigma^{(\textit{v},j)}_{\tilde\yb_{i,j},\tilde\xi_{i}}(1)\}/ |T_{\delta_{2,j}}\{\tilde\bSigma^{(\textit{v},j)}_{\tilde\yb_{i,j},\tilde\xi_{i}}(1)\}|_2\,.
            \]
            To calculate the estimation error of $\tilde\ab_{i,j}^{(\textit{v})}$, we need the following  results shown in Lemmas \ref{lemma: E varphi}--\ref{lem: zeta_2 and zeta_4}, where the involved constant $\tilde C_0 > 0$ is independent of $\tilde C$ and $(i,j,\textit{v})$ that may vary in different lemmas.  The proofs of Lemmas  \ref{lemma: E varphi}--\ref{lem: zeta_2 and zeta_4} are given in Sections \ref{sec:pro:varphi}--\ref{sec:lem: zeta_2 and zeta_4}, respectively.

\begin{lemma}\label{lemma: E varphi}
    Under the assumptions of Theorem \textup{\ref{thm: iterative}} and $r \ge 2$, for any $i\ne \ell$, as $n\rightarrow\infty$, we have $\bar\varphi_{i,\ell}=O(\gamma_{\max}+n^{-1}w_r^{-2})$, where $\bar\varphi_{i,\ell}$ is defined below \eqref{xi it sp}.
\end{lemma}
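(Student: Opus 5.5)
The plan is to compute the two population moments that define $\bar\bvarphi_i$ explicitly. Write $\bar\bvarphi_{i,\mminus i}=-[\mathbb{E}\{(\Fb_{\xi,\mminus i}^{\textup{s}})^{\T}\Fb_{\xi,\mminus i}^{\textup{s}}\}]^{-1}\mathbb{E}\{(\Fb_{\xi,\mminus i}^{\textup{s}})^{\T}\bxi_i^{\textup{s}}\}$, and divide both factors by $n-1$. It then suffices to show two things:
\begin{itemize}
\item[(a)] every entry of $(n-1)^{-1}\mathbb{E}\{(\Fb_{\xi,\mminus i}^{\textup{s}})^{\T}\bxi_i^{\textup{s}}\}$ is $O(\gamma_{\max}+n^{-1}w_r^{-2})$;
\item[(b)] $\sigma_{\min}[(n-1)^{-1}\mathbb{E}\{(\Fb_{\xi,\mminus i}^{\textup{s}})^{\T}\Fb_{\xi,\mminus i}^{\textup{s}}\}]$ is bounded away from zero.
\end{itemize}
Since $r$ is fixed, the claimed bound follows from these two facts.

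\textbf{Common decomposition.} Write $\xi_{t,i}=w_if_{t,i}+u_{t,i}$ with $u_{t,i}=(\ab_{i,m}^{\MP}\otimes\cdots\otimes\ab_{i,1}^{\MP})^{\T}\textup{vec}(\mathcal{E}_t)$. By Assumption \ref{sparsity}, $|\ab_{i,j}^{\MP}|_2\le C$, so the coefficient vector has bounded $\ell_2$ norm. Combined with Assumption \ref{cross}, this gives $\max_t\mathbb{E}(u_{t,i}^2)\lesssim1$.

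\textbf{Normalisation.} By Assumption \ref{error}, cross moments between $f$ and $u$ vanish. Hence
\[
\Lambda_{\xi,i,i}=w_i^2\,n^{-1}\textstyle\sum_t\mathbb{E}\{(f_{t,i}-\bar f_i)^2\}+O(1).
\]
Up to a factor $n/(n-1)$, the first term is $w_i^2$ times a diagonal entry of $\bUpsilon_0$. Using $C_9^{-1}\le\sigma_r(\bUpsilon_0)\le\sigma_1(\bUpsilon_0)\le C_9$ and $w_i\ge C_3$, we get $\Lambda_{\xi,i,i}\asymp w_i^2$.

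\textbf{Numerator, step (a).} The $\ell$-th entry equals $(\Lambda_{\xi,i,i}\Lambda_{\xi,\ell,\ell})^{-1/2}(n-1)^{-1}\sum_{t=2}^n\mathbb{E}\{(\xi_{t-1,i}-\bar\xi_i)(\xi_{t,\ell}-\bar\xi_\ell)\}$. Expanding with the decomposition:
\begin{itemize}
\item The $f$--$u$ cross terms are zero by Assumption \ref{error}.
\item The pure factor term is exactly $w_iw_\ell\Upsilon_{1,\ell,i}$, by the definition of $\Upsilon_{1,\cdot,\cdot}$ (same range $t=2,\dots,n$ and same centering). No boundary correction arises.
\item For the pure noise term, serial uncorrelatedness gives $\mathbb{E}(u_{t-1,i}u_{t,\ell})=0$. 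Only the centering corrections $-\mathbb{E}(u_{t-1,i}\bar u_\ell)-\mathbb{E}(\bar u_iu_{t,\ell})+\mathbb{E}(\bar u_i\bar u_\ell)$ remain. Each is $O(n^{-1})$ because $\mathbb{E}(u_{s,i}u_{s',\ell})=0$ for $s\ne s'$.
\end{itemize}
Dividing by $(\Lambda_{\xi,i,i}\Lambda_{\xi,\ell,\ell})^{1/2}\asymp w_iw_\ell\ge w_r^2$ yields $O(\gamma_{\max}+n^{-1}w_r^{-2})$.

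\textbf{Denominator, step (b).} Apply the same decomposition to the Gram matrix over $t=2,\dots,n$. The cross terms vanish, so it equals a factor part plus a noise part. The noise part $\mathbb{E}\{(\Ub-\bar\Ub)^{\T}(\Ub-\bar\Ub)\}$ is positive semidefinite. It therefore suffices to lower-bound the factor part,
\[
\Db^{-1/2}\Wb\,\widetilde{\bUpsilon}_0\,\Wb\,\Db^{-1/2},
\]
restricted to the indices $\ell\ne i$. Here $\Wb=\textup{diag}(w_\ell)$, $\Db=\textup{diag}(\Lambda_{\xi,\ell,\ell})$, and $\widetilde{\bUpsilon}_0$ is $\bUpsilon_0$ with the sum running over $t=2,\dots,n$ instead of $1,\dots,n$. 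The boundary discrepancy $\widetilde{\bUpsilon}_0-\bUpsilon_0$ is $O(n^{-1})$, since the second moments are bounded by Assumption \ref{tail}. Because $w_\ell^2/\Lambda_{\xi,\ell,\ell}$ is bounded below and $\sigma_{\min}(\bUpsilon_0)\ge C_9^{-1}$, a principal submatrix inherits the lower bound. Weyl's inequality then gives $\sigma_{\min}\gtrsim1$.

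\textbf{Main obstacle.} The delicate point is to keep the noise contribution at $n^{-1}w_r^{-2}$ rather than $n^{-1}$. This relies on two things: the factor term must match $\Upsilon_{1,\ell,i}$ exactly, including the centering, and the noise covariance must be bounded uniformly in dimension. The latter requires $|\ab_{i,m}^{\MP}\otimes\cdots\otimes\ab_{i,1}^{\MP}|_2=\prod_j|\ab_{i,j}^{\MP}|_2\le C^m$ together with Assumption \ref{cross}.
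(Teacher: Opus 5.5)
Your proposal is correct and follows essentially the same route as the paper: write $\xi_{t,i}=w_if_{t,i}+u_{t,i}$, use $\Lambda_{\xi,i,i}\asymp w_i^2$, show that the normalized lag-one cross moment equals $w_iw_\ell\Upsilon_{1,\ell,i}/(\Lambda_{\xi,i,i}\Lambda_{\xi,\ell,\ell})^{1/2}$ plus an $O(n^{-1}w_r^{-2})$ noise-centering term, and lower-bound the Gram matrix through $\sigma_r(\bUpsilon_0)$ up to an $O(n^{-1})$ boundary correction. You also make explicit that the noise part of the Gram matrix is positive semidefinite, which fills in a step the paper leaves implicit when it asserts $\sigma_r(\tilde\bUpsilon_0)\asymp\sigma_r(\bUpsilon_0)$; separately, the $n/(n-1)$ factor in your normalisation is unnecessary, since $\Upsilon_{0,i,i}$ already averages over $t=1,\ldots,n$.
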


            \begin{lemma}\label{bound for Lambda and phi}
                Under the assumptions of Theorem \textup{\ref{thm: iterative}} and the event $\Xi_n(\tilde C)=\cap_{h=1}^9\Xi_{h,n}(\tilde C)$ for  a sufficiently large constant   $\tilde C > 0$, given $j \in [m]$ and $\textit{v} \ge 1$, if $w_r^{-1}w_1\bar\theta_j^{(\textit{v})}\le \tilde C^{-2}$ and $r\ge 2$, it holds for all $i,\ell\in[r]$ that
                \[
                \bigg|\frac{\Lambda_{\xi,i,i}^{1/2}}{\tilde{\sigma}^{(\textit{v},j)}_{\check f,i} }-1\bigg|\le \tilde C_0\tilde C^{-1}~~\textup{and}~~|\tilde\varphi_{i,\ell}^{(\textit{v},j)}-\bar\varphi_{i,\ell}|\le \tilde C_0\tilde C^{-1} 
                \]
                as $n \to \infty$, where $\tilde C_0>0$ is a constant independent of $\tilde C$ and $(i,j,\textit{v})$, and $\bar\varphi_{i,\ell}$ is defined below \eqref{xi it sp}.
            \end{lemma}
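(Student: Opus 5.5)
The plan is to compare the plug-in factor series $\check f^{(\textit{v},j)}_{t,i}$ with the oracle combination $\xi_{t,i}=(\ab_{i,m}^{\MP}\otimes\cdots\otimes\ab_{i,1}^{\MP})^\T\textup{vec}(\mathcal{Y}_t)$. Everything will be done deterministically on $\Xi_n(\tilde C)$, and all quantities will be expressed through $\bar\theta_j^{(\textit{v})}$.

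\emph{Perturbation of the coefficient vectors.} By Assumption \ref{sparsity}, $\sigma_r(\Ab_j)\ge C_6^{-1}$. Since $\|\tilde\Ab_j-\Ab_j\|_2\le \sqrt r\,\bar\theta_j^{(\textit{v})}$ and $\bar\theta_j^{(\textit{v})}$ is small, the perturbation identity \eqref{matrix inverse} gives $|\tilde\ab_{i,j}^{\MP}-\ab_{i,j}^{\MP}|_2\le C\bar\theta_j^{(\textit{v})}$. The same bound holds for every factor in the Kronecker product defining $\check f^{(\textit{v},j)}_{t,i}$, whichever of the iterates $\textit{v}$ or $\textit{v}-1$ is used. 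I then telescope the Kronecker product as
\[
\otimes_j\tilde\ab^{\MP}_{i,j}-\otimes_j\ab^{\MP}_{i,j}=\sum_{q}(\cdots\otimes(\tilde\ab^{\MP}_{i,q}-\ab^{\MP}_{i,q})\otimes\cdots).
\]
This writes the difference as at most $m$ Kronecker products of vectors, each with one factor of norm $O(\bar\theta_j^{(\textit{v})})$ and the other factors of norm $O(1)$. Hence
\[
\check f^{(\textit{v},j)}_{t,i}-\xi_{t,i}=\sum_{\ell\in[r]}\pi_{i,\ell}\,w_\ell f_{t,\ell}+\rho_{t,i}\,,
\]
where $\pi_{i,\ell}=\prod_j(\tilde\ab^{\MP}_{i,j})^\T\ab_{\ell,j}-I(i=\ell)$. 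These coefficients satisfy $|\pi_{i,\ell}|\le C\bar\theta_j^{(\textit{v})}$, and $\rho_{t,i}$ is a sum of terms of the form $(\mathcal{E}_t\times_{j}\bbeta_j^\T)$ with unit vectors, multiplied by $O(\bar\theta^{(\textit{v})}_j)$. So the factor part of the perturbation is of size $w_1\bar\theta_j^{(\textit{v})}\le w_r\tilde C^{-2}$.

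\emph{Variance ratio.} I expand $(\tilde\sigma^{(\textit{v},j)}_{\check f,i})^2$ into three pieces: the sample variance of $\xi_{t,i}$, the sample variance of the perturbation, and a cross term. On $\Xi_{5,n}(\tilde C)$, the first piece is within $w_i^2\tilde C^{-1}$ of $\Lambda_{\xi,i,i}$, up to $O(n^{-1})$ index-shift terms. Moreover $\Lambda_{\xi,i,i}\ge w_i^2\sigma_r(\bUpsilon_0)\ge C_9^{-1}w_i^2$, because $\mathbb{E}(f_{t,i}\mathcal{E}_s)=0$ by Assumption \ref{error}. For the second piece:
\begin{itemize}
\item the factor part of the perturbation has sample variance at most $C(w_1\bar\theta_j^{(\textit{v})})^2(C_9+\tilde C^{-1})\lesssim w_r^2\tilde C^{-4}$, using $\Xi_{1,n}(\tilde C)$;
\item the noise part $\rho_{t,i}$ has sample second moment at most $C(\bar\theta_j^{(\textit{v})})^2(1+\tilde CL_n)$. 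This uses $\Xi_{9,n}(\tilde C)$, together with the fact that Assumption \ref{cross} bounds $\mathbb{E}\{\bbeta^\T\textup{vec}(\mathcal{E}_t)\}^2\lesssim|\bbeta|_2^2$;
\item the centering errors are controlled by $\Xi_{8,n}(\tilde C)$.
\end{itemize}
The cross term is handled by Cauchy--Schwarz. Dividing by $\Lambda_{\xi,i,i}\asymp w_i^2$ gives
\[
\bigl|(\tilde\sigma^{(\textit{v},j)}_{\check f,i})^2/\Lambda_{\xi,i,i}-1\bigr|\le\tilde C_0\tilde C^{-1},
\]
and the first claim follows by taking square roots.

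\emph{Projection coefficients.} From the two previous steps, $\tilde f^{(\textit{v},j)}_{t,i}=c_i\,\xi^{\textup{s}}_{t,i}+u_{t,i}$, where $c_i=\Lambda_{\xi,i,i}^{1/2}/\tilde\sigma^{(\textit{v},j)}_{\check f,i}=1+O(\tilde C^{-1})$ and the remainder satisfies $(n-1)^{-1}\sum_tu_{t,i}^2\le\tilde C_0\tilde C^{-2}$. Then $(n-1)^{-1}(\tilde\Fb_{\mminus i}^{(\textit{v},j)})^\T\tilde\Fb_{\mminus i}^{(\textit{v},j)}$ differs from $(n-1)^{-1}\mathbb{E}\{(\Fb^{\textup{s}}_{\xi,\mminus i})^\T\Fb^{\textup{s}}_{\xi,\mminus i}\}$ by $O(\tilde C^{-1})$ in spectral norm, using $\Xi_{3,n}(\tilde C)$ plus Cauchy--Schwarz. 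Likewise, the cross vector with $\tilde\fb^{(\textit{v},j)}_i$ differs from its expected counterpart by $O(\tilde C^{-1})$, using $\Xi_{4,n}(\tilde C)$. Applying \eqref{matrix inverse} then yields $|\tilde\varphi^{(\textit{v},j)}_{i,\ell}-\bar\varphi_{i,\ell}|\le\tilde C_0\tilde C^{-1}$.

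The step I expect to be the real obstacle is the last one: it needs a uniform lower bound on the smallest eigenvalue of $(n-1)^{-1}\mathbb{E}\{(\Fb^{\textup{s}}_{\xi,\mminus i})^\T\Fb^{\textup{s}}_{\xi,\mminus i}\}$, so that the inverse does not amplify the $O(\tilde C^{-1})$ error. I would obtain it as follows. Write $\xi^{\textup{s}}_{t,\ell}$ as $\Lambda^{-1/2}_{\xi,\ell,\ell}\{w_\ell(f_{t,\ell}-\bar f_\ell)+\text{noise}\}$. The noise contribution is positive semidefinite, since it is uncorrelated with the factors by Assumption \ref{error}. The factor contribution is the diagonally rescaled matrix $\bUpsilon_0$ (up to $O(n^{-1})$ edge terms), whose smallest eigenvalue is at least $C_9^{-1}\min_\ell w_\ell^2/\Lambda_{\xi,\ell,\ell}$. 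This is bounded below because the noise variance is $O(1)$ while $w_\ell\ge C_3$.
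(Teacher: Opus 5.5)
Your proposal is correct and follows essentially the same route as the paper: the pseudo-inverse perturbation bound, the three-term decomposition of $\check f^{(\textit{v},j)}_{t,i}-\xi_{t,i}$ controlled on $\Xi_{1,n}(\tilde C)$ and $\Xi_{9,n}(\tilde C)$ (with condition \eqref{strong factor condition} absorbing the $\tilde C L_n(\bar\theta_j^{(\textit{v})})^2$ term, a use you left implicit), $\Xi_{5,n}(\tilde C)$ for the variance ratio, and $\Xi_{3,n}(\tilde C)\cap\Xi_{4,n}(\tilde C)$ with \eqref{matrix inverse} for the projection coefficients. The differences are cosmetic: the paper goes through the intermediate sample-oracle coefficient $\bvarphi_i$ before reaching $\bar\bvarphi_i$ and imports the eigenvalue lower bound from \eqref{lower bound for FF} in the proof of Lemma \ref{lemma: E varphi}, whereas you compare to $\bar\bvarphi_i$ in one step and derive the lower bound directly from the positive semidefinite noise contribution, which makes explicit why $w_r\ge C_3$ suffices there.
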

       
        \begin{lemma}\label{lem: zeta_1 lower bound}
		Under the same assumptions as in Lemma \textup{\ref{bound for Lambda and phi}}, it holds for all $i\in[r]$ that
        \[
        \bigg|\frac{1}{w_i\sigma_{f_i,\xi_i}}\bzeta_1^{(i,j,\textit{v})}-\ab_{i,j}\bigg|_2 \le \tilde C_0\tilde C^{-1/2} 
        \]
as $n \to \infty$, where $\tilde C_0>0$ is a constant independent of $\tilde C$ and $(i,j,\textit{v})$, and  $\sigma_{f_i,\xi_i}$ is defined in \eqref{auto cross f xi}.
		\end{lemma}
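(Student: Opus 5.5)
The plan is to split $\bzeta_1^{(i,j,\textit{v})}/(w_i\sigma_{f_i,\xi_i})$ into two scalar factors multiplying $\ab_{i,j}$, and to show each factor is within $O(\tilde C^{-1/2})$ of $1$. We write
\[
\frac{1}{w_i\sigma_{f_i,\xi_i}}\bzeta_1^{(i,j,\textit{v})}=\underbrace{\frac{1}{\sigma_{f_i,\xi_i}}\bigg\{\frac{1}{n-1}\sum_{t=2}^n(f_{t,i}-\bar f_i)\tilde\xi_{t-1,i}^{(\textit{v},j)}\bigg\}}_{=:\,\rho_1}\cdot\underbrace{\bbb_{i,j}^{\T}(\tilde\bbb_{i,j}^{(\textit{v})})^{\MP}}_{=:\,\rho_2}\cdot\,\ab_{i,j}\,.
\]
Since $|\ab_{i,j}|_2=1$, it suffices to show $|\rho_1\rho_2-1|\le\tilde C_0\tilde C^{-1/2}$. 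By $|\sigma_{f_i,\xi_i}|\ge C_9^{-1}$ from \eqref{strong factor condition}, this follows from $|\rho_1-1|\lesssim\tilde C^{-1/2}$ and $|\rho_2-1|\lesssim\tilde C^{-1}$.

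\emph{Bound on $\rho_2$.} Set $\tilde\bbb_{i,j}^{(\textit{v})}=\bbb_{i,j}+\Delta_i$. Each $\tilde\bbb^{(\textit{v})}_{i,j}$ is a Kronecker product of unit-norm estimates with errors at most $\bar\theta_j^{(\textit{v})}$. Telescoping the Kronecker product therefore gives $|\Delta_i|_2\lesssim\bar\theta_j^{(\textit{v})}\le\tilde C^{-2}$. By Lemma~\ref{pro:rank-B}, $\sigma_r(\Bb_j)$ is bounded below, so the pseudo-inverse map is Lipschitz near $\Bb_j$. Hence $\|(\tilde\Bb_j^{(\textit{v})\T}\tilde\Bb_j^{(\textit{v})})^{-1}\tilde\Bb_j^{(\textit{v})\T}-(\Bb_j^\T\Bb_j)^{-1}\Bb_j^\T\|_2\lesssim\bar\theta_j^{(\textit{v})}$. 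Combined with $\bbb_{i,j}^\T\bbb_{i,j}^{\MP}=1$, this yields $|\rho_2-1|\lesssim\tilde C^{-2}$.

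\emph{Bound on $\rho_1$.} This is the main step. First, express $\check f^{(\textit{v},j)}_{t,i}$ as $\xi_{t,i}$ plus perturbation terms. Using $\mathcal{Y}_t=\sum_\ell w_\ell f_{t,\ell}\circ\ab_{\ell,\cdot}+\mathcal{E}_t$, the perturbation consists of the following:
\begin{itemize}
\item signal leakage $\sum_\ell w_\ell f_{t,\ell}\{\prod_{j'}(\tilde\ab^{\MP}_{i,j'})^\T\ab_{\ell,j'}-\delta_{i\ell}\}$, of size $O(w_1\bar\theta_j^{(\textit{v})})|f_{t,\cdot}|$;
\item a noise term $\mathcal{E}_t\times_{j'}\{\tilde\ab^{\MP}_{i,j'}-\ab^{\MP}_{i,j'}\}^\T$, which we control uniformly over directions via the events $\Xi_{6,n}$–$\Xi_{9,n}$ (these are stated for unit vectors, so we expand the difference multilinearly).
\end{itemize}
On $\Xi_{5,n}$ we have $\Lambda_{\xi,i,i}\asymp w_i^2$. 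After standardization, the leakage relative to $w_i$ is $O(w_1w_r^{-1}\bar\theta_j^{(\textit{v})})\le O(\tilde C^{-2})$, and the noise contributions are $O(L_n/w_r)$, which is $o(1)$ by \eqref{strong factor condition}. Consequently $\tilde f^{(\textit{v},j)}_{t,i}$ is close to $\xi^{\textup{s}}_{t,i}$ in empirical $L^2$, with error $\lesssim\tilde C^{-1}$ (using Lemma~\ref{bound for Lambda and phi} for the scale ratio). Next, $\tilde\xi^{(\textit{v},j)}_{t-1,i}=\tilde f_{t-1,i}+\sum_{\ell\neq i}\tilde\varphi_{i,\ell}^{(\textit{v},j)}\tilde f_{t,\ell}$. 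Lemma~\ref{bound for Lambda and phi} gives $|\tilde\varphi-\bar\varphi|\le\tilde C_0\tilde C^{-1}$. It follows that
\[
\frac{1}{n-1}\sum_{t=2}^n(f_{t,i}-\bar f_i)\tilde\xi_{t-1,i}^{(\textit{v},j)}=\frac{1}{n-1}\sum_{t=2}^n(f_{t,i}-\bar f_i)\xi^{\textup{sp}}_{t-1,i}+R\,.
\]
Here $R$ is handled by Cauchy–Schwarz, using the boundedness of the empirical second moments of $f$ (from $\Xi_{1,n}$ and $\sigma_1(\bUpsilon_0)\le C_9$) and the $L^2$ closeness just established. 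This gives $|R|\lesssim\tilde C^{-1/2}$. The square root arises because the closeness is measured in squared empirical norm and $\bar f$-centering cross terms are controlled only by $\tilde C^{-1}$. Finally, $\Xi_{2,n}$ shows that the leading sum equals $\sigma_{f_i,\xi_i}+O(\tilde C^{-1})$, using $\bar\varphi_{i,\ell}=O(1)$ from Lemma~\ref{lemma: E varphi}.

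The main obstacle is the bookkeeping in the $\rho_1$ step. The plug-in errors of all modes enter multiplicatively, through both the Kronecker pseudo-inverse and the standardization. We need each error source to be absorbed into a constant multiple of $\tilde C^{-1/2}$ that is independent of $(i,j,\textit{v})$. This requires using the hypothesis $w_r^{-1}w_1\bar\theta_j^{(\textit{v})}\le\tilde C^{-2}$ precisely to cancel the factor $w_1/w_r$ from the leakage of stronger factors. Combining the bounds on $\rho_1$ and $\rho_2$ then completes the proof.
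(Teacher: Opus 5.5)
Your proposal is correct and follows essentially the same route as the paper: write $\bzeta_1^{(i,j,\textit{v})}$ as a scalar multiple of $\ab_{i,j}$, control $\bbb_{i,j}^{\T}(\tilde\bbb_{i,j}^{(\textit{v})})^{\MP}-1$ by perturbing the pseudo-inverse of $\Bb_j$, replace $\tilde f^{(\textit{v},j)}$ by $\xi^{\textup{s}}$ via empirical $L^2$ closeness and Cauchy--Schwarz, swap $\tilde\varphi^{(\textit{v},j)}_{i,\ell}$ for $\bar\varphi_{i,\ell}$ using Lemma~\ref{bound for Lambda and phi}, and pass to $\sigma_{f_i,\xi_i}$ on $\Xi_{2,n}(\tilde C)$. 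Two cosmetic slips, neither affecting the conclusion: $\Lambda_{\xi,i,i}\asymp w_i^2$ is the deterministic fact \eqref{eq: varxi-bar-xi}, not something that holds on $\Xi_{5,n}(\tilde C)$; and in your $L^2$ route the noise perturbation has empirical second moment of order $(\bar\theta_j^{(\textit{v})})^2(1+\tilde C L_n)$ (via $\Xi_{9,n}(\tilde C)$ and Assumption~\ref{cross}) rather than size $L_n/w_r$, though it is negligible either way.
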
	
               \begin{lemma}\label{lem: zeta_2 and zeta_4}
		Under the same assumptions as in Lemma \textup{\ref{bound for Lambda and phi}}, it holds for all $i\in[r]$ that
  \begin{equation}\label{eq: zeta2+zeta4}
   \begin{split}
				      |\bzeta_2^{(i,j,\textit{v})}|_2 + |\bzeta_4^{(i,j,\textit{v})}|_2 
                 &\le  \tilde C_0\tilde C\bigg( \frac{\gamma_{\max}+\Delta_{0,n}}{w_r} +\frac{1}{n w_r^3}\bigg) \\
               &\quad  +\tilde C_0\tilde C\bigg(\frac{w_1}{w_i}\bar\theta_j^{(\textit{v})}+\frac{w_1}{w_iw_r}L_n \bigg)w_i\bar\theta_j^{(\textit{v})} 
\end{split}
\end{equation}
as $n \to \infty$, where $\tilde C_0>0$ is a constant independent of $\tilde C$ and $(i,j,\textit{v})$, $L_n$ is defined in \eqref{auto cross f xi}, and 
\[
\begin{split}
    \Delta_{0,n}&=\max_{i\in[r]}\frac{1}{n-1}|(\Fb_{\xi,\mminus i}^{\textup{s}})^{\T}\bxi_i^{\textup{s}}-\mathbb{E}\{(\Fb_{\xi,\mminus i}^{\textup{s}})^{\T}\bxi_i^{\textup{s}}\}|_2 \\
            & \quad + \max_{i,\ell \in [r]} \bigg| \frac{1}{n-1}\sum_{t=2}^n (f_{t,i}-\bar f_i)(f_{t-1,\ell}-\bar f_{\ell})  - \frac{1}{n-1}\sum_{t=2}^n \mathbb{E}\{ (f_{t,i}-\bar f_i)(f_{t-1,\ell}-\bar f_{\ell})\} \bigg|\,.
\end{split}
            \]
		\end{lemma}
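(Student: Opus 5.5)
The plan is to bound $\bzeta_2^{(i,j,\textit{v})}$ and $\bzeta_4^{(i,j,\textit{v})}$ separately on the event $\Xi_n(\tilde C)$. Throughout, Lemma \ref{bound for Lambda and phi} controls the standardization ratios $\Lambda_{\xi,\ell,\ell}^{1/2}/\tilde\sigma^{(\textit{v},j)}_{\check f,\ell}$ and the projection coefficients $\tilde\varphi^{(\textit{v},j)}_{i,\ell}$. Lemma \ref{lemma: E varphi} gives $\bar\varphi_{i,\ell}=O(\gamma_{\max}+n^{-1}w_r^{-2})$.

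\textbf{Perturbation of the projected and standardized factor series.} Write $\tilde\ab_{\ell,j'}^{(\cdot)}=\ab_{\ell,j'}+\Delta_{\ell,j'}$ with $|\Delta_{\ell,j'}|_2\le\bar\theta_j^{(\textit{v})}$. Assumption \ref{sparsity} and Lemma \ref{pro:rank-B} keep $(\tilde\Ab_{j'}^\T\tilde\Ab_{j'})^{-1}$ and $(\tilde\Bb_j^\T\tilde\Bb_j)^{-1}$ bounded. Hence $(\tilde\ab^{(\cdot)}_{\ell,j'})^{\MP}-\ab_{\ell,j'}^{\MP}$ and $(\tilde\bbb^{(\textit{v})}_{i,j})^{\MP}-\bbb_{i,j}^{\MP}$ are $O(\bar\theta_j^{(\textit{v})})$, using a multilinear telescoping of the Kronecker products. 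Two consequences follow.
\begin{enumerate}
\item[(i)] $\bbb_{\ell,j}^\T(\tilde\bbb^{(\textit{v})}_{i,j})^{\MP}=I(\ell=i)+O(\bar\theta_j^{(\textit{v})})$.
\item[(ii)] $\check f^{(\textit{v},j)}_{t,\ell}=\xi_{t,\ell}+\sum_{\ell'}w_{\ell'}f_{t,\ell'}\,O(\bar\theta_j^{(\textit{v})})+(\text{tensor-contracted noise with perturbed directions})$.
\end{enumerate}
I will then write $\tilde\xi^{(\textit{v},j)}_{t-1,i}=\sum_{\ell}\tilde\varphi^{(\textit{v},j)}_{i,\ell}\,\tilde f^{(\textit{v},j)}_{t-1+I(\ell\neq i),\ell}$. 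I split each $\tilde f$ into three parts: the oracle $\xi^{\textup{s}}$ times $\Lambda^{1/2}/\tilde\sigma$, a factor-leakage part of size $(w_1/w_r)\bar\theta_j^{(\textit{v})}$, and a noise-direction perturbation part. Because the noise part is a multilinear form of $\mathcal E_t$ along unit directions, its cross-sums are controlled uniformly by $\Xi_{6,n}$--$\Xi_{9,n}$ at rate $L_n$. That uniformity is why the events are stated with a maximum over all $\bbeta_j$.

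\textbf{Bound on $\bzeta_2$.} For $\ell\ne i$, the inner product $\bbb_{\ell,j}^\T(\tilde\bbb_{i,j}^{(\textit{v})})^{\MP}$ is $O(\bar\theta_j^{(\textit{v})})$ by (i). The coefficient $(n-1)^{-1}w_\ell\sum_t(f_{t,\ell}-\bar f_\ell)\tilde\xi^{(\textit{v},j)}_{t-1,i}$ splits into three pieces.
\begin{enumerate}
\item[(a)] The oracle piece $w_\ell$ times the sample lag-one cross-covariance of $f_\ell$ with $\xi^{\textup{sp}}_i$. Its expectation is small: $\xi_{t,i}=w_if_{t,i}+$noise, and Assumption \ref{error} kills the noise cross terms. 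The remainder is $\Upsilon_{1,\ell,i}+\sum\bar\varphi\,\Upsilon_{0}$ after removing the $w_i$ scaling, which gives $O((\gamma_{\max}+\Delta_{0,n})/w_r+1/(nw_r^3))$ after normalizing by $\Lambda^{1/2}\asymp w_i$.
\item[(b)] The leakage piece, of size $w_1(w_1/w_i)\bar\theta_j^{(\textit{v})}$.
\item[(c)] The noise piece, bounded by $w_1L_n\bar\theta_j^{(\textit{v})}/w_r$ via $\Xi_{6,n}$.
\end{enumerate}
Multiplying by the $O(\bar\theta_j^{(\textit{v})})$ inner product yields the second line of \eqref{eq: zeta2+zeta4}. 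The decorrelation identity $\sum_t\tilde\xi_{t-1,i}\tilde f_{t,\ell}=0$ is what lets me replace the raw cross-covariance in (a) by a fluctuation term, instead of an $O(1)$ term.

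\textbf{Bound on $\bzeta_4$.} This is the difference between the actual noise projection $(n-1)^{-1}\sum_t\tilde\xi_{t-1,i}(\Eb_{t,j}-\bar\Eb_j)(\tilde\bbb^{(\textit{v})}_{i,j})^{\MP}$ and its oracle $\bzeta_3$. I will telescope it into three terms.
\begin{enumerate}
\item[(1)] Replace $(\tilde\bbb)^{\MP}$ by $\bbb^{\MP}$. The error is $\bar\theta_j^{(\textit{v})}$ times the uniform bound $L_n\Lambda^{1/2}$. This uses $\Xi_{6,n}$ and $\Xi_{7,n}$, after expressing $(\tilde\bbb)^{\MP}$ as a bounded combination of Kronecker vectors.
\item[(2)] Replace the leakage part of $\tilde\xi$. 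This costs $(w_1/w_r)\bar\theta\cdot w_iL_n$.
\item[(3)] Replace the noise-perturbation part of $\tilde\xi$. This costs $\bar\theta L_n$ through $\Xi_{7,n}$ and $\Xi_{9,n}$. The contemporaneous ($\ell\ne i$, lag-zero) terms are centered in $\bzeta_3$, so $\Xi_{9,n}$ handles their fluctuation. Their mean is a bias $\bar\theta\cdot O(1)$, which is absorbed since $w_i\ge C_3$.
\end{enumerate}
The centering means $\bar f,\bar\Eb$ are handled by $\Xi_{8,n}$.

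The main obstacle is this last contemporaneous piece in (3). For $\ell\ne i$ the lag-zero product $\Eb_{t,j}\,\mathcal E_t$ has nonzero mean, and only its fluctuation is subtracted in $\bzeta_3$. I would first show that its mean contribution is $O(\bar\theta_j^{(\textit{v})})$ times $\|\mathrm{Cov}(\mathrm{vec}\,\mathcal E_t)\|_2$ restricted to rank-one Kronecker directions, which Assumption \ref{cross} bounds. I would then verify that this fits within $\tilde C_0\tilde C\,(w_1/w_i)\bar\theta\cdot w_i\bar\theta$ plus the $L_n$ terms. If it does not, the fallback is to use $|\tilde\varphi_{i,\ell}|\lesssim\gamma_{\max}+\tilde C^{-1}$ to absorb it.
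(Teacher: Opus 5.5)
Your overall architecture matches the paper: bound $\bzeta_2^{(i,j,\textit{v})}$ and $\bzeta_4^{(i,j,\textit{v})}$ separately on $\Xi_n(\tilde C)$, using the uniform tensor-contraction events. However, you misidentify the term that produces the first line of \eqref{eq: zeta2+zeta4}. Because $\bzeta_3^{(i,j,\textit{v})}$ is built from the \emph{centered} $\tilde\bSigma_{\eb_{i,j},\xi_\ell}(0)$, $\bzeta_4^{(i,j,\textit{v})}$ still contains, after all plug-in errors are removed, the deterministic piece
\[
\sum_{\ell\ne i}\frac{\tilde\varphi^{(\textit{v},j)}_{i,\ell}}{\tilde\sigma^{(\textit{v},j)}_{\check f,\ell}}\,\frac{1}{n-1}\sum_{t=2}^n\mathbb{E}\{(\xi_{t,\ell}-\bar\xi_\ell)(\eb_{t,i,j}-\bar\eb_{i,j})\}\,.
\]
This is the mean of the \emph{oracle} lag-zero product of the noise part of $\xi_{t,\ell}$ with $\eb_{t,i,j}$. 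By Assumption \ref{cross} it is $O(1)$ and generically nonzero, and it is present even when $\bar\theta_j^{(\textit{v})}=0$. So this piece is of order $\max_{\ell\ne i}|\tilde\varphi^{(\textit{v},j)}_{i,\ell}|/w_r$, not $O(\bar\theta_j^{(\textit{v})})$ as you assert.

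Your fallback, $|\tilde\varphi^{(\textit{v},j)}_{i,\ell}|\le|\bar\varphi_{i,\ell}|+\tilde C_0\tilde C^{-1}$ from Lemmas \ref{lemma: E varphi} and \ref{bound for Lambda and phi}, only yields $O\{(\gamma_{\max}+\tilde C^{-1})/w_r\}$. The $\tilde C^{-1}/w_r$ part is not dominated by the right side of \eqref{eq: zeta2+zeta4}. It does not shrink with $\bar\theta_j^{(\textit{v})}$, and it is much larger than $\tilde C\Delta_{0,n}/w_r$, since $\Delta_{0,n}=O_{\rm p}(n^{-1/2})$. In the recursion of Theorem \ref{thm: iterative} it would leave an error floor of order $\tilde C^{-1}(w_iw_r)^{-1}$, so the claimed rate could not be reached.

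What is missing is a sharp bound on the estimated projection coefficients:
\[
\max_{\ell\ne i}|\tilde\varphi^{(\textit{v},j)}_{i,\ell}|\le\tilde C_0\tilde C\Big\{\gamma_{\max}+\Delta_{0,n}+\frac{1}{nw_r^2}+\frac{w_1}{w_r}(\bar\theta_j^{(\textit{v})})^m+\frac{w_1L_n+L_n^2}{w_r^2}\,\bar\theta_j^{(\textit{v})}\Big\}\,.
\]
The paper proves this in two parts. First, $\|\{n^{-1}(\tilde\Fb^{(\textit{v},j)}_{\mminus i})^\T\tilde\Fb^{(\textit{v},j)}_{\mminus i}\}^{-1}\|_2\le\tilde C_0$. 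Second, it bounds $n^{-1}(\tilde\Fb^{(\textit{v},j)}_{\mminus i})^\T\tilde\fb^{(\textit{v},j)}_i$ directly. Each entry is a standardized lag-one cross product of the $\check f^{(\textit{v},j)}$'s. After replacing $\check f^{(\textit{v},j)}$ by $\xi$, its mean is $w_iw_\ell\Upsilon_{1,\ell,i}+O(n^{-1})$ and its fluctuation is at most $\tilde Cw_iw_\ell\Delta_{0,n}$; the replacement errors supply the $\bar\theta_j^{(\textit{v})}$ terms. Multiplying by $1/w_r$ gives exactly $(\gamma_{\max}+\Delta_{0,n})/w_r+1/(nw_r^3)$, plus terms absorbed in the second line. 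You need the same bound inside $\bzeta_2^{(i,j,\textit{v})}$, where the noise part of $\check f^{(\textit{v},j)}_{t,\ell}$ meets the lag-zero components $\tilde\varphi^{(\textit{v},j)}_{i,i'}\tilde f^{(\textit{v},j)}_{t,i'}$ of $\tilde\xi^{(\textit{v},j)}_{t-1,i}$.

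Two related fixes. First, in $\bzeta_2^{(i,j,\textit{v})}$, apply $\sum_t\tilde\xi^{(\textit{v},j)}_{t-1,i}\tilde f^{(\textit{v},j)}_{t,\ell}=0$ to the whole of $\tilde\sigma^{(\textit{v},j)}_{\check f,\ell}\tilde f^{(\textit{v},j)}_{t,\ell}$, so that only $\check f^{(\textit{v},j)}_{t,\ell}-\bar{\check f}^{(\textit{v},j)}_\ell-w_\ell(f_{t,\ell}-\bar f_\ell)$ survives. Splitting $\tilde\xi^{(\textit{v},j)}_{t-1,i}$ and bounding the oracle piece through its expectation leaves a fluctuation of order $w_\ell n^{-1/2}$ and a term of order $w_\ell|\tilde\varphi^{(\textit{v},j)}_{i,\ell}-\bar\varphi_{i,\ell}|$, and neither fits \eqref{eq: zeta2+zeta4}. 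Second, $\tilde\xi^{(\textit{v},j)}_{t-1,i}$ is standardized, so your costs in steps (1)--(2) for $\bzeta_4^{(i,j,\textit{v})}$ should not carry the factor $\Lambda_{\xi,i,i}^{1/2}\asymp w_i$. As written, $\bar\theta_j^{(\textit{v})}L_nw_i$ exceeds $(w_1/w_r)L_n\bar\theta_j^{(\textit{v})}$ once the factors are strong.
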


 The above lemmas serve as preliminaries for deriving the upper bound of $\theta_j^{(\textit{v})}$.  To proceed, we need to further define several random quantities.  Recall the vectors $\tilde\bSigma_{\eb_{\ell,j},\xi_i}(1)$ and $\tilde\bSigma_{\eb_{\ell,j},\xi_i}(0)$ defined in \eqref{sigma ei xi}. Let
	\begin{equation}\label{Delta 123 nj}
		\begin{split}
			\Delta_{1,n,j}&=\max_{i,\ell  \in[r]} \{|\tilde\bSigma_{\eb_{\ell,j},\xi_i}(1)|_{\max}+|\tilde\bSigma_{\eb_{\ell,j},\xi_i}(0)|_{\max} \}\,,\\
			\Delta_{2,n,j}(x)&=\max_{i,\ell  \in[r]}\sum_{p = 1}^{d_j}I\{|[\tilde\bSigma_{\eb_{\ell,j},\xi_i}(1)]_p|+|[\tilde\bSigma_{\eb_{\ell,j},\xi_i}(0)]_p|>x(n^{-1} \log d_j)^{1/2} \}
		\end{split}
	\end{equation}
	for $x>0$, where $[\cdot]_p$ stands for the $p$-th entry of a vector. Lemma \ref{lem: truncated projected covariance} below provides an error bound for $ |T_{\delta_{2,j}}\{\tilde\bSigma^{(\textit{v},j)}_{\tilde\yb_{i,j},\tilde\xi_{i}}(1)\}-\bzeta_1^{(i,j,\textit{v})}|_2$, whose proof is given in Section \ref{sec:lem: truncated projected covariance} and relies on the previous results in Lemmas \ref{lemma: E varphi}--\ref{lem: zeta_2 and zeta_4}.
    \begin{lemma}\label{lem: truncated projected covariance}
    		Under the same assumptions in Lemma \textup{\ref{bound for Lambda and phi}}, by setting the threshold level in  Algorithm \textup{\ref{alg1}} as $\delta_{2,j}=\tilde C_*(n^{-1}\log d_j)^{1/2}$ for some sufficiently large constant $\tilde C_*>0$, it holds for all $i\in[r]$ that
         		\[
			\begin{split}
				|T_{\delta_{2,j}}\{\tilde\bSigma^{(\textit{v},j)}_{\tilde\yb_{i,j},\tilde\xi_{i}}(1)\}-\bzeta_1^{(i,j,\textit{v})}|_2 &\le \tilde C_0 \tilde C\bigg\{w_r\Delta_{n,j}+ \frac{1}{w_r}\bigg(\gamma_{\max} + \Delta_{0,n} + \frac{1}{n w_r^2} \bigg)\bigg\}\\
                &\quad+\tilde C_0\tilde C\bigg(\frac{w_1}{w_i}\bar\theta_j^{(\textit{v})}+\frac{w_1}{w_iw_r}L_n\bigg)w_i\bar\theta_j^{(\textit{v})} 
			\end{split}
			\]
            as $n \to \infty$, where $\tilde C_0>0$ is a constant independent of $\tilde C$ and $(i,j,\textit{v})$, $L_n$ is defined in \eqref{auto cross f xi},  and 
            \[
            		  \Delta_{n,j}=w_r^{-1}\{\delta_{2,j}s_{j}^{1/2}+ \Delta_{1,n,j}s_j^{1/2}+\Delta_{1,n,j}\Delta_{2,n,j}(0.4\tilde C_*) \} \,.
            \]
    \end{lemma}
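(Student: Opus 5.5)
The decomposition \eqref{iterative sigma decompose} writes $\tilde\bSigma^{(\textit{v},j)}_{\tilde\yb_{i,j},\tilde\xi_{i}}(1)=\bzeta_1+\bzeta_2+\bzeta_3+\bzeta_4$ (dropping the superscript $(i,j,\textit{v})$). The plan is to treat $\bzeta_1$ as the signal and $\bzeta_2+\bzeta_3+\bzeta_4$ as the noise. By Lemma \ref{lem: zeta_1 lower bound}, $\bzeta_1$ is a scalar multiple $c\,\ab_{i,j}$ with $|c|\asymp w_i|\sigma_{f_i,\xi_i}|\gtrsim w_i$. Its support is therefore contained in that of $\ab_{i,j}$, which has at most $s_j$ entries, and each nonzero entry has size $|c|\,|[\ab_{i,j}]_p|$.

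I would split the error $T_{\delta_{2,j}}\{\bzeta_1+\bzeta_2+\bzeta_3+\bzeta_4\}-\bzeta_1$ coordinatewise over three index sets:
\begin{itemize}
\item[(a)] $p\in\textup{supp}(\ab_{i,j})$;
\item[(b)] $p\notin\textup{supp}(\ab_{i,j})$ with $|[\bzeta_3]_p|>0.4\tilde C_*(n^{-1}\log d_j)^{1/2}$;
\item[(c)] all remaining $p$.
\end{itemize}
On (a), thresholding changes each entry by at most $\delta_{2,j}$ plus the noise entry. Summing in $\ell_2$ over at most $s_j$ coordinates gives $\delta_{2,j}s_j^{1/2}+|\bzeta_3|_{\max}s_j^{1/2}+|\bzeta_2|_2+|\bzeta_4|_2$. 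On (b), the count of such coordinates is controlled by $\Delta_{2,n,j}(0.4\tilde C_*)$, and each entry is at most $|\bzeta_3|_{\max}$ plus the $\bzeta_2,\bzeta_4$ contributions. For $|\bzeta_3|_{\max}$, Lemma \ref{bound for Lambda and phi} gives the bounds $\Lambda_{\xi,\ell,\ell}^{1/2}/\tilde\sigma^{(\textit{v},j)}_{\check f,\ell}\le 1+\tilde C_0\tilde C^{-1}$ and $|\tilde\varphi^{(\textit{v},j)}_{i,\ell}|\le|\bar\varphi_{i,\ell}|+\tilde C_0\tilde C^{-1}$, the latter bounded by Lemma \ref{lemma: E varphi}. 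Hence $|\bzeta_3|_{\max}\lesssim\Delta_{1,n,j}$, which produces the terms $\Delta_{1,n,j}s_j^{1/2}$ and $\Delta_{1,n,j}\Delta_{2,n,j}(0.4\tilde C_*)$. Together with $\delta_{2,j}s_j^{1/2}$, these make up exactly $w_r\Delta_{n,j}$.

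On (c), the entry of $\bzeta_3$ lies below $0.4\tilde C_*(n^{-1}\log d_j)^{1/2}<\delta_{2,j}$, and $\bzeta_1$ vanishes. The thresholded value is either zero, or it survives only because $|[\bzeta_2+\bzeta_4]_p|\ge 0.6\delta_{2,j}$, in which case it is bounded by a constant multiple of $|[\bzeta_2+\bzeta_4]_p|$. Summing squares over such $p$ gives at most $C|\bzeta_2+\bzeta_4|_2$. In all three cases the $\bzeta_2,\bzeta_4$ contributions are then bounded by Lemma \ref{lem: zeta_2 and zeta_4}. Its bound \eqref{eq: zeta2+zeta4} gives exactly the remaining terms $w_r^{-1}(\gamma_{\max}+\Delta_{0,n}+n^{-1}w_r^{-2})$ and $(w_1w_i^{-1}\bar\theta+w_1w_i^{-1}w_r^{-1}L_n)w_i\bar\theta$. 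All constants are absorbed into $\tilde C_0\tilde C$ with $\tilde C_0$ free of $\tilde C$, which yields the displayed bound.

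The main obstacle is case (c): the $\bzeta_2,\bzeta_4$ vectors are not sparse. Their $\ell_2$ norm is what is controlled, and the thresholding must not inflate it. The fix is the elementary inequality $|T_\delta(x+y)|\le C|y|\,I(|y|\ge0.6\delta)$ whenever $|x|\le0.4\delta$. It must be checked that $\tilde C_*$ is large enough relative to the factor $0.4$ and to the constants from Lemma \ref{bound for Lambda and phi}, so that the rescaled $\bzeta_3$ entries still stay below $\delta_{2,j}$ off the exceptional set counted by $\Delta_{2,n,j}$.
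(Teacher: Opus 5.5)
Your proposal is correct and is essentially the paper's argument. Both proofs rest on the same ingredients:
\begin{itemize}
\item the support of $\bzeta_1^{(i,j,\textit{v})}\propto\ab_{i,j}$ has at most $s_j$ entries;
\item $|\bzeta_3^{(i,j,\textit{v})}|_{\max}\lesssim\Delta_{1,n,j}$, via Lemmas \ref{lemma: E varphi}--\ref{bound for Lambda and phi};
\item $\Delta_{2,n,j}(0.4\tilde C_*)$ counts the large off-support entries of $\bzeta_3^{(i,j,\textit{v})}$;
\item Lemma \ref{lem: zeta_2 and zeta_4} bounds $\bzeta_2^{(i,j,\textit{v})}+\bzeta_4^{(i,j,\textit{v})}$, whose $\ell_2$ norm thresholding can only inflate by a constant factor.
\end{itemize}
Your support / large-$\bzeta_3$ / remainder split is a tidier organization of the paper's indicator-pattern decomposition, which it copies from the proof of Lemma \ref{lem: l2 truncated covariance} with the cut at $0.5\delta_{2,j}$; the constant bookkeeping you flag at the end is handled there just as loosely, by taking $\tilde C$ and $\tilde C_*$ large.
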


   Based on Lemma \ref{lem: truncated projected covariance}, we now construct an upper bound for $\theta_j^{(\textit{v})}$. On the one hand, if 
   \begin{equation}\label{opposite case}
          \Delta_{n,j}+ \frac{1}{w_r^2}\bigg(\gamma_{\max} + \Delta_{0,n} + \frac{1}{n w_r^2} \bigg)\ge \frac{\min_{i \in [r]}|\sigma_{f_i,\xi_i}|}{4\tilde C_0\tilde C}\,,
   \end{equation}
   then by the fact that $\theta_j^{(\textit{v})}\le 2$,  we directly have 
   \[
   \theta_j^{(\textit{v})}\le \frac{8\tilde C_0\tilde C}{\min_{i \in [r]}|\sigma_{f_i,\xi_i}|}\bigg\{\Delta_{n,j}+ \frac{1}{w_r^2}\bigg(\gamma_{\max} + \Delta_{0,n} + \frac{1}{n w_r^2} \bigg)\bigg\}+\tilde C^{-1}\bar  \theta_j^{(\textit{v})}\,.
   \]
   On the other hand, if \eqref{opposite case} does not hold,  by Lemma \ref{lem: truncated projected covariance} and condition \eqref{strong factor condition}, we have
   \[
   |T_{\delta_{2,j}}\{\tilde\bSigma^{(\textit{v},j)}_{\tilde\yb_{i,j},\tilde\xi_{i}}(1)\}-\bzeta_1^{(i,j,\textit{v})}|_2\le 0.3w_i|\sigma_{f_i,\xi_i}|\,,
   \]
   because $w_r^{-1}w_1\bar\theta_j^{(\textit{v})}$ is assumed to be sufficiently small, $(w_iw_r)^{-1}w_1L_n\ll 1$ by condition \eqref{strong factor condition}, and $|\sigma_{f_i,\xi_i}|$ is lower bounded by some constant in condition \eqref{strong factor condition}. 
   Moreover, for sufficiently large $\tilde C$,   Lemma \ref{lem: zeta_1 lower bound} implies that $|\bzeta_1^{(i,j,\textit{v})}|_2\ge 0.8w_i|\sigma_{f_i,\xi_i}|$. Therefore, by Triangle inequality, we have $|T_{\delta_{2,j}}\{\tilde\bSigma^{(\textit{v},j)}_{\tilde\yb_{i,j},\tilde\xi_{i}}(1)\}|_2\ge 0.5 w_i |\sigma_{f_i,\xi_i}|$.  By the definition of the estimator $\tilde\ab_{i,j}^{(\textit{v})}$, we have
			\[
			\begin{split}
				\tilde\ab_{i,j}^{(\textit{v})}&=\frac{T_{\delta_{2,j}}\{\tilde\bSigma^{(\textit{v},j)}_{\tilde\yb_{i,j},\tilde\xi_{i}}(1)\}}{|T_{\delta_{2,j}}\{\tilde\bSigma^{(\textit{v},j)}_{\tilde\yb_{i,j},\tilde\xi_{i}}(1)\}|_2} \\
                &=\frac{\bzeta_1^{(i,j,\textit{v})}}{|T_{\delta_{2,j}}\{\tilde\bSigma^{(\textit{v},j)}_{\tilde\yb_{i,j},\tilde\xi_{i}}(1)\}|_2}+\frac{T_{\delta_{2,j}}\{\tilde\bSigma^{(\textit{v},j)}_{\tilde\yb_{i,j},\tilde\xi_{i}}(1)\}-\bzeta_1^{(i,j,\textit{v})}}{|T_{\delta_{2,j}}\{\tilde\bSigma^{(\textit{v},j)}_{\tilde\yb_{i,j},\tilde\xi_{i}}(1)\}|_2}\\
				&= \frac{\bzeta_1^{(i,j,\textit{v})}}{|\bzeta_1^{(i,j,\textit{v})}|_2}-\frac{\bzeta_1^{(i,j,\textit{v})}}{|\bzeta_1^{(i,j,\textit{v})}|_2} \cdot \frac{|T_{\delta_{2,j}}\{\tilde\bSigma^{(\textit{v},j)}_{\tilde\yb_{i,j},\tilde\xi_{i}}(1)\}|_2-|\bzeta_1^{(i,j,\textit{v})}|_2}{|T_{\delta_{2,j}}\{\tilde\bSigma^{(\textit{v},j)}_{\tilde\yb_{i,j},\tilde\xi_{i}}(1)\}|_2}+\frac{T_{\delta_{2,j}}\{\tilde\bSigma^{(\textit{v},j)}_{\tilde\yb_{i,j},\tilde\xi_{i}}(1)\}-\bzeta_1^{(i,j,\textit{v})}}{|T_{\delta_{2,j}}\{\tilde\bSigma^{(\textit{v},j)}_{\tilde\yb_{i,j},\tilde\xi_{i}}(1)\}|_2}\,.
			\end{split}
			\]
			Notice that $\bzeta_1^{(i,j,\textit{v})}/|\bzeta_1^{(i,j,\textit{v})}|_2\in\{\ab_{i,j},-\ab_{i,j}\}$. Combining Lemma \ref{lem: truncated projected covariance} with condition \eqref{strong factor condition}, and ignoring the reflection and permutation indeterminacy, we can conclude that 
	\begin{equation}\label{iterative relationship}
	    	   \theta_j^{(\textit{v})}\le \tilde C_0\tilde C\bigg\{\Delta_{n,j}+ \frac{1}{w_r^2}\bigg(\gamma_{\max} + \Delta_{0,n} + \frac{1}{n w_r^2} \bigg)\bigg\}+\tilde C^{-1}\bar  \theta_j^{(\textit{v})} 
	\end{equation}
            as $n \to \infty$, where $\tilde C_0>0$ is a constant independent of $\tilde C$ and $(i,j,\textit{v})$.

\subsubsection{Step 3: The convergence rate after sufficient iterations}\label{sec:proofT2-step3}

Because $  \max_{j\in[m]}w_r^{-1}w_1\theta_j^{(0)}=o_{\rm p}(1)$,  we have $w_r^{-1}w_1\bar\theta_1^{(1)}\le \tilde C^{-3}$ with probability approaching one for the large constant $\tilde C$ in Lemmas
\ref{bound for Lambda and phi}--\ref{lem: truncated projected covariance}.
For now, let us take this event and $\Xi_n(\tilde C)$ as given. To ease notation, let
\[
    \tilde\Delta_{n}
    =
    \max_{j\in[m]} \Delta_{n,j}
    +
    \frac{1}{w_r^2}
    \bigg(
    \gamma_{\max}
    +
    \Delta_{0,n}
    +
    \frac{1}{n w_r^2}
    \bigg)\,.
\]
Then, if $w_r^{-1}w_1\bar\theta_j^{(\textit{v})}\le \tilde C^{-2}$, 
\eqref{iterative relationship} implies that
\begin{equation}\label{eq:basic iterative recursion}
    \theta_j^{(\textit{v})}
    \le
    A\tilde\Delta_n+\alpha\bar\theta_j^{(\textit{v})}\,,
\end{equation}
where $A=\tilde C_0\tilde C$, and $\alpha=\tilde C^{-1}\in(0,1)$ since $\tilde C$ is chosen sufficiently large. We next control the magnitude of $\tilde\Delta_n$. Lemma \ref{lem: Delta} below is used to
bound $\Delta_{0,n}$, $\Delta_{1,n,j}$, and $\Delta_{2,n,j}(x)$, whose proof is given in Section \ref{sec:lem: Delta}.

\begin{lemma}\label{lem: Delta}
Under the assumptions of Theorem \textup{\ref{thm: iterative}} and $r \ge 2$, it holds that
\[
    \Delta_{0,n}=O_{\rm p}\bigg(\frac{1}{\sqrt n}\bigg) 
    ~~\textup{and}~~
    \Delta_{1,n,j}=O_{\rm p}\bigg(\sqrt{\frac{\log d_j}{n}}\bigg)
\]
for each $j\in[m]$, and there exists a large constant $\tilde C_*>0$ such that
$\Delta_{2,n,j}(0.4\tilde C_*)=O_{\rm p}(1)$, where $\Delta_{0,n}$ is defined
in Lemma \textup{\ref{lem: zeta_2 and zeta_4}}, and
$\Delta_{1,n,j}$, $\Delta_{2,n,j}(x)$ are defined in \eqref{Delta 123 nj}.
\end{lemma}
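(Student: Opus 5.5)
We bound the three quantities in Lemma \ref{lem: Delta} separately. Each is a maximum over a finite (or polynomially growing) index set of centered sample averages of products of $\alpha$-mixing, sub-Weibull variables. The common tool is an exponential (Bernstein-type) inequality for $\alpha$-mixing sequences with sub-Weibull tails, as in \citeS{chang2021central-app} and in the proof of Lemma \ref{lem: gkixi}. Under Assumptions \ref{tail} and \ref{mixing}, products of two sub-Weibull$(c_1)$ variables are sub-Weibull$(c_1/2)$, and they remain $\alpha$-mixing with coefficients $\alpha(|k-1|_+)$. This gives
\[
\mathbb{P}\Big(\Big|\frac1n\sum_t (Z_t-\mathbb{E}Z_t)\Big|>x\Big)\lesssim n\exp(-Cn^{c}x^{c}) + \exp(-Cnx^2)
\]
for a constant $c$ depending only on $c_1$ and $c_2$. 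The centering by $\bar f_i$ and $\bar\xi_i$ contributes only a product of two sample means, each $O_{\rm p}(n^{-1/2})$, so its effect is $O_{\rm p}(n^{-1})$.

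\textbf{Bound on $\Delta_{0,n}$.} Since $r$ is fixed, it is enough to control each entry. The second term of $\Delta_{0,n}$ is a centered lag-one sample cross-covariance of the factors, so it is $O_{\rm p}(n^{-1/2})$ by the Chebyshev bound. That bound uses the variance estimate $\mathrm{Var}\{n^{-1}\sum_t(f_{t,i}f_{t-1,\ell}-\mathbb{E})\}\lesssim n^{-1}$, which follows from Davydov's covariance inequality and the summability of $\alpha(k)^{1-2/q}$. For the first term, write $\xi^{\textup{s}}_{t,i}=\Lambda_{\xi,i,i}^{-1/2}(\xi_{t,i}-\bar\xi_i)$ with $\xi_{t,i}=w_if_{t,i}+\bbeta_i^{\T}\textup{vec}(\mathcal{E}_t)$, where $\bbeta_i=\ab_{i,m}^{\MP}\otimes\cdots\otimes\ab_{i,1}^{\MP}$. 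By Assumption \ref{sparsity}, $|\bbeta_i|_2$ is bounded. Also $\Lambda_{\xi,i,i}\asymp w_i^2$, which uses $\sigma_r(\bUpsilon_0)\ge C_9^{-1}$ together with Assumption \ref{error} so that the cross terms vanish in expectation. Then each entry of $(\Fb^{\textup{s}}_{\xi,\mminus i})^{\T}\bxi^{\textup{s}}_i/(n-1)$ is a normalized lag-one cross-product of $\xi_{t,\ell}$ and $\xi_{t,i}$. After normalization, the scalar $\bbeta_i^{\T}\textup{vec}(\mathcal{E}_t)$ is sub-Weibull by Assumption \ref{cross}, so the same variance argument gives $O_{\rm p}(n^{-1/2})$.

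\textbf{Bound on $\Delta_{1,n,j}$.} Each coordinate $p\in[d_j]$ of $\tilde\bSigma_{\eb_{\ell,j},\xi_i}(1)$ equals
\[
(n-1)^{-1}\sum_t \{\eb_p^{\T}\Eb_{t,j}\bbb^{\MP}_{\ell,j}-\text{mean}\}\,\xi^{\textup{s}}_{t-1,i},
\]
where $\eb_p$ is the $p$-th standard basis vector. This is a linear combination of $\textup{vec}(\mathcal{E}_t)$ with bounded norm (Lemma \ref{pro:rank-B}), multiplied by a lagged normalized sub-Weibull variable. Its mean is zero by Assumption \ref{error}: $\mathcal{E}_t$ is uncorrelated with $\mathcal{E}_{t-1}$ and with $f_{t-1}$. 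Up to the $O_{\rm p}(n^{-1})$ centering error, the lag-zero version is already centered by construction. Applying the exponential inequality with $x=C(\log d_j/n)^{1/2}$ and a union bound over $p\in[d_j]$ and $i,\ell\in[r]$ gives $\Delta_{1,n,j}=O_{\rm p}\{(n^{-1}\log d_j)^{1/2}\}$. Here we need $\log d_j\ll n^c$ so that the sub-Weibull tail term $d_jn\exp\{-Cn^{c}(\log d_j/n)^{c/2}\}$ tends to $0$.

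\textbf{Bound on $\Delta_{2,n,j}(0.4\tilde C_*)$; the main obstacle.} Here we need a count of exceedances that is bounded in probability, not merely $o(d_j)$. We will use Markov's inequality:
\[
\mathbb{E}\{\Delta_{2,n,j}(x)\}\le \sum_{i,\ell}\sum_{p=1}^{d_j}\mathbb{P}\big(|[\cdot]_p|>x(n^{-1}\log d_j)^{1/2}/2\big).
\]
With the Gaussian-type part of the Bernstein bound, each probability is at most $\exp(-Cx^2\log d_j)+n\exp\{-Cn^{c}(\log d_j/n)^{c/2}x^c\}$. Choosing $\tilde C_*$ large makes $Cx^2\ge 2$, so the sum over $p$ is at most $d_j^{-1}$ plus a term that vanishes under $\log d_j\ll n^c$. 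Hence $\mathbb{E}\,\Delta_{2,n,j}=O(1)$, and therefore $\Delta_{2,n,j}=O_{\rm p}(1)$. The delicate point is ensuring that the constant in the sub-Gaussian regime of the mixing Bernstein inequality is uniform in $p$. This requires a uniform bound on the long-run variance of each coordinate process, which we obtain from the uniform tail constants $C_1,C_2$ and the mixing rate. We also need to check that the sub-Weibull remainder is genuinely negligible after multiplying by $d_j$; this is where the restriction $\max_j\log d_j\ll n^c$ is used.
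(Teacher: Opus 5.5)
Your proposal is correct and follows essentially the paper's route: entrywise concentration after normalizing by $\Lambda_{\xi,i,i}^{1/2}\asymp w_i$ for $\Delta_{0,n}$ (your Chebyshev--Davydov variance bound is an equally valid substitute for the paper's exponential inequality there), an entrywise mixing tail bound plus a union bound over $p\in[d_j]$ for $\Delta_{1,n,j}$, and Markov's inequality on the exceedance count for $\Delta_{2,n,j}(0.4\tilde C_*)$.

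One correction concerns your centering claim. The paper allows nonzero, time-varying factor means, so $\Lambda_{\xi,i,i}^{-1/2}\bar\xi_i$ is $O_{\rm p}(1)$, not $O_{\rm p}(n^{-1/2})$. Hence the centering contribution, essentially $\Lambda_{\xi,i,i}^{-1/2}\bar\xi_i\,\bar e_{\ell,j,p}$, is of order $(n^{-1}\log d_j)^{1/2}$ in max norm rather than $O_{\rm p}(n^{-1})$, and it must be carried through the same tail and Markov bounds, as the paper does with its four-term expansion. This is routine, because $\Lambda_{\xi,i,i}^{-1/2}\bar\xi_i$ is bounded with probability tending to one and $\bar e_{\ell,j,p}$ satisfies the same exponential tail bound.
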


By Lemma \ref{lem: Delta} and the definition of $\Delta_{n,j}$ in Lemma
\ref{lem: truncated projected covariance}, we have
\[
    \Delta_{n,j}=O_{\rm p}(\Phi_{n,j})
\]
for each $j\in[m]$, as long as the constant $\tilde C_*$
is sufficiently large. Furthermore, under condition \eqref{strong factor condition}
and the fact that $s_j\le d_j$, we have
\[
    \max_{j\in[m]}\frac{w_1}{w_r}\Delta_{n,j}=o_{\rm p}(1) 
    ~~\textup{and}~~
    \frac{w_1}{w_r^3}
    \bigg(
    \gamma_{\max}
    +
    \Delta_{0,n}
    +
    \frac{1}{n w_r^2}
    \bigg)
    =
    o_{\rm p}(1)\,.
\]
Therefore, with probability approaching one, it holds that
\begin{equation}\label{eq:small Delta event}
    \frac{w_1}{w_r}A\tilde\Delta_n
    \le
    \tilde C^{-4}\,.
\end{equation}
We also take this event as given.

We now prove the result recursively. Fix $\textit{v}=1$ and let
$j$ grow from $1$ to $m$. When $j=1$, by \eqref{eq:basic iterative recursion}, it holds that
\[
    \theta_1^{(1)}
    \le
    A\tilde\Delta_n+\alpha\bar\theta_1^{(1)}.
\]
Next, let $j=2$. By the definition of $\bar\theta_2^{(1)}$, we have
\begin{equation}\label{eq: bar theta 21}
\begin{split}
    \bar\theta_2^{(1)}
     =
    \max(\theta_1^{(1)},\theta_2^{(0)},\ldots,\theta_m^{(0)}) 
     \le
    \max(A\tilde\Delta_n+\alpha\bar\theta_1^{(1)},
    \bar\theta_1^{(1)})                                      
     \le
    A\tilde\Delta_n+\bar\theta_1^{(1)}\,.
\end{split}
\end{equation}
Thus, by \eqref{eq:small Delta event} and
$w_r^{-1}w_1\bar\theta_1^{(1)}\le \tilde C^{-3}$, it holds that
\[
    \frac{w_1}{w_r}\bar\theta_2^{(1)}
    \le
    \frac{w_1}{w_r}A\tilde\Delta_n
    +
    \frac{w_1}{w_r}\bar\theta_1^{(1)}
    \le
    \tilde C^{-4}+\tilde C^{-3}
    \le
    \tilde C^{-2}
\]
as long as $\tilde C$ is sufficiently large. Therefore,
\eqref{eq:basic iterative recursion} can be applied to $j=2$, and
\[
    \theta_2^{(1)}
    \le
    A\tilde\Delta_n+\alpha\bar\theta_2^{(1)}
    \le
    A(1+\alpha)\tilde\Delta_n+\alpha\bar\theta_1^{(1)}\,.
\]
By induction over $j \in [m]$, we obtain
\[
\begin{split}
    \bar\theta_j^{(1)}
    &\le
    \max(\theta_{j-1}^{(1)},\bar\theta_{j-1}^{(1)})
    \le
    \frac{w_r}{w_1}\tilde C^{-2}\,,\\
    \theta_j^{(1)}
    &\le
    A(1+\alpha+\cdots+\alpha^{j-1})\tilde\Delta_n
    +
    \alpha\bar\theta_1^{(1)}
    \le
    \frac{A}{1-\alpha}\tilde\Delta_n
    +
    \alpha\bar\theta_1^{(1)}\,.
\end{split}
\]
The first inequality above verifies that the condition required for
\eqref{eq:basic iterative recursion} remains valid throughout the first round.

Now let $\textit{v}=2$. By definition,
\[
    \bar\theta_1^{(2)}
    =
    \max_{j\in[m]}\theta_j^{(1)}
    \le
    \frac{A}{1-\alpha}\tilde\Delta_n
    +
    \alpha\bar\theta_1^{(1)}\,.
\]
Using \eqref{eq:small Delta event} again, we have
\[
    \frac{w_1}{w_r}\bar\theta_1^{(2)}
    \le
    \frac{\tilde C^{-4}}{1-\alpha}
    +
    \alpha\tilde C^{-3}
    \le
    \tilde C^{-2}
\]
for sufficiently large $\tilde C$. Hence, 
\eqref{eq:basic iterative recursion} is applicable in the second round.
Following the same induction over $j \in [m]$, it holds that
\[
\begin{split}
    \theta_j^{(2)}
     & \le
    A(1+\alpha+\cdots+\alpha^{j-1})\tilde\Delta_n
    +
    \alpha\bar\theta_1^{(2)}  \\
    &\le
    \frac{A}{1-\alpha}\tilde\Delta_n
    +
    \alpha\bar\theta_1^{(2)}\\
    &\le
    \frac{A}{1-\alpha}(1+\alpha)\tilde\Delta_n
    +
    \alpha^2\bar\theta_1^{(1)}\,.
\end{split}
\]
Repeating the above argument over the iteration index $\textit{v}$ yields, for any
$\textit{v}\ge1$,
\begin{equation}\label{eq: theta iterative final recursion}
    \theta_j^{(\textit{v})}
    \le
    \frac{A}{1-\alpha}
    (1+\alpha+\cdots+\alpha^{\textit{v}-1})\tilde\Delta_n
    +
    \alpha^{\textit{v}}\bar\theta_1^{(1)}
    \le
    \frac{A}{(1-\alpha)^2}\tilde\Delta_n
    +
    \alpha^{\textit{v}}\bar\theta_1^{(1)}\,.
\end{equation}
Moreover, the same induction verifies that
\[
    \frac{w_1}{w_r}\bar\theta_j^{(\textit{v})}\le \tilde C^{-2}
\]
for each $j\in[m]$ and each iteration $\textit{v}$ considered above, so all
applications of \eqref{eq:basic iterative recursion} are justified.

Next, by Lemma \ref{lem: Delta}, we have
\[
    \tilde\Delta_n
    =
    O_{\rm p}\bigg(
    \max_{j\in[m]}\Phi_{n,j}
    +
    \frac{\gamma_{\max}}{w_r^2}
    +
    \frac{1}{w_r^2\sqrt n}
    +
    \frac{1}{n w_r^4}
    \bigg)\,.
\]
Since $D_n\rightarrow\infty$, $m$ is fixed, and $w_r$ is bounded away from $0$,
the last two terms are dominated by $\max_{j\in[m]}\Phi_{n,j}$. Therefore,
\begin{equation}\label{eq: tilde Delta final rate}
    \tilde\Delta_n
    =
    O_{\rm p}\bigg(
    \max_{j\in[m]}\Phi_{n,j}
    +
    \frac{\gamma_{\max}}{w_r^2}
    \bigg)\,.
\end{equation}
Finally, take
\[
    \textit{v}_{\max}
    \gtrsim
    -\log\bigg(
    \max_{j\in[m]}\Phi_{n,j}
    +
    \frac{\gamma_{\max}}{w_r^2}
    \bigg)\,.
\]
Since $\alpha\in(0,1)$ and $\bar\theta_1^{(1)}\le2$, we have
\[
    \alpha^{\textit{v}_{\max}}\bar\theta_1^{(1)}
    =
    O_{\rm p}\bigg(
    \max_{j\in[m]}\Phi_{n,j}
    +
    \frac{\gamma_{\max}}{w_r^2}
    \bigg).
\]
Combining this with \eqref{eq: theta iterative final recursion} and
\eqref{eq: tilde Delta final rate}, we obtain
\[
    \max_{i\in[r],j\in[m]}
    |\hat\ab_{i,j}-\ab_{i,j}|_2
    =
    O_{\rm p}\bigg(
    \max_{j\in[m]}\Phi_{n,j}
    +
    \frac{\gamma_{\max}}{w_r^2}
    \bigg)\,.
\]

\subsubsection{Some further discussion}\label{sec:proofT2-step4}
When $r=1$, we set $\gamma_{\max}=0$. In this case,
$\tilde\Fb_{\mminus i}^{(\textit{v},j)}$ and
$\Fb_{\xi,\mminus i}^{\textup{s}}$ disappear, and
$\tilde\xi_{t,i}^{(\textit{v},j)}=\tilde f_{t,i}^{(\textit{v},j)}$,
$\xi_{t,i}^{\textup{sp}}=\xi_{t,i}^{\textup{s}}$. Moreover,
$\bzeta_2^{(i,j,\textit{v})}$ and the second term of
$\bzeta_3^{(i,j,\textit{v})}$ defined above \eqref{iterative sigma decompose}
vanish. Therefore, the same proof applies with the terms involving
$\gamma_{\max}$ and $\bar\varphi_{i,\ell}$ removed. Therefore, the same convergence rate follows immediately.

Next, we discuss the impact of the reflection and permutation indeterminacy on the results.  On the one hand, if $z_i \neq i$, we can simply permute the estimators to match $z_i = i$.
On the other hand, for some $\tilde\kappa_{i,j} \in \{-1,1\}$, we can absorb these constants by relabelling the factors and factor loadings, i.e., rewrite the tensor CP-factor model \eqref{model cp} as follows
\begin{equation}\label{eq: rewrite cp model} 
\begin{split}
    		\mathcal{Y}_t &= \sum_{i=1}^r \bigg( \prod_{j=1}^m \tilde\kappa_{i,j}\bigg)  w_i f_{t,i}\, (\tilde\kappa_{i,1}\ab_{i,1}) \circ (\tilde\kappa_{i,2}\ab_{i,2}) \circ \cdots \circ (\tilde\kappa_{i,m}\ab_{i,m}) +\mathcal{E}_t \\
            &=  \sum_{i=1}^r w_i \ddot{f}_{t,i} \,\ddot{\ab}_{i,1} \circ \ddot{\ab}_{i,2} \circ \cdots \circ \ddot{\ab}_{i,m} +\mathcal{E}_t \,,     
            \end{split}
\end{equation}
 where the factor $\ddot{f}_{t,i}  = (\textstyle\prod\nolimits_{j=1}^m\tilde\kappa_{i,j}) f_{t,i}$ and the factor loading $\ddot{\ab}_{i,j} = \tilde\kappa_{i,j}\ab_{i,j}$ still satisfy Assumptions \ref{error}--\ref{cross}. We define the oracle linear combinations $\ddot{\xi}_{t,i}$ with  $(\ddot{f}_{t,i},\ddot{\ab}_{i,j})$ by similar steps above \eqref{xi it sp}. Then, all the assumptions required in Theorem \ref{thm: iterative} still hold under the new model \eqref{eq: rewrite cp model}.  With the reflection indeterminacy, we define the estimation error by ${\theta}_j^{(\textit{v})}=\max_{i\in[r]}\min(|\tilde\ab_{i,j}^{(\textit{v})}-{\ab}_{i,j}|_2,|\tilde\ab_{i,j}^{(\textit{v})}+{\ab}_{i,j}|_2)$. Then, $\bar\theta_1^{(1)}=\max_{i\in[r]}|\tilde\ab_{i,1}^{(0)}-{\ddot\ab}_{i,1}|_2$. By Lemma \ref{lem: zeta_1 lower bound} and following the same proof strategy as that for \eqref{iterative relationship}, we have
\begin{equation}\label{iteration relationship reflection}
\begin{split}
        \theta_1^{(1)} &= \max_{i\in[r]} |\tilde\ab_{i,1}^{(1)}-\textup{sgn}(\ddot\sigma_{\ddot f_i,\ddot\xi_i}) \cdot {\ddot\ab}_{i,1}|_2 \\
        &\le \tilde C_0\tilde C\bigg\{\Delta_{n,1}+ \frac{1}{w_r^2}\bigg(\gamma_{\max} + \Delta_{0,n} + \frac{1}{n w_r^2} \bigg)\bigg\}+\tilde C^{-1}\bar  \theta_1^{(1)}\,,
\end{split}
\end{equation}
 where $\textup{sgn}(x)= 2 I(x\ge 0) - 1$ is the sign function, and $\ddot\sigma_{\ddot f_i,\ddot\xi_i}$ is defined in the same manner as $\sigma_{ f_i,\xi_i}$ but with replacing $(f_{t,i},\ab_{i,j},\xi_{t,i})$ by $(\ddot{f}_{t,i},\ddot{\ab}_{i,j},\ddot{\xi}_{t,i})$.  In fact, it follows directly from their definitions that $\ddot\sigma_{\ddot f_i,\ddot\xi_i}=\sigma_{f_i,\xi_i}$.
Using the same iteration argument as in \eqref{iterative relationship}, and arguing as in the derivation of \eqref{eq: theta iterative final recursion}, we can conclude that
\[
{\theta}_j^{(\textit{v}_{\max})} \le  \frac{A}{(1-\alpha)^2}\tilde\Delta_n
    + 2\alpha^{\textit{v}_{\max}}=O_{\rm p}\bigg(\max_{j\in[m]}\Phi_{n,j}+\frac{\gamma_{\max}}{w_r^2}\bigg) 
\]
for $\textit{v}_{\max} \gtrsim  -\log (
    \max_{j\in[m]}\Phi_{n,j}
    +
    \gamma_{\max} w_r^{-2}
     )$. This completes the proof of Theorem \ref{thm: iterative}. 
     $\hfill\Box$      

	\subsection{Proof of Theorem \ref{thm: debias iterative}}\label{sec: prove thm debias iterative}

    

The proof focuses on the case $r\ge 2$, since the case $r=1$ is straightforward and can be handled similarly. For notational simplicity, the reflection and permutation indeterminacy is temporarily ignored by taking $\kappa_{i,j}=1$ and $z_i=i$, which will be revisited later. We take the event $ \Xi_n(\tilde C)$ as given. Moreover, by the convergence rate in Theorem \ref{thm: iterative} and condition \eqref{strong factor condition}, we have $w_r^{-1}w_1\bar\theta_j^{(\textit{v}_{\max})}\le \tilde C^{-2}$ with probability approaching one. We also condition on this event throughout the proof.	 By the definition of $\hat{\bvartheta}_{i,j}$, it holds that
    \[
\hb^\T(\hat\ab_{i,j}-\hat{\bvartheta}_{i,j})=\frac{\hb^\T\tilde\bSigma^{(\textit{v}_{\max},j)}_{\tilde\yb_{i,j},\tilde\xi_{i}}(1)}{\hat\ab_{i,j}^{\T} \tilde\bSigma^{(\textit{v}_{\max},j)}_{\tilde\yb_{i,j},\tilde\xi_{i}}(1)}\,.
    \]
     We first derive the asymptotic representation of $\hat\ab_{i,j}^{\T} \tilde\bSigma^{(\textit{v}_{\max},j)}_{\tilde\yb_{i,j},\tilde\xi_{i}}(1)$. Following the decomposition in (\ref{iterative sigma decompose}), write
     \[
\hat\ab_{i,j}^{\T} \tilde\bSigma^{(\textit{v}_{\max},j)}_{\tilde\yb_{i,j},\tilde\xi_{i}}(1)=\sum_{\ell=1}^4 \hat\ab_{i,j}^\T  \bzeta_\ell^{(i,j,\textit{v}_{\max})}\,.  
\]
For $\bzeta_1^{(i,j,\textit{v}_{\max})}$, notice that the constant $\tilde C$ in Lemma \ref{lem: zeta_1 lower bound} can be sufficiently large. Thus,
\[
\ab_{i,j}^\T \bzeta_1^{(i,j,\textit{v}_{\max})}=w_i\sigma_{f_i,\xi_i}\{1+o_{\rm p}(1)\}\,.
\]
Since $\bzeta_1^{(i,j,\textit{v}_{\max})}$ is  proportional to $\ab_{i,j}$, we have $\bzeta_1^{(i,j,\textit{v}_{\max})}=(\ab_{i,j}^\T \bzeta_1^{(i,j,\textit{v}_{\max})})\ab_{i,j}$ and
		\begin{equation}\label{aij zeta1}
		\begin{split}
			|\hat\ab_{i,j}^{\T}\bzeta_1^{(i,j,\textit{v}_{\max})}-\ab_{i,j}^{\T}\bzeta_1^{(i,j,\textit{v}_{\max})}| &= |\ab_{i,j}^{\T}\bzeta_1^{(i,j,\textit{v}_{\max})}   \ab_{i,j}^{\T}(\hat\ab_{i,j}-\ab_{i,j})|\\
            &= |w_i\sigma_{f_i,\xi_i}| \cdot O_{\rm p}(|\hat\ab_{i,j}-\ab_{i,j}|_2^2)\\
            &=o_{\rm p}( w_i\sigma_{f_i,\xi_i} )\,.
		\end{split}
		\end{equation}
        Therefore, $\hat\ab_{i,j}^\T \bzeta_1^{(i,j,\textit{v}_{\max})}=w_i\sigma_{f_i,\xi_i}\{1+o_{\rm p}(1)\}$. 
        
	For $\bzeta_2^{(i,j,\textit{v}_{\max})}$ and $\bzeta_4^{(i,j,\textit{v}_{\max})}$, following the proof of Lemma \ref{lem: zeta_2 and zeta_4}, we have 
\begin{equation}\label{zeta 2 and zeta 4 consistency}
    \begin{split}	&|\bzeta_2^{(i,j,\textit{v}_{\max})}|_2+|\bzeta_4^{(i,j,\textit{v}_{\max})}|_2\\
&~~~~~~= O_{\rm p}\bigg\{\frac{\gamma_{\max}+ \Delta_{0,n}}{w_r}+ \frac{1}{n w_r^3} +\bigg(w_1\bar\theta_j^{(\textit{v}_{\max})}+\frac{w_1}{w_r}L_n\bigg)\bar\theta_j^{(\textit{v}_{\max})}\bigg\}\,\\
&~~~~~~= O_{\rm p}\bigg(\frac{\gamma_{\max}+ n^{-1/2}}{w_r}+\frac{w_1}{w_r}L_n \max_{j\in[m]}\Phi_{n,j}  \bigg)\,,
\end{split}
\end{equation}
        where the second equality follows from condition \eqref{strong factor condition}, Lemma \ref{lem: Delta}, and the convergence rate of $\bar\theta_j^{(\textit{v}_{\max})}$ in Theorem \ref{thm: iterative}. Hence, we have $|\hat\ab_{i,j}^\T(\bzeta_2^{(i,j,\textit{v}_{\max})}+\bzeta_4^{(i,j,\textit{v}_{\max})})|=o_{\rm p}(w_i)$. 

		For $\bzeta_3^{(i,j,\textit{v}_{\max})}$, write
\begin{equation*}
	\hat\ab_{i,j}^{\T}\bzeta_3^{(i,j,\textit{v}_{\max})}= \ab_{i,j}^{\T}\bzeta_3^{(i,j,\textit{v}_{\max})} + (\hat\ab_{i,j}-\ab_{i,j})^{\T}\bzeta_3^{(i,j,\textit{v}_{\max})}\,.
		\end{equation*}
        On the one hand, following the proof of Lemma \ref{bound for Lambda and phi}, we can conclude that
        \[
                        \frac{\Lambda_{\xi,i,i}^{1/2}}{\tilde{\sigma}^{(\textit{v}_{\max},j)}_{\check f,i} }=1+o_{\rm p}(1) ~~\textup{and}~~
                \tilde\varphi_{i,\ell}^{(\textit{v}_{\max},j)}=\bar\varphi_{i,\ell}+o_{\rm p}(1)\,.
        \]
        Then, by the definition of $\bzeta_3^{(i,j,\textit{v}_{\max})}$ above (\ref{iterative sigma decompose}), we have
		\begin{equation}\label{aij zeta3}
		\begin{split}
		 \ab_{i,j}^{\T}\bzeta_3^{(i,j,\textit{v}_{\max})}&=\{1+o_{\rm p}(1)\}\ab_{i,j}^{\T}\bigg[\tilde\bSigma_{\eb_{i,j},\xi_i}(1)+\sum_{\ell\ne i}\{\bar\varphi_{i,\ell}+o_{\rm p}(1)\}\tilde\bSigma_{\eb_{i,j},\xi_\ell}(0)\bigg] =  O_{\rm p}(n^{-1/2})\,,
		\end{split}
		\end{equation}
        where we use the fact that $\hb^\T \tilde\bSigma_{\eb_{i,j},\xi_\ell}(k)=O_{\rm p}(n^{-1/2})$ for any deterministic unit vector $\hb$, $i,\ell\in[r]$, and $k\in\{0,1\}$.
		On the other hand, by Cauchy–Schwarz inequality,  we have
		\[
\begin{split}
    		|(\hat\ab_{i,j}-\ab_{i,j})^{\T}\bzeta_3^{(i,j,\textit{v}_{\max})}|& \lesssim \bar\theta^{(\textit{v}_{\max})}_j   \bigg\{|\tilde\bSigma_{\eb_{i,j},\xi_i}(1)|_2^2+\sum_{\ell\ne i}|\tilde\bSigma_{\eb_{i,j},\xi_\ell}(0)|_2^2\bigg\}^{1/2} \\
            &=  O_{\rm p}\bigg(\max_{j\in[m]}\Phi_{n,j}+\frac{\gamma_{\max}}{w_r^2}\bigg) \cdot \sqrt{\frac{d_j}{n}} \\
            &= o_{\rm p}(w_i)\,,
\end{split}		\]
where the last equality holds by condition \eqref{strong factor condition}.
Combining the above bounds, we obtain $|\hat\ab_{i,j}^{\T}\bzeta_3^{(i,j,\textit{v}_{\max})}|_2= o_{\rm p}(w_i)$. It follows that 
\[
\hat\ab_{i,j}^{\T} \tilde\bSigma^{(\textit{v}_{\max},j)}_{\tilde\yb_{i,j},\tilde\xi_{i}}(1)=w_i\sigma_{f_i,\xi_i}\{1+o_{\rm p}(1)\}\,.
\]
For notational convenience, define 
        \[
        \tilde \Phi_{n}=\frac{\gamma_{\max}+ n^{-1/2}}{w_r}+\frac{w_1}{w_r}L_n\max_{j\in[m]}\Phi_{n,j}\,.
        \]
        Now, by (\ref{iterative sigma decompose}) and \eqref{zeta 2 and zeta 4 consistency}, we have
		\[
\begin{split}
		\hb^\T(\hat\ab_{i,j}-\hat{\bvartheta}_{i,j})
        &=\frac{\hb^{\T}(\bzeta_1^{(i,j,\textit{v}_{\max})}+\bzeta_3^{(i,j,\textit{v}_{\max})})}{\hat\ab_{i,j}^{\T} \tilde\bSigma^{(\textit{v}_{\max},j)}_{\tilde\yb_{i,j},\tilde\xi_{i}}(1)}+\frac{\hb^{\T}(\bzeta_2^{(i,j,\textit{v}_{\max})}+\bzeta_4^{(i,j,\textit{v}_{\max})})}{\hat\ab_{i,j}^{\T} \tilde\bSigma^{(\textit{v}_{\max},j)}_{\tilde\yb_{i,j},\tilde\xi_{i}}(1)} \\
        &= \frac{\hb^{\T}(\bzeta_1^{(i,j,\textit{v}_{\max})}+\bzeta_3^{(i,j,\textit{v}_{\max})})}{\hat\ab_{i,j}^{\T} \tilde\bSigma^{(\textit{v}_{\max},j)}_{\tilde\yb_{i,j},\tilde\xi_{i}}(1)}+O_{\rm p}(w_i^{-1}\tilde \Phi_{n})\,.
\end{split}
		\]
		On the one hand, by \eqref{aij zeta1}, condition \eqref{zeta 2 and zeta 4 consistency}, and \eqref{negligible error 2}, we can conclude that
\begin{equation}\label{eq: h transpose zeta 1}
    \begin{split}
	\frac{\hb^{\T}\bzeta_1^{(i,j,\textit{v}_{\max})}}{\hat\ab_{i,j}^{\T} \tilde\bSigma^{(\textit{v}_{\max},j)}_{\tilde\yb_{i,j},\tilde\xi_{i}}(1)}&=\hb^{\T}\ab_{i,j}-\frac{\hb^{\T}\ab_{i,j}\{\hat\ab_{i,j}^{\T} \tilde\bSigma^{(\textit{v}_{\max},j)}_{\tilde\yb_{i,j},\tilde\xi_{i}}(1)-\ab_{i,j}^{\T}\bzeta_1^{(i,j,\textit{v}_{\max})}\}}{\hat\ab_{i,j}^{\T} \tilde\bSigma^{(\textit{v}_{\max},j)}_{\tilde\yb_{i,j},\tilde\xi_{i}}(1)}\\
	&= \hb^{\T}\ab_{i,j}-\frac{1+o_{\rm p}(1)}{w_i\sigma_{f_i,\xi_i}}\hb^{\T}\ab_{i,j}\ab_{i,j}^{\T}\bzeta_3^{(i,j,\textit{v}_{\max})} \\
    &\quad +O_{\rm p}(w_i^{-1}\tilde\Phi_{n})+o_{\rm p}(w_i^{-1}n^{-1/2})\,.
\end{split}
\end{equation}
		On the other hand, following the arguments used to derive \eqref{aij zeta3}, we obtain
\begin{equation}\label{eq: h transpose zeta3}
    \begin{split}
	\frac{\hb^{\T}\bzeta_3^{(i,j,\textit{v}_{\max})}}{\hat\ab_{i,j}^{\T} \tilde\bSigma^{(\textit{v}_{\max},j)}_{\tilde\yb_{i,j},\tilde\xi_{i}}(1)}&=\frac{1+o_{\rm p}(1)}{w_i\sigma_{f_i,\xi_i}}\hb^{\T}\bigg\{\tilde\bSigma_{\eb_{i,j},\xi_i}(1)+\sum_{\ell\ne i}\bar\varphi_{i,\ell}\tilde\bSigma_{\eb_{i,j},\xi_\ell}(0)\bigg\}+o_{\rm p}(w_i^{-1}n^{-1/2})\,.
\end{split}
\end{equation}
Consequently, we have
\[
\begin{split}
	\hb^\T(\hat\ab_{i,j}-\ab_{i,j}-\hat{\bvartheta}_{i,j})
	&=\frac{1}{w_i\sigma_{f_i,\xi_i}}\hb^{\T}(\Ib_{d_j}-\ab_{i,j}\ab_{i,j}^{\T})\bigg\{\tilde\bSigma_{\eb_{i,j},\xi_i}(1)+\sum_{\ell\ne i}\bar\varphi_{i,\ell}\tilde\bSigma_{\eb_{i,j},\xi_\ell}(0)\bigg\}\\
	&\quad+O_{\rm p}(w_i^{-1}\tilde\Phi_{n})+o_{\rm p}(w_i^{-1}n^{-1/2})\,,
\end{split}
\]
which establishes the desired limiting representation.

The asymptotic distribution follows from Slutsky's theorem and the central
limit theorem for $\alpha$-mixing triangular arrays from Theorem 1 in
\citeS{ekstrom2014general-app}. Recall $\xi_{t,i}^{\textup{sp}}
    =
    \xi_{t,i}^{\textup{s}}
    +
    \sum_{\ell\ne i}\bar\varphi_{i,\ell}\xi_{t+1,\ell}^{\textup{s}}$. With $\bbeta_{i,j}(\hb) =  \bbb_{i,j}^{\MP}  \otimes \{\hb^{\T}(\Ib_{d_j}-\ab_{i,j}\ab_{i,j}^{\T})\}^{\T}$ specified in \eqref{def: bbeta ij}, by \eqref{sigma ei xi} and the
definition of $\xi_{t,i}^{\textup{sp}}$, it follows that
\[
\begin{split}
    &\hb^{\T}(\Ib_{d_j}-\ab_{i,j}\ab_{i,j}^{\T})
    \bigg\{
    \tilde\bSigma_{\eb_{i,j},\xi_i}(1)
    +
    \sum_{\ell\ne i}
    \bar\varphi_{i,\ell}
    \tilde\bSigma_{\eb_{i,j},\xi_\ell}(0)
    \bigg\}  \\
    &\quad=
    \frac{1}{n-1}\sum_{t=2}^n
    \bbeta_{i,j}(\hb)^{\T}\textup{vec}(\Eb_{t,j})\xi_{t-1,i}^{\textup{sp}}
    -
    \frac{1}{n-1}\sum_{t=2}^n
    \mathbb{E}\{
    \bbeta_{i,j}(\hb)^{\T}\textup{vec}(\Eb_{t,j})\xi_{t-1,i}^{\textup{sp}}
    \}
    +
    o_{\rm p}(n^{-1/2})\,.
\end{split}
\]
Under condition \eqref{negligible error 2}, the remainder term
$O_{\rm p}(\tilde\Phi_{n})$ in the limiting representation is
$o_{\rm p}(n^{-1/2})$. Therefore,
\[
\begin{split}
    &(w_i\sigma_{f_i,\xi_i})
    \hb^\T(\hat\ab_{i,j}-\ab_{i,j}-\hat{\bvartheta}_{i,j})  \\
    &\quad=
    \frac{1}{n-1}\sum_{t=2}^n
     [
    \bbeta_{i,j}(\hb)^{\T}\textup{vec}(\Eb_{t,j})\xi_{t-1,i}^{\textup{sp}}
    -
    \mathbb{E}\{
    \bbeta_{i,j}(\hb)^{\T}\textup{vec}(\Eb_{t,j})\xi_{t-1,i}^{\textup{sp}}
    \}
     ]
    +
    o_{\rm p}(n^{-1/2})\,.
\end{split}
\]
We remark that, because the dimensions $d_1,\ldots,d_m$ may grow with the sample size $n$, the leading term on the right-hand side should be regarded as a sample mean from a triangular array.
Therefore, an application of Theorem 1 in \citeS{ekstrom2014general-app} reduces the proof to verifying that there exist constants $\nu>0$ and $C>0$ such that 
			\[
			\mathbb{E}\big[ |\bbeta_{i,j}(\hb)^{\T} \textup{vec}(\Eb_{t,j})\xi_{t-1,i}^{\textup{sp}} |^{2+\nu}\big]<C ~~\textup{and}~~\sum_{\ell=0}^\infty(\ell+1)^2\alpha_n^{\nu/(4+\nu)}(\ell)<C \,,
			\]
			 where  $\alpha_n(\ell)=\alpha(|\ell-1|_+)$ is the $\alpha$-mixing coefficient for  $\{\bbeta_{i,j}(\hb)^{\T} \textup{vec}(\Eb_{t,j})\xi_{t-1,i}^{\textup{sp}}\}_{t=2}^{n}$. On the one hand, by the tail probabilities in Assumptions \ref{tail} and \ref{cross},  the moment condition holds for any constant $\nu>0$. On the other hand, the required condition on the $\alpha$-mixing coefficients follows  from Assumption \ref{mixing}. The result then follows from Theorem 1 in \citeS{ekstrom2014general-app}.

We now address the identification issue arising from the reflection and permutation indeterminacy. On the one hand, if $z_i \neq i$, we can simply permute the estimators to match $z_i = i$. 
On the other hand, for the reflection indeterminacy, notice that the quantities in \eqref{eq: h transpose zeta 1} and \eqref{eq: h transpose zeta3} involve estimators from two adjacent iterations. More specifically, $\hat\ab_{i,j}$ corresponds to the estimator obtained after the $\textit{v}_{\max}$-th iteration, while $\tilde\bSigma^{(\textit{v}_{\max},j)}_{\tilde\yb_{i,j},\tilde\xi_{i}}(1)$ involves the plug-in estimators from the $(\textit{v}_{\max}-1)$-th iteration. Hence, their reflection signs should be matched carefully. Without loss of generality, suppose that the reflection sign associated with the plug-in estimator from the $(\textit{v}_{\max}-1)$-th iteration is $\tilde\kappa_{i,j}$. We have
\[
\max_{i\in[r],j\in[m]}
\big|
\tilde\ab_{i,j}^{(\textit{v}_{\max}-1)}
-
\tilde\kappa_{i,j}\ab_{i,j}
\big|_2
=
O_{\rm p}\bigg(
\max_{j\in[m]}\Phi_{n,j}
+
\frac{\gamma_{\max}}{w_r^2}
\bigg)\,.
\]
By the similar arguments used to derive \eqref{iteration relationship reflection}, the reflection sign in the next iteration changes by $\textup{sgn}(\sigma_{f_i,\xi_i})$. Therefore, the reflection sign of $\hat\ab_{i,j}$ satisfies $ \kappa_{i,j} = \textup{sgn}(\sigma_{f_i,\xi_i}) \cdot \tilde\kappa_{i,j}$.  Equivalently,  
\[
\max_{i\in[r],j\in[m]}
\big|
\tilde\ab_{i,j}^{(\textit{v}_{\max}-1)}
-
\textup{sgn}(\sigma_{f_i,\xi_i}) \cdot \kappa_{i,j}\ab_{i,j}
\big|_2
=
O_{\rm p}\bigg(
\max_{j\in[m]}\Phi_{n,j}
+
\frac{\gamma_{\max}}{w_r^2}
\bigg)\,.
\]
This relation matches the reflection indeterminacy of the plug-in estimators in \eqref{eq: h transpose zeta 1} and \eqref{eq: h transpose zeta3} with that of $\hat\ab_{i,j}$. Then, adopting the notation in the model \eqref{eq: rewrite cp model}, similarly to \eqref{eq: h transpose zeta 1} and \eqref{eq: h transpose zeta3}, we have
\[
    \begin{split}
	\frac{\hb^{\T}\bzeta_1^{(i,j,\textit{v}_{\max})}}{\hat\ab_{i,j}^{\T} \tilde\bSigma^{(\textit{v}_{\max},j)}_{\tilde\yb_{i,j},\tilde\xi_{i}}(1)}
	&= \textup{sgn}(\sigma_{f_i,\xi_i})\cdot\hb^{\T}\ddot\ab_{i,j} \\
    & \quad -\frac{1+o_{\rm p}(1)}{\textup{sgn}(\sigma_{f_i,\xi_i})\cdot w_i\sigma_{ f_i,\xi_i}}\hb^{\T}\ddot\ab_{i,j}\ddot\ab_{i,j}^{\T}\bigg\{\tilde\bSigma_{\ddot\eb_{i,j},\ddot\xi_i}(1)+\sum_{\ell\ne i}\ddot{\bar\varphi}_{i,\ell}\tilde\bSigma_{\ddot\eb_{i,j},\ddot\xi_\ell}(0)\bigg\}\,\\
    &\quad+O_{\rm p}(w_i^{-1}\tilde\Phi_{n})+o_{\rm p}(w_i^{-1}n^{-1/2})\,,\\
	\frac{\hb^{\T}\bzeta_3^{(i,j,\textit{v}_{\max})}}{\hat\ab_{i,j}^{\T} \tilde\bSigma^{(\textit{v}_{\max},j)}_{\tilde\yb_{i,j},\tilde\xi_{i}}(1)}&=\frac{1+o_{\rm p}(1)}{\textup{sgn}(\sigma_{f_i,\xi_i})\cdot w_i\sigma_{f_i,\xi_i}}\hb^{\T}\bigg\{\tilde\bSigma_{\ddot\eb_{i,j},\ddot\xi_i}(1)+\sum_{\ell\ne i}\ddot{\bar\varphi}_{i,\ell}\tilde\bSigma_{\ddot\eb_{i,j},\ddot\xi_\ell}(0)\bigg\}+o_{\rm p}(w_i^{-1}n^{-1/2})\,,
\end{split}
\]
where  $\tilde\bSigma_{\ddot{\eb}_{i,j},\ddot{\xi}_i}(1)$, $\ddot{\bar{\varphi}}_{i,\ell}$, and $\tilde\bSigma_{\ddot{\eb}_{i,j},\ddot{\xi}_\ell}(0)$ are defined in the same manner as $\tilde\bSigma_{\eb_{i,j},\xi_i}(1)$, $\bar\varphi_{i,\ell}$, and $\tilde\bSigma_{\eb_{i,j},\xi_\ell}(0)$, respectively, but with $(f_{t,i},\ab_{i,j},\eb_{t,i,j})$ replaced by $(\ddot{f}_{t,i},\ddot{\ab}_{i,j},\ddot{\eb}_{t,i,j})$. By their definitions, we have
\[
\begin{split}
     & \tilde\bSigma_{\ddot{\eb}_{i,j},\ddot{\xi}_i}(1) = \tilde\kappa_{i,j} \tilde\bSigma_{\eb_{i,j},\xi_i}(1)\,,~~ \ddot{\bar{\varphi}}_{i,\ell} = \bigg(\prod_{j=1}^m \tilde\kappa_{i,j}\bigg)\bigg(\prod_{j=1}^m \tilde\kappa_{\ell,j}\bigg)\bar\varphi_{i,\ell}\,,\\ 
   &~~\textup{and}~~\tilde\bSigma_{\ddot{\eb}_{i,j},\ddot{\xi}_\ell}(0)=\bigg(\prod_{j' \neq j}^m \tilde\kappa_{i,j'}\bigg)\bigg(\prod_{j=1}^m \tilde\kappa_{\ell,j}\bigg)\tilde\bSigma_{\eb_{i,j},\xi_\ell}(0)\, .
\end{split}
\]
Combining with the fact that $\textup{sgn}(\sigma_{f_i,\xi_i})\cdot \ddot\ab_{i,j}=\textup{sgn}(\sigma_{f_i,\xi_i})\cdot \tilde\kappa_{i,j}\ab_{i,j}=\kappa_{i,j}\ab_{i,j}$, we can conclude that
\[
\begin{split}
	\hb^\T(\hat\ab_{i,j}-\kappa_{i,j}{\ab}_{i,j}-\hat{\bvartheta}_{i,j})
	&=\frac{\kappa_{i,j}}{w_i{\sigma}_{{f}_i,{\xi}_i}}\hb^{\T}(\Ib_{d_j}-{\ab}_{i,j}{\ab}_{i,j}^{\T})\bigg\{\tilde\bSigma_{{\eb}_{i,j},{\xi}_i}(1)+\sum_{\ell\ne i}{\bar{\varphi}}_{i,\ell}\tilde\bSigma_{{\eb}_{i,j},{\xi}_\ell}(0)\bigg\}\\
	&\quad+O_{\rm p}(w_i^{-1}\tilde\Phi_{n})+o_{\rm p}(w_i^{-1}n^{-1/2})\,,
\end{split}
\]
which completes the proof of Theorem \ref{thm: debias iterative}.
$\hfill\Box$

\subsection{Proof of Theorem  \ref{thm: estimation of iteration variance}	 }\label{sec:thm: estimation of iteration variance}

Without loss of generality and for notational simplicity, we ignore the reflection and permutation indeterminacy, and take $z_i=i$ and $\kappa_{i,j}=1$, where $z_i$ and $\kappa_{i,j}$ are specified in Theorem \ref{thm: iterative}. Otherwise, we refer to the same technique used in the proof of Theorem \ref{thm: iterative} to handle the reflection and permutation indeterminacy.

Firstly, we aim to prove that 
\begin{equation}\label{wi and sigma}
    \frac{\hat w_{i,j}}{w_i|\sigma_{f_i,\xi_i}|}=1+o_{\rm p}(1)\,.
\end{equation}
Recall the definition of $\bar\theta_j^{(\textit{v})}$ in \eqref{theta jv}. By Theorem \ref{thm: iterative} and condition \eqref{strong factor condition}, we can conclude that  $w_r^{-1}w_1\bar\theta_j^{(\textit{v}_{\max})}=o_{\rm p}(1)$. Therefore, the event $w_r^{-1}w_1\bar\theta_j^{(\textit{v}_{\max})}\le \tilde C^{-2}$ holds  with probability approaching one for sufficiently large constant $\tilde C>0$, which is the condition required by Lemma \ref{lem: zeta_1 lower bound} and Lemma \ref{lem: truncated projected covariance}. Then, following the proof of these two lemmas, and by Lemma \ref{lem: Delta}, we can conclude that
\[
\bigg|\frac{1}{w_i\sigma_{f_i,\xi_i}}\bzeta_1^{(i,j,\textit{v}_{\max})}-\ab_{i,j}\bigg|_2=o_{\rm p}(1)~~\textup{and}~~ \frac{1}{w_i}|T_{\delta_{2,j}}\{\tilde\bSigma^{(\textit{v}_{\max},j)}_{\tilde\yb_{i,j},\tilde\xi_{i}}(1)\}-\bzeta_1^{(i,j,\textit{v}_{\max})}|_2=o_{\rm p}(1)\,.
\]
Therefore, by Triangle inequality, it holds that
\[
\begin{split}
    \bigg|\frac{\hat w_{i,j}}{w_i|\sigma_{f_i,\xi_i}|}-1\bigg| 
    & \le \bigg|\frac{\hat\ab_{i,j}^{\T}[T_{\delta_{2,j}}\{\tilde\bSigma^{(\textit{v}_{\max},j)}_{\tilde\yb_{i,j},\tilde\xi_{i}}(1)\}-\bzeta_1^{(i,j,\textit{v}_{\max})}]}{w_i|\sigma_{f_i,\xi_i}|}\bigg|\\
    &~~~+\bigg|\frac{(\hat\ab_{i,j} - \ab_{i,j})^{\T}\bzeta_1^{(i,j,\textit{v}_{\max})}}{w_i|\sigma_{f_i,\xi_i}|} \bigg|+ \bigg|\frac{\ab_{i,j}^{\T}\bzeta_1^{(i,j,\textit{v}_{\max})} - w_i\sigma_{f_i,\xi_i} }{w_i|\sigma_{f_i,\xi_i}|} \bigg| \\
    &=o_{\rm p}(1)\,,
\end{split}
\]
which implies \eqref{wi and sigma}.

Next, we show $\hat{\tau}_{i,j}(\hb)$ is a consistent estimator for $|\sigma_{f_i,\xi_i}|\bar\tau_{i,j}(\hb)$. 
For a deterministic vector $\hb \in \mathbb{R}^{d_j}$, we have
\begin{align*}
    &\hb^{\T} \bigg\{\tilde\bSigma_{\eb_{i,j},\xi_{i}}(1)+\sum_{\ell \neq i}\bar\varphi_{i,\ell}\tilde\bSigma_{\eb_{i,j},\xi_\ell}(0)\bigg\} \\
    &~~~~~~~~~~~~=   \frac{1}{n-1}  \sum_{t=2}^n\hb^{\T} (\eb_{t,i,j}-\bar \eb_{i,j})\bigg\{\xi_{t-1,i}^{\textup{s}} + \sum_{\ell \neq i}\bar\varphi_{i,\ell} \xi_{t,\ell}^{\textup{s}}\bigg\} \\
&~~~~~~ ~~~~~~ \quad - \sum_{\ell \neq i}\bar\varphi_{i,\ell} \frac{1}{n-1}  \sum_{t=2}^n \hb^{\T} \mathbb{E}\{(\eb_{t,i,j}-\bar \eb_{i,j})\xi_{t,\ell}^{\textup{s}}\} \,.
\end{align*}
  Recall the definition of $f_{t,i}^{\textup{sp}}$ specified in \eqref{f ti sp}. Notice that 
\begin{equation}\label{xi ti and f ti}
    \begin{split}
    \xi_{t,i}^{\textup{s}} &= \underbrace{w_i \bigg\{\frac{1}{n}\sum_{s=1}^n\mathbb{E}\{(\xi_{s,i}-\bar \xi_i)^2\} \bigg\}^{-1/2}  \bigg\{\frac{1}{n}\sum_{s=1}^n\mathbb{E}\{(f_{s,i}-\bar f_i)^2\} \bigg\}^{1/2}}_{ = 1 + O(w_i^{-1})}  f^{\textup{s}}_{t,i} \\
    & \quad + \underbrace{\bigg\{\frac{1}{n}\sum_{s=1}^n\mathbb{E}\{(\xi_{s,i}-\bar \xi_i)^2\}\bigg\}^{-1/2}}_{ = O(w_i^{-1})}(u_{t,i} - \bar{u}_{i})\,,
    \end{split}
\end{equation}
where $u_{t,i} = (\ab_{i,m}^{\MP}\otimes \cdots\otimes\ab_{i,1}^{\MP})^\T \text{vec}(\mathcal{E}_t)$ and $\bar{u}_i = n^{-1}\sum_{t=1}^n u_{t,i}$.  Under the conditions of Theorem \ref{thm: estimation of iteration variance}, by Lemma \ref{lemma: E varphi}, we can conclude that
\begin{align*}
 &\frac{1}{n-1} \sum_{t=2}^n\hb^{\T} (\eb_{t,i,j}-\bar \eb_{i,j})\bigg(\xi_{t-1,i}^{\textup{s}} + \sum_{\ell \neq i}\bar\varphi_{i,\ell} \xi_{t,\ell}^{\textup{s}}\bigg) \\
 & ~~~~~~ =   \frac{1}{n-1} \sum_{t=2}^n\hb^{\T}(\eb_{t,i,j}-\bar \eb_{i,j}) \bigg(f_{t-1,i}^{\textup{s}} + \sum_{\ell \neq i}\bar\varphi_{i,\ell} f_{t,\ell}^{\textup{s}}\bigg)  + O_{\rm p}(n^{-1/2} w_r^{-1}) +  O_{\rm p}(\gamma_{\max} w_r^{-1})  \\
  & ~~~~~~ = \frac{1}{n-1} \sum_{t=2}^n\hb^{\T}(\eb_{t,i,j}-\bar \eb_{i,j}) f^{\textup{sp}}_{t-1,i} + O_{\rm p}(n^{-1/2} w_r^{-1}) + O_{\rm p}(\gamma_{\max} w_r^{-1})\\
  & ~~~~~~ = \frac{1}{n-1}  \sum_{t=2}^n \hb^{\T}(\eb_{t,i,j}-\bar \eb_{i,j}) f^{\textup{sp}}_{t-1,i} + o_{\rm p}(n^{-1/2})\,,
\end{align*}
 where the last line holds by condition \eqref{negligible error 2}. On the other hand, by Assumption \ref{error}, Lemma \ref{lemma: E varphi}, and condition \eqref{negligible error 2}, we have
\[
\sum_{\ell \neq i}\bar\varphi_{i,\ell} \frac{1}{n-1}  \sum_{t=2}^n \hb^{\T} \mathbb{E}\{(\eb_{t,i,j}-\bar \eb_{i,j})\xi_{t,\ell}^{\textup{s}}\}=O(n^{-1} w_r^{-3}) + O(\gamma_{\max} w_r^{-1})=o(n^{-1/2})\,.
\] 
Recall $\bbeta_{i,j}(\hb) =  \bbb_{i,j}^{\MP}  \otimes \{\hb^{\T}(\Ib_{d_j}-\ab_{i,j}\ab_{i,j}^{\T})\}^{\T}$ specified in \eqref{def: bbeta ij}. Hence, we have
\begin{equation}\label{simplified expansion for hat a ij}
       	 w_i\hb^{\T}(\hat \ab_{i,j} - \ab_{i,j} - \hat{\bvartheta}_{i,j}) =  \frac{1}{\sigma_{f_i,\xi_i}} \frac{1}{n-1}\sum_{t=2}^n \bbeta_{i,j}(\hb)^{\T} \textup{vec}(\Eb_{t,j}) f_{t-1,i}^{\textup{sp}}  + o_{\rm p}(n^{-1/2})\,.
\end{equation}
 Then, to verify Theorem \ref{thm: estimation of iteration variance}, it remains to show that
\[
\begin{split}
        &\bigg|\frac{1}{n-1}\sum_{t=2}^n(\tilde\xi_{t-1,i}^{(\textit{v}_{\max},j)})^2\{\hat{\bbeta}_{i,j}(\hb)^{\T} \textup{vec}(\Yb_{t,j})\}^2 \\
    &~~~~~~-\frac{1}{n-1}\sum_{t=2}^n\mathbb{E} [(f_{t-1,i}^{\textup{sp}})^2\{\bbeta_{i,j}(\hb)^{\T} \textup{vec}(\Eb_{t,j})\}^2 ]\bigg|=o_{\rm p}(1)\,.
\end{split}
\]
By Triangle inequality, it suffices to show the following \eqref{eqarray iterative 1}--\eqref{eqarray iterative 4}:
\begin{align}
    &\frac{1}{n-1}\sum_{t=2}^n\Big((f_{t-1,i}^{\textup{sp}})^2\{\bbeta_{i,j}(\hb)^{\T} \textup{vec}(\Eb_{t,j})\}^2-\mathbb{E}[(f_{t-1,i}^{\textup{sp}})^2\{\bbeta_{i,j}(\hb)^{\T} \textup{vec}(\Eb_{t,j})\}^2]\Big)=o_{\rm p}(1)\,,\label{eqarray iterative 1}\\
&\frac{1}{n-1}\sum_{t=2}^n(f_{t-1,i}^{\textup{sp}})^2 [\{\bbeta_{i,j}(\hb)^{\T} \textup{vec}(\Eb_{t,j})\}^2-\{\hat\bbeta_{i,j}(\hb)^{\T} \textup{vec}(\Eb_{t,j})\}^2 ]=o_{\rm p}(1)\,,\label{eqarray iterative 2}\\
&\frac{1}{n-1}\sum_{t=2}^n \{(f_{t-1,i}^{\textup{sp}})^2-(\tilde\xi_{t-1,i}^{(\textit{v}_{\max},j)})^2 \} \{\hat\bbeta_{i,j}(\hb)^{\T} \textup{vec}(\Eb_{t,j})\}^2=o_{\rm p}(1)\,,\label{eqarray iterative 3}\\
&\frac{1}{n-1}\sum_{t=2}^n(\tilde\xi_{t-1,i}^{(\textit{v}_{\max},j)})^2 [\{\hat\bbeta_{i,j}(\hb)^{\T} \textup{vec}(\Yb_{t,j})\}^2-\{\hat\bbeta_{i,j}(\hb)^{\T} \textup{vec}(\Eb_{t,j})\}^2 ]=o_{\rm p}(1)\,.\label{eqarray iterative 4}
\end{align}
In the following, we prove \eqref{eqarray iterative 1}--\eqref{eqarray iterative 4} one by one.

\underline{{\it Proof of \eqref{eqarray iterative 1}.}} Similarly to Lemma \ref{lem: gkixi}, under Assumption \ref{tail}, Assumption \ref{mixing}, and Assumption \ref{cross}, \eqref{eqarray iterative 1} is a direct concentration result of some $\alpha$-mixing process with exponential tail, and we omit the details.

\underline{{\it Proof of \eqref{eqarray iterative 2}.}}  
Under Assumptions \ref{tail}, \ref{mixing}, \ref{cross}, and the additional
independence condition in Theorem \ref{thm: estimation of iteration variance}, by 
the similar arguments as in the proof of Lemma \ref{lem: Omega n} for the event $\Xi_{9,n}(\tilde C)$, we have
\begin{equation}\label{eq:weightedversion-xi9}
    \begin{split}
&\max_{\substack{|\tilde\bbeta_{j}|_2=1 = |\bbeta_{j}|_2,\\
                    \tilde\bbeta_{j},\bbeta_{j} \in\mathbb{R}^{d_{j}},\forall j\in[m]}}
\bigg|
\frac{1}{n-1}\sum_{t=2}^n
(f_{t-1,i}^{\textup{sp}})^2
(\otimes^{j=1}_m\tilde\bbeta_{j})^{\T}
[
\textup{vec}(\mathcal{E}_t)\textup{vec}(\mathcal{E}_t)^\T\\
&~~~~~~~~~~~~~~~~~~~~~~~~~~~~ -
\mathbb{E}\{
\textup{vec}(\mathcal{E}_t)\textup{vec}(\mathcal{E}_t)^\T
\}
]
(\otimes^{j=1}_m\bbeta_{j})
\bigg|   =
O_{\rm p}(\tilde L_n)\,.
\end{split}
\end{equation}
where $\otimes_{m}^{j =1} \bbeta_{j}$ is shorthand for $\bbeta_m \otimes \cdots \otimes  \bbeta_{1}$, and
\[
\tilde L_n=\bigg(\frac{\sum_{j=1}^m d_j \log d_j}{n}\bigg)^{1/2}+\frac{(\sum_{j=1}^m d_j)^{1/\tilde c_1}}{n}
\]
with $\tilde c_1^{-1}=1+4c_1^{-1}+c_2^{-1}$.

Recall that $\bbeta_{i,j}(\hb)
    =
    \bbb_{i,j}^{\MP}\otimes
    \{\hb^\T(\Ib_{d_j}-\ab_{i,j}\ab_{i,j}^{\T})\}^{\T}$. Write $(\Bb_j^\T \Bb_j)^{-1} = (\varpi_{p,q})_{r\times r}$. We have $\bbb_{i,j}^{\MP} = \sum_{\ell = 1}^r \varpi_{\ell,i}(\otimes_{m}^{j^\prime \neq j} \ab_{\ell,j^\prime})$, where $\otimes_{m}^{j^\prime \neq j} \bbeta_{j^\prime}$ is shorthand for
$\bbeta_{m} \otimes \cdots \otimes \bbeta_{j+1} \otimes \bbeta_{j-1} \otimes \cdots \otimes \bbeta_{1}$. Therefore, we can write $\bbeta_{i,j}(\hb) = \sum_{\ell = 1}^r \varpi_{\ell,i} \ub_{\ell,i,j}(\hb)$, where $\ub_{\ell,i,j}(\hb) = (\otimes_{m}^{j^\prime \neq j} \ab_{\ell,j^\prime}) \otimes \{\hb^\T(\Ib_{d_j}-\ab_{i,j}\ab_{i,j}^{\T})\}^{\T} $. Similarly, we can write $\hat\bbeta_{i,j}(\hb) = \sum_{\ell = 1}^r \hat{\varpi}_{\ell,i} \hat{\ub}_{\ell,i,j}(\hb)$, where $\hat{\varpi}_{\ell,i}$ and $\hat{\ub}_{\ell,i,j}(\hb)$ are defined in the same manner as $\varpi_{\ell,i}$ and $\ub_{\ell,i,j}(\hb)$, respectively, but with replacing the true values by their associated plug-in estimators.  Notice that
    \[
    \hat\bbeta_{i,j}(\hb) - \bbeta_{i,j}(\hb) =  \sum_{\ell = 1}^r(\hat\varpi_{\ell,i} -\varpi_{\ell,i}) \hat\ub_{\ell,i,j}(\hb) +   \sum_{\ell = 1}^r \varpi_{\ell,i} \{\hat\ub_{\ell,i,j}(\hb) - \ub_{\ell,i,j}(\hb)\}\,,
    \]
and $\hat\ub_{\ell,i,j}(\hb) - \ub_{\ell,i,j}(\hb)$ can be written as the sum of $m$ Kronecker product vectors of the form
$(\otimes^{j' \neq j}_m \bbeta_{j'}) \otimes \bbeta_j$ with $\bbeta_{j'} \in \mathbb{R}^{d_{j'}}$ for ${j'} \in [m]$. Therefore, by the one-by-one replacement argument for the Kronecker products, $\hat\bbeta_{i,j}(\hb) - \bbeta_{i,j}(\hb)$ can be written as  a linear combination of finitely many Kronecker product vectors of the form $(\otimes^{j' \neq j}_m \bbeta_{j'}) \otimes \bbeta_j$ with $\bbeta_{j'} \in \mathbb{R}^{d_{j'}}$ for ${j'} \in [m]$. After normalization,  by Theorem \ref{thm: iterative}, we can take $|\bbeta_{j'}|_2 = 1$ for ${j'} \in [m]$, and the associated coefficients are uniformly bounded by
\[
    O_{\rm p}\bigg(
    \max_{j\in[m]}\Phi_{n,j}
    +
    \frac{\gamma_{\max}}{w_r^2}
    \bigg)\,.
\] 
Given any vectors $\bbeta_{j'} \in \mathbb{R}^{d_{j'}}$ for ${j'} \in [m]$, it holds that
\begin{equation}\label{eq: kronecker product equality}
    \mathcal{E}_t \times_{j' = 1}^m \bbeta_{j'}
    =
    (\bbeta_{m} \otimes \cdots \otimes \bbeta_1)^{\T} \textup{vec}(\mathcal{E}_t)
    =
    \{(\otimes_{m}^{j^\prime \neq j}\bbeta_{j^\prime}) \otimes \bbeta_j \}^{\T} \textup{vec}(\Eb_{t,j})\,.
\end{equation}
Let
\[
    \Sbb_{n,j}
    =
    \frac{1}{n-1}\sum_{t=2}^n
    (f_{t-1,i}^{\textup{sp}})^2
    [
    \textup{vec}(\Eb_{t,j})\textup{vec}(\Eb_{t,j})^\T
    -
    \mathbb{E}\{
    \textup{vec}(\Eb_{t,j})\textup{vec}(\Eb_{t,j})^\T
    \}
    ]\,.
\]
Together with \eqref{eq:weightedversion-xi9}, \eqref{eq: kronecker product equality} and the one-by-one replacement argument for the Kronecker product expansion of $\hat\bbeta_{i,j}(\hb) - \bbeta_{i,j}(\hb)$, we have
\[
\begin{split}
| \bbeta_{i,j}(\hb)^\T\Sbb_{n,j}\bbeta_{i,j}(\hb) -
\hat\bbeta_{i,j}&(\hb)^\T\Sbb_{n,j}\hat\bbeta_{i,j}(\hb)
| 
=
 O_{\rm p}\bigg(
    \max_{j\in[m]}\Phi_{n,j}
    +
    \frac{\gamma_{\max}}{w_r^2}
    \bigg) \cdot O_{\rm p}(\tilde L_n)\,.
\end{split}
\]
On the other hand, by Assumption \ref{cross}, $\|
    \mathbb{E}\{
    \textup{vec}(\Eb_{t,j})\textup{vec}(\Eb_{t,j})^\T
    \}
    \|_2 
    \le  C$ for some constant $C>0$ and any $t \in [n]$. Meanwhile, $ (n-1)^{-1}\sum_{t=2}^n(f_{t-1,i}^{\textup{sp}})^2=O_{\rm p}(1)$. Therefore,
\[
\begin{split}
&\bigg|
\frac{1}{n-1}\sum_{t=2}^n
(f_{t-1,i}^{\textup{sp}})^2  \bbeta_{i,j}(\hb)^\T[\mathbb{E}\{
    \textup{vec}(\Eb_{t,j})\textup{vec}(\Eb_{t,j})^\T
    \}]\bbeta_{i,j}(\hb)\\
    &\qquad\qquad-\frac{1}{n-1}\sum_{t=2}^n
(f_{t-1,i}^{\textup{sp}})^2  \hat\bbeta_{i,j}(\hb)^\T[\mathbb{E}\{
    \textup{vec}(\Eb_{t,j})\textup{vec}(\Eb_{t,j})^\T
    \}]\hat\bbeta_{i,j}(\hb)
\bigg|\\
&~~~~~~=  O_{\rm p}\bigg(
    \max_{j\in[m]}\Phi_{n,j}
    +
    \frac{\gamma_{\max}}{w_r^2}
    \bigg) \,.
\end{split}
\]
Then, by conditions \eqref{negligible error 2} and \eqref{condition for variance estimator}, we have
\begin{equation}\label{eqarray iterative 2 final}
    \begin{split}
    &\frac{1}{n-1}\sum_{t=2}^n(f_{t-1,i}^{\textup{sp}})^2 [\{\bbeta_{i,j}(\hb)^{\T} \textup{vec}(\Eb_{t,j})\}^2-\{\hat\bbeta_{i,j}(\hb)^{\T} \textup{vec}(\Eb_{t,j})\}^2 ]\, \\
&~~~~~~= O_{\rm p}\bigg(
    \max_{j\in[m]}\Phi_{n,j}
    +
    \frac{\gamma_{\max}}{w_r^2}
    \bigg) \cdot O_{\rm p}\big(\tilde L_n+1\big)\\
 &~~~~~~ =
O_{\rm p}\bigg\{
\bigg(
\max_{j\in[m]}\Phi_{n,j}
+
\frac{\gamma_{\max}}{w_r^2}
\bigg)
\frac{(\sum_{j=1}^m d_j)^{1/\tilde c_1}}{n}
\bigg\}+o_{\rm p}(1) \\
&~~~~~~=
o_{\rm p}(1)\,,
\end{split}
\end{equation}
which implies \eqref{eqarray iterative 2}.

 \underline{{\it Proof of \eqref{eqarray iterative 3}.}} Recall that $\tilde\xi_{t-1,i}^{(\textit{v}_{\max},j)}=\tilde f_{t-1,i}^{(\textit{v}_{\max},j)}+\sum_{\ell\ne i}^r\tilde{\varphi}_{i,\ell}^{(\textit{v}_{\max},j)}\tilde f_{t,\ell}^{(\textit{v}_{\max},j)}$. 
 By definition, for any $\textit{v} \ge 1$, we write
\begin{equation*} 
    \tilde f^{(\textit{v},j)}_{t,i}-\xi_{t,i}^{\textup{s}}=\frac{1}{\tilde{\sigma}^{(\textit{v},j)}_{\check f, i}}( \check f^{(\textit{v},j)}_{t,i}- \bar{\check f}^{(\textit{v},j)}_{i}-\xi_{t,i}+\bar\xi_i) +\bigg(\frac{\Lambda_{\xi,i,i}^{1/2}}{\tilde{\sigma}^{(\textit{v},j)}_{\check f, i}}-1\bigg)\xi_{t,i}^{\textup{s}}\,.
\end{equation*}
 Define $\check \ab_{i,j^\prime}=\hat\ab_{i,j^\prime}$ for $j^\prime <j$ and $\check \ab_{i,j^\prime}=\tilde\ab_{i,j^\prime}^{(\textit{v}_{\max}-1)}$ for $j^\prime\ge j$, and define $\check \ab_{i,j^\prime}^{\MP}$ in the same manner as $\hat\ab_{i,j^\prime}^{\MP}$ but with replacing $\hat\ab_{i,j^\prime}$ by $\check\ab_{i,j^\prime}$.  Then,
\[
      \begin{split}
  & \check f^{(\textit{v}_{\max},j)}_{t,i} - \bar{\check f}^{(\textit{v}_{\max},j)}_{i} -\xi_{t,i} + \bar\xi_i \\
   &~~~~~~= \bigg\{\prod_{j^\prime=1}^m(\ab_{i,j^\prime}^\T\check\ab_{i,j^\prime}^{\MP} )-1\bigg\} w_i(f_{t,i} - \bar f_i)+ \sum_{\ell\ne i} \bigg\{\prod_{j^\prime = 1}^m(\ab_{\ell,j^\prime}^\T\check\ab_{i,j^\prime}^{\MP} ) \bigg\} w_\ell (f_{t,\ell} - \bar f_\ell)\\
   &~~~~~~\quad + (\otimes_{m}^{j^\prime=1}\check\ab_{i,j^\prime}^{\MP}-\otimes_{m}^{j^\prime=1}\ab^{\MP}_{i,j^\prime})^\T \textup{vec}(\mathcal{E}_t - \bar{\mathcal{E}})\,.
   \end{split}
\]
On the one hand, under Assumptions \ref{tail}, \ref{mixing} and \ref{cross}, we can conclude that 
\[
\begin{split}
    &\frac{1}{n-1}\sum_{t=2}^n\{w_i(f_{t-1,i} - \bar f_i)\}^2\{\bbeta_{i,j}(\hb)^{\T} \textup{vec}(\Eb_{t,j})\}^2=O_{\rm p}(w_i^2)\,, \quad i\in[r]\,.
\end{split}
\]
On the other hand, similarly to \eqref{eqarray iterative 2 final} but with more tedious calculation, we can conclude the next lemma, whose proof is given in Section \ref{sec:lem: 4th order}.

\begin{lemma}\label{lemma: 4th order cross moment}
    Under the assumptions of Theorem \textup{\ref{thm: debias iterative}}, for any $i\in[r]$ and $j\in[m]$, we have 
    \[
\begin{split}
    &\frac{1}{n-1}\sum_{t=2}^n\{(\otimes_{m}^{j^\prime=1}\check\ab_{i,j^\prime}^{\MP}-\otimes_{m}^{j^\prime=1}\ab^{\MP}_{i,j^\prime})^\T \textup{vec}(\mathcal{E}_{t-1} - \bar{\mathcal{E}})\}^2\{\bbeta_{i,j}(\hb)^{\T} \textup{vec}(\Eb_{t,j})\}^2\,\\
    &~~~~~~=O_{\rm p} \Big\{ (\tilde L_n^2+\tilde L_n+1 ) \max_{j' \in [m]}| \check\ab_{i,j^\prime}^{\MP}- \ab^{\MP}_{i,j^\prime}|_2^2 \Big\}\,.
\end{split}
\]
\end{lemma}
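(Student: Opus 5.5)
The plan is to reduce the fourth-order cross moment to the weighted quadratic-form concentration already used for \eqref{eq:weightedversion-xi9}, after decomposing the random coefficient vector into Kronecker products of unit vectors.

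\textbf{Step 1: Kronecker decomposition of the coefficient vector.} Write $\Delta\bbeta:=\otimes_{m}^{j'=1}\check\ab_{i,j'}^{\MP}-\otimes_{m}^{j'=1}\ab^{\MP}_{i,j'}$. Telescoping one factor at a time (the one-by-one replacement argument used in the proof of \eqref{eqarray iterative 2}) gives
\[
\Delta\bbeta=\sum_{q=1}^m c_q\, \otimes_{m}^{j'=1}\bgamma_{q,j'} ,
\]
where each $\bgamma_{q,j'}$ is a unit vector. For $j'\ne q$ it is a normalized $\check\ab_{i,j'}^{\MP}$ or $\ab_{i,j'}^{\MP}$, and for $j'=q$ it is the normalized difference. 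The coefficients satisfy $|c_q|\lesssim |\check\ab_{i,q}^{\MP}-\ab_{i,q}^{\MP}|_2$, because $|\ab_{i,j'}^{\MP}|_2$ and $|\check\ab_{i,j'}^{\MP}|_2$ are $O_{\rm p}(1)$ by Assumption \ref{sparsity} and Theorem \ref{thm: iterative}. Since $m$ is fixed, Cauchy--Schwarz reduces the claim to bounding, uniformly over unit $\bgamma_{j'}$,
\[
T_n:=\frac{1}{n-1}\sum_{t=2}^n\{(\mathcal{E}_{t-1}-\bar{\mathcal{E}})\times_{j'=1}^m\bgamma_{j'}^{\T}\}^2\{\bbeta_{i,j}(\hb)^{\T}\textup{vec}(\Eb_{t,j})\}^2 .
\]
This is $O_{\rm p}(\tilde L_n^2+\tilde L_n+1)$.

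\textbf{Step 2: remove the mean term.} Expanding $(a-b)^2\le 2a^2+2b^2$ splits off the $\bar{\mathcal{E}}$ contribution. On $\Xi_{8,n}$, uniformly over unit $\bgamma_{j'}$, $\bar{\mathcal{E}}\times_{j'}\bgamma_{j'}^{\T}=O_{\rm p}(L_n)$. The remaining factor $(n-1)^{-1}\sum_t\{\bbeta_{i,j}(\hb)^{\T}\textup{vec}(\Eb_{t,j})\}^2=O_{\rm p}(1)$ holds by Assumption \ref{cross}, because $\bbeta_{i,j}(\hb)$ is deterministic with bounded norm (Lemma \ref{pro:rank-B}). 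So this piece is $O_{\rm p}(L_n^2)\le O_{\rm p}(\tilde L_n^2)$.

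\textbf{Step 3: center the main term.} For the principal part, write $\omega_t=\{\bbeta_{i,j}(\hb)^{\T}\textup{vec}(\Eb_{t,j})\}^2$, a scalar weight that does not depend on $\bgamma$. Then
\[
T_n'=(\otimes\bgamma)^{\T}\Big[\frac{1}{n-1}\sum_t\omega_t\,\textup{vec}(\mathcal{E}_{t-1})\textup{vec}(\mathcal{E}_{t-1})^{\T}\Big](\otimes\bgamma).
\]
Center this at its expectation. The centered piece is handled exactly as \eqref{eq:weightedversion-xi9}, with the weight $(f^{\rm sp}_{t-1,i})^2$ replaced by $\omega_t$. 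The proof goes through the same $\epsilon$-net over each unit sphere in $\mathbb{R}^{d_{j'}}$ (net cardinality $\exp\{C\sum_{j'} d_{j'}\}$) and a Bernstein-type inequality for $\alpha$-mixing sequences with tails $\exp(-Cx^{c_1/4})$. These tails come from Assumptions \ref{tail} and \ref{cross}, since the product of four sub-Weibull variables has index $c_1/4$; this yields the exponent $\tilde c_1$ and the bound $O_{\rm p}(\tilde L_n)$.

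For the expectation piece, the question is whether the weight can be separated from the quadratic form. Assumption \ref{error} only gives that $\mathcal{E}_{t-1}$ and $\mathcal{E}_t$ are uncorrelated, not independent. So instead of factorizing, apply Cauchy--Schwarz,
\[
\mathbb{E}(XY)\le \{\mathbb{E}(X^2)\,\mathbb{E}(Y^2)\}^{1/2},
\]
with $X=\{(\otimes\bgamma)^{\T}\textup{vec}(\mathcal{E}_{t-1})\}^2$ and $Y=\omega_t$. Both fourth moments are uniformly bounded by Assumption \ref{cross}, so this piece is $O(1)$.

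\textbf{Main obstacle.} I expect the hardest step to be the uniform (over the Kronecker unit vectors) concentration in Step 3. It requires a fourth-order sub-Weibull tail and the matching net argument. There is also a subtlety: $\check\ab_{i,j'}$ is random and depends on the data, which is why the bound must be made uniform over unit vectors rather than applied at fixed directions. The Kronecker structure keeps the net cost at $\sum_{j'}d_{j'}$ rather than $D_n$, and this is what produces the $\tilde L_n$ rate.

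Collecting Steps 1--3 gives $O_{\rm p}\{(\tilde L_n^2+\tilde L_n+1)\max_{j'}|\check\ab_{i,j'}^{\MP}-\ab_{i,j'}^{\MP}|_2^2\}$.
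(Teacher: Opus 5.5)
Your proof is correct and follows the paper's own route. Like the paper, you split off the $\bar{\mathcal{E}}$ term and bound it on $\Xi_{8,n}(\tilde C)$; you then handle the main term by one-by-one Kronecker replacement, a $\Xi_{9,n}$-type uniform concentration bound of order $\tilde L_n$ for the centered part, and bounded fourth moments for the expectation part. The differences are cosmetic: the paper folds the deterministic direction $\bbeta_{i,j}(\hb)$ into a Kronecker quadratic form in $\mathcal{Z}_{t,j}=\textup{vec}(\mathcal{E}_{t-1})\otimes\textup{vec}(\Eb_{t,j})$ and takes the supremum over both directions, whereas you keep $\{\bbeta_{i,j}(\hb)^{\T}\textup{vec}(\Eb_{t,j})\}^2$ as a fixed scalar weight and justify the expectation piece explicitly via Cauchy--Schwarz.
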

By Lemma \ref{bound for Lambda and phi}, we have $\tilde{\sigma}^{(\textit{v}_{\max},j)}_{\check f, i}\asymp w_i$ with probability approaching one. Moreover, for a sufficiently large $\textit{v}_{\max}$, $\check\ab_{i,j^\prime}$ shares the same convergence rate as that of $\hat\ab_{i,j^\prime}$.  Therefore, under conditions \eqref{strong factor condition} and \eqref{condition for variance estimator}, and by the convergence rate in Theorem \ref{thm: iterative}, we can conclude that
\[
\frac{1}{n-1}\sum_{t=2}^n\bigg\{\frac{1}{\tilde{\sigma}^{(\textit{v}_{\max},j)}_{\check f, i}}( \check f^{(\textit{v}_{\max},j)}_{t-1,i} - \bar{\check f}^{(\textit{v}_{\max},j)}_{i} -\xi_{t-1,i} + \bar\xi_i )\bigg\}^2\{\bbeta_{i,j}(\hb)^{\T} \textup{vec}(\Eb_{t,j})\}^2 = o_{\rm p}(1)\,.
\]
Following the proof of  Lemma \ref{bound for Lambda and phi}, we have $\Lambda_{\xi,i,i}^{1/2}/\tilde{\sigma}^{(\textit{v}_{\max},j)}_{\check f, i}-1=o_{\rm p}(1)$. Then, we can conclude that
\[
\frac{1}{n-1}\sum_{t=2}^n\bigg\{\bigg( \frac{\Lambda_{\xi,i,i}^{1/2}}{\tilde{\sigma}^{(\textit{v}_{\max},j)}_{\check f, i}}-1\bigg)\xi_{t-1,i}^{\textup{s}}\bigg\}^2\{\bbeta_{i,j}(\hb)^{\T} \textup{vec}(\Eb_{t,j})\}^2=o_{\rm p}(1)\,.
\]
Hence, it follows that
\begin{equation*}\label{eqarray iterative 3232}
  \frac{1}{n-1}\sum_{t=2}^n(\tilde f^{(\textit{v}_{\max},j)}_{t-1,i}-\xi_{t-1,i}^{\textup{s}})^2\{\bbeta_{i,j}(\hb)^{\T} \textup{vec}(\Eb_{t,j})\}^2=o_{\rm p}(1)\,.  
\end{equation*}
Similarly to \eqref{eqarray iterative 2}, it holds that 
\[
\frac{1}{n-1}\sum_{t=2}^n(\tilde f^{(\textit{v}_{\max},j)}_{t-1,i}-\xi_{t-1,i}^{\textup{s}})^2[\{\bbeta_{i,j}(\hb)^{\T} \textup{vec}(\Eb_{t,j})\}^2-\{\hat\bbeta_{i,j}(\hb)^{\T} \textup{vec}(\Eb_{t,j})\}^2]=o_{\rm p}(1)\,, 
\]
which implies that 
\begin{equation}\label{eqarray iterative 3233}
  \frac{1}{n-1}\sum_{t=2}^n(\tilde f^{(\textit{v}_{\max},j)}_{t-1,i}-\xi_{t-1,i}^{\textup{s}})^2\{\hat\bbeta_{i,j}(\hb)^{\T} \textup{vec}(\Eb_{t,j})\}^2=o_{\rm p}(1)\,.  
\end{equation}
By similar but slightly more tedious argument, for $\ell\ne i$,  we can also conclude that
\begin{equation}\label{eqarray iterative 32}
    \frac{1}{n-1}\sum_{t=2}^n (\tilde f_{t,\ell}^{(\textit{v}_{\max},j)} - \xi_{t,\ell}^{\textup{s}})^2 \{\hat\bbeta_{i,j}(\hb)^{\T} \textup{vec}(\Eb_{t,j})\}^2=o_{\rm p}(1)\,.
\end{equation}
In fact, under condition \eqref{negligible error 2}, we have $w_r\rightarrow \infty$ as $n \to \infty$. Then, by the expansion in \eqref{xi ti and f ti} and a technique similar to that used to prove \eqref{eqarray iterative 2},  we can conclude that
\[
\begin{split}
     &\frac{1}{n-1}\sum_{t=2}^n (\xi_{t-1,i}^{\textup{s}} -  f_{t-1,i}^{\textup{s}})^2 \{\bbeta_{i,j}(\hb)^{\T} \textup{vec}(\Eb_{t,j})\}^2= o_{\rm p}(1)\,,\\
      &\frac{1}{n-1}\sum_{t=2}^n (\xi_{t-1,i}^{\textup{s}} - f_{t-1,i}^{\textup{s}})^2  [\{\hat\bbeta_{i,j}(\hb)^{\T} \textup{vec}(\Eb_{t,j})\}^2-\{\bbeta_{i,j}(\hb)^{\T} \textup{vec}(\Eb_{t,j})\}^2 ]=o_{\rm p}(1)\,, 
\end{split}
\]
which further implies that
\begin{equation}\label{eqarray iterative 311}
    \frac{1}{n-1}\sum_{t=2}^n (\xi_{t-1,i}^{\textup{s}} - f_{t-1,i}^{\textup{s}})^2 \{\hat\bbeta_{i,j}(\hb)^{\T} \textup{vec}(\Eb_{t,j})\}^2=o_{\rm p}(1)\,.
\end{equation}
By similar argument, for $\ell \neq i$, we can also conclude that
\begin{equation}\label{eqarray iterative 322} 
     \frac{1}{n-1}\sum_{t=2}^n (\xi_{t,\ell}^{\textup{s}} - f_{t,\ell}^{\textup{s}})^2 \{\hat\bbeta_{i,j}(\hb)^{\T} \textup{vec}(\Eb_{t,j})\}^2=o_{\rm p}(1)\,.
\end{equation} 
Therefore, by combining \eqref{eqarray iterative 3233}--\eqref{eqarray iterative 322}, and the fact that 
$\tilde{\varphi}_{i,\ell}^{(\textit{v}_{\max},j)}-\bar\varphi_{i,\ell}=o_{\rm p}(1)$ 
from the proof of Lemma \ref{bound for Lambda and phi}, using the 
Cauchy--Schwarz inequality and Triangle inequality, we can conclude that
\begin{equation}\label{eq:fspmxisp}
    \frac{1}{n-1}\sum_{t=2}^n (f_{t-1,i}^{\textup{sp}} - \tilde\xi_{t-1,i}^{(\textit{v}_{\max},j)})^2 \{\hat\bbeta_{i,j}(\hb)^{\T} \textup{vec}(\Eb_{t,j})\}^2=o_{\rm p}(1)\,.
\end{equation}
By Cauchy--Schwarz inequality, it holds that
\begin{align*}
    &\bigg|\frac{1}{n-1}\sum_{t=2}^n
 \{(f_{t-1,i}^{\textup{sp}})^2-
(\tilde\xi_{t-1,i}^{(\textit{v}_{\max},j)})^2 \}
\{\hat\bbeta_{i,j}(\hb)^{\T}\textup{vec}(\Eb_{t,j})\}^2\bigg|\\
&~~~~~~\le
\bigg[
\frac{1}{n-1}\sum_{t=2}^n
(f_{t-1,i}^{\textup{sp}} - \tilde\xi_{t-1,i}^{(\textit{v}_{\max},j)})^2
\{\hat\bbeta_{i,j}(\hb)^{\T}\textup{vec}(\Eb_{t,j})\}^2
\bigg]^{1/2}\\
&~~~~~~\quad\times
\bigg[
\frac{1}{n-1}\sum_{t=2}^n
(f_{t-1,i}^{\textup{sp}} + \tilde\xi_{t-1,i}^{(\textit{v}_{\max},j)})^2
\{\hat\bbeta_{i,j}(\hb)^{\T}\textup{vec}(\Eb_{t,j})\}^2
\bigg]^{1/2}\,.
\end{align*}
The first term on the right-hand side is $o_{\rm p}(1)$ by \eqref{eq:fspmxisp}. Moreover, since
\[
(f_{t-1,i}^{\textup{sp}} + \tilde\xi_{t-1,i}^{(\textit{v}_{\max},j)})^2
\lesssim
(f_{t-1,i}^{\textup{sp}})^2+
(f_{t-1,i}^{\textup{sp}} - \tilde\xi_{t-1,i}^{(\textit{v}_{\max},j)})^2\,,
\]
and
\[
\frac{1}{n-1}\sum_{t=2}^n
(f_{t-1,i}^{\textup{sp}})^2
\{\hat\bbeta_{i,j}(\hb)^{\T}\textup{vec}(\Eb_{t,j})\}^2=O_{\rm p}(1)\,,
\]
the second term on the right-hand side is $O_{\rm p}(1)$. Hence, we can conclude \eqref{eqarray iterative 3}.

 \underline{{\it Proof of \eqref{eqarray iterative 4}.}}  Recall that 
\[
\textup{vec}(\Yb_{t,j})-\textup{vec}(\Eb_{t,j})=\sum_{\ell=1}^r(w_\ell f_{t,\ell})(\bbb_{\ell,j}\otimes \ab_{\ell,j})\,.
\]
Then, by Triangle inequality,
\[
\begin{split}
    \textup{left side of \eqref{eqarray iterative 4}} &\lesssim \sum_{\ell=1}^r\frac{\{w_\ell\hat\bbeta_{i,j}(\hb)^{\T}(\bbb_{\ell,j}\otimes \ab_{\ell,j})\}^2}{n-1}\sum_{t=2}^n(\tilde\xi_{t-1,i}^{(\textit{v}_{\max},j)})^2 f_{t,\ell}^2\,.
\end{split}
\]
On the one hand, similarly to \eqref{eqarray iterative 3}, we have
\[
\frac{1}{n-1}\sum_{t=2}^n(\tilde\xi_{t-1,i}^{(\textit{v}_{\max},j)})^2 f_{t,\ell}^2=\frac{1}{n-1}\sum_{t=2}^n(f_{t-1,i}^{\textup{sp}})^2 f_{t,\ell}^2+o_{\rm p}(1)=O_{\rm p}(1)\,,\quad \forall \ell\in[r]\,.
\]
On the other hand,  we have
\[
\begin{split}
    |\hat\bbeta_{i,j}(\hb)^{\T}(\bbb_{\ell,j}\otimes \ab_{\ell,j})|&=| \bbb_{\ell,j}^\T (\tilde{\bbb}_{i,j}^{(\textit{v}_{\max})})^{\MP}| \cdot | \hb^\T (\Ib_{d_j}-\hat\ab_{i,j}\hat\ab_{i,j}^\T)  \ab_{\ell,j}|\,.
\end{split}
\]
When $\ell\ne i$, we have 
\[
|\bbb_{\ell,j}^\T (\tilde{\bbb}_{i,j}^{(\textit{v}_{\max})})^{\MP}|\lesssim |(\tilde{\bbb}_{i,j}^{(\textit{v}_{\max})})^{\MP}-\bbb_{i,j}^{\MP}|_2=O_{\rm p}\bigg(\max_{j\in[m]}\Phi_{n,j}+\frac{\gamma_{\max}}{w_r^2}\bigg)\,.
\]
When $\ell= i$, we have
\[
| \hb^\T (\Ib_{d_j}-\hat\ab_{i,j}\hat\ab_{i,j}^\T)  \ab_{\ell,j}|\lesssim |\hat \ab_{i,j}-\ab_{i,j}|_2+|\hat \ab_{i,j}-\ab_{i,j}|_2^2=O_{\rm p}\bigg(\max_{j\in[m]}\Phi_{n,j}+\frac{\gamma_{\max}}{w_r^2}\bigg)\,.
\]
Therefore, as long as $w_1(\max_{j\in[m]}\Phi_{n,j}+w_r^{-2}\gamma_{\max})=o(1)$, we can conclude \eqref{eqarray iterative 4}, and Theorem \ref{thm: estimation of iteration variance} holds.
$\hfill\Box$

\subsection{Proof of Theorem \ref{lemma: factor iterative}}

Without loss of generality and for notational simplicity, we ignore the reflection and permutation indeterminacy, and take $z_i=i$ and $\kappa_{i,j}=1$, where $z_i$ and $\kappa_{i,j}$ are specified in Theorem \ref{thm: iterative}. Otherwise, we refer to the same technique used in the proof of Theorem \ref{thm: iterative} to handle the reflection and permutation indeterminacy.

By model \eqref{model cp} and the definition of $\hat f_{t,i}$, we have
\[
        \begin{split}
        \hat f_{t,i}= \bigg\{\prod^{m}_{j = 1}(\ab_{i,j}^\T\hat\ab_{i,j}^{\MP})\bigg\} w_if_{t,i}+\sum_{\ell\ne i}\bigg\{\prod^{m}_{j = 1}(\ab_{\ell,j}^\T\hat\ab_{i,j}^{\MP})\bigg\}w_{\ell}f_{t,\ell}+(\hat\ab_{i,m}^{\MP} \otimes \cdots \otimes \hat\ab_{i,1}^{\MP})^{\T}\textup{vec}(\mathcal{E}_t)\,.
    \end{split}
\]
  Recall that $\Phi_n = \max_{j\in[m]}\Phi_{n,j}+w_r^{-2}\gamma_{\max}$. By direct calculation, $\hat\ab_{i,j}^{\MP}$ converges to $\ab_{i,j}^{\MP}$ at the same rate as that in Theorem \ref{thm: iterative} for any $i\in[r]$ and $j\in[m]$. Thus, for $\ell\ne i$, it holds that
\[
\prod^{m}_{j = 1}(\ab_{i,j}^\T\hat\ab_{i,j}^{\MP})-1=O_{\rm p}(\Phi_n)~~\textup{and}~~(\ab_{\ell,j}^\T\hat\ab_{i,j}^{\MP})=O_{\rm p}(\Phi_n)\,.
\]
Therefore, it remains to bound $(\hat\ab_{i,m}^{\MP} \otimes \cdots \otimes \hat\ab_{i,1}^{\MP})^{\T}\textup{vec}(\mathcal{E}_t)=\mathcal{E}_t\times_{j=1}^m\hat\ab_{i,j}^{\MP}$. In fact, by the similar arguments used in the proof of Lemma \ref{lem: Omega n} for the event $\Xi_{6,n}(\tilde{C})$,  we can conclude that
 \[
\max_{|\bbeta_j|_2=1,\bbeta_j \in \mathbb{R}^{d_j},\forall j\in[m]} |\mathcal{E}_t\times_{j=1}^m\bbeta_j^{\T} |=O_{\rm p} \bigg\{  \bigg(\sum_{j=1}^m d_j \bigg)^{1/c_1} \bigg\} 
\]
for $t\in[n]$, where the constant $c_1$ is specified in Assumption \ref{cross}. 
Hence, for any $t\in[n]$ and $i\in[r]$, it holds that
\[
 |\mathcal{E}_t\times_{j=1}^m\hat\ab_{i,j}^{\MP}-\mathcal{E}_t\times_{j=1}^m\ab_{i,j}^{\MP}|=O_{\rm p} \bigg\{  \bigg(\sum_{j=1}^m d_j \bigg)^{1/c_1} \max_{j\in[m]}|\hat\ab_{i,j}^{\MP}-\ab_{i,j}^{\MP}|_2 \bigg\}\,,
\]
which implies that
\[
|\mathcal{E}_t\times_{j=1}^m\hat\ab_{i,j}^{\MP}|=O_{\rm p}\bigg\{1+\Phi_n \bigg(\sum_{j=1}^m d_j\bigg)^{1/c_1} \bigg\}\,.
\]
Therefore, with $m\ge 2$, we conclude that
\[
\begin{split}
  \frac{1}{w_i}|\hat f_{t,i}-w_if_{t,i}| &= O_{\rm p}\bigg[\Phi_n+\frac{w_1}{w_r}\Phi_n^m + \frac{1}{w_i}\bigg\{1+\Phi_n \bigg(\sum_{j=1}^m d_j \bigg)^{1/c_1} \bigg\}\bigg]\,.
\end{split}
\]
Under conditions \eqref{strong factor condition} and \eqref{negligible error 2}, we have $w_r^{-1}w_1\Phi_{n} = o(1)$. Then, $w_r^{-1}w_1\Phi_{n}^m = o(\Phi_{n})$ for $m\ge 2$. Moreover, we have
\[
\frac{\gamma_{\max}}{w_r^2} \bigg(\sum_{j=1}^m d_j \bigg)^{1/c_1} \lesssim \frac{1}{w_r\sqrt{n}} \bigg(\sum_{j=1}^m d_j \bigg)^{1/c_1} \le \frac{1}{w_r} \bigg\{\frac{({\sum}_{j=1}^m d_j)^{1/\tilde c}}{n} \bigg\}^{1/2} \,,
\]
where we use the fact that $\tilde c^{-1} \ge 1+2c_1^{-1}$. If $({\sum}_{j=1}^m d_j)^{1/\tilde c}\ll n$, we have  $w_r^{-2}\gamma_{\max}({\sum}_{j=1}^m d_j)^{1/c_1}\ll 1$. If $({\sum}_{j=1}^m d_j)^{1/\tilde c}\gtrsim n$, then condition \eqref{strong factor condition} indicates that
\[
1\gg \frac{({\sum}_{j=1}^m d_j)^{1/\tilde c}}{nw_r}\gtrsim \frac{1}{w_r} \bigg\{\frac{({\sum}_{j=1}^m d_j)^{1/\tilde c}}{n} \bigg\}^{1/2} \,.
\]
Therefore, we always have $w_r^{-2}\gamma_{\max}({\sum}_{j=1}^m d_j)^{1/c_1}\ll 1$. Moreover, condition \eqref{negligible error 2} implies that 
\[
\frac{({\sum}_{j=1}^m d_j)^{1/\tilde c} (\max_{j\in[m]} s_j\log d_j)^{1/2}}{nw_r^2} \ll 1\,.
\]
Because $\max_{j\in[m]} s_j\log d_j\lesssim (\sum_{j=1}^m d_j)^2$ and $\tilde c^{-1}\ge 1+2c_1^{-1}$, we can conclude that
\[
\bigg\{\bigg(\sum_{j=1}^m d_j \bigg)^{1/c_1} \max_{j \in [m]}\Phi_{n,j}\bigg\}^2 =\frac{(\sum_{j=1}^m d_j)^{2/c_1}\max_{j\in[m]} s_j\log d_j}{nw_r^2} \ll 1 \,.
\]
Therefore, it holds that
\[
\begin{split}
    \frac{1}{w_i}|\hat f_{t,i}-w_if_{t,i}|= O_{\rm p}\bigg(\frac{1}{w_i}+ \Phi_n \bigg)\,.
\end{split}
\]
Furthermore, for the common component tensor, it holds that
\[
\begin{split}
     \frac{1}{D_n}|\textup{vec}(\hat{\mathcal{C}}_t)-\textup{vec}({\mathcal{C}}_t)|^2_2 &\lesssim  \frac{1}{D_n}\sum^{r}_{i=1} |\hat f_{t,i}(\hat\ab_{i,m} \otimes \cdots \otimes \hat\ab_{i,1})-w_if_{t,i}(\ab_{i,m} \otimes \cdots \otimes \ab_{i,1}) |^2_2\, \\
      &= O_{\rm p}\bigg( \frac{1}{D_n} +\frac{w_1^2}{D_n} \Phi_{n}^2\bigg)\,.
\end{split}
\]
This completes the proof of Theorem \ref{lemma: factor iterative}. 
 $\hfill\Box$

\subsection{Proof of Theorem \ref{thm: factor number consistency}}
 
We begin with the consistency of $\max_{j\in[m]}\tilde r^{(\textup{er})}_j(\delta_1)$. Note that the condition $\bar\sigma_{\xi}^2\Pi_n \ll \ubar\sigma_{\xi}^2$ implies $\Pi_n \ll 1$. 
    Recall that \eqref{tilde M and tilde C} provides an upper bound for the difference between $\tilde \Mb_j$ and  $\sum_{k=1}^K \tilde\bSigma_{\Cb_j,\xi}(k)^\T \tilde\bSigma_{\Cb_j,\xi}(k)$ for $j\in[m]$. By Weyl's theorem and \eqref{tilde M and tilde C}, we have
    \[
    \max_{r< i\le \lfloor 0.5d_{\min} \rfloor}\sigma_i(\tilde\Mb_j)= \ubar{\sigma}_{\xi}^2 O_{\rm p}(\Pi_n)\,,\quad j\in[m]\,.
    \]
    By \eqref{tilde C and Sigma Y} and  $\ubar{\sigma}_{\xi}^2\le \sigma_r(\Mb_j)\le\cdots\le \sigma_1(\Mb_j)\le \bar{\sigma}_{\xi}^2$,  we have
    \[
    \ubar{\sigma}_{\xi}^2  \lesssim \sigma_i(\tilde\Mb_j)\lesssim \bar{\sigma}_{\xi}^2  \,,\quad i\in[r]\,,\, j\in[m]\,
    \]
    with probability approaching one. 
    Then, if $\ubar\sigma_{\xi}^{2} \Pi_n \ll c_n \ll \bar\sigma_{\xi}^{-2}\ubar\sigma_{\xi}^4$ for the $c_n$ in \eqref{hat rj}, with probability approaching one, we can conclude that for each $j\in[m]$,
    \[
    \left\{\begin{aligned}
        &\min_{r< i\le \lfloor 0.5d_{\min} \rfloor}\frac{\sigma_{i+1}(\tilde\Mb_j)+c_n}{\sigma_{i}(\tilde\Mb_j)+c_n}\ge \frac{c_n}{\sigma_{r+1}(\tilde\Mb_j)+c_n}\gtrsim 1\,,\\
        &\min_{1\le i< r}\frac{\sigma_{i+1}(\tilde\Mb_j)+c_n}{\sigma_{i}(\tilde\Mb_j)+c_n}\ge \frac{\sigma_{r}(\tilde\Mb_j)}{\sigma_{1}(\tilde\Mb_j)+c_n}\gtrsim \bar{\sigma}_{\xi}^{-2}\ubar{\sigma}_{\xi}^2  \,,\\
        &\frac{\sigma_{r+1}(\tilde\Mb_j)+c_n}{\sigma_{r}(\tilde\Mb_j)+c_n}\lesssim \frac{c_n}{\sigma_r(\tilde \Mb_j)}\ll \bar{\sigma}_{\xi}^{-2}\ubar{\sigma}_{\xi}^2  \,,
    \end{aligned}\right.
    \]
which implies $\max_{j\in[m]}\tilde r^{(\textup{er})}_j(\delta_1)=r$ holds with probability approaching one.

Next, for $\tilde r^{(\textup{log})}_j(\delta_1)$, the condition 
$\log(1+\ubar{\sigma}_{\xi}^2\Pi_n)\ll \{\log(1+\bar\sigma_{\xi}^2)\}^{-1}\{\log(1+\ubar\sigma_{\xi}^2)\}^2$ also implies $\Pi_n\ll1$. Indeed, if $\Pi_n\gtrsim1$, because $\ubar\sigma_{\xi}\gtrsim1$, then
$\log(1+\ubar{\sigma}_{\xi}^2\Pi_n)\gtrsim \log(1+\ubar\sigma_{\xi}^2)$, which contradicts the stated condition because
\[
    \frac{\{\log(1+\ubar\sigma_{\xi}^2)\}^2}{\log(1+\bar\sigma_{\xi}^2)}
    \le
    \log(1+\ubar\sigma_{\xi}^2)\,.
\]
 Hence, $\Pi_n\ll1$. Then, repeating the argument used for the ER criterion and replacing eigenvalues by their logarithms yields
\[
    \max_{r< i\le \lfloor 0.5d_{\min} \rfloor}
    \log \{1+\sigma_i(\tilde\Mb_j)\}
    \le
    \log \{1+\sigma_{r+1}(\tilde\Mb_j)\}
    =
    O_{\rm p}\{\log (1+\ubar{\sigma}_{\xi}^2 \Pi_n)\}\,,
    \quad j\in[m]\,.
\]
Meanwhile, because $\ubar\sigma_{\xi}\gtrsim1$,  with probability approaching one, we have
\[
    \log(1+\ubar{\sigma}_{\xi}^2)
    \lesssim
    \log\{1+\sigma_i(\tilde\Mb_j)\}
    \lesssim
    \log(1+\bar{\sigma}_{\xi}^2)\,,
    \quad i\in[r],\,  j\in[m]\,.
\]
Therefore, if
$\log(1+\ubar\sigma_{\xi}^2\Pi_n)\ll c_n\ll \{\log(1+\bar\sigma_{\xi}^2)\}^{-1}\{\log(1+\ubar\sigma_{\xi}^2)\}^2$
for the $c_n$ in \eqref{hat rj-log}, with probability approaching one, we can conclude that for each $j\in[m]$,
\[
    \left\{\begin{aligned}
        &\min_{r< i\le \lfloor 0.5d_{\min} \rfloor}
        \frac{\log\{1+\sigma_{i+1}(\tilde\Mb_j)\}+c_n}
        {\log\{1+\sigma_{i}(\tilde\Mb_j)\}+c_n}
        \ge
        \frac{c_n}{\log\{1+\sigma_{r+1}(\tilde\Mb_j)\}+c_n}
        \gtrsim 1\,,\\
        &\min_{1\le i< r}
        \frac{\log\{1+\sigma_{i+1}(\tilde\Mb_j)\}+c_n}
        {\log\{1+\sigma_{i}(\tilde\Mb_j)\}+c_n}
        \ge
        \frac{\log\{1+\sigma_{r}(\tilde\Mb_j)\}}
        {\log\{1+\sigma_{1}(\tilde\Mb_j)\}+c_n}
        \gtrsim
        \frac{\log(1+\ubar{\sigma}_{\xi}^2)}
        {\log(1+\bar{\sigma}_{\xi}^2)}\,,\\
        &\frac{\log\{1+\sigma_{r+1}(\tilde\Mb_j)\}+c_n}
        {\log\{1+\sigma_{r}(\tilde\Mb_j)\}+c_n}
        \lesssim
        \frac{c_n}{\log(1+\ubar{\sigma}_{\xi}^2)}
        \ll
        \frac{\log(1+\ubar{\sigma}_{\xi}^2)}
        {\log(1+\bar{\sigma}_{\xi}^2)}\,.
    \end{aligned}\right.
\]
Therefore, with probability approaching one, the ratio corresponding to $i=r$ is smaller than all ratios corresponding to $i<r$ and $i>r$. Hence,  $\max_{j\in[m]}\tilde r^{(\textup{log})}_j(\delta_1)=r$ holds with probability approaching one. Theorem \ref{thm: factor number consistency} is verified.
 $\hfill\Box$

\section{Proofs of Auxiliary Lemmas}\label{sec:supp-error bounds}

\subsection{Proof of Lemma \ref{pro:rank-B}}\label{sec:pro:rank-B}
 
  For $j \in [m]$, to show that $\textup{rank}(\Bb_j) = r$, it suffices to show that $\Bb_j^\T \Bb_j$ is positive definite. Notice that
\begin{equation}\label{eq:BB rank}
    \Bb_j^\T \Bb_j = (\Ab_m^{\T}\Ab_m) \bullet \cdots \bullet (\Ab_{j+1}^{\T}\Ab_{j+1}) \bullet (\Ab_{j-1}^{\T}\Ab_{j-1}) \bullet \cdots \bullet  (\Ab_{1}^{\T}\Ab_{1})\,,
\end{equation}
where $\bullet$ stands for Hadamard product. Since $\textup{rank}(\Ab_j)=r$ for $j \in [m]$, each $\Ab_j$ has full column rank and hence $\Ab_j^{\T}\Ab_j$ is positive definite. By the Schur product theorem, the Hadamard product of positive definite matrices is positive definite, hence $\Bb_j^\T \Bb_j$ is positive definite for $j \in [m]$. Thus, $\textup{rank}(\Bb_j)=r$ follows.  Moreover, by \eqref{eq:BB rank} and Proposition 6.3.4 of \citeS{rao1998matrix-app}, for any $1 \le j' \neq j \le m$, it holds that
\[
 \sigma_r(\Ab_{j'}^\T \Ab_{j'}) \le  \sigma_i(\Bb_j^\T \Bb_j) \le \sigma_1(\Ab_{j'}^\T \Ab_{j'})\,, \quad i \in [r]\,.
\]
The second assertion follows directly from Assumption \ref{sparsity}. We complete the proof of Lemma \ref{pro:rank-B}. 
 $\hfill\Box$

 \subsection{Proof of Lemma \ref{lem: gkixi}}\label{sec:lem: gkixi}      
            By the definitions of $\tilde g_{k,i,\xi}$ and $g_{k,i,\xi}$, we have
			\begin{equation}\label{expansion sigma2-gik}
				\begin{split}
					\tilde g_{k,i,\xi}-g_{k,i,\xi} &= \frac{w_i}{n-k}\sum_{t=k+1}^n \{f_{t,i}\xi_{t-k} - \mathbb{E}(f_{t,i}\xi_{t-k})\}  -\frac{w_i}{n-k}\sum_{t=k+1}^n \{ f_{t,i} \bar\xi - \mathbb{E}(f_{t,i})\mathbb{E}(\bar\xi)\} \\
                    &\quad - \frac{w_i}{n-k}\sum_{t=k+1}^n\{ \bar f_{i} \xi_{t-k} - \mathbb{E}(\bar f_{i})\mathbb{E}(\xi_{t-k})\}  +  w_i \{\bar f_{i} \bar\xi - \mathbb{E}(\bar f_{i} )\mathbb{E}(\bar\xi)\}\,.
				\end{split}
			\end{equation}
			Recall $|\cdot|_{\MP} = \max(\,\cdot\,,0)$. Similarly to the proof of Lemma 1 in \citeS{chang2023modelling-app}, under Assumptions \ref{tail} and \ref{mixing}, $\{ f_{t,i}\xi_{t-k} - \mathbb{E}(f_{t,i}\xi_{t-k}) \}_{t = k+1}^n$ is an $\alpha$-mixing process with zero mean and mixing coefficients $\{\alpha(|\ell-k|_+)\}_{\ell\ge 1}$. Moreover, by Assumption \ref{tail}, it holds that
			\[
			\mathbb{P}(| f_{t,i}\xi_{t-k}|>x)\le \mathbb{P}(|f_{t,i}|>\sqrt{x})+\mathbb{P}(| \xi_{t-k}|>\sqrt{x})\lesssim \exp(-C_2x^{c_1/2})
			\]
			for any $x>0$. Lemma L5 in \citeS{chang2021central-app} states that there exists some constant $C>0$  such that
			\begin{equation*}\label{tail error-gik}
				\mathbb{P}\bigg(\bigg|\frac{1}{n-k}\sum_{t=k+1}^n \{f_{t,i}\xi_{t-k} - \mathbb{E}(f_{t,i}\xi_{t-k})\}\bigg|>x\bigg)\lesssim \exp(-Cnx^2)+\exp(-Cn^{\tilde c}x^{\tilde c})
			\end{equation*}
			for any $x > 0$, where  $\tilde c=(1+2c_1^{-1}+c_2^{-1})^{-1}$. Applying the same argument to the remaining terms in \eqref{expansion sigma2-gik}, we further obtain 
		\begin{equation*}\label{tail sigma2-gik}
				\mathbb{P}(| w_{i}^{-1}( \tilde g_{k,i,\xi}-g_{k,i,\xi} )|\ge x)\lesssim \exp(-Cnx^2) +\exp(-Cn^{\tilde c}x^{\tilde c}) +\exp(-Cn^{\check c}x^{\check c})
			\end{equation*}
			for any $x \in (0,1)$, where $\check c = (2+|c_1^{-1} -1|_{\MP} + c_2^{-1})^{-1}$. Since $K$ is fixed, as a result, for any $i \in [r]$,
			\begin{equation*} 
				\max_{k \in [K]}|\tilde g_{k,i,\xi}-g_{k,i,\xi}|= O_{\rm p}(w_i n^{-1/2})\,.
			\end{equation*}
            This completes the proof of Lemma \ref{lem: gkixi}. 
$\hfill\Box$

\subsection{Proof of Lemma \ref{lem: l2 truncated covariance}}\label{sec:lem: l2 truncated covariance} 
			
 Following the definitions in \eqref{sigma C} and \eqref{tilde Sigma Ej}, to simplify the notation, we denote, respectively, the $(p,q)$-th entries of $\tilde\bSigma_{\Yb_j,\xi}(k)$, $\tilde\bSigma_{\Cb_j,\xi}(k)$ and $\tilde\bSigma_{\Eb_j,\xi}(k)$ by $\tilde\sigma_{p,q}^{(j,k)}$, $\tilde \sigma_{1,p,q}^{(j,k)}$ and $\tilde \sigma_{2,p,q}^{(j,k)}$ for $p \in [d_j]$ and $q\in [d_{\mminus j}]$. Let $e_{t,j,p,q}$ be the $(p,q)$-th entry of $\Eb_{t,j}$ and $\bar e_{j,p,q}=n^{-1}\sum_{t=1}^n e_{t,j,p,q}$. By definition,
			\begin{equation}\label{expansion sigma2}
				\begin{split}
					\tilde\sigma_{2,p,q}^{(j,k)}&=\frac{1}{n-k}\sum_{t=k+1}^ne_{t,j,p,q}\xi_{t-k}-\frac{\bar\xi}{n-k}\sum_{t=k+1}^ne_{t,j,p,q} -\frac{\bar e_{j,p,q}}{n-k}\sum_{t=k+1}^n\xi_{t-k}+\bar e_{j,p,q}\bar\xi\,.
				\end{split}
			\end{equation}
Similarly to the proof of Lemma \ref{lem: gkixi}, under Assumptions \ref{error} and \ref{mixing}, $\{e_{t,j,p,q}\xi_{t-k}\}_{t = k+1}^n$ is an $\alpha$-mixing process with zero mean and mixing coefficients $\{\alpha(|\ell-k|_+)\}_{\ell\ge 1}$. Moreover, by Assumption \ref{tail}, it holds that
			\[
			\mathbb{P}(|e_{t,j,p,q}\xi_{t-k}|>x)\le \mathbb{P}(|e_{t,j,p,q}|>\sqrt{x})+\mathbb{P}(|\xi_{t-k}|>\sqrt{x})\lesssim \exp(-C_2x^{c_1/2})
			\]
			for any $x>0$. Lemma L5 in \citeS{chang2021central-app} states that there exists some constant $C>0$  such that
			\begin{equation}\label{tail error}
				\mathbb{P}\bigg(\bigg|\frac{1}{n-k}\sum_{t=k+1}^n e_{t,j,p,q}\xi_{t-k}\bigg|>x\bigg)\lesssim \exp(-Cnx^2)+\exp(-Cn^{\tilde c}x^{\tilde c})
			\end{equation}
			for any $x>0$, where  $\tilde c=(1+2c_1^{-1}+c_2^{-1})^{-1}$. Handling the remaining terms in \eqref{expansion sigma2} similarly, one can further conclude that
		\begin{equation}\label{tail sigma2}
				\mathbb{P}(| \tilde\sigma_{2,p,q}^{(j,k)}|\ge x)\lesssim \exp(-Cnx^2) +\exp(-Cn^{\tilde c}x^{\tilde c}) +\exp(-Cn^{\check c}x^{\check c})
			\end{equation}
			 for any $x \in (0,1)$, where $\check c = (2+|c_1^{-1} -1|_{\MP} + c_2^{-1})^{-1}$. As a result,
			\begin{equation}\label{sigma2}
				\max_{p\in[d_j],q\in[d_{\mminus j}]}|\tilde\sigma_{2,p,q}^{(j,k)}|=\max_{p\in[d_j],q\in[d_{\mminus j}]}|\tilde\sigma_{p,q}^{(j,k)}-\tilde\sigma_{1,p,q}^{(j,k)}| =  O_{\rm p}\{ (n^{-1}\log D_n)^{1/2}\}\,,
			\end{equation}
			provided that $\log D_n=o(n^c)$ for some $c\in(0,1)$ depending only on $c_1$ and $c_2$ specified in Assumptions {\rm\ref{tail}} and {\rm\ref{mixing}}.

			Now, by Triangle inequality,
			\begin{equation}\label{decomposition truncation}
				\begin{split}
					&\|T_{\delta_1}\{\tilde \bSigma_{\Yb_j,\xi}(k)\}-\tilde\bSigma_{\Cb_j,\xi}(k)\|_2\\
					&~~~~~~ \le   \underbrace{\|T_{\delta_1}\{\tilde \bSigma_{\Yb_j,\xi}(k)\}-T_{\delta_1}\{\tilde\bSigma_{\Cb_j,\xi}(k)\}\|_2}_{\mathcal{I}_1}+ \underbrace{\|T_{\delta_1}\{\tilde\bSigma_{\Cb_j,\xi}(k)\}-\tilde\bSigma_{\Cb_j,\xi}(k)\|_2}_{\mathcal{I}_2}\,.
				\end{split}
			\end{equation}
			We start with  $\mathcal{I}_2$.  Note that
			\begin{equation}\label{I2}
				\mathcal{I}_2^2\le \bigg[\max_{p\in[d_j]}\sum_{q=1}^{d_{\mminus j}}|\tilde\sigma_{1,p,q}^{(j,k)}|I\{|\tilde\sigma_{1,p,q}^{(j,k)}|<\delta_1\}\bigg]\bigg[\max_{q\in[d_{\mminus j}]}\sum_{p=1}^{d_j}|\tilde\sigma_{1,p,q}^{(j,k)}|I\{|\tilde\sigma_{1,p,q}^{(j,k)}|<\delta_1\}\bigg]\,.
			\end{equation}
			By definition,	$
			\tilde\sigma_{1,p,q}^{(j,k)}= \sum_{i=1}^r \tilde g_{k,i,\xi}a_{i,j,p}b_{i,j,q}$, where $a_{i,j,p}$ and $b_{i,j,q}$ are, respectively, the $p$-th entry of $\ab_{i,j}$ and the $q$-th entry of $\bbb_{i,j}$. 
            It then follows that 
			\begin{equation}\label{sigma1}
				\begin{split}
					&\max_{q\in[d_{\mminus j}]}\sum_{p=1}^{d_j}|\tilde\sigma_{1,p,q}^{(j,k)}| I\{ | \tilde\sigma_{1,p,q}^{(j,k)}|<\delta_1\} \\
                    &~~~~~~\le \delta_1 \max_{q\in[d_{\mminus j}]}\sum_{p=1}^{d_j} I\{|\tilde\sigma_{1,p,q}^{(j,k)}| \neq 0\}\\
					&~~~~~~\lesssim \delta_1   \sum_{i = 1}^r \sum_{p=1}^{d_j}  I\{|a_{i,j,p}| \neq 0\} \lesssim  \delta_1 s_{j} \,.
				\end{split}
			\end{equation}
			Parallelly, one can also show that
			\[
			\max_{p\in[d_j]}\sum_{q=1}^{d_{\mminus j}}|\tilde\sigma_{1,p,q}^{(j,k)}|I\{|\tilde\sigma_{1,p,q}^{(j,k)}|<\delta_1\} \lesssim   \delta_1  \prod_{j^\prime\ne j} s_{j^\prime} \,.
			\]
			As a result,
			\[
			\mathcal{I}_2  \lesssim \delta_1 S_n^{1/2} \,.
			\]
			
			We next consider $\mathcal{I}_1$. To simplify the notation, we suppress the dependence on $j,k$ and define three matrices $\hat\bSigma_{1,1}, \hat \bSigma_{1,2}$, $\hat\bSigma_{1,3}$, whose $(p,q)$-th entries are defined respectively as
			\[
			\begin{split}
				 [\hat\bSigma_{1,1}]_{p,q}& =(\tilde\sigma_{p,q}^{(j,k)}-\tilde\sigma_{1,p,q}^{(j,k)})I\{|\tilde\sigma_{p,q}^{(j,k)}|\ge\delta_1,|\tilde\sigma_{1,p,q}^{(j,k)}|\ge\delta_1\}\,,\\
				[\hat\bSigma_{1,2}]_{p,q} &=\tilde\sigma_{p,q}^{(j,k)}I\{|\tilde\sigma_{p,q}^{(j,k)}|\ge\delta_1,|\tilde\sigma_{1,p,q}^{(j,k)}|<\delta_1\}\,,\\
    [\hat\bSigma_{1,3}]_{p,q}&=\tilde\sigma_{1,p,q}^{(j,k)}I\{|\tilde\sigma_{p,q}^{(j,k)}|<\delta_1,|\tilde\sigma_{1,p,q}^{(j,k)}|\ge\delta_1\}\,.
			\end{split}
			\] 
			By the definition of $\mathcal{I}_1$ and Triangle inequality, we have
			\begin{equation}\label{I1}
				\mathcal{I}_1\le \|\hat\bSigma_{1,1}\|_2+\|\hat\bSigma_{1,2}\|_2+\|\hat\bSigma_{1,3}\|_2\,.
			\end{equation}
			
			Firstly, for $\|\hat\bSigma_{1,1}\|_2$, notice that 
			\[
			\begin{split}
				\|\hat\bSigma_{1,1}\|_2^2&\le\underbrace{\bigg[\max_{p\in[d_j]}\sum_{q=1}^{d_{\mminus j}}|\tilde\sigma_{p,q}^{(j,k)}-\tilde\sigma_{1,p,q}^{(j,k)}|I\{|\tilde\sigma_{p,q}^{(j,k)}|\ge\delta_1,|\tilde\sigma_{1,p,q}^{(j,k)}|\ge\delta_1\}\bigg]}_{\mathcal{I}_{1,1,1}} \\
				&\quad \times \underbrace{\bigg[\max_{q\in[d_{\mminus j}]}\sum_{p=1}^{d_j}|\tilde\sigma_{p,q}^{(j,k)}-\tilde\sigma_{1,p,q}^{(j,k)}|I\{|\tilde\sigma_{p,q}^{(j,k)}|\ge\delta_1,|\tilde\sigma_{1,p,q}^{(j,k)}|\ge\delta_1\}\bigg]}_{\mathcal{I}_{1,1,2}}\,.\\ 
			\end{split}
			\]
			By \eqref{sigma2} and \eqref{sigma1}, when $\delta_1>0$, we conclude that
			\begin{equation}\label{112}
				\begin{split}
					\mathcal{I}_{1,1,2}&\le \max_{p\in[d_j],q\in[d_{\mminus j}]}|\tilde\sigma_{p,q}^{(j,k)}-\tilde\sigma_{1,p,q}^{(j,k)}|\max_{q \in [d_{\mminus j}]}\sum^{d_j}_{p=1}I\{|\tilde\sigma_{1,p,q}^{(j,k)}|\ge\delta_1\}\\
					& \le\max_{p\in[d_j],q\in[d_{\mminus j}]}|\tilde\sigma_{p,q}^{(j,k)}-\tilde\sigma_{1,p,q}^{(j,k)}| \max_{q\in[d_{\mminus j}]}\sum^{d_j}_{p=1}I\{|\tilde\sigma_{1,p,q}^{(j,k)}| \neq 0\}  \\
     &=  O_{\rm p}\{ (n^{-1}\log D_n)^{1/2} s_{j} \}\,.
				\end{split}
			\end{equation}
		Similarly, one can verify that
			\begin{equation}\label{111}
				\mathcal{I}_{1,1,1}=  O_{\rm p} \bigg\{(n^{-1}\log D_n)^{1/2} \prod_{j^\prime \neq j} s_{j^\prime} \bigg\}\,.
			\end{equation}
			Combining  \eqref{112} and \eqref{111}, we have
			\[
			\|\hat\bSigma_{1,1}\|_2= O_{\rm p}
            \{ (S_n n^{-1}\log D_n)^{1/2}
            \}\,.
			\]
			
			Secondly, we consider $\|\hat\bSigma_{1,3}\|_2$. Write $\hat\bSigma_{1,3}=\hat\bSigma_{1,3,1}+\hat\bSigma_{1,3,2}$ with 
			\begin{equation}\label{hat Sigma 13 decompose}
			    		\begin{split}
				[\hat\bSigma_{1,3,1}]_{p,q}&=(\tilde\sigma_{1,p,q}^{(j,k)}-\tilde\sigma_{p,q}^{(j,k)})I\{|\tilde\sigma_{p,q}^{(j,k)}|<\delta_1,|\tilde\sigma_{1,p,q}^{(j,k)}|\ge\delta_1\}\,,\\
				[\hat\bSigma_{1,3,2}]_{p,q}&=\tilde\sigma_{p,q}^{(j,k)}I\{|\tilde\sigma_{p,q}^{(j,k)}|<\delta_1,|\tilde\sigma_{1,p,q}^{(j,k)}|\ge\delta_1\}\,,
			\end{split}
			\end{equation}
			respectively. 
			Similarly to \eqref{112} and \eqref{111}, we can verify that
			\[
			\|\hat\bSigma_{1,3,1}\|_2= O_{\rm p}
            \{(S_n n^{-1}\log D_n)^{1/2}
            \}\,.
			\]
			On the other hand, 
			\[
			\begin{split}
				\|\hat\bSigma_{1,3,2}\|^2_2 &\le \bigg[\max_{p\in[d_j]}\sum_{q=1}^{d_{\mminus j}}|\tilde\sigma_{p,q}^{(j,k)}|I\{|\tilde\sigma_{p,q}^{(j,k)}|<\delta_1,|\tilde\sigma_{1,p,q}^{(j,k)}|\ge\delta_1\}\bigg]\\
				&\quad\times\bigg[\max_{q \in [d_{\mminus j}]}\sum_{p=1}^{d_j}|\tilde\sigma_{p,q}^{(j,k)}|I\{|\tilde\sigma_{p,q}^{(j,k)}|<\delta_1,|\tilde\sigma_{1,p,q}^{(j,k)}|\ge\delta_1\}\bigg]\,.
			\end{split}
			\]
			Similarly to \eqref{112}, we have
			\[
			\max_{q \in [d_{\mminus j}]}\sum_{p=1}^{d_j}|\tilde\sigma_{p,q}^{(j,k)}|I\{|\tilde\sigma_{p,q}^{(j,k)}|<\delta_1,|\tilde\sigma_{1,p,q}^{(j,k)}|\ge\delta_1\}\le \delta_1 \max_{q \in [d_{\mminus j}]}\sum_{p=1}^{d_j} I\{|\tilde\sigma_{1,p,q}^{(j,k)}|\ge\delta_1\} \lesssim \delta_1 s_{j}  \,.
			\]
			A parallel bound holds for the second part of $\|\hat\bSigma_{1,3,2}\|_2$ by a similar argument with replacing $s_{j}$ by $\prod_{j^\prime\ne j}s_{j^\prime}$. Then, 
			\[
			\|\hat\bSigma_{1,3,2}\|_2\lesssim  \delta_1 S_n^{1/2} \,,
			\]
			which further implies that
			\begin{equation}\label{13}
				\|\hat\bSigma_{1,3}\|_2 =  O_{\rm p}\{\delta_1   S_n^{1/2} +  (S_n n^{-1}\log D_n)^{1/2} \}\,.
			\end{equation}
			
			Finally, for $\hat\bSigma_{1,2}$, write $\hat\bSigma_{1,2}=\hat\bSigma_{1,2,1}+\hat\bSigma_{1,2,2}$ with
	\begin{equation}\label{hat Sigma 12 decompose}
			    		\begin{split}
				[\hat\bSigma_{1,2,1}]_{p,q} &= \tilde\sigma_{2,p,q}^{(j,k)}I\{|\tilde\sigma_{p,q}^{(j,k)}|\ge\delta_1,|\tilde\sigma_{1,p,q}^{(j,k)}|<\delta_1\}\,,\\
				[\hat\bSigma_{1,2,2}]_{p,q}&=\tilde\sigma_{1,p,q}^{(j,k)}I\{|\tilde\sigma_{p,q}^{(j,k)}|\ge\delta_1,|\tilde\sigma_{1,p,q}^{(j,k)}|<\delta_1\}\,.
			\end{split}
		\end{equation}
			Similarly to \eqref{I2} and \eqref{sigma1}, we have
			\[
			\|\hat\bSigma_{1,2,2}\|_2  \lesssim   \delta_1 S_n^{1/2} \,.
			\]
			For $\hat\bSigma_{1,2,1}$, the bound relies on the inequality that
   \begin{align*}
       	\|\hat\bSigma_{1,2,1}\|_2^2 & \le  \bigg[\max_{q\in[d_{\mminus j}]}\sum_{p=1}^{d_j}|\tilde\sigma_{p,q}^{(j,k)}-\tilde\sigma_{1,p,q}^{(j,k)}|I\{|\tilde\sigma_{p,q}^{(j,k)}|\ge\delta_1,|\tilde\sigma_{1,p,q}^{(j,k)}|<\delta_1\}\bigg]\\
				&\quad\times\bigg[\max_{p\in[d_j]}\sum_{q=1}^{d_{\mminus j}}|\tilde\sigma_{p,q}^{(j,k)}-\tilde\sigma_{1,p,q}^{(j,k)}|I\{|\tilde\sigma_{p,q}^{(j,k)}|\ge\delta_1,|\tilde\sigma_{1,p,q}^{(j,k)}|<\delta_1\}\bigg]\,.
   \end{align*}
			 Notice that
\begin{align*}
    	 \max_{q\in[d_{\mminus j}]}\sum_{p=1}^{d_j}&|\tilde\sigma_{p,q}^{(j,k)}-\tilde\sigma_{1,p,q}^{(j,k)}|I\{|\tilde\sigma_{p,q}^{(j,k)}|\ge\delta_1,|\tilde\sigma_{1,p,q}^{(j,k)}|<\delta_1\}\\
				&\le \max_{q\in[d_{\mminus j}]}\sum_{p=1}^{d_j}|\tilde\sigma_{p,q}^{(j,k)}-\tilde\sigma_{1,p,q}^{(j,k)}|I\{|\tilde\sigma_{p,q}^{(j,k)}|\ge\delta_1,|\tilde\sigma_{1,p,q}^{(j,k)}| = 0\}\\
				&\qquad\qquad\qquad\qquad  +\max_{q\in[d_{\mminus j}]}\sum_{p=1}^{d_j}|\tilde\sigma_{p,q}^{(j,k)}-\tilde\sigma_{1,p,q}^{(j,k)}|I\{|\tilde\sigma_{p,q}^{(j,k)}|\ge\delta_1,|\tilde\sigma_{1,p,q}^{(j,k)}| \neq 0\}\\
    &\le \max_{p \in [d_j], q\in[d_{\mminus j}]} |\tilde\sigma_{p,q}^{(j,k)}-\tilde\sigma_{1,p,q}^{(j,k)}| \cdot \max_{q\in[d_{\mminus j}]}\sum_{p=1}^{d_j}I\{|\tilde\sigma_{p,q}^{(j,k)}-\tilde\sigma_{1,p,q}^{(j,k)}|\ge \delta_1\}\\
				&\qquad\qquad\qquad\qquad 
                +\max_{q\in[d_{\mminus j}]}\sum_{p=1}^{d_j}|\tilde\sigma_{p,q}^{(j,k)}-\tilde\sigma_{1,p,q}^{(j,k)}|I\{ |\tilde\sigma_{1,p,q}^{(j,k)}| \neq 0\}\\
				&  =   O_{\rm p}\{(n^{-1}\log D_n)^{1/2}\} \cdot \max_{q\in[d_{\mminus j}]}\sum_{p=1}^{d_j}I\{|\tilde\sigma_{p,q}^{(j,k)}-\tilde\sigma_{1,p,q}^{(j,k)}|\ge \delta_1\}\\
				&\qquad\qquad\qquad\qquad +  O_{\rm p}\{ (n^{-1}\log D_n)^{1/2} s_{j} \}\,,
\end{align*}			
			where the last two lines are by \eqref{sigma2} and \eqref{112}.   Set  $\delta_1=C_* (n^{-1}\log D_n)^{1/2}$ for some sufficiently large constant $C_*>0$, by \eqref{tail sigma2} and Markov inequality, we have
\begin{equation}\label{sum all error}
    \begin{split}
				&\mathbb{P}\bigg[\max_{q\in[d_{\mminus j}]}\sum_{p=1}^{d_j}I\{|\tilde\sigma_{p,q}^{(j,k)}-\tilde\sigma_{1,p,q}^{(j,k)}|\ge  \delta_1\}\ge \lambda\bigg]
				\le \frac{1}{\lambda}\sum^{d_{\mminus j}}_{q=1}\sum^{d_j}_{p=1}\mathbb{P}\{|\tilde\sigma_{2,p,q}^{(j,k)}|\ge  \delta_1\}\le \frac{C}{\lambda}
			\end{split}
\end{equation}			
			for any $\lambda>1$, provided that $\log D_n \ll  n^c$ for some constant $c\in(0,1)$ depending only on $c_1$ and $c_2$ specified in Assumptions {\rm\ref{tail}} and {\rm\ref{mixing}}.  It follows that
			\[
			\max_{q\in[d_{\mminus j}]}\sum_{p=1}^{d_j}I\{|\tilde\sigma_{p,q}^{(j,k)}-\tilde\sigma_{1,p,q}^{(j,k)}|\ge  \delta_1\}=O_{\rm p}(1)\,.
			\]
			A similar bound holds after exchanging the indices $p$ and $q$, which further implies that $
			\|\hat\bSigma_{1,2,1}\|_2  =  O_{\rm p}(\ubar\sigma_{\xi}^2\bar\sigma_{\xi}^{-1}\Pi_{n})$. 
			The same upper bound also holds for $\|\hat\bSigma_{1,2,2}\|_2$, $\|\hat\bSigma_{1,3}\|_2$, $\|\hat\bSigma_{1,1}\|_2$ and $\mathcal{I}_2$, and hence it holds for $\|T_{\delta_1}\{\tilde \bSigma_{\Yb_j,\xi}(k)\}-\tilde\bSigma_{\Cb_j,\xi}(k)\|_2$. Lemma \ref{lem: l2 truncated covariance} follows directly.  
$\hfill\Box$

\subsection{Proof of Lemma \ref{pro: singular K21j}}\label{sec:lem:gap K21j}

Let $\Ab_{j,-i}$ denote the $d_j\times (r-1)$ matrix obtained by deleting the $i$-th column of $\Ab_j$, and define $\Db_{j,-i}=(\Ib_{d_j}-\ab_{i,j}\ab_{i,j}^{\T})\Ab_{j,-i}$. Let $\Pb_{j,-i}$ be the matrix of left singular vectors of $\Db_{j,-i}$. We first show that there exists an $(r-1)\times (r-1)$ matrix $\Ub_{j,-i}$ such that $\Db_{j,-i}=\Pb_{j,-i}\Ub_{j,-i}$, and the singular values of $\Ub_{j,-i}$ are bounded away from $0$ and $\infty$. Without loss of generality, assume that the $i$-th column of $\Ab_j$ is placed as the first column, so that $\Ab_j=(\ab_{i,j},\Ab_{j,-i})$. Then
\[
\Ab_j^{\T}\Ab_j
=
\begin{pmatrix}
1 & \ab_{i,j}^{\T}\Ab_{j,-i}\\
\Ab_{j,-i}^{\T}\ab_{i,j} & \Ab_{j,-i}^{\T}\Ab_{j,-i}
\end{pmatrix}\,.
\]
Since $|\ab_{i,j}|_2=1$, the Schur complement of the upper-left block of $\Ab_j^{\T}\Ab_j$ is 
$$\Ab_{j,-i}^{\T}\Ab_{j,-i}-\Ab_{j,-i}^{\T}\ab_{i,j}\ab_{i,j}^{\T}\Ab_{j,-i}=\Db_{j,-i}^{\T}\Db_{j,-i}\,.$$
For any $\xb\in\mathbb{R}^{r-1}$ with $|\xb|_2=1$, since $\Ab_{j,-i}^{\T}\ab_{i,j}\ab_{i,j}^{\T}\Ab_{j,-i}$ is positive semidefinite, we have $\xb^{\T}\Db_{j,-i}^{\T}\Db_{j,-i}\xb\le \xb^{\T}\Ab_{j,-i}^{\T}\Ab_{j,-i}\xb$. Hence, $\sigma_{\max}(\Db_{j,-i}^{\T}\Db_{j,-i})\le \sigma_{\max}(\Ab_{j,-i}^{\T}\Ab_{j,-i})$. Since $\Ab_{j,-i}^{\T}\Ab_{j,-i}$ is a principal submatrix of $\Ab_j^{\T}\Ab_j$, it follows that $\sigma_{\max}(\Ab_{j,-i}^{\T}\Ab_{j,-i})\le \sigma_{\max}(\Ab_j^{\T}\Ab_j)$. Therefore, $\sigma_{\max}(\Db_{j,-i}^{\T}\Db_{j,-i})\le \sigma_{\max}(\Ab_j^{\T}\Ab_j)$. By Assumption \ref{sparsity}, we have
$$\sigma_{\max}(\Db_{j,-i}^{\T}\Db_{j,-i})\le \sigma_{\max}(\Ab_j^{\T}\Ab_j)\le C_6^2\,.$$
Moreover, by the inverse formula for block matrices, the inverse of the Schur complement
$\Db_{j,-i}^{\T}\Db_{j,-i}$ is the lower-right block of
$(\Ab_j^{\T}\Ab_j)^{-1}$. By Assumption \ref{sparsity}, we have
\[
\sigma_{\max}\{(\Db_{j,-i}^{\T}\Db_{j,-i})^{-1}\}
\le
\sigma_{\max}\{(\Ab_j^{\T}\Ab_j)^{-1}\}
\le
C_6^2\,,
\]
which implies $\sigma_{\min}(\Db_{j,-i}^{\T}\Db_{j,-i})
\ge
C_6^{-2}$. Consequently,
\[
C_6^{-1}
\le
\sigma_{r-1}(\Db_{j,-i})
\le
\sigma_1(\Db_{j,-i})
\le
C_6.
\]
Since $\Db_{j,-i}$ and $\Ub_{j,-i}$ have the same singular values, it holds that $C_6^{-1}\le \sigma_{r-1}(\Ub_{j,-i})\le \sigma_1(\Ub_{j,-i})\le C_6$.

 Left-multiplying both sides of $\Pb_{j,-i}=\Db_{j,-i}\Ub^{-1}_{j,-i}$ by $\ab_{i,j}^\T$, we have
			\[
			\ab_{i,j}^\T\Pb_{j,-i}=\ab_{i,j}^\T(\Ib_{d_j}-\ab_{i,j}\ab_{i,j}^\T)\Ab_{j,-i}\Ub_{j,-i}^{-1}={\bf 0}\,.
			\]
			In other words, $\ab_{i,j}$ is orthogonal to $\Pb_{j,-i}$. Let $\Gb_{k,\xi,-i}$ be the $(r-1)\times(r-1)$ diagonal matrix by deleting $g_{k,i,\xi}$ and the associated row and column in $\Gb_{k,\xi}$. Then, by definition,
            \begin{align*}
                &\Pb_{j,-i}^\T(\Kb_{1,2,j}-\bar\lambda_i \Ib_{d_j})\Pb_{j,-i}\\
    &~~~~~~=\Pb_{j,-i}^\T\Ab_j\Gb_{1,\xi}\Gb_{2,\xi}^{-1}(\Ab_j^\T\Ab_j)^{-1}\Ab_j^\T\Pb_{j,-i}-\bar\lambda_i\Ib_{r-1}\\
				&~~~~~~=\Pb_{j,-i}^\T\Ab_j\Gb_{1,\xi}\Gb_{2,\xi}^{-1}(\Ab_j^\T\Ab_j)^{-1}\Ab_j^\T(\Ib_{d_j}-\ab_{i,j}\ab_{i,j}^\T)\Ab_{j,-i}\Ub_{j,-i}^{-1}-\bar\lambda_i\Ib_{r-1}\\
				&~~~~~~=\Pb_{j,-i}^\T\Ab_{j,-i}\Gb_{1,\xi,-i}\Gb_{2,\xi,-i}^{-1}\Ub_{j,-i}^{-1}-\bar\lambda_i\Ib_{r-1}\\
				&~~~~~~=\Pb_{j,-i}^\T(\Ib_{d_j}-\ab_{i,j}\ab_{i,j}^\T)\Ab_{j,-i}\Gb_{1,\xi,-i}\Gb_{2,\xi,-i}^{-1}\Ub_{j,-i}^{-1}-\bar\lambda_i\Ib_{r-1}\\
				&~~~~~~=\Ub_{j,-i}(\Gb_{1,\xi,-i}\Gb_{2,\xi,-i}^{-1}-\bar\lambda_i\Ib_{r-1})\Ub_{j,-i}^{-1}\,,
            \end{align*}
			whose minimum singular value is always bounded away from $0$ and $\infty$ under Assumption \ref{gap new}. Furthermore, because $(\ab_{i,j},\Pb_{j,-i})^\T(\ab_{i,j},\Pb_{j,-i})=\Ib_r$, there exists a $d_j\times (d_j-r)$ matrix $\Ob_{j,-r}$ satisfying 
			\[
			(\ab_{i,j},\Pb_{j,-i})^\T\Ob_{j,-r}={\bf 0} ~~ \textup{and} ~~ \Ob_{j,-r}^\T\Ob_{j,-r}=\Ib_{d_j-r}\,.
			\]
			Let $\Ob_{j,\mminus i}=(\Pb_{j,-i},\Ob_{j,-r})$. By the fact that $\Ab_j^\T\Ob_{j,-r}={\bf 0}$, we have 
			\[
			\Ob_{j,\mminus i}^\T(\Kb_{1,2,j}-\bar\lambda_i\Ib_{d_j})\Ob_{j,\mminus i}=\left(\begin{matrix}
				&\Pb_{j,-i}^\T(\Kb_{1,2,j}-\bar\lambda_i\Ib_{d_j})\Pb_{j,-i}&{\bf 0}\\
				&{\bf 0}&-\bar\lambda_i\Ib_{d_j-r}
			\end{matrix}\right)\,,
			\]
			whose minimum singular value is also uniformly bounded away from $0$ and $\infty$.  Hence the first assertion of Lemma \ref{pro: singular K21j} is verified.
            
            On the other hand, we already know that $\max_{k\in[K],i\in[r]}|\tilde g_{k,i,\xi}-g_{k,i,\xi}| =  O_{\rm p}(w_1n^{-1/2})$ by Lemma \ref{lem: gkixi}. Then, under Assumptions \ref{eigenvalue} and \ref{gap new}, we can conclude that
			\[
			\hat\lambda_i= \Big\{\frac{g_{1,i,\xi}}{g_{2,i,\xi}} + o_{\rm p}(1)\Big\}\{1+o_{\rm p}(1)\}^{-1}   =\bar\lambda_i + o_{\rm p}(1)\,.
			\]
			Repeating the argument used to establish the first assertion of Lemma \ref{pro: singular K21j}, with $\Kb_{1,2,j}$ and $\bar\lambda_i$ replaced by $\hat\Kb_{1,2,j}$ and $\hat{\lambda}_i$, respectively, we can conclude the second assertion of Lemma \ref{pro: singular K21j}.  This completes the proof of Lemma \ref{pro: singular K21j}.
$\hfill\Box$

\subsection{Proof of Lemma \ref{lem: Omega n}}\label{sec:lem: Omega n}
For the events $\Xi_{h,n}(\tilde C)$ with $h \in \{1,2,5,6,8,9\}$, we only prove the results for the case $k_1=k_2=0$. The cases with general $k_1$ and $k_2$ can be handled similarly and are therefore omitted. We first prove that $\mathbb{P}\{\Xi_{1,n}(\tilde C)\}\to 1$ and 
$\mathbb{P}\{\Xi_{2,n}(\tilde C)\}\to 1$ as $n \to \infty$. 
The proofs for $\Xi_{3,n}(\tilde C)$ to $\Xi_{5,n}(\tilde C)$ are analogous and are therefore omitted. We then consider $\Xi_{7,n}(\tilde C)$ to $\Xi_{9,n}(\tilde C)$ using covering arguments, and finally handle $\Xi_{6,n}(\tilde C)$, which requires additional treat on $\bar f_i$.

        Under Assumptions \ref{tail}, \ref{mixing} and \ref{cross}, the proofs for the events $\Xi_{1,n}(\tilde C)$ to $\Xi_{5,n}(\tilde C)$ are  similar.
     For $\Xi_{1,n}(\tilde C)$,  notice that
\begin{align*} 
     &\frac{1}{n-k}\sum_{t=k+1}^n(f_{t,i}-\bar f_i)(f_{t-k,\ell}-\bar f_\ell)-\Upsilon_{k,i,\ell} \\
     &~~~~~~ =
\frac{1}{n-k}\sum_{t=k+1}^n
\{f_{t,i}f_{t-k,\ell}-\mathbb{E}(f_{t,i}f_{t-k,\ell})\}   -
\frac{1}{n-k}\sum_{t=k+1}^n
\{f_{t,i}\bar f_\ell-\mathbb{E}(f_{t,i}\bar f_\ell)\}  \\\nonumber
&~~~~~~ \quad -
\frac{1}{n-k}\sum_{t=k+1}^n
\{\bar f_i f_{t-k,\ell}-\mathbb{E}(\bar f_i f_{t-k,\ell})\}   +
\{\bar f_i\bar f_\ell-\mathbb{E}(\bar f_i\bar f_\ell)\}\,.
\end{align*}
Similarly to  \eqref{sigma2},  it holds that
\[
 \frac{1}{n-k}\sum_{t=k+1}^n
    \{f_{t,i}f_{t-k,\ell}-\mathbb{E}(f_{t,i}f_{t-k,\ell})\}
     =
    O_{\rm p}(n^{-1/2})\,.
\]
We further have
\begin{align}\label{eq: four terms for ftftk}
&\frac{1}{n-k}\sum_{t=k+1}^n
 \{f_{t,i}\bar f_\ell-\mathbb{E}(f_{t,i}\bar f_\ell)\} \\\nonumber
& ~~~~~~~~~~~~=
\bigg[\frac{1}{n-k}\sum_{t=k+1}^n
\{f_{t,i}-\mathbb{E}(f_{t,i})\}\bigg]
\{\bar f_\ell-\mathbb{E}(\bar f_\ell)\} \\ \nonumber
&~~~~~~~~~~~~ \quad +
\bigg\{\frac{1}{n-k}\sum_{t=k+1}^n
\mathbb{E}(f_{t,i})\bigg\}
\{\bar f_\ell-\mathbb{E}(\bar f_\ell)\} \\  \nonumber
& ~~~~~~~~~~~~ \quad + 
\bigg[\frac{1}{n-k}\sum_{t=k+1}^n
\{f_{t,i}-\mathbb{E}(f_{t,i})\}\bigg] \mathbb{E}(\bar f_\ell) \\
&~~~~~~~~~~~~ \quad + 
\bigg\{\frac{1}{n-k}\sum_{t=k+1}^n
\mathbb{E}(f_{t,i})\bigg\}\mathbb{E}(\bar f_\ell)
-
\frac{1}{n-k}\sum_{t=k+1}^n
\mathbb{E}(f_{t,i}\bar f_\ell)\,. \nonumber
\end{align}
By the same concentration
inequality used in proving \eqref{sigma2}, it holds that
\[
    \frac{1}{n-k}\sum_{t=k+1}^n
    \{f_{t,i}-\mathbb{E}(f_{t,i})\}
    =
    O_{\rm p}(n^{-1/2}) ~~ \textup{and}~~
    \bar f_\ell-\mathbb{E}(\bar f_\ell)
    =
    O_{\rm p}(n^{-1/2})\,.
\]
Moreover, Assumption \ref{tail} implies that
$\max_{t\in[n],i\in[r]}\mathbb{E}|f_{t,i}| = O(1)$. We can conclude that the first three terms on the right-hand side of \eqref{eq: four terms for ftftk} are $O_{\rm p}(n^{-1/2})$. It remains to bound the last deterministic term in \eqref{eq: four terms for ftftk}.
By the definition of $\bar f_\ell$, we have
\begin{align*}
    &\bigg|
\bigg\{\frac{1}{n-k}\sum_{t=k+1}^n
\mathbb{E}(f_{t,i})\bigg\}\mathbb{E}(\bar f_\ell)
-
\frac{1}{n-k}\sum_{t=k+1}^n
\mathbb{E}(f_{t,i}\bar f_\ell)
\bigg| \\
&~~~~~~ =
\bigg|
\frac{1}{n-k}\sum_{t=k+1}^n
\{
\mathbb{E}(f_{t,i})\mathbb{E}(\bar f_\ell)
-
\mathbb{E}(f_{t,i}\bar f_\ell)
\}
\bigg| \\
&~~~~~~ =
\bigg|
\frac{1}{n(n-k)}
\sum_{t=k+1}^n\sum_{s=1}^n
\operatorname{Cov}(f_{t,i},f_{s,\ell})
\bigg| \\
&~~~~~~ \le
\frac{1}{n(n-k)}
\sum_{t=k+1}^n\sum_{s=1}^n
|\operatorname{Cov}(f_{t,i},f_{s,\ell})|\,.
\end{align*}
By Theorem 16.2.3 in \citeS{athreya2006measure-app},  it holds that $\sup_{t\in[n]}
    \sum_{s=1}^n
    |\operatorname{Cov}(f_{t,i},f_{s,\ell})|
    \le C$ for some universal constant $C > 0$. Thus, the last deterministic term in \eqref{eq: four terms for ftftk} is $O(n^{-1})$. Similarly, we have
\[
    \frac{1}{n-k}\sum_{t=k+1}^n
    \{\bar f_i f_{t-k,\ell}-\mathbb{E}(\bar f_i f_{t-k,\ell})\}
    =
    O_{\rm p}(n^{-1/2}) ~~ \textup{and} ~~  \bar f_i\bar f_\ell-\mathbb{E}(\bar f_i\bar f_\ell)  =
    O_{\rm p}(n^{-1/2})\,.
\]
Combining the above bounds, we obtain
\[
    \bigg|
    \frac{1}{n-k}\sum_{t=k+1}^n(f_{t,i}-\bar f_i)(f_{t-k,\ell}-\bar f_\ell)
    -
    \Upsilon_{k,i,\ell}
    \bigg|
    =
    O_{\rm p}(n^{-1/2})\,.
\]
Therefore, for some sufficiently large
constant $\tilde C>0$, $\mathbb{P}\{\Xi_{1,n}(\tilde C)\}\rightarrow 1$ as $n\rightarrow\infty$.

 For $\Xi_{2,n}(\tilde C)$, recall that
$\bLambda_{\xi} = (\Lambda_{\xi,i,j})_{r\times r}$ is the diagonal matrix
with the $i$-th diagonal entry being
$n^{-1}\sum_{t=1}^n\mathbb{E}\{(\xi_{t,i}-\bar\xi_i)^2\}$. Write $u_{t,i}
    =
    (\ab_{i,m}^{\MP}\otimes\cdots\otimes\ab_{i,1}^{\MP})^\T
    \textup{vec}(\mathcal{E}_t)$ and $\bar u_i=n^{-1}\sum_{t=1}^n u_{t,i}$.
By definition,
\[
    \Lambda_{\xi,\ell,\ell}^{1/2}\xi_{t,\ell}^{\textup{s}}
    =
    w_\ell(f_{t,\ell}-\bar f_\ell)+(u_{t,\ell}-\bar u_\ell)\,.
\]
Thus, for $i,\ell\in[r]$ and $k\in\{0,1\}$,
\[
\begin{split}
& \Lambda_{\xi,\ell,\ell}^{1/2}
\bigg|
\frac{1}{n-k}\sum_{t=k+1}^n
(f_{t,i}-\bar f_i)\xi_{t-k,\ell}^{\textup{s}}
-
\frac{1}{n-k}\sum_{t=k+1}^n
\mathbb{E}\{(f_{t,i}-\bar f_i)\xi_{t-k,\ell}^{\textup{s}}\}
\bigg|                                                     \\
&~~~~~~ \le
w_\ell
\bigg|
\frac{1}{n-k}\sum_{t=k+1}^n
(f_{t,i}-\bar f_i)(f_{t-k,\ell}-\bar f_\ell)
-
\Upsilon_{k,i,\ell}
\bigg|                                                     \\
&~~~~~~\quad
+
\bigg|
\frac{1}{n-k}\sum_{t=k+1}^n
(f_{t,i}-\bar f_i)(u_{t-k,\ell}-\bar u_\ell)
-
\frac{1}{n-k}\sum_{t=k+1}^n
\mathbb{E}\{(f_{t,i}-\bar f_i)(u_{t-k,\ell}-\bar u_\ell)\}
\bigg|\, .
\end{split}
\]
Under Assumption \ref{error}, it follows that $\mathbb{E}\{(f_{t,i}-\bar f_i)(u_{t-k,\ell}-\bar u_\ell)\} = 0$.
Moreover, by Assumption \ref{cross} and the uniform boundedness of
$|\ab_{i,m}^{\MP}\otimes\cdots\otimes\ab_{i,1}^{\MP}|_2$, the process
$\{u_{t,\ell}\}_{t\ge1}$ has the same exponential-type tail bound as in
Assumption \ref{tail}. Therefore, 
\[
\begin{split}
& \Lambda_{\xi,\ell,\ell}^{1/2}
\bigg|
\frac{1}{n-k}\sum_{t=k+1}^n
(f_{t,i}-\bar f_i)\xi_{t-k,\ell}^{\textup{s}}
-
\frac{1}{n-k}\sum_{t=k+1}^n
\mathbb{E}\{(f_{t,i}-\bar f_i)\xi_{t-k,\ell}^{\textup{s}}\}
\bigg|  =
O_{\rm p}\bigg(\frac{w_\ell}{\sqrt n}\bigg)\,.
\end{split}
\]
Notice that 
  	\begin{equation*}\label{expectation xi si}
   \begin{split}
       \Lambda_{\xi,i,i} = \frac{1}{n}\sum_{s=1}^n\mathbb{E}\{(\xi_{s,i}-\bar\xi_i)^2\} =  	\frac{w_i^2}{n}\sum_{s=1}^n\mathbb{E}\{(f_{s,i} - \bar f_i)^2\}   +  	\frac{1}{n}\sum_{s=1}^n\mathbb{E}\{(u_{s,i}-\bar u_i)^2\}\,.
   \end{split}
			\end{equation*}
   By Assumption \ref{cross}, it follows that $	n^{-1}\sum_{s=1}^n\mathbb{E}\{(u_{s,i}-\bar u_i)^2\} \le C$ for some universal constant $C>0$. It then follows that 
\begin{equation}\label{eq: varxi-bar-xi}
      \Lambda_{\xi,i,i}=w_i^2\Upsilon_{0,i,i}+O(1)\asymp w_i^2 \,.
\end{equation}
Then $\mathbb{P}\{\Xi_{2,n}(\tilde C)\}\rightarrow1$ as $n\rightarrow\infty$. By the definitions of $\Fb_{\xi,\mminus i}^{\textup{s}}$ and $\bxi_i^{\textup{s}}$, and by similar calculations, the events $\Xi_{3,n}(\tilde C)$ to $\Xi_{5,n}(\tilde C)$ hold with probability approaching one for sufficiently large $\tilde C > 0$. The proofs are very similar so we omit the details. 


			In the following, we focus on $\mathbb{P}\{\Xi_{7,n}(\tilde C)\}\rightarrow 1$ as $n \to \infty$. 
			Recall that $\bbeta_j \in \mathbb{R}^{d_j}$ specified in $\Xi_{7,n}(\tilde C)$ satisfies $|\bbeta_j|_2 = 1$. Then,	according to (G.3) of Lemma G.1 in \citeS{han2024tensor-app}, there exist non-random vectors $\{\bgamma_{j,h_j} \in \mathbb{R}^{d_j},1\le h_j\le (17m)^{d_j}\}_{j\in[m]}$ and $\{\tilde\bgamma_{j,h_j^\prime}  \in \mathbb{R}^{d_j},1\le h_j^\prime \le (17m)^{d_j}\}_{j\in[m]}$ such that $|\bgamma_{j,h_j}|_2\le 1$, $|\tilde \bgamma_{j,h_j^\prime}|_2\le 1$, and
			\[
			\begin{split}
				&		\max_{\substack{|\tilde\bbeta_j|_2=1 = |\bbeta_j|_2,\\
                    \tilde\bbeta_j,\, \bbeta_j \in\mathbb{R}^{d_j},\forall j\in[m]}}\bigg|\frac{1}{n-1}\sum_{t=2}^n(\mathcal{E}_{t-1}\times_{j=1}^m\tilde\bbeta_j^{\T})(\mathcal{E}_t\times_{j=1}^m\bbeta_j^{\T})\bigg|\\
				& ~~~~~~ \le 2\max_{1\le h_j\le (17m)^{d_j},1\le  h_j^\prime \le (17m)^{d_j},\forall j\in[m]}\bigg|\frac{1}{n-1}\sum_{t=2}^n(\mathcal{E}_{t-1}\times_{j=1}^m\tilde\bgamma_{j,h^\prime_j}^{\T})(\mathcal{E}_t\times_{j=1}^m\bgamma_{j,h_j}^{\T})\bigg|\,.
			\end{split}
			\]
			Similarly to \eqref{tail error},  given such deterministic $\{ \bgamma_{j,h_j}\}_{j\in[m]}$ and $\{\tilde  \bgamma_{j,h_j^\prime}\}_{j\in[m]}$, we always have
		\begin{equation}\label{Xi 7n error bound}
		    \mathbb{P}\bigg(\bigg|\frac{1}{n-1}\sum_{t=2}^n(\mathcal{E}_{t-1}\times_{j=1}^m\tilde\bgamma_{j,h_j^\prime})(\mathcal{E}_t\times_{j=1}^m\bgamma_{j,h_j})\bigg|\ge  x\bigg)\lesssim \exp(- Cnx^2) +\exp(- Cn^{\tilde c}x^{\tilde c})
		\end{equation}
			for any $x > 0$ and some universal constant $C>0$, where $\tilde c=(1+2c_1^{-1}+c_2^{-1})^{-1}$.  Therefore,
			\[
			\begin{split}
				&\mathbb{P}\bigg(\max_{\substack{|\tilde\bbeta_j|_2=1 = |\bbeta_j|_2,\\
                    \tilde\bbeta_j,\, \bbeta_j \in\mathbb{R}^{d_j},\forall j\in[m]}} \bigg|\frac{1}{n-1}\sum_{t=2}^n(\mathcal{E}_{t-1}\times_{j=1}^m\tilde\bbeta_j^{\T})(\mathcal{E}_t\times_{j=1}^m\bbeta_j^{\T})\bigg|\ge x\bigg)\\
				& \qquad \qquad\qquad\qquad \qquad \qquad\lesssim  (17m)^{2\sum_{j=1}^m d_j}\{\exp(-Cnx^2)+\exp(-Cn^{\tilde c}x^{\tilde c})\}\,.
			\end{split}
			\]
			Taking $x=\tilde C\{(n^{-1}\sum_{j=1}^m d_j)^{1/2}+n^{-1}(\sum_{j=1}^m d_j)^{1/\tilde c}\}$ for sufficiently large constant $\tilde C>0$, we obtain
			\begin{equation}\label{Xi 4n rate}
            \begin{split}
            &\max_{\substack{|\tilde\bbeta_j|_2=1 = |\bbeta_j|_2,\\
                    \tilde\bbeta_j,\, \bbeta_j \in\mathbb{R}^{d_j},\forall j\in[m]}} \bigg|\frac{1}{n-1}\sum_{t=2}^n(\mathcal{E}_{t-1}\times_{j=1}^m\tilde\bbeta_j^{\T})(\mathcal{E}_t\times_{j=1}^m\bbeta_j^{\T})\bigg| \\
            & \qquad \qquad\qquad\qquad \qquad \qquad \le \tilde C \bigg\{\bigg(\frac{\sum_{j=1}^m d_j}{n}\bigg)^{1/2}+\frac{(\sum_{j=1}^m d_j)^{1/\tilde c}}{n} \bigg\}  
            \end{split}
			\end{equation}
		with probability at least $1-\tilde C_1\exp(-\tilde C_2{\textstyle{\sum}}_{j=1}^m d_j)$ for some universal constants $\tilde C_1,\tilde C_2>0$. Therefore, $\mathbb{P}\{\Xi_{7,n}(\tilde C)\} \rightarrow 1$ for sufficiently large $\tilde C>0$, as long as $\max_{j\in[m]}d_j\rightarrow \infty$ ($D_n\rightarrow\infty$) as $n\rightarrow \infty$. 
 
The event $\Xi_{8,n}(\tilde C)$ can be handled similarly. By Assumption \ref{cross} and the similar arguments in the proof for $\Xi_{7,n}(\tilde C)$, we have
\[
\begin{split}
   \max_{  |\bbeta_j|_2=1,\, \bbeta_j \in\mathbb{R}^{d_j}, \forall j\in[m] }
\bigg|
\frac{1}{n}\sum_{t=1}^n
\mathcal{E}_t\times_{j=1}^m\bbeta_j^{\T}
\bigg|
&\le
\tilde C \bigg\{\bigg(\frac{\sum_{j=1}^m d_j}{n}\bigg)^{1/2}+\frac{(\sum_{j=1}^m d_j)^{1/\check c}}{n} \bigg\} \\
&\le \tilde C\bigg\{\bigg(\frac{\sum_{j=1}^m d_j}{n}\bigg)^{1/2}+\frac{(\sum_{j=1}^m d_j)^{1/\tilde c}}{n} \bigg\} 
\end{split}
\]
with probability approaching one, where $\check c = (2+|c_1^{-1} -1|_{\MP} + c_2^{-1})^{-1}$. For $\Xi_{9,n}(\tilde C)$, given any deterministic
$\bgamma_{j,h_j},\tilde\bgamma_{j,h_j^\prime} \in \mathbb{R}^{d_j}$ for $j \in [m]$, define
\[
    Z_t
    =
    (\otimes_{m}^{j=1}\tilde\bgamma_{j,h_j^\prime})^\T
    [
    \textup{vec}(\mathcal{E}_t)\textup{vec}(\mathcal{E}_t)^\T
    -
    \mathbb{E}\{\textup{vec}(\mathcal{E}_t)\textup{vec}(\mathcal{E}_t)^\T\}
    ]
    (\otimes_{m}^{j=1}\bgamma_{j,h_j})\,.
\]
Then $\mathbb{E}(Z_t)=0$. By Assumption \ref{cross}  and the similar arguments in the proof for $\Xi_{7,n}(\tilde C)$, it holds that
$\mathbb{P}\{\Xi_{9,n}(\tilde C)\}\rightarrow1$ as $n \to \infty$.

For $\Xi_{6,n}(\tilde C)$, it is slightly different because
$f_{t-k,i}-\bar f_i$ depends on the factors across the time dimension. We only consider the case $k=1$, since the case $k=0$ can be handled analogously. Notice that
\[
\begin{split}
&\bigg|
\frac{1}{n-1}\sum_{t=2}^n
(f_{t-1,i}-\bar f_i)
(\mathcal{E}_t\times_{j=1}^m\bbeta_j^\T)
\bigg|                                                \\
&\quad\le
\bigg|
\frac{1}{n-1}\sum_{t=2}^n
\{f_{t-1,i}-\mathbb{E}(\bar f_i)\}
(\mathcal{E}_t\times_{j=1}^m\bbeta_j^\T)
\bigg| +
|\bar f_i-\mathbb{E}(\bar f_i)|
\bigg|
\frac{1}{n-1}\sum_{t=2}^n
(\mathcal{E}_t\times_{j=1}^m\bbeta_j^\T)
\bigg| \,.
\end{split}
\]
For the first term, given any deterministic $\bgamma_{j,h_j} \in \mathbb{R}^{d_j}$ for $j \in [m]$,
Assumption \ref{error} implies that $ \mathbb{E}
    [
    \{f_{t-1,i}-\mathbb{E}(\bar f_i)\}
    (\mathcal{E}_t\times_{j=1}^m \bgamma_{j,h_j}^\T)
    ]=0$. Moreover, Assumptions \ref{tail}, \ref{mixing}, and \ref{cross} imply that
\[
\mathbb{P}\bigg(
\bigg|
\frac{1}{n-1}\sum_{t=2}^n
\{f_{t-1,i}-\mathbb{E}(\bar f_i)\}
(\mathcal{E}_t\times_{j=1}^m \bgamma_{j,h_j}^\T)
\bigg|>x
\bigg)
\lesssim
\exp(-Cnx^2)+\exp(-Cn^{\tilde c}x^{\tilde c})
\]
for any $x > 0$ and some universal constant $C>0$, where $\tilde c=(1+2c_1^{-1}+c_2^{-1})^{-1}$. Following the similar arguments in the proof for $\Xi_{7,n}(\tilde C)$, it holds that
\begin{equation}\label{Xi 6 ft Et tail bound}
\begin{split}
    &\max_{|\bbeta_j|_2=1,\,\bbeta_j \in\mathbb{R}^{d_j}, \forall j\in[m]}
\bigg|
\frac{1}{n-1}\sum_{t=2}^n
\{f_{t-1,i}-\mathbb{E}(\bar f_i)\}
(\mathcal{E}_t\times_{j=1}^m\bbeta_j^\T)
\bigg| \\
& ~~~~~~~~~~~~~~~~~~~~~~~~\le
\tilde C \bigg\{\bigg(\frac{\sum_{j=1}^m d_j}{n}\bigg)^{1/2}+\frac{(\sum_{j=1}^m d_j)^{1/\tilde c}}{n} \bigg\}
\end{split}  
\end{equation}
with probability approaching one. For the second term, by Assumptions \ref{tail} and \ref{mixing}, we have $|\bar f_i-\mathbb{E}(\bar f_i)|=O_{\rm p}(n^{-1/2})=o_{\rm p}(1)$. 
Combining this with the bound established for $\Xi_{8,n}(\tilde C)$ gives
\[
\begin{split}
&|\bar f_i-\mathbb{E}(\bar f_i)|
\max_{|\bbeta_j|_2=1,\,\bbeta_j \in\mathbb{R}^{d_j},\forall j\in[m]}
\bigg|
\frac{1}{n-1}\sum_{t=2}^n
(\mathcal{E}_t\times_{j=1}^m\bbeta_j^\T)
\bigg|   \\
&~~~~~~~~~~~~~~~~~~~~~~~~ \le
\tilde C \bigg\{\bigg(\frac{\sum_{j=1}^m d_j}{n}\bigg)^{1/2}+\frac{(\sum_{j=1}^m d_j)^{1/\tilde c}}{n} \bigg\}
\end{split}
\]
with probability approaching one. Therefore, $\mathbb{P}\{\Xi_{6,n}(\tilde C)\}\rightarrow1$  as $n \to \infty$ for sufficiently large $\tilde C>0$.
$\hfill\Box$

\subsection{Proof of Lemma \ref{lemma: E varphi}}\label{sec:pro:varphi}
 Let $\tilde\bUpsilon_k = (\tilde\Upsilon_{k,i,\ell})_{r \times r}$, where 
	\[
	\tilde\Upsilon_{k,i,\ell}=\frac{1}{n-k}\sum_{t=k+1}^n\mathbb{E}(\xi_{t,i}^{\textup{s}}\xi_{t-k,\ell}^{\textup{s}})\,,\quad i,\ell\in[r]\,,\,k\in \{0,1\}\,.
	\]
Further let  $\tilde\bUpsilon_{k,i}$ be the  $i$-th column of $\tilde\bUpsilon_k$, $\tilde\bUpsilon_{k,i,-i}$ be the $(r-1)$-dimensional vector by deleting the $i$-th entry of $\tilde\bUpsilon_{k,i}$, and $\tilde\bUpsilon_{k,-i,-i}$ be the $(r-1)\times (r-1)$ matrix by deleting the $i$-th row and $i$-th column of $\tilde\bUpsilon_k$. Then, by definition, 
\begin{equation}\label{bound for F  xi}
    \frac{1}{n-1}\mathbb{E}\{(\Fb_{\xi,\mminus i}^{\textup{s}})^{\T}\bxi_i^{\textup{s}}\}=\tilde\bUpsilon_{1,i,-i}\,.
\end{equation}
We begin by deriving several bounds for $\tilde\bUpsilon_k$.

Recall that $\bLambda_{\xi} = (\Lambda_{\xi,i,j})_{r\times r}$ is the $r\times r$ diagonal matrix with the $i$-th diagonal entry being $n^{-1}\sum_{t = 1}^n\mathbb{E}\{(\xi_{t,i}-\bar\xi_i)^2\}$. Write $u_{t,i} = (\ab_{i,m}^{\MP} \otimes \cdots \otimes  \ab_{i,1}^{\MP})^{\T}\textup{vec}(\mathcal{E}_t)$  and  $\bar u_i = n^{-1}\sum_{t = 1}^n u_{t,i}$. Then,
\[
\begin{split}
    \Lambda_{\xi,i,i}^{1/2}\Lambda_{\xi,\ell,\ell}^{1/2}\mathbb{E}(\xi_{t,i}^{\textup{s}}\xi_{t-k,\ell}^{\textup{s}})&=\mathbb{E}\{(\xi_{t,i}-\bar\xi_{i})(\xi_{t-k,\ell}-\bar\xi_{\ell})\}\\
    &= w_iw_{\ell}\mathbb{E}\{(f_{t,i}-\bar f_{i})(f_{t-k,\ell}-\bar f_{\ell})\}   +\mathbb{E}\{(u_{t,i}-\bar u_{i})(u_{t-k,\ell}-\bar u_{\ell})\}\,.
\end{split}
\]
Therefore,
\begin{equation}\label{tilde Upsilon}
    \tilde\Upsilon_{k,i,\ell}=\frac{w_iw_{\ell}}{\Lambda_{\xi,i,i}^{1/2}\Lambda_{\xi,\ell,\ell}^{1/2}}\Upsilon_{k,i,\ell}+\frac{1}{\Lambda_{\xi,i,i}^{1/2}\Lambda_{\xi,\ell,\ell}^{1/2}}\frac{1}{n-k}\sum_{t=k+1}^n\mathbb{E}\{(u_{t,i}-\bar u_{i})(u_{t-k,\ell}-\bar u_{\ell})\}\,.
\end{equation}
Notice that $\Lambda_{\xi,i,i}=w_i^2\Upsilon_{0,i,i}+O(1)\asymp w_i^2$ by \eqref{eq: varxi-bar-xi}.  Then, taking $k=0$ in \eqref{tilde Upsilon}, since $w_r\ge C_3$ and $C_9^{-1}\le \sigma_r(\bUpsilon_0)\le \sigma_1(\bUpsilon_0)\le C_9$, we can conclude that 
   \[
   \sigma_1(\tilde\bUpsilon_0)\asymp \sigma_1(\bUpsilon_0)~~\textup{and}~~\sigma_r(\tilde\bUpsilon_0)\asymp \sigma_r(\bUpsilon_0)\,.
   \]
   Next, take $k=1$ in \eqref{tilde Upsilon} and note that $\{u_{t,i}\}_{t \ge 1}$ is serially uncorrelated. Hence, 
   \[
   \frac{1}{n-1}\sum_{t=2}^n\mathbb{E}\{(u_{t,i}-\bar u_{i})(u_{t-1,\ell}-\bar u_{\ell})\} = O(n^{-1})\,.
   \]
   It follows that
   \[
   \max_{i\ne \ell}|\tilde\Upsilon_{1,i,\ell}| = O\bigg(\gamma_{\max}+\frac{1}{nw_r^2}\bigg)\,,
   \]
   thus $|\tilde\bUpsilon_{1,i,-i}|_2=O(\gamma_{\max}+n^{-1}w_r^{-2})$.

   Further note that
   \[
   \frac{n}{n-1}\tilde\Upsilon_{0,i,\ell}-\frac{1}{n-1}\sum_{t=2}^n\mathbb{E}(\xi_{t,i}^{\textup{s}}\xi_{t,\ell}^{\textup{s}})=\frac{1}{n-1}\mathbb{E}(\xi_{1,i}^{\textup{s}}\xi_{1,\ell}^{\textup{s}})=O(n^{-1})\,.
   \]
   Combining the preceding display with the definition of $\tilde\bUpsilon_0$, we obtain
\begin{equation}\label{lower bound for FF}
       \sigma_{r-1} [\mathbb{E}  \{(n-1)^{-1}(\Fb_{\xi,\mminus i}^{\textup{s}})^{\T}\Fb_{\xi,\mminus i}^{\textup{s}}  \}  ]
       \asymp \sigma_{r-1}(\tilde\bUpsilon_{0,-i,-i})\,.
\end{equation}
   By \eqref{bound for F xi}, \eqref{lower bound for FF}, and the fact that $|\tilde\bUpsilon_{1,i,-i}|_2=O(\gamma_{\max}+n^{-1}w_r^{-2})$, we obtain
   \[
   \begin{split}
          \max_{i\ne \ell}|\bar\varphi_{i,\ell}| &\le \max_{i\in[r]}\bigg\|\bigg[\frac{1}{n-1}\mathbb{E}\{(\Fb_{\xi,\mminus i}^{\textup{s}})^{\T}\Fb_{\xi,\mminus i}^{\textup{s}}\}\bigg]^{-1}\bigg\|_2 \cdot \bigg|\frac{1}{n-1}\mathbb{E}\{(\Fb_{\xi,\mminus i}^{\textup{s}})^{\T}\bxi_i^{\textup{s}}\}\bigg|_2 \\
          &=O\bigg(\gamma_{\max}+\frac{1}{nw_r^2}\bigg)\,.
   \end{split}
   \]
 This completes the proof of Lemma \ref{lemma: E varphi}.
 $\hfill\Box$

\subsection{Proof of Lemma \ref{bound for Lambda and phi}}

In the proof, $\tilde C_0 > 0$ is a universal constant and may vary in different lines, but is independent of $\tilde C$ and $(i,j,\textit{v})$.
  To simplify the notation, given the $\textit{v}$-th round and the $j$-th mode, for $i\in[r]$,  define 
\begin{equation}\label{define check a}
  \check\ab_{i,j^\prime}^{(\textit{v},j)}
=
\left\{
\begin{array}{ll}
\tilde\ab_{i,j^\prime}^{(\textit{v})}\,, & j^\prime< j\,,\\
\tilde\ab_{i,j^\prime}^{(\textit{v}-1)}\,, & j\le j^\prime\le m\,,
\end{array}
\right. ~~\textup{and}~~ (\check\ab_{i,j^\prime}^{(\textit{v},j)})^{\MP}
=
\left\{
\begin{array}{ll}
(\tilde\ab_{i,j^\prime}^{(\textit{v})})^{\MP}\,, & j^\prime< j\,,\\
(\tilde\ab_{i,j^\prime}^{(\textit{v}-1)})^{\MP}\,, & j\le j^\prime\le m\,.
\end{array}
\right.
\end{equation}
Write $  \check\Ab_{j^\prime}^{(\textit{v},j)}
    =
    (
    \check\ab_{1,j^\prime}^{(\textit{v},j)},
    \ldots,
    \check\ab_{r,j^\prime}^{(\textit{v},j)}
    )$. Under the condition
$w_r^{-1}w_1\bar\theta_j^{(\textit{v})}\le \tilde C^{-2}$, we have
\[
    \max_{j^\prime\in[m]}
    \|\check\Ab_{j^\prime}^{(\textit{v},j)}-\Ab_{j^\prime}\|_2
    \le
    \tilde C_0\bar\theta_j^{(\textit{v})}
    \le \tilde C_0 \tilde C^{-2}\,.
\]
Together with Assumption \ref{sparsity}, Weyl's theorem implies that
$\sigma_r(\check\Ab_{j^\prime}^{(\textit{v},j)})$ is uniformly bounded away
from $0$ for all $j^\prime\in[m]$ when $\tilde C$ is sufficiently large. 
Hence,
\begin{equation}\label{eq:amp_hat_amp}
      \max_{i\in[r]}
    |
    (\check\ab_{i,j^\prime}^{(\textit{v},j)})^{\MP}
    -
    \ab_{i,j^\prime}^{\MP}
    |_2
    \le
    \tilde C_0\bar\theta_j^{(\textit{v})}\,,
    ~~~~
    j^\prime\in[m]\,.
\end{equation} 
Recall that
$$\check f_{t,i}^{(\textit{v},j)}= \{\otimes_{m}^{j^\prime=1}(\check\ab_{i,j^\prime}^{(\textit{v},j)})^{\MP}\}^\T \textup{vec}(\mathcal{Y}_t) \,,$$ 
 where $\otimes_{m}^{j^\prime=1} \bbeta_{j'}$  is shorthand for $\bbeta_m \otimes \cdots \otimes \bbeta_1$, and $\tilde f_{t,i}^{(\textit{v},j)} = (\check f_{t,i}^{(\textit{v},j)} - \bar{\check f}_{i}^{(\textit{v},j)})/\tilde{\sigma}_{\check f, i}^{(\textit{v},j)}$ with $\bar{\check f}_{i}^{(\textit{v},j)} = n^{-1}\sum_{t=1}^n \check f_{t,i}^{(\textit{v},j)}$ and
\begin{equation}\label{notation tilde sigma fi vj}
     (\tilde{\sigma}_{\check f, i}^{(\textit{v},j)})^2 = \frac{1}{n-1}\sum_{t=1}^n(\check f_{t,i}^{(\textit{v},j)} - \bar{\check f}_{i}^{(\textit{v},j)})^2 \,.
\end{equation}
  To prove the first assertion of Lemma \ref{bound for Lambda and phi},  we begin by showing that
\begin{equation}\label{check f ti minus xi ti}
       \frac{1}{n-1}\sum_{t=1}^n\{\check f_{t,i}^{(\textit{v},j)}-\xi_{t,i} - (\bar{\check f}_{i}^{(\textit{v},j)}-\bar\xi_i)\}^2\le \tilde C_0\tilde C^{-4} w_i^2\,.
\end{equation}
   Because $\textup{vec}(\mathcal{Y}_t)=\sum_{i=1}^r w_if_{t,i}(\otimes^{j=1}_m\ab_{i,j})+\textup{vec}(\mathcal{E}_t)$ while $\xi_{t,i}=w_if_{t,i}+(\otimes^{j=1}_m \ab^{\MP}_{i,j})^\T \textup{vec}(\mathcal{E}_t)$, we can write
\begin{equation}\label{check f ti minus xi ti decomp}
       \begin{split}
&   \check f_{t,i}^{(\textit{v},j)}-\xi_{t,i} - ( \bar{\check f}_{i}^{(\textit{v},j)}-\bar\xi_i) \\
   &~~~~~~~~= \bigg[\prod_{j^\prime=1}^m\{\ab_{i,j^\prime}^\T(\check\ab_{i,j^\prime}^{(\textit{v},j)})^{\MP} \}-1 \bigg]w_i(f_{t,i} - \bar f_i) \\
   &~~~~~~~~\quad + \sum_{\ell\ne i} \bigg[\prod_{j^\prime = 1}^m\{\ab_{\ell,j^\prime}^\T(\check\ab_{i,j^\prime}^{(\textit{v},j)})^{\MP} \} \bigg] w_\ell (f_{t,\ell} - \bar f_\ell)\\
   &~~~~~~~~\quad +[\otimes_{m}^{j^\prime=1}(\check\ab_{i,j^\prime}^{(\textit{v},j)})^{\MP}-\otimes_{m}^{j^\prime=1}\ab^{\MP}_{i,j^\prime}]^\T \{\textup{vec}(\mathcal{E}_t) - \textup{vec}(\bar{\mathcal{E}}) \}\,,
   \end{split}
\end{equation}
where $ \bar{\mathcal{E}}  = n^{-1}\sum_{t = 1}^n  \mathcal{E}_t $. We bound the three terms on the right-hand side of \eqref{check f ti minus xi ti decomp} separately.
 Notice that under the event $\Xi_{1,n}(\tilde C)$, as long as $\tilde C$ is large, we  have $\max_{i \in [r]} n^{-1}\sum_{t=1}^n(f_{t,i} - \bar f_i)^2 \le \tilde C_0$ for some constant $\tilde C_0 > 0$ that is independent of $\tilde C$ and $(i,j,\textit{v})$. Furthermore, by \eqref{eq:amp_hat_amp}, it holds that
\[
   \bigg| \prod_{j^\prime=1}^m\{\ab_{i,j^\prime}^\T(\check\ab_{i,j^\prime}^{(\textit{v},j)})^{\MP} \}-1 \bigg| \le \tilde C_0\bar\theta_j^{(\textit{v})}\le \tilde C_0\tilde C^{-2}w_1^{-1}w_r\,,
   \]
   and the same bound also holds for $|{\textstyle\prod\nolimits}_{j^\prime = 1}^m\{\ab_{\ell,j^\prime}^\T(\check\ab_{i,j^\prime}^{(\textit{v},j)})^{\MP} \}|$ with $\ell \neq i$.  
Let $ \check{\db}_{i}^{(\textit{v},j)}
    =
     \otimes_{m}^{j^\prime=1}
    (\check\ab_{i,j^\prime}^{(\textit{v},j)})^{\MP} 
    -
    \otimes_{m}^{j^\prime=1}\ab_{i,j^\prime}^{\MP}$. By the decomposition of Kronecker products and \eqref{eq:amp_hat_amp}, it holds that $ |\check{\db}_i^{(\textit{v},j)}|_2
    \le
    \tilde C_0\bar\theta_j^{(\textit{v})}$ for some constant $\tilde C_0 > 0$ that is independent of $\tilde C$ and $(i,j,\textit{v})$. 
Moreover, $\check{\db}_i^{(\textit{v},j)}$ can be written as the sum of  $m$
Kronecker product vectors, each with spectral norm bounded by
$\tilde C_0\bar\theta_j^{(\textit{v})}$. Hence,
\begin{equation}\label{check a minus a product E}
\begin{split}
     \frac{1}{n}\sum_{t=1}^n
    \{
    (\check{\db}_i^{(\textit{v},j)})^\T
    \textup{vec}(\mathcal{E}_t)
    \}^2 
    &=
    (\check{\db}_i^{(\textit{v},j)})^\T
    \frac{1}{n}\sum_{t=1}^n
    [
    \textup{vec}(\mathcal{E}_t)\textup{vec}(\mathcal{E}_t)^\T
    -
    \mathbb{E}\{
    \textup{vec}(\mathcal{E}_t)\textup{vec}(\mathcal{E}_t)^\T
    \}
    ]
    \check{\db}_i^{(\textit{v},j)}                                      \\
    &\quad+
    (\check{\db}_i^{(\textit{v},j)})^\T
    \frac{1}{n}\sum_{t=1}^n
    \mathbb{E}\{
    \textup{vec}(\mathcal{E}_t)\textup{vec}(\mathcal{E}_t)^\T
    \}
    \check{\db}_i^{(\textit{v},j)}                                      \\
    &\le
    \tilde C_0
    (\bar\theta_j^{(\textit{v})})^2
    (
    \tilde C L_n
    +
    1
    )        \le
    \tilde C_0\tilde C^{-4}w_r^2\,,
\end{split}
\end{equation}
where the last line is by the event $\Xi_{9,n}(\tilde C)$ and condition \eqref{strong factor condition}.  Moreover, by Jensen inequality,
\[
\begin{split}
    \{
    (\check{\db}_i^{(\textit{v},j)})^\T
    \textup{vec}(\bar{\mathcal{E}})
    \}^2
     \le
    \frac{1}{n}\sum_{t=1}^n
    \{
    (\check{\db}_i^{(\textit{v},j)})^\T
    \textup{vec}(\mathcal{E}_t)
    \}^2                                      
     \le
    \tilde C_0\tilde C^{-4}w_r^2\,.
\end{split}
\]
Combining the preceding bounds for the three terms on the right-hand side of \eqref{check f ti minus xi ti decomp}, and applying  Triangle inequality and Cauchy--Schwarz inequality, we can conclude \eqref{check f ti minus xi ti}.

   Moreover, under the event $\Xi_{5,n}(\tilde C)$ and by \eqref{eq: varxi-bar-xi}, we have
   \[
   \bigg|\frac{1}{n-1}\sum_{t=1}^n(\xi_{t,i}-\bar\xi_i)^2-\frac{1}{n}\sum_{t=1}^n\mathbb{E}\{(\xi_{t,i}-\bar\xi_i)^2\}\bigg|\le \tilde C_0\tilde C^{-1}w_i^2\,.
   \]
   Therefore, we can conclude that
\begin{equation}\label{tilde sigma fi vj}
       \bigg|(\tilde{\sigma}_{\check f, i}^{(\textit{v},j)})^2-\frac{1}{n}\sum_{t=1}^n\mathbb{E}\{(\xi_{t,i}-\bar\xi_i)^2\}\bigg|\le \tilde C_0\tilde C^{-1}w_i^2\,,
\end{equation}
and the first assertion of Lemma \ref{bound for Lambda and phi} holds because  $\Lambda_{\xi,i,i}=n^{-1}\sum_{t=1}^n\mathbb{E}\{(\xi_{t,i}-\bar\xi_i)^2\} \asymp w_i^2$  by \eqref{eq: varxi-bar-xi}.

We now turn to show $|\tilde{\bvarphi}^{(\textit{v},j)}_i-\bar\bvarphi_i|_2\le \tilde C_0\tilde C^{-1}$. To do this, define $\bvarphi_i =(\varphi_{i,1},\ldots,\varphi_{i,r})^{\T}$ with the $i$-th entry being 1 and the remaining entries are given by $-\{(\Fb_{\xi,\mminus i}^{\textup{s}})^{\T}\Fb_{\xi,\mminus i}^{\textup{s}}\}^{-1}(\Fb_{\xi,\mminus i}^{\textup{s}})^{\T}\bxi_i^{\textup{s}}$.   We first show $|\bvarphi_i-\bar\bvarphi_i|_2\le \tilde C_0\tilde C^{-1}$. By Triangle inequality, it holds that
	\[
	\begin{split}
		|\bvarphi_i-\bar\bvarphi_i|_2 &\le |\{(\Fb_{\xi,\mminus i}^{\textup{s}})^{\T}\Fb_{\xi,\mminus i}^{\textup{s}}\}^{-1}(\Fb_{\xi,\mminus i}^{\textup{s}})^{\T}\bxi_i^{\textup{s}}-[\mathbb{E}\{(\Fb_{\xi,\mminus i}^{\textup{s}})^{\T}\Fb_{\xi,\mminus i}^{\textup{s}}\}]^{-1}(\Fb_{\xi,\mminus i}^{\textup{s}})^{\T}\bxi_i^{\textup{s}}|_2\\
		&\quad+|[\mathbb{E}\{(\Fb_{\xi,\mminus i}^{\textup{s}})^{\T}\Fb_{\xi,\mminus i}^{\textup{s}}\}]^{-1}[(\Fb_{\xi,\mminus i}^{\textup{s}})^{\T}\bxi_i^{\textup{s}}-\mathbb{E}\{(\Fb_{\xi,\mminus i}^{\textup{s}})^{\T}\bxi_i^{\textup{s}}\}]|_2\,.
	\end{split}
	\]
    We only show how to bound the first term on the right-hand side, while the second term can be handled similarly. By \eqref{matrix inverse}, the first line can be bounded by
\[
\begin{split}
    &| \{(\Fb_{\xi,\mminus i}^{\textup{s}})^{\T}\Fb_{\xi,\mminus i}^{\textup{s}}\}^{-1}[(\Fb_{\xi,\mminus i}^{\textup{s}})^{\T}\Fb_{\xi,\mminus i}^{\textup{s}}-\mathbb{E}\{(\Fb_{\xi,\mminus i}^{\textup{s}})^{\T}\Fb_{\xi,\mminus i}^{\textup{s}}\}][\mathbb{E}\{(\Fb_{\xi,\mminus i}^{\textup{s}})^{\T}\Fb_{\xi,\mminus i}^{\textup{s}}\}]^{-1}(\Fb_{\xi,\mminus i}^{\textup{s}})^{\T}\bxi_i^{\textup{s}}|_2\,.
\end{split}
\]
On the event $\Xi_{3,n}(\tilde C) \cap \Xi_{4,n}(\tilde C)$, this term admits the upper bound
$\tilde C_0\tilde C^{-1}$ due to \eqref{bound for F  xi} and \eqref{lower bound for FF}. Handling the second line similarly, we conclude that 
    \begin{equation}\label{eq:bvarphi_i-Ebvarphi_i}
        |\bvarphi_i-\bar\bvarphi_i|_2 \le \tilde C_0\tilde{C}^{-1}.
    \end{equation}
Now,  Lemma \ref{bound for Lambda and phi} follows from Triangle inequality once we show
    \begin{equation}\label{zeta 1 decomp 3}
        |\tilde{\bvarphi}^{(\textit{v},j)}_i- \bvarphi_i |_2 \le \tilde C_0 \tilde{C}^{-1}.
    \end{equation}
     To this end, we should first show that
\begin{equation}\label{zeta 1 decomp 31}
	n^{-1}\|(\tilde\Fb_{\mminus i}^{(\textit{v},j)})^{\T}\tilde\Fb_{\mminus i}^{(\textit{v},j)}-(\Fb_{\xi,\mminus i}^{\textup{s}})^{\T}\Fb_{\xi,\mminus i}^{\textup{s}}\|_2 \le \tilde C_0\tilde C^{-1}\,,
\end{equation}
where $\tilde\Fb_{\mminus i}^{(\textit{v},j)}$ is defined similarly to $\tilde\Fb_{\mminus i}$ in \eqref{eq: tilde xi} by replacing $\check f_{t,i}$ with $\check f_{t,i}^{(\textit{v},j)}$. 
	By Cauchy--Schwarz inequality and Triangle inequality, it suffices to show that $n^{-1}\|\tilde\Fb_{\mminus i}^{(\textit{v},j)}-\Fb_{\xi,\mminus i}^{\textup{s}}\|^2_2 \le \tilde C_0 \tilde C^{-2}$, or sufficiently, 
\begin{equation}\label{zeta 1 decomp 32}
	\max_{i \in [r]}\frac{1}{n}\sum_{t=2}^n(\tilde f^{(\textit{v},j)}_{t,i}-\xi_{t,i}^{\textup{s}})^2\le \tilde C_0\tilde C^{-2}\,.
\end{equation}
	By the definition of $\tilde f^{(\textit{v},j)}_{t,i}$, write
	\[
	\begin{split}
\tilde f^{(\textit{v},j)}_{t,i}-\xi_{t,i}^{\textup{s}}&=\frac{1}{\tilde{\sigma}^{(\textit{v},j)}_{\check f, i}} ( \check f^{(\textit{v},j)}_{t,i}- \bar{\check f}^{(\textit{v},j)}_{i}-\xi_{t,i}+\bar\xi_i ) +\bigg( \frac{\Lambda_{\xi,i,i}^{1/2}}{\tilde{\sigma}^{(\textit{v},j)}_{\check f, i}}-1\bigg) \xi_{t,i}^{\textup{s}}\,.
	\end{split}
	\]
By \eqref{check f ti minus xi ti}, \eqref{tilde sigma fi vj}, and the first assertion of  Lemma \ref{bound for Lambda and phi} proved above, together with the fact that
$n^{-1}\sum_{t=1}^n(\xi_{t,i}^{\textup{s}})^2 \le \tilde{C}_0$ under
$\Xi_{5,n}(\tilde C)$, we obtain \eqref{zeta 1 decomp 32} and hence \eqref{zeta 1 decomp 31}.  Similarly, we also conclude that
	\[
		n^{-1}\|(\tilde\Fb^{(\textit{v},j)}_{\mminus i})^{\T}\tilde\fb^{(\textit{v},j)}_i-(\Fb_{\xi,\mminus i}^{\textup{s}})^{\T}\bxi_i^{\textup{s}}\|_2 \le \tilde C_0\tilde C^{-1}\,.
	\] 
	Then, by a procedure parallel to the proof of $|\bvarphi_i-\bar\bvarphi_i|_2\le \tilde C_0\tilde C^{-1}$, we can conclude (\ref{zeta 1 decomp 3}).   Then Lemma \ref{bound for Lambda and phi} holds.
$\hfill\Box$

\subsection{Proof of Lemma \ref{lem: zeta_1 lower bound}}

       In the proof, $\tilde C_0 > 0$ is a universal constant and may vary in different lines, but is independent of $\tilde C$ and $(i,j,\textit{v})$. Notice that
\begin{equation}\label{ab ij zeta 1}
    			\ab_{i,j}^\T\bzeta_1^{(i,j,\textit{v})} =\frac{w_i}{n-1}\sum_{t=2}^n(f_{t,i}-\bar f_i)\tilde\xi_{t-1,i}^{(\textit{v},j)} \{ \bbb_{i,j}^{\T}(\tilde\bbb_{i,j}^{(\textit{v})})^{\MP}\}\,.
\end{equation}
By the definition of $\tilde\xi_{t-1,i}^{(\textit{v},j)}$, write
			\begin{equation}\label{zeta 1 decomp}
			\begin{split}
				&\frac{w_i}{n-1}\sum_{t=2}^n(f_{t,i}-\bar f_i)\tilde\xi_{t-1,i}^{(\textit{v},j)}\\
				&~~~~~~ =\frac{w_i}{n-1}\sum_{t=2}^n(f_{t,i}-\bar f_i)\tilde f^{(\textit{v},j)}_{t-1,i}+\sum_{\ell\ne i}\frac{w_i\tilde\varphi_{i,\ell}^{(\textit{v},j)}}{n-1}\sum_{t=2}^n(f_{t,i}-\bar f_i)\tilde f^{(\textit{v},j)}_{t,\ell}\,.
			\end{split}
			\end{equation}
	As we will show in Section \ref{sec:lemma8-step-2}, the following three auxiliary bounds hold:
		\begin{align}
            & | (\tilde\bbb_{i,j}^{(\textit{v})})^{\MP}   - \bbb_{i,j}^{\MP}|_2 \le \tilde C_0\tilde C \bar\theta^{(\textit{v})}_j \le \tilde C_0\tilde C^{-1} \, \label{bij(v)-bij}\,,\\
	&	\bigg|	\frac{1}{n-1}\sum_{t=2}^n(f_{t,i}-\bar f_i)\tilde f^{(\textit{v},j)}_{t-1,i}-	\frac{1}{n-1}\sum_{t=2}^n(f_{t,i}-\bar f_i)\xi_{t-1,i}^{\textup{s}}\bigg|\le \tilde C_0\tilde C^{-1/2}\,,\label{zeta 1 decomp 1}\\
	&	\max_{\ell\ne i}\bigg|	\frac{1}{n-1}\sum_{t=2}^n(f_{t,i}-\bar f_i)\tilde f^{(\textit{v},j)}_{t,\ell}-	\frac{1}{n-1}\sum_{t=2}^n(f_{t,i}-\bar f_i)\xi_{t,\ell}^{\textup{s}}\bigg|\le \tilde C_0\tilde C^{-1/2}\,\label{zeta 1 decomp 2}.
		\end{align}
Notice that by \eqref{bij(v)-bij}, we have
			\[
			|\bbb_{i,j}^{\T}(\tilde\bbb_{i,j}^{(\textit{v})})^{\MP}-1|\le | \{(\tilde\bbb_{i,j}^{(\textit{v})})^{\MP}  - \bbb_{i,j}^{\MP}\}^{\T}\bbb_{i,j}|\le \tilde C_0\tilde C^{-1}\,.
			\]
			Then, by \eqref{ab ij zeta 1}, to complete the proof, it remains to bound \eqref{zeta 1 decomp}. By \eqref{zeta 1 decomp 1}, the event $\Xi_{2,n}(\tilde C)$, and Triangle inequality, we conclude that
            \begin{align*}
                 &\bigg|	\frac{1}{n-1}\sum_{t=2}^n(f_{t,i}-\bar f_i)\tilde f^{(\textit{v},j)}_{t-1,i}-\mathbb{E}\bigg\{\frac{1}{n-1}\sum_{t=2}^n(f_{t,i}-\bar f_i)\xi_{t-1,i}^{\textup{s}}\bigg\}\bigg|\\
    & ~~~~~~\le \bigg|	\frac{1}{n-1}\sum_{t=2}^n(f_{t,i}-\bar f_i)\tilde f^{(\textit{v},j)}_{t-1,i}-\frac{1}{n-1}\sum_{t=2}^n(f_{t,i}-\bar f_i)\xi_{t-1,i}^{\textup{s}}\bigg|\\
    &~~~~~~ \quad +\bigg|	\frac{1}{n-1}\sum_{t=2}^n(f_{t,i}-\bar f_i)\xi_{t-1,i}^{\textup{s}}-\mathbb{E}\bigg\{\frac{1}{n-1}\sum_{t=2}^n(f_{t,i}-\bar f_i)\xi_{t-1,i}^{\textup{s}}\bigg\}\bigg|\\
   & ~~~~~~  \le \tilde C_0\tilde C^{-1/2}\,.
            \end{align*}
Similarly,   we can also conclude that
\[
\max_{\ell\ne i}\bigg|\frac{1}{n-1}\sum_{t=2}^n(f_{t,i}-\bar f_i)\tilde f^{(\textit{v},j)}_{t,\ell}-\mathbb{E}\bigg\{\frac{1}{n-1}\sum_{t=2}^n(f_{t,i}-\bar f_i)\xi_{t,\ell}^{\textup{s}}\bigg\}\bigg|\le \tilde C_0\tilde C^{-1/2}.
\]
Further by Triangle inequality and Lemmas \ref{lemma: E varphi} and \ref{bound for Lambda and phi}, we have
         \[
         \begin{split}
             & \max_{\ell\ne i}\bigg|\frac{\tilde\varphi_{i,\ell}^{(\textit{v},j)}}{n-1}\sum_{t=2}^n(f_{t,i}-\bar f_i)\tilde f^{(\textit{v},j)}_{t,\ell}-\mathbb{E}\bigg\{\frac{\bar\varphi_{i,\ell}}{n-1}\sum_{t=2}^n(f_{t,i}-\bar f_i)\xi_{t,\ell}^{\textup{s}}\bigg\}\bigg|\\
           &~~~~~~\le \max_{\ell\ne i}\bigg|\frac{\tilde\varphi_{i,\ell}^{(\textit{v},j)}-\bar\varphi_{i,\ell}}{n-1}\sum_{t=2}^n(f_{t,i}-\bar f_i)\tilde f^{(\textit{v},j)}_{t,\ell}\bigg|\\
            &~~~~~~\quad +\max_{\ell\ne i}|\bar\varphi_{i,\ell}| \cdot \bigg|\frac{1}{n-1}\sum_{t=2}^n(f_{t,i}-\bar f_i)\tilde f^{(\textit{v},j)}_{t,\ell}-\mathbb{E}\bigg\{\frac{1}{n-1}\sum_{t=2}^n(f_{t,i}-\bar f_i)\xi_{t,\ell}^{\textup{s}}\bigg\}\bigg|\\
           & ~~~~~~\le  \tilde C_0\tilde C^{-1/2} 
         \end{split}   
         \]
         for sufficiently large $n$. Recall
         \[
         \sigma_{f_i,\xi_i} = \mathbb{E}\bigg\{\frac{1}{n-1}\sum_{t=2}^n(f_{t,i}-\bar f_i)\xi_{t-1,i}^{\textup{s}}\bigg\} + \mathbb{E}\bigg\{\sum_{\ell\ne i}\frac{ \bar\varphi_{i,\ell}}{n-1}\sum_{t=2}^n(f_{t,i}-\bar f_i)\xi_{t,\ell}^{\textup{s}}\bigg\}\,.
         \]
 Together with \eqref{zeta 1 decomp}, we can conclude that
\begin{equation}\label{zeta 1 final}
		\begin{split}
		&\bigg|\frac{w_i}{n-1}\sum_{t=2}^n(f_{t,i}-\bar f_i)\tilde\xi_{t-1,i}^{(\textit{v},j)} - w_i\sigma_{f_i,\xi_i} \bigg| \le \tilde C_0 \tilde C^{-1/2}w_i\,.
	\end{split}
\end{equation}
Combining \eqref{ab ij zeta 1}, \eqref{bij(v)-bij}, and
\eqref{zeta 1 final}, we have
\[
\begin{split}
    |\ab_{i,j}^{\T}\bzeta_1^{(i,j,\textit{v})} -  w_i\sigma_{f_i,\xi_i}|
    \le \tilde C_0 \tilde C^{-1/2}w_i \,.
\end{split}
\] 
Since $|\sigma_{f_i,\xi_i}|$ is uniformly bounded away from $0$ by
condition \eqref{strong factor condition}, it follows that
\[
    \bigg|
    \frac{\ab_{i,j}^{\T}\bzeta_1^{(i,j,\textit{v})}}
    {w_i\sigma_{f_i,\xi_i}}
    -
    1
    \bigg|
    \le
    \tilde C_0\tilde C^{-1/2}\,.
\]
Moreover, since $\bzeta_1^{(i,j,\textit{v})}$ is proportional to $\ab_{i,j}$, we have
\[
    \bigg|
    \frac{1}{w_i\sigma_{f_i,\xi_i}}
    \bzeta_1^{(i,j,\textit{v})}
    -
    \ab_{i,j}
    \bigg|_2
    \le
    \tilde C_0\tilde C^{-1/2}\,,
\]
which implies Lemma \ref{lem: zeta_1 lower bound}.  
$\hfill\Box$

\subsubsection{Proofs of \eqref{bij(v)-bij}--\eqref{zeta 1 decomp 2}}\label{sec:lemma8-step-2}
\underline{{\it Proof of \eqref{bij(v)-bij}.}} Following the notation in \eqref{define check a}, recall 
$$\{(\tilde\bbb_{1,j}^{(\textit{v})})^{\MP},\ldots,(\tilde\bbb_{r,j}^{(\textit{v})})^{\MP}\}^{\T}= \{(\tilde\Bb_j^{(\textit{v})})^\T \tilde\Bb_j^{(\textit{v})}\}^{-1}(\tilde\Bb_j^{(\textit{v})})^{\T}
\,,$$
where $\tilde \Bb_j^{(\textit{v})} =  (\tilde{\bbb}_{1,j}^{(\textit{v})},\ldots,\tilde{\bbb}_{r,j}^{(\textit{v})}
)$ with $\tilde\bbb_{i,j}^{(\textit{v})}= \otimes_{m}^{j^\prime \neq j} \check\ab_{i,j^\prime}^{(\textit{v},j)}$. Here $\otimes_{m}^{j^\prime \neq j} \bbeta_{j'} = \bbeta_m \otimes \cdots\otimes \bbeta_{j+1} \otimes \bbeta_{j-1} \otimes \cdots \otimes \bbeta_1$ for short. 
Write $\{(\tilde\Bb_j^{(\textit{v})})^\T \tilde\Bb_j^{(\textit{v})}\}^{-1} = (\tilde\varpi_{p,q}^{(\textit{v},j)})_{r\times r}$ and $(\Bb_j^\T \Bb_j)^{-1} = (\varpi_{p,q})_{r\times r}$. We can express $(\tilde\bbb_{i,j}^{(\textit{v})})^{\MP}$ and $\bbb_{i,j}^{\MP}$  as
\begin{equation}\label{eq: bij(b) expression}
    (\tilde\bbb_{i,j}^{(\textit{v})})^{\MP} = \sum_{\ell = 1}^r \tilde\varpi^{(\textit{v},j)}_{\ell,i}(\otimes_{m}^{j^\prime \neq j} \check\ab_{\ell,j^\prime}^{(\textit{v},j)}) ~~\textup{and}~~ \bbb_{i,j}^{\MP} = \sum_{\ell = 1}^r \varpi_{\ell,i}(\otimes_{m}^{j^\prime \neq j} \ab_{\ell,j^\prime})\,.
\end{equation}
Notice that
\[
\|\tilde\Bb_j^{(\textit{v})} - \Bb_j\|_2 \le  \tilde C_0 \max_{i \in [r]}|\tilde\bbb_{i,j}^{(\textit{v})} - \bbb_{i,j}|_2\\
\le  \tilde C_0 \bar\theta^{(\textit{v})}_j \le \tilde C_0 \tilde C^{-1}\,. 
\]
By Triangle inequality and \eqref{matrix inverse}, it holds that
\[
\begin{split}
     \max_{i,\ell \in [r]}|\tilde\varpi^{(\textit{v},j)}_{\ell,i} - \varpi_{\ell,i}| &\le \|\{(\tilde\Bb_j^{(\textit{v})})^\T \tilde\Bb_j^{(\textit{v})}\}^{-1} - (\Bb_j^\T \Bb_j)^{-1}\|_2 \\
     &\le \|\{(\tilde\Bb_j^{(\textit{v})})^\T \tilde\Bb_j^{(\textit{v})}\}^{-1}\|_2 \cdot \|(\tilde\Bb_j^{(\textit{v})})^\T \tilde\Bb_j^{(\textit{v})} - \Bb_j^\T \Bb_j\|_2 \cdot \| (\Bb_j^\T \Bb_j)^{-1}\|_2 \\
     & \le \tilde C_0\tilde C \bar\theta^{(\textit{v})}_j \le \tilde C_0\tilde C^{-1} \,.
\end{split}
\]
Hence, it follows that
\[
\begin{split}
    |(\tilde\bbb_{i,j}^{(\textit{v})})^{\MP} - \bbb_{i,j}^{\MP}|_2 &\le \bigg|\sum_{\ell = 1}^r \tilde\varpi^{(\textit{v},j)}_{\ell,i}(\otimes_{m}^{j^\prime \neq j} \check\ab^{(\textit{v},j)}_{\ell,j^\prime}-\otimes_{m}^{j^\prime \neq j} \ab_{\ell,j^\prime}) \bigg|_2 \\
 &\quad + \bigg|\sum_{\ell = 1}^r (\tilde\varpi^{(\textit{v},j)}_{\ell,i} - \varpi_{\ell,i})(\otimes_{m}^{j^\prime \neq j} \ab_{\ell,j^\prime}) \bigg|_2 \\
 & \le \tilde C_0\tilde C \bar\theta^{(\textit{v})}_j \le \tilde C_0 \tilde C^{-1}\,,
\end{split}
\]
which implies \eqref{bij(v)-bij}.

\underline{{\it Proof of \eqref{zeta 1 decomp 1}.}}  Following the notation from \eqref{define check a} to \eqref{notation tilde sigma fi vj},  it holds that
			\begin{equation}\label{zeta 1 decomp 11}
				\begin{split}
				\text{left-hand side of \eqref{zeta 1 decomp 1}} &\le \bigg| \frac{(\tilde{\sigma}^{(\textit{v},j)}_{\check f, i})^{-1}}{n-1}\sum_{t=2}^n(f_{t,i}-\bar f_i)(\check f^{(\textit{v},j)}_{t-1,i}-\bar{\check f}^{(\textit{v},j)}_{i}-\xi_{t-1,i}+\bar\xi_i)\bigg|\\
				&\quad +\bigg|\bigg(\frac{\Lambda_{\xi,i,i}^{1/2}}{\tilde{\sigma}^{(\textit{v},j)}_{\check f, i}}-1\bigg) \frac{1}{n-1}\sum_{t=2}^n(f_{t,i}-\bar f_i)\xi_{t-1,i}^{\textup{s}}\bigg|\,.
			\end{split}
			\end{equation}
            By Lemma \ref{bound for Lambda and phi}, we already have
			\begin{equation*} 
			 \bigg| \frac{\Lambda_{\xi,i,i}^{1/2}}{\tilde{\sigma}^{(\textit{v},j)}_{\check f, i}}-1 \bigg|	\le \tilde C_0  \tilde{C}^{-1}\,.
			\end{equation*}
            Meanwhile, $|(n-1)^{-1}\sum_{t=2}^n(f_{t,i}-\bar f_i)\xi_{t-1,i}^{\textup{s}}|$ is upper bounded by some constant under the event $\Xi_{2,n}(\tilde C)$. Therefore, the second line of \eqref{zeta 1 decomp 11} is upper bounded by $\tilde C_0\tilde C^{-1}$. Moreover, by  \eqref{eq: varxi-bar-xi}, \eqref{check f ti minus xi ti}, and Cauchy--Schwarz inequality, we have
           \[
\begin{split}
&\bigg|
\frac{\Lambda_{\xi,i,i}^{-1/2}}{n-1}
\sum_{t=2}^n
(f_{t,i}-\bar f_i)
(
\check f^{(\textit{v},j)}_{t-1,i} -
\bar{\check f}^{(\textit{v},j)}_{i} - \xi_{t-1,i} + \bar\xi_i )
\bigg|  \\
&~~~~~~\le
|\Lambda_{\xi,i,i}^{-1/2}|
\bigg\{
\frac{1}{n-1}\sum_{t=2}^n
(f_{t,i}-\bar f_i)^2
\bigg\}^{1/2}
\bigg\{
\frac{1}{n-1}\sum_{t=2}^n
(
\check f^{(\textit{v},j)}_{t-1,i}
-
\bar{\check f}^{(\textit{v},j)}_{i}
-
\xi_{t-1,i}
+
\bar\xi_i
)^2
\bigg\}^{1/2}  \\
&~~~~~~\le
\tilde C_0\tilde C^{-1/2}\,.
\end{split}
\]
            Therefore, the first line of \eqref{zeta 1 decomp 11} is upper bounded by $\tilde C_0\tilde C^{-1/2}$. Then, we conclude \eqref{zeta 1 decomp 1}.


	\underline{{\it Proof of \eqref{zeta 1 decomp 2}.}}  It is similar to the proof of (\ref{zeta 1 decomp 1}), so we omit the details. 
$\hfill\Box$

\subsection{Proof of Lemma \ref{lem: zeta_2 and zeta_4}}\label{sec:lem: zeta_2 and zeta_4}

In the proof, $\tilde C_0 > 0$ is a universal constant and may vary in different lines, but is independent of $\tilde C$ and $(i,j,\textit{v})$. Steps 1 and 2 establish  the upper bounds for $|\bzeta^{(i,j,v)}_2|_2$ and $|\bzeta^{(i,j,v)}_4|_2$, respectively.

\subsubsection{Step 1: Upper bound of \texorpdfstring{$|\bzeta^{(i,j,v)}_2|_2$}{zeta_2}}
 
By the definition of $\bzeta^{(i,j,v)}_2$ above (\ref{iterative sigma decompose}) and Cauchy--Schwarz inequality, we have
				\[
				\begin{split}
					|\bzeta_2^{(i,j,\textit{v})}|_2 &\le \underbrace{\bigg|\sum_{\ell\ne i}\tilde{\sigma}^{(\textit{v},j)}_{\check f,\ell} \bigg\{ \frac{1}{n-1}\sum_{t=2}^n \tilde f^{(\textit{v},j)}_{t,\ell} \tilde\xi^{(\textit{v},j)}_{t-1,i}\bigg\} \{\bbb_{\ell,j}^{\T}(\tilde\bbb_{i,j}^{(\textit{v})})^{\MP} \}\bigg|}_{|\zeta_{2,1}^{(i,j,\textit{v})}|} \\
					&\quad +\underbrace{\bigg|\sum_{\ell\ne i}\frac{1}{n-1}\sum_{t=2}^n\{\tilde{\sigma}^{(\textit{v},j)}_{\check f, \ell}\tilde f^{(\textit{v},j)}_{t,\ell}-w_{\ell}(f_{t,\ell} - \bar f_{\ell})\}\tilde\xi^{(\textit{v},j)}_{t-1,i}\{\bbb_{\ell,j}^{\T}(\tilde\bbb_{i,j}^{(\textit{v})})^{\MP} \}\bigg|}_{|\zeta_{2,2}^{(i,j,\textit{v})}|}\,.
				\end{split}
				\]
			By the construction of $(\tilde\xi^{(\textit{v},j)}_{1,i},\ldots,\tilde\xi^{(\textit{v},j)}_{n-1,i})^{\T}$, we always have 
   \[
   \frac{1}{n-1}\sum_{t=2}^n \tilde f^{(\textit{v},j)}_{t,\ell} \tilde\xi^{(\textit{v},j)}_{t-1,i} = 0
   \]
   for all $\ell\ne i$, which implies $|\zeta_{2,1}^{(i,j,\textit{v})}|=0$. For $|\zeta_{2,2}^{(i,j,\textit{v})}|$, under the event $\Xi_n(\tilde C)$, it follows from \eqref{bij(v)-bij} that
			\[
			\max_{\ell \ne i}|\bbb_{\ell,j}^{\T}(\tilde\bbb_{i,j}^{(\textit{v})})^{\MP} | \le \tilde C_0  \tilde C\bar\theta^{(\textit{v})}_j\,.
			\]
			For any $\ell \neq i$, we have
\begin{align}\label{check f ti minus}
    	\tilde{\sigma}^{(\textit{v},j)}_{\check f, \ell}\tilde f^{(\textit{v},j)}_{t,\ell}-w_{\ell}(f_{t,\ell} - \bar f_{\ell})&= \check f^{(\textit{v},j)}_{t,\ell}-w_{\ell}f_{t,\ell}-\frac{1}{n}\sum_{s=1}^n(\check f^{(\textit{v},j)}_{s,\ell}-w_{\ell} f_{s,\ell}) \\ \nonumber
                    &= w_{\ell}( f_{t,\ell} -\bar f_{\ell}) \bigg[\prod_{j^\prime=1}^m\{\ab_{\ell,j^\prime}^{\T}(\check\ab^{(\textit{v},j)}_{\ell,j^\prime})^{\MP}\}-1 \bigg] \\\nonumber
                    & \quad +\sum_{\ell^\prime \ne \ell}w_{\ell^\prime}(f_{t,\ell^\prime} - \bar f_{\ell^\prime}) \bigg[\prod_{j^\prime=1}^m\{\ab_{\ell^\prime,j^\prime}^{\T}(\check\ab^{(\textit{v},j)}_{\ell,j^\prime})^{\MP}\} \bigg] \\\nonumber
     & \quad + \{\otimes^{j^\prime =1}_{m} (\check\ab^{(\textit{v},j)}_{\ell,j^\prime})^{\MP}\}^{\T}\textup{vec}(\mathcal{E}_t - \bar{\mathcal{E}})\,,
\end{align}
			where $(\check\ab^{(\textit{v},j)}_{\ell,1})^{\MP},\ldots,(\check\ab^{(\textit{v},j)}_{\ell,m})^{\MP}$ are defined in \eqref{define check a}. Then, under the event $\Xi_n(\tilde C)$, by Triangle inequality and Cauchy--Schwarz inequality, it holds that
\begin{align}\label{check f ti minus step 2}
 	&\bigg|\sum_{\ell\ne i}\frac{1}{n-1}\sum_{t=2}^n w_{\ell} (f_{t,\ell} - \bar f_{\ell}) \bigg[\prod_{j^\prime=1}^m\{\ab_{\ell,j^\prime}^{\T}(\check\ab^{(\textit{v},j)}_{\ell,j^\prime})^{\MP}\}-1 \bigg] \tilde\xi^{(\textit{v},j)}_{t-1,i}\{\bbb_{\ell,j}^{\T}(\tilde\bbb_{i,j}^{(\textit{v})})^{\MP} \}\bigg| \nonumber \\ \nonumber
					 & ~~~~  \le \sum_{\ell \neq i}\bigg| \frac{1}{n-1}\sum_{t=2}^n w_{\ell} (f_{t,\ell} - \bar f_{\ell})\tilde\xi^{(\textit{v},j)}_{t-1,i} \bigg| \cdot \bigg|\prod_{j^\prime=1}^m\{\ab_{\ell,j^\prime}^{\T}(\check\ab^{(\textit{v},j)}_{\ell,j^\prime})^{\MP}\}-1 \bigg| \cdot |\bbb_{\ell,j}^{\T}(\tilde\bbb_{i,j}^{(\textit{v})})^{\MP} | \\\nonumber
       &  ~~~~ \le \tilde C_0\tilde C (\bar\theta^{(\textit{v})}_j)^2  \sum_{\ell \neq i} w_{\ell}   \bigg\{\frac{1}{n-1}\sum_{t=2}^n (f_{t,\ell} - \bar f_{\ell})^2\bigg\}^{1/2 }\cdot  \bigg\{\frac{1}{n-1}\sum_{t=2}^n(\tilde\xi^{(\textit{v},j)}_{t-1,i})^2 \bigg\}^{1/2} \\  
       & ~~~~ \le \tilde C_0\tilde Cw_1(\bar\theta^{(\textit{v})}_j)^2 \,, 
\end{align}
		where we use the fact that $(n-1)^{-1}\sum_{t=2}^n (\tilde\xi^{(\textit{v},j)}_{t-1,i})^2 \le (n-1)^{-1}\sum_{t=2}^n(\tilde f^{(\textit{v},j)}_{t-1,i})^2 \le \tilde{C}_0$. Similarly, 
			\[
			\bigg|\sum_{\ell\ne i}\frac{1}{n-1}\sum_{t=2}^n\sum_{\ell^{\prime}\ne \ell}w_{\ell^{\prime}}(f_{t,\ell^{\prime}} - \bar f_{\ell^{\prime}}) \bigg[\prod_{j^\prime=1}^m\{\ab_{\ell^{\prime}, j^\prime}^{\T}(\check\ab^{(\textit{v},j)}_{\ell,j^\prime})^{\MP}\} \bigg] \tilde\xi^{(\textit{v},j)}_{t-1,i}\{\bbb_{\ell,j}^{\T}(\tilde\bbb_{i,j}^{(\textit{v})})^{\MP} \}\bigg|\le \tilde C_0\tilde Cw_1(\bar\theta^{(\textit{v})}_j)^{m+1}\,.
			\]
Moreover, for each $\ell \neq i$, by the definition of $\tilde\bvarphi^{(\textit{v},j)}_{i}$ given above \eqref{iterative sigma decompose}, we can write
	\begin{equation}\label{zeta 220}
				\begin{split}
			&\frac{1}{n-1}\sum_{t=2}^n[\{\otimes^{j^\prime=1}_{m} (\check\ab^{(\textit{v},j)}_{\ell,j^\prime})^{\MP}\}^{\T}\textup{vec}(\mathcal{E}_t - \bar{\mathcal{E}})]\tilde\xi^{(\textit{v},j)}_{t-1,i}\\
        &    ~~~~~~~~~~~~ = \frac{1}{n-1}\sum_{t=2}^n[\{\otimes^{j^\prime=1}_{m} (\check\ab^{(\textit{v},j)}_{\ell,j^\prime})^{\MP}\}^{\T}\textup{vec}(\mathcal{E}_t - \bar{\mathcal{E}})]\tilde f^{(\textit{v},j)}_{t-1,i}\\
 & ~~~~~~~~~~~~\quad +\sum_{i^{\prime}\ne i}\frac{\tilde\varphi^{(\textit{v},j)}_{i,i^{\prime}}}{n-1}\sum_{t=2}^n[\{\otimes^{j^\prime=1}_{m} (\check\ab^{(\textit{v},j)}_{\ell,j^\prime})^{\MP}\}^{\T}\textup{vec}(\mathcal{E}_t - \bar{\mathcal{E}})]\tilde f^{(\textit{v},j)}_{t,i^{\prime}}\,.
		\end{split}
	\end{equation}
On the one hand, under the event $\cap_{h=6}^8\Xi_{h,n}(\tilde C)$ for sufficiently large $\tilde C > 0$, by \eqref{check f ti minus}, Triangle inequality, and the fact $\tilde{\sigma}^{(\textit{v},j)}_{\check f, i}  \gtrsim w_i$, it follows that
 \begin{align}\label{zeta 221}
   \nonumber   &\bigg|\frac{1}{n-1}\sum_{t=2}^n[\{\otimes^{j^\prime=1}_{m} (\check\ab^{(\textit{v},j)}_{\ell,j^\prime})^{\MP}\}^{\T}\textup{vec}(\mathcal{E}_t - \bar{\mathcal{E}})]\tilde f^{(\textit{v},j)}_{t-1,i}\bigg|\\
      \nonumber    &\quad \le \bigg|\frac{w_{i}}{\tilde{\sigma}^{(\textit{v},j)}_{\check f, i} }\frac{1}{n-1}\sum_{t=2}^n[\{\otimes^{j^\prime=1}_{m} (\check\ab^{(\textit{v},j)}_{\ell,j^\prime})^{\MP}\}^{\T}\textup{vec}(\mathcal{E}_t - \bar{\mathcal{E}})](f_{t-1,i}-\bar f_{i})\bigg|\\
   \nonumber        &\qquad + \bigg|\frac{w_{i}}{\tilde{\sigma}^{(\textit{v},j)}_{\check f, i} }\frac{1}{n-1}\sum_{t=2}^n[\{\otimes^{j^\prime=1}_{m} (\check\ab^{(\textit{v},j)}_{\ell,j^\prime})^{\MP}\}^{\T}\textup{vec}(\mathcal{E}_t - \bar{\mathcal{E}})](f_{t-1,i}-\bar f_{i})\bigg[\prod_{j^\prime=1}^m\{\ab_{i,j^\prime}^{\T}(\check\ab^{(\textit{v},j)}_{i,j^\prime})^{\MP}\} - 1 \bigg]\bigg|\\
 \nonumber 		&\qquad + \bigg|\sum_{i^{\prime}\neq i}\frac{w_{i^{\prime}}}{\tilde{\sigma}^{(\textit{v},j)}_{\check f, i} }\frac{1}{n-1}\sum_{t=2}^n[\{\otimes^{j^\prime=1}_{m} (\check\ab^{(\textit{v},j)}_{\ell,j^\prime})^{\MP}\}^{\T}\textup{vec}(\mathcal{E}_t - \bar{\mathcal{E}})](f_{t-1,i^{\prime}}-\bar f_{i^{\prime}}) \bigg[\prod_{j^\prime=1}^m\{\ab_{i^{\prime},j^\prime}^{\T}(\check\ab^{(\textit{v},j)}_{i,j^\prime})^{\MP}\}\bigg] \bigg|\\
 \nonumber 		&\qquad+\bigg|\frac{1}{\tilde{\sigma}^{(\textit{v},j)}_{\check f, i} }\frac{1}{n-1}\sum_{t=2}^n [\{\otimes^{j^\prime=1}_{m} (\check\ab^{(\textit{v},j)}_{\ell,j^\prime})^{\MP}\}^{\T}\textup{vec}(\mathcal{E}_t - \bar{\mathcal{E}}) ] [\{\otimes^{j^\prime=1}_{m} (\check\ab^{(\textit{v},j)}_{i,j^\prime})^{\MP}\}^{\T}\textup{vec}(\mathcal{E}_{t-1}- \bar{\mathcal{E}}) ]\bigg| \\
         &\quad \le  \tilde C_0\tilde C L_n  + \tilde C_0\tilde C \frac{ w_1}{w_r}L_n(\bar\theta^{(\textit{v})}_j)^m+  \frac{\tilde C_0\tilde C}{w_r}L_n^2 \,. 
 \end{align}

On the other hand, similarly, for $i^\prime \neq i$, it follows that
\begin{align}\label{zeta 221-part2}
    &\bigg|\frac{1}{n-1}\sum_{t=2}^n [\{\otimes^{j^\prime=1}_{m} (\check\ab^{(\textit{v},j)}_{\ell,j^\prime})^{\MP}\}^{\T}\textup{vec}(\mathcal{E}_t - \bar{\mathcal{E}})]\tilde f^{(\textit{v},j)}_{t,i^{\prime}}\bigg|\\ \nonumber
	 &~~~~~~\le  \tilde C_0  \tilde C L_n
     + \tilde C_0  \tilde C \frac{  w_1}{w_r}L_n(\bar\theta^{(\textit{v})}_j)^m +  \frac{\tilde C_0  \tilde C}{w_r}L_n^2 \\ \nonumber
	&~~~~~~\quad +\frac{\tilde C_0 \tilde C}{w_{r}}\bigg|\frac{1}{n-1}\sum_{t=2}^n\{\otimes^{j^\prime=1}_m(\check\ab^{(\textit{v},j)}_{\ell,j^\prime})^{\MP}\}^{\T}\mathbb{E}\{\text{vec}(\mathcal{E}_t)\text{vec}(\mathcal{E}_t)^{\T}\} \{\otimes^{j^\prime=1}_m(\check\ab^{(\textit{v},j)}_{i^{\prime},j^\prime})^{\MP}\}\bigg|\\ \nonumber
	 &~~~~~~ \le \tilde C_0  \tilde C L_n 
     + \tilde C_0  \tilde C  \frac{ w_1}{w_r}L_n(\bar\theta^{(\textit{v})}_j)^m+  \frac{\tilde C_0  \tilde C}{w_r}L_n^2 +\frac{\tilde C_0  \tilde C}{w_r}\,,
\end{align}
			where the major difference from \eqref{zeta 221} is that the contemporary covariance of the idiosyncratic errors is not zero, which is from the estimation error of the factors in the projection step. Then, to show the upper bound of $|\zeta_{2,2}^{(i,j,\textit{v})}|$, it remains to calculate $\max_{\ell\ne i}|\tilde\varphi_{i,\ell}^{(\textit{v},j)}|$. Different from Lemma \ref{bound for Lambda and phi}, here we aim to find a more accurate bound for $\max_{\ell\ne i}|\tilde\varphi_{i,\ell}^{(\textit{v},j)}|$.
			
			By  \eqref{zeta 1 decomp 31}  and $\Xi_{3,n}(\tilde C)$, we can  conclude that $\|\{n^{-1}(\tilde\Fb^{(\textit{v},j)}_{\mminus i})^{\T}\tilde\Fb^{(\textit{v},j)}_{\mminus i}\}^{-1}\|_2 \le \tilde C_0$. In the following, we aim to show that
	\begin{equation}\label{zeta 222}
    \begin{split}
     |n^{-1}(\tilde\Fb^{(\textit{v},j)}_{\mminus i})^{\T}\tilde\fb^{(\textit{v},j)}_{i}|_2 &\le   \tilde C_0\tilde C\bigg\{\gamma_{\max}+ \Delta_{0,n}+\frac{1}{n w_r^2}\\
    & \qquad \qquad +\frac{w_1}{w_r} (\bar\theta^{(\textit{v})}_j)^m  + \frac{w_1}{w_r^2}L_n\bar\theta^{(\textit{v})}_j  + \frac{1}{w_r^2}L_n^2\bar\theta^{(\textit{v})}_j \bigg\} 
    \end{split}
	\end{equation}
 which is then also the upper bound of $\max_{\ell\ne i}|\tilde\varphi_{i,\ell}^{(\textit{v},j)}|$ since $\tilde C_0$ is a universal positive constant. Because $r$ is fixed, it suffices to bound each entry. By definition, given $\ell\ne i$,
\begin{equation}\label{zeta 223}
	\begin{split}
	\frac{1}{n-1}\sum_{t=2}^n\tilde f^{(\textit{v},j)}_{t,\ell}\tilde f^{(\textit{v},j)}_{t-1,i}=
		\frac{1}{\tilde{\sigma}^{(\textit{v},j)}_{\check f, i}\tilde{\sigma}^{(\textit{v},j)}_{\check f, \ell}}\frac{1}{n-1}\sum_{t=2}^n(\check f^{(\textit{v},j)}_{t,\ell}- \bar{\check f}^{(\textit{v},j)}_{\ell})(\check f^{(\textit{v},j)}_{t-1,i}-\bar{\check f}^{(\textit{v},j)}_{i})\,.
	\end{split}
\end{equation}
We aim to replace all the $\check f^{(\textit{v},j)}_{t,i}$ with $\xi_{t,i}$ and bound the error. By the definition of $\xi_{t,i}$, we have
\[
\mathbb{E}\bigg\{\frac{1}{n-1}\sum_{t=2}^n(\xi_{t,\ell}-\bar\xi_{\ell})(\xi_{t-1,i}-\bar\xi_i)\bigg\}=w_iw_{\ell}\Upsilon_{1,\ell,i}+O(n^{-1})\,.
\]
Meanwhile, by $\Xi_{4,n}(\tilde C)$ and \eqref{eq: varxi-bar-xi}, we also have
\[
\begin{split}
	&\bigg|\frac{1}{n-1}\sum_{t=2}^n(\xi_{t,\ell}-\bar\xi_{\ell})(\xi_{t-1,i}-\bar\xi_i)-\mathbb{E}\bigg\{\frac{1}{n-1}\sum_{t=2}^n(\xi_{t,\ell}-\bar\xi_{\ell})(\xi_{t-1,i}-\bar\xi_i)\bigg\}\bigg|\\
	&~~~~~~ \le \tilde C_0 \tilde C\Lambda_{\xi,i,i}^{1/2}\Lambda_{\xi,\ell,\ell}^{1/2} \Delta_{0,n} \le \tilde C_0\tilde Cw_iw_{\ell} \Delta_{0,n} \,.
\end{split}
\]
To bound \eqref{zeta 223}, it remains to control 
\[
\bigg| \frac{1}{n-1}\sum_{t=2}^n(\check f^{(\textit{v},j)}_{t,\ell}- \bar{\check f}^{(\textit{v},j)}_{\ell})(\check f^{(\textit{v},j)}_{t-1,i}-\bar{\check f}^{(\textit{v},j)}_{i}) - \frac{1}{n-1}\sum_{t=2}^n(\xi_{t,\ell}-\bar\xi_{\ell})(\xi_{t-1,i}-\bar\xi_i)\bigg|\,.
\]
By the definition of $\Delta_{0,n}$, we have
\[
\max_{i\ne \ell}\bigg|\frac{1}{n-1}\sum_{t=2}^n(f_{t,\ell}-\bar f_\ell)(f_{t-1,i}-\bar f_i)\bigg| \le \gamma_{\max} + \Delta_{0,n}\,.
\]
Furthermore, by \eqref{check f ti minus xi ti decomp}, together with the arguments similar to those used from \eqref{check f ti minus step 2} to \eqref{zeta 221}, we can conclude that
\begin{align*}
    &\bigg|\frac{1}{n-1}\sum_{t=2}^n(\check f^{(\textit{v},j)}_{t,\ell}- \bar{\check f}^{(\textit{v},j)}_{\ell} - \xi_{t,\ell} + \bar{\xi}_\ell ) (\check f^{(\textit{v},j)}_{t-1,i} - \bar{\check f}^{(\textit{v},j)}_{i} -\xi_{t-1,i} + \bar{\xi}_i)\bigg| \\
& ~~~~~~ \le w_iw_{\ell} \tilde C_0\tilde C \bigg\{(\gamma_{\max} + \Delta_{0,n})(\bar\theta^{(\textit{v})}_j)^2 + \frac{w_1}{w_r} (\bar\theta^{(\textit{v})}_j)^m + \frac{w_1}{w_r^2} L_n(\bar\theta^{(\textit{v})}_j)^2 + \frac{1}{w_r^2} L_n^2(\bar\theta^{(\textit{v})}_j)^2 \bigg\}\,, \\
&\bigg|\frac{1}{n-1}\sum_{t=2}^n(\check f^{(\textit{v},j)}_{t,\ell}- \bar{\check f}^{(\textit{v},j)}_{\ell} - \xi_{t,\ell} + \bar{\xi}_\ell)(\xi_{t-1,i} - \bar\xi_i)\bigg|\\
& ~~~~~~\le w_iw_{\ell}\tilde C_0 \tilde C \bigg\{(\gamma_{\max} + \Delta_{0,n})\bar\theta^{(\textit{v})}_j + \frac{w_1}{w_r} (\bar\theta^{(\textit{v})}_j)^m + \frac{w_1}{w_r^2} L_n\bar\theta^{(\textit{v})}_j + \frac{1}{w_r^2} L_n^2\bar\theta^{(\textit{v})}_j \bigg\}\,,\\
&\bigg|\frac{1}{n-1}\sum_{t=2}^n(\xi_{t,\ell}-\bar\xi_\ell)(\check f^{(\textit{v},j)}_{t-1,i} - \bar{\check f}^{(\textit{v},j)}_{i} -\xi_{t-1,i} + \bar{\xi}_i)\bigg| \\
& ~~~~~~\le w_iw_{\ell} \tilde C_0\tilde C \bigg\{(\gamma_{\max} + \Delta_{0,n})\bar\theta^{(\textit{v})}_j + \frac{w_1}{w_r} (\bar\theta^{(\textit{v})}_j)^m + \frac{w_1}{w_r^2} L_n\bar\theta^{(\textit{v})}_j + \frac{1}{w_r^2} L_n^2\bar\theta^{(\textit{v})}_j \bigg\}\,.
\end{align*}
Returning to (\ref{zeta 223}), now we can conclude that
\[
\begin{split}
	\bigg|\frac{1}{n-1}\sum_{t=2}^n\tilde f^{(\textit{v},j)}_{t,\ell}\tilde f^{(\textit{v},j)}_{t-1,i}\bigg|  & \le \tilde C_0\tilde C\bigg\{\gamma_{\max}+\Delta_{0,n}+\frac{1}{nw_r^2} + \frac{w_1}{w_r} (\bar\theta^{(\textit{v})}_j)^m  + \frac{w_1}{w_r^2}L_n\bar\theta^{(\textit{v})}_j + \frac{1}{w_r^2}L_n^2\bar\theta^{(\textit{v})}_j\bigg\}\,.
\end{split}
\]
Hence, \eqref{zeta 222} holds, and the same bound holds for $\max_{\ell\ne i}|\tilde\varphi_{i,\ell}^{(\textit{v},j)}|$. Now return to \eqref{zeta 220}, by \eqref{zeta 221}--\eqref{zeta 222}, it follows that
\begin{equation}\label{zeta 224}
	\begin{split}
		&\bigg|\frac{1}{n-1}\sum_{t=2}^n[\{\otimes^{j^\prime=1}_{m} (\check\ab^{(\textit{v},j)}_{\ell,j^\prime})^{\MP}\}^{\T}\textup{vec}(\mathcal{E}_t - \bar{\mathcal{E}})]\tilde\xi^{(\textit{v},j)}_{t-1,i}\bigg|\\
		  & ~~~~~~\le  \tilde C_0\tilde C \bigg\{ \frac{ w_1}{w_r}L_n\bar\theta^{(\textit{v})}_j+ L_n + \frac{1}{w_r} L_n^2   + \frac{w_1}{w_r^2} (\bar\theta^{(\textit{v})}_j)^m + \frac{\gamma_{\max} + \Delta_{0,n}}{w_r} +  \frac{1}{n w_r^3}  \bigg\}\,.
	\end{split}
\end{equation}
Therefore, we can eventually conclude that
		\begin{equation*}
            \begin{split}                				|\zeta_{2,2}^{(i,j,\textit{v})}| &\le \tilde C_0 \tilde C\bigg(w_1\bar\theta^{(\textit{v})}_j+\frac{w_1}{w_r}L_n\bar\theta^{(\textit{v})}_j+L_n  + \frac{1}{w_r} L_n^2+ 
                \frac{\gamma_{\max} + \Delta_{0,n}}{w_r} +  \frac{1}{n w_r^3}
                \bigg)\bar\theta^{(\textit{v})}_j\,,
            \end{split}
		\end{equation*}			
which implies that
		\begin{equation}\label{zeta2 result}
            \begin{split}                				|\bzeta_2^{(i,j,\textit{v})}|_2 &\le \tilde C_0 \tilde C\bigg(w_1\bar\theta^{(\textit{v})}_j+\frac{w_1}{w_r}L_n\bar\theta^{(\textit{v})}_j+L_n  + \frac{1}{w_r} L_n^2+ 
                \frac{\gamma_{\max} + \Delta_{0,n}}{w_r} +  \frac{1}{n w_r^3}
                \bigg)\bar\theta^{(\textit{v})}_j\,.
            \end{split}
		\end{equation}

\subsubsection{Step 2: Upper bound of \texorpdfstring{$|\bzeta^{(i,j,v)}_4|_2$}{zeta_4}}

  By definition above (\ref{iterative sigma decompose}), write
  \begin{align*}
      \bzeta_4^{(i,j,\textit{v})}&=  \underbrace{\frac{1}{n-1}\sum_{t=2}^n\tilde\xi^{(\textit{v},j)}_{t-1,i}(\Eb_{t,j} - \bar \Eb_j)\{(\tilde\bbb_{i,j}^{(\textit{v})})^{\MP} -\bbb_{i,j}^{\MP}\}}_{\bzeta_{4,1}^{(i,j,\textit{v})}}
		\\
		&\quad+\underbrace{\frac{1}{\tilde{\sigma}^{(\textit{v},j)}_{\check f, i} }\frac{1}{n-1}\sum_{t=2}^n(\check f^{(\textit{v},j)}_{t-1,i}-\bar{\check f}^{(\textit{v},j)}_{i}-\xi_{t-1,i}+\bar\xi_i)(\eb_{t,i,j} - \bar{\eb}_{i,j})}_{\bzeta_{4,2}^{(i,j,\textit{v})}}\\
		&\quad+\underbrace{\sum_{\ell\ne i}\frac{\tilde\varphi^{(\textit{v},j)}_{i,\ell}}{\tilde{\sigma}^{(\textit{v},j)}_{\check f, \ell}}\frac{1}{n-1}\sum_{t=2}^n(\check f^{(\textit{v},j)}_{t,\ell}-\bar{\check f}^{(\textit{v},j)}_{\ell}-\xi_{t,\ell}+\bar\xi_\ell)(\eb_{t,i,j} - \bar{\eb}_{i,j})}_{\bzeta_{4,3}^{(i,j,\textit{v})}}\\
		&\quad+\underbrace{\sum_{\ell\ne i}\frac{\tilde\varphi^{(\textit{v},j)}_{i,\ell}}{\tilde{\sigma}^{(\textit{v},j)}_{\check f, \ell}}\frac{1}{n-1}\sum_{t=2}^n\mathbb{E}\{(\xi_{t,\ell}-\bar\xi_\ell)(\eb_{t,i,j} - \bar{\eb}_{i,j})\}}_{\bzeta_{4,4}^{(i,j,\textit{v})}}\,.
  \end{align*}
   			
For $	|\bzeta_{4,1}^{(i,j,\textit{v})}|_2$, by \eqref{eq: bij(b) expression}, we have
\begin{equation*}
\begin{split}
\bzeta_{4,1}^{(i,j,\textit{v})} &=  \underbrace{\frac{1}{n-1}\sum_{t=2}^n\tilde\xi^{(\textit{v},j)}_{t-1,i}(\Eb_{t,j} - \bar \Eb_j)\bigg\{ \sum_{\ell = 1}^r (\tilde\varpi^{(\textit{v},j)}_{\ell,i} - \varpi_{\ell,i})(\otimes_{m}^{j^\prime \neq j} \ab_{\ell,j^\prime}) \bigg\}}_{\bzeta_{4,1,1}^{(i,j,\textit{v})}} \\
&\quad+ \underbrace{\frac{1}{n-1}\sum_{t=2}^n\tilde\xi^{(\textit{v},j)}_{t-1,i}(\Eb_{t,j} - \bar \Eb_j)\bigg\{ \sum_{\ell = 1}^r  \tilde\varpi^{(\textit{v},j)}_{\ell,i}(\otimes_{m}^{j^\prime \neq j} \check\ab^{(\textit{v},j)}_{\ell,j^\prime}- \otimes_{m}^{j^\prime \neq j} \ab_{\ell,j^\prime}) \bigg\}}_{\bzeta_{4,1,2}^{(i,j,\textit{v})}} \,.
\end{split}
\end{equation*}
Notice   that 
\[
\begin{split}
    |\bzeta_{4,1,1}^{(i,j,\textit{v})}|_2 &\le  \tilde C_0\tilde C \sum_{\ell = 1}^r |\tilde\varpi^{(\textit{v},j)}_{\ell,i} - \varpi_{\ell,i}| \\
    & \quad  \times \max_{|\bbeta_{j'}|_2 = 1,\,\bbeta_{j'}\in\mathbb{R}^{d_{j'}},\forall j' \in [m]} \bigg| \frac{1}{n-1}\sum_{t=2}^n \{(\otimes^{j'\ne j}_{m} \bbeta_{j'}) \otimes \bbeta_j\}^{\T} \textup{vec}(\Eb_{t,j} -\bar{\Eb}_j) \tilde\xi^{(\textit{v},j)}_{t-1,i} \bigg| \,.
\end{split}
\]
Using calculations similar to those leading to \eqref{zeta 224}, together with $ \max_{i,\ell \in [r]}|\tilde\varpi^{(\textit{v},j)}_{\ell,i} - \varpi_{\ell,i}|  \le \tilde{C}_0 \bar\theta^{(\textit{v})}_j$, we can conclude that
	\[
\begin{split}
		|\bzeta_{4,1,1}^{(i,j,\textit{v})}|_2 &\le\tilde C_0 \tilde C \bigg\{ \frac{ w_1}{w_r}L_n\bar\theta^{(\textit{v})}_j+ L_n  +  \frac{1}{w_r} L_n^2 + \frac{w_1}{w_r^2} (\bar\theta^{(\textit{v})}_j)^m + \frac{\gamma_{\max} + \Delta_{0,n}}{w_r} +  \frac{1}{n w_r^3} \bigg\} \bar\theta^{(\textit{v})}_j \,.
\end{split}
	\]	
Analogously, replacing each $\check\ab^{(\textit{v},j)}_{\ell,j^\prime}$ in $\otimes_{m}^{j^\prime \neq j} \check\ab^{(\textit{v},j)}_{\ell,j^\prime}$  with $\ab_{\ell,j^\prime}$ one by one in $\bzeta_{4,1,2}^{(i,j,\textit{v})}$, we have
\[
\begin{split}
    |\bzeta_{4,1,2}^{(i,j,\textit{v})}|_2 &\le \tilde C_0\tilde C \sum_{\ell = 1}^r |\tilde\varpi^{(\textit{v},j)}_{\ell,i} | \sum_{j^\prime \neq j} |\check\ab^{(\textit{v},j)}_{\ell,j^\prime}-  \ab_{\ell,j^\prime} |_2 \\
    & \quad  \times  \max_{|\bbeta_{j'}|_2 = 1,\,\bbeta_{j'}\in\mathbb{R}^{d_{j'}},\forall j' \in [m]} \bigg| \frac{1}{n-1}\sum_{t=2}^n \{(\otimes^{j'\ne j}_{m} \bbeta_{j'})\otimes \bbeta_j\}^{\T} \textup{vec}(\Eb_{t,j} -\bar{\Eb}_j) \tilde\xi^{(\textit{v},j)}_{t-1,i} \bigg| \\
    &\le \tilde C_0\tilde C \bigg\{ \frac{ w_1}{w_r}L_n\bar\theta^{(\textit{v})}_j+ L_n +  \frac{1}{w_r} L_n^2 + \frac{w_1}{w_r^2} (\bar\theta^{(\textit{v})}_j)^m + \frac{\gamma_{\max} + \Delta_{0,n}}{w_r} +  \frac{1}{n w_r^3}  \bigg\} \bar\theta^{(\textit{v})}_j \,,
\end{split}
\]
which implies that
	\[
\begin{split}
			|\bzeta_{4,1}^{(i,j,\textit{v})}|_2&\le 
            \tilde C_0\tilde C \bigg\{ \frac{ w_1}{w_r}L_n\bar\theta^{(\textit{v})}_j+ L_n +  \frac{1}{w_r} L_n^2 + \frac{w_1}{w_r^2} (\bar\theta^{(\textit{v})}_j)^m + \frac{\gamma_{\max} + \Delta_{0,n}}{w_r} +  \frac{1}{n w_r^3}  \bigg\} \bar\theta^{(\textit{v})}_j\,.
\end{split}
	\]	
    Moreover, by \eqref{check f ti minus xi ti decomp} and the arguments similar to those used from \eqref{check f ti minus step 2}--\eqref{zeta 221}, under the event $\Xi_{n}(\tilde C)$, we can conclude that
	\[
\begin{split}
			|\bzeta_{4,2}^{(i,j,\textit{v})}|_2 &\le \tilde C_0 \tilde CL_n\bigg\{1+ \frac{w_1}{w_r} (\bar\theta^{(\textit{v})}_j)^{m-1} + \frac{1}{w_r}L_n\bigg\}\bar\theta^{(\textit{v})}_j\,,\\
					|\bzeta_{4,3}^{(i,j,\textit{v})}|_2 &\le \tilde C_0 \tilde C \bigg\{\frac{\gamma_{\max}+\Delta_{0,n}}{w_r} + \frac{1}{n w_r^3}+\frac{w_1}{w_r^2} (\bar\theta^{(\textit{v})}_j)^m  + \frac{w_1}{w_r} L_n + \frac{1}{w_r} L_n^2\bigg\} \bar\theta^{(\textit{v})}_j  \,.
\end{split}
	\]
   By   Lemma \ref{bound for Lambda and phi}, the explicit upper bound of  $\max_{\ell\ne i}|\tilde\varphi_{i,\ell}^{(\textit{v},j)}|$ shown in (\ref{zeta 222}), and Assumption \ref{cross}, we conclude that
	\[
    \begin{split}
        	|\bzeta_{4,4}^{(i,j,\textit{v})}|_2 &\le  \tilde C_0\tilde C \bigg\{\frac{\gamma_{\max}+\Delta_{0,n}}{w_r} +\frac{1}{n w_r^3}+\frac{w_1}{w_r^2} (\bar\theta^{(\textit{v})}_j)^m  + \frac{w_1}{w_r^3}L_n\bar\theta^{(\textit{v})}_j + \frac{1}{w_r^3}L_n^2\bar\theta^{(\textit{v})}_j\bigg\}\,.
    \end{split}
	\]
	By Triangle inequality and the bounds for
$\bzeta_{4,1}^{(i,j,\textit{v})}$, $\bzeta_{4,2}^{(i,j,\textit{v})}$, $\bzeta_{4,3}^{(i,j,\textit{v})}$ and $\bzeta_{4,4}^{(i,j,\textit{v})}$, we have
\[
\begin{split}
    |\bzeta_4^{(i,j,\textit{v})}|_2
    &\le
    \tilde C_0\tilde C L_n\bigg( \frac{w_1}{w_r} +\frac{1}{w_r}L_n\bigg)\bar\theta_j^{(\textit{v})}  +
    \tilde C_0\tilde C
    \bigg\{
    \frac{\gamma_{\max}+\Delta_{0,n}}{w_r}
    +
    \frac{1}{n w_r^3}
    +
    \frac{w_1}{w_r^2}
    (\bar\theta_j^{(\textit{v})})^m
    \bigg\}\,.
\end{split}
\]
Here we use
$w_r^{-1}w_1\bar\theta_j^{(\textit{v})}\le \tilde C^{-2}$ and condition
\eqref{strong factor condition} to absorb the higher-order terms.
Overall, by condition \eqref{strong factor condition}, it follows that
\begin{equation*} 
   \begin{split}
				& |\bzeta_2^{(i,j,\textit{v})}|_2 + |\bzeta_4^{(i,j,\textit{v})}|_2 \\
                 &~~~~~~\le  \tilde C_0\tilde C\bigg( \frac{\gamma_{\max}+\Delta_{0,n}}{w_r} +\frac{1}{n w_r^3}\bigg) +\tilde C_0\tilde C \bigg\{w_1\bar\theta^{(\textit{v})}_j+L_n\bigg(\frac{w_1}{w_r}+\frac{1}{w_r}L_n\bigg)\bigg\}\bar\theta^{(\textit{v})}_j \\
                &~~~~~~\le  \tilde C_0\tilde C\bigg( \frac{\gamma_{\max}+\Delta_{0,n}}{w_r} +\frac{1}{n w_r^3}\bigg) +\tilde C_0\tilde C \bigg(w_1\bar\theta^{(\textit{v})}_j+\frac{w_1}{w_r}L_n\bigg)\bar\theta^{(\textit{v})}_j\,.
\end{split}
\end{equation*}
We complete the proof of Lemma \ref{lem: zeta_2 and zeta_4}.	$\hfill\Box$

\subsection{Proof of Lemma \ref{lem: truncated projected covariance}}\label{sec:lem: truncated projected covariance}

In the proof, $\tilde C_0 > 0$ is a universal constant and may vary in different lines, but is independent of $\tilde C$ and $(i,j,\textit{v})$.
According to the decomposition in (\ref{iterative sigma decompose}), to ease the notation,  for $p\in[d_j]$, let $\tilde\sigma_p^{(i,j,\textit{v})}$, $\tilde\sigma_{1,p}^{(i,j,\textit{v})}$, $\tilde\sigma_{2,p}^{(i,j,\textit{v})}$, $\tilde\sigma_{3,p}^{(i,j,\textit{v})}$, and $\tilde\sigma_{4,p}^{(i,j,\textit{v})}$ be the $p$-th entries of $\tilde\bSigma_{\tilde\yb_{i,j},\tilde\xi_{i}}^{(\textit{v},j)}(1)$, $\bzeta_1^{(i,j,\textit{v})}$, $\bzeta_2^{(i,j,\textit{v})}$, $\bzeta_3^{(i,j,\textit{v})}$, and $\bzeta_4^{(i,j,\textit{v})}$, respectively. Similarly to (\ref{decomposition truncation}), it holds that
			\[
			\begin{split}
				|T_{\delta_{2,j}}\{\tilde\bSigma^{(\textit{v},j)}_{\tilde\yb_{i,j},\tilde\xi_{i}}(1)\}-\bzeta_1^{(i,j,\textit{v})}|_2& \le \underbrace{|T_{\delta_{2,j}}\{\tilde\bSigma^{(\textit{v},j)}_{\tilde\yb_{i,j},\tilde\xi_{i}}(1)\}-T_{\delta_{2,j}}(\bzeta_1^{(i,j,\textit{v})})|_2}_{\mathcal{\uppercase\expandafter{\romannumeral3}}_{1}}+\underbrace{|T_{\delta_{2,j}}(\bzeta_1^{(i,j,\textit{v})})-\bzeta_1^{(i,j,\textit{v})}|_2}_{\mathcal{\uppercase\expandafter{\romannumeral3}}_{2}}\,.
			\end{split}
			\]
			 By Lemma \ref{lem: zeta_1 lower bound}, we have $|\bzeta_1^{(i,j,\textit{v})}|_2 \ge  w_i C_9^{-1}  (1 - \tilde{C}_0 \tilde{C}^{-1/2}) > C$ for some universal constant $C > 0$.
           It follows that
			\begin{equation}\label{11l}
				\begin{split}
					\mathcal{\uppercase\expandafter{\romannumeral3}}_{2}^2 &\le \sum_{p=1}^{d_j}|\tilde\sigma_{1,p}^{(i,j,\textit{v})}|^2I\{|\tilde\sigma_{1,p}^{(i,j,\textit{v})}|<\delta_{2,j}\}\le \delta_{2,j}^2\sum_{p=1}^{d_j}I\{|\tilde\sigma_{1,p}^{(i,j,\textit{v})}|\ne 0\}	\le \tilde C_0\delta_{2,j}^2 s_{j}\,.
				\end{split}
			\end{equation}
			For $\mathcal{\uppercase\expandafter{\romannumeral3}}_{1}$, similarly to (\ref{I1}), we define three vectors $\tilde\bSigma_{1,1}^{(i,j,\textit{v})}$, $\tilde \bSigma_{1,2}^{(i,j,\textit{v})}$ and $\tilde\bSigma_{1,3}^{(i,j,\textit{v})}$, whose $p$-th entries are defined respectively as
			\[
			\begin{split}
				\tilde\Sigma_{1,1,p}^{(i,j,\textit{v})}&= (\tilde\sigma_{p}^{(i,j,\textit{v})}-\tilde\sigma_{1,p}^{(i,j,\textit{v})})I\{|\tilde\sigma_{p}^{(i,j,\textit{v})}|\ge\delta_{2,j},|\tilde\sigma_{1,p}^{(i,j,\textit{v})}|\ge\delta_{2,j}\}\,,\\
				\tilde\Sigma_{1,2,p}^{(i,j,\textit{v})}&= \tilde\sigma_{p}^{(i,j,\textit{v})}I\{|\tilde\sigma_{p}^{(i,j,\textit{v})}|\ge\delta_{2,j},|\tilde\sigma_{1,p}^{(i,j,\textit{v})}|<\delta_{2,j}\}\,,\\
				\tilde\Sigma_{1,3,p}^{(i,j,\textit{v})}&= \tilde\sigma_{1,p}^{(i,j,\textit{v})}I\{|\tilde\sigma_{p}^{(i,j,\textit{v})}|<\delta_{2,j},|\tilde\sigma_{1,p}^{(i,j,\textit{v})}|\ge\delta_{2,j}\}\,.
			\end{split}
			\] 
			It follows that 
			\[
			\mathcal{\uppercase\expandafter{\romannumeral3}}_{1}\le |\tilde\bSigma_{1,1}^{(i,j,\textit{v})}|_2+|\tilde\bSigma_{1,2}^{(i,j,\textit{v})}|_2+|\tilde\bSigma_{1,3}^{(i,j,\textit{v})}|_2 \,.
			\]
			We will handle the three terms one by one similarly to the proof of Lemma \ref{lem: l2 truncated covariance}, while the major difference is on the additional errors $\bzeta_2^{(i,j,\textit{v})}$ and $\bzeta_4^{(i,j,\textit{v})}$. 
			
			For  $\tilde\bSigma_{1,1}^{(i,j,\textit{v})}$, we have
			\begin{equation}\label{tilde Sigma 11l}
				|\tilde\bSigma_{1,1}^{(i,j,\textit{v})}|_2^2\le \tilde{C}_0 \bigg\{ \sum_{p=1}^{d_j}|\tilde\sigma_{3,p}^{(i,j,\textit{v})}|^2I\{|\tilde\sigma_{1,p}^{(i,j,\textit{v})}|\ge\delta_{2,j}\}+\sum_{p=1}^{d_j}|\tilde\sigma_{2,p}^{(i,j,\textit{v})}|^2+\sum_{p=1}^{d_j}|\tilde\sigma_{4,p}^{(i,j,\textit{v})}|^2\bigg\}\,.
			\end{equation}
			 Lemma \ref{bound for Lambda and phi} shows that
			\[
			1 - \tilde C_0\tilde C^{-1}\le \frac{\Lambda_{\xi,i,i}^{1/2}}{\tilde{\sigma}^{(\textit{v},j)}_{\check f, i} }\le 1 + \tilde C_0\tilde C^{-1}\,.
			\]		
			Then, similarly to \eqref{112} and \eqref{111}, we can conclude that
			\begin{equation}\label{11l1}
				\sum_{p=1}^{d_j}|\tilde\sigma_{3,p}^{(i,j,\textit{v})}|^2I\{|\tilde\sigma_{1,p}^{(i,j,\textit{v})}|\ge\delta_{2,j}\}	\le \tilde C_0\tilde C \Delta_{1,n,j}^2 s_{j}\,.
			\end{equation}
 Combining \eqref{11l1} and the bound \eqref{eq: zeta2+zeta4} in Lemma \ref{lem: zeta_2 and zeta_4}, we have
\[
\begin{split}
    |\tilde\bSigma_{1,1}^{(i,j,\textit{v})}|_2
    &\le
    \tilde C_0\tilde C
    \Delta_{1,n,j}s_j^{1/2}
    +
    \tilde C_0\tilde C
    \bigg(
    \frac{\gamma_{\max}+\Delta_{0,n}}{w_r}
    +
    \frac{1}{n w_r^3}
    \bigg)  +
    \tilde C_0\tilde C
    \bigg(w_1\bar\theta_j^{(\textit{v})}+\frac{w_1}{w_r}L_n\bigg)\bar\theta_j^{(\textit{v})}\,.
\end{split}
\]

			Next, for $\tilde\bSigma_{1,3}^{(i,j,\textit{v})}$, note that
			\[
			|\tilde\Sigma_{1,3,p}^{(i,j,\textit{v})}|\le |\tilde\sigma_{p}^{(i,j,\textit{v})}-\tilde\sigma_{1,p}^{(i,j,\textit{v})}|I\{|\tilde\sigma_{1,p}^{(i,j,\textit{v})}|\ge\delta_{2,j}\}+|\delta_{2,j}|I\{|\tilde\sigma_{1,p}^{(i,j,\textit{v})}|\ge\delta_{2,j}\}\,.
			\]
			Therefore, by \eqref{11l} and the bound \eqref{eq: zeta2+zeta4} in Lemma \ref{lem: zeta_2 and zeta_4}, similarly to (\ref{13}), we can conclude that
			\[
			\begin{split}
				|\tilde\bSigma_{1,3}^{(i,j,\textit{v})}|_2
				&\le \tilde C_0\tilde C(\Delta_{1,n,j} s_{j}^{1/2}+\delta_{2,j}s_{j}^{1/2}) +\tilde C_0\tilde C\bigg( \frac{\gamma_{\max}+\Delta_{0,n}}{w_r} +\frac{1}{n w_r^3}\bigg)\\
				&\quad +\tilde C_0\tilde C \bigg(w_1\bar\theta_j^{(\textit{v})}+\frac{w_1}{w_r}L_n\bigg)\bar\theta^{(\textit{v})}_j\,.
			\end{split}
			\]
			
			Finally, for $\tilde\bSigma_{1,2}^{(i,j,\textit{v})}$, by Triangle inequality, we have
			\[
			\begin{split}
|\tilde\Sigma_{1,2,p}^{(i,j,\textit{v})}|&\le|\tilde\sigma_{1,p}^{(i,j,\textit{v})}|I\{|\tilde\sigma_{1,p}^{(i,j,\textit{v})}|<\delta_{2,j}\}  +|\tilde\sigma_{2,p}^{(i,j,\textit{v})}|+|\tilde\sigma_{4,p}^{(i,j,\textit{v})}| \\
& \quad + |\tilde\sigma_{3,p}^{(i,j,\textit{v})}|I\{|\tilde\sigma_{p}^{(i,j,\textit{v})}|\ge\delta_{2,j},|\tilde\sigma_{1,p}^{(i,j,\textit{v})}|<\delta_{2,j}\}\,.
			\end{split}
			\]
			On the one hand, similarly to the upper bounds for $|\tilde\bSigma_{1,1}^{(i,j,\textit{v})}|_2$ and $|\tilde\bSigma_{1,3}^{(i,j,\textit{v})}|_2$,  we have
			\[
			\begin{split}
				& \bigg(\sum_{p=1}^{d_j}	|\tilde\sigma_{1,p}^{(i,j,\textit{v})}|^2 I\{|\tilde\sigma_{1,p}^{(i,j,\textit{v})}|<\delta_{2,j}\}+\sum_{p=1}^{d_j}|\tilde\sigma_{2,p}^{(i,j,\textit{v})}|^2+\sum_{p=1}^{d_j}|\tilde\sigma_{4,p}^{(i,j,\textit{v})}|^2\bigg)^{1/2}\\
				& ~~~~~~\le \tilde C_0\tilde C\delta_{2,j}s_{j}^{1/2} +\tilde C_0\tilde C\bigg( \frac{\gamma_{\max}+\Delta_{0,n}}{w_r} +\frac{1}{n w_r^3}\bigg)+\tilde C_0\tilde C \bigg(w_1\bar\theta_j^{(\textit{v})}+\frac{w_1}{w_r}L_n\bigg)\bar\theta^{(\textit{v})}_j\,.
			\end{split}
			\]
			Meanwhile, similarly to the decomposition of $\tilde\bSigma_{1,2,1}$ in the proof of Lemma \ref{lem: l2 truncated covariance}, we have
			\[
			\begin{split}
				&|\tilde\sigma_{3,p}^{(i,j,\textit{v})}|I\{|\tilde\sigma_{p}^{(i,j,\textit{v})}|\ge\delta_{2,j},|\tilde\sigma_{1,p}^{(i,j,\textit{v})}|<\delta_{2,j}\} \\
                &~~~~~~~~~~~~\le |\tilde\sigma_{3,p}^{(i,j,\textit{v})}|I\{|\tilde\sigma_{1,p}^{(i,j,\textit{v})}|\ne 0\}\\
				&~~~~~~~~~~~~\quad +|\tilde\sigma_{3,p}^{(i,j,\textit{v})}|I\{|\tilde\sigma_{2,p}^{(i,j,\textit{v})}+\tilde\sigma_{3,p}^{(i,j,\textit{v})}+\tilde\sigma_{4,p}^{(i,j,\textit{v})}|\ge \delta_{2,j}\}\,.
			\end{split}
			\]
			Similarly to the bound for (\ref{11l1}), we have
			\[
			\bigg(\sum_{p=1}^{d_j}|\tilde\sigma_{3,p}^{(i,j,\textit{v})}|^2I\{|\tilde\sigma_{1,p}^{(i,j,\textit{v})}|\ne 0\}\bigg)^{1/2}\le \tilde C_0\tilde C \Delta_{1,n,j}s_{j}^{1/2}\,.
			\]
			Moreover, 
			\[
			\begin{split}
				&|\tilde\sigma_{3,p}^{(i,j,\textit{v})}|I\{|\tilde\sigma_{2,p}^{(i,j,\textit{v})}+\tilde\sigma_{3,p}^{(i,j,\textit{v})}+\tilde\sigma_{4,p}^{(i,j,\textit{v})}|\ge\delta_{2,j}\}\\
				&~~~~~~ \le |\tilde\sigma_{3,p}^{(i,j,\textit{v})}|I\{|\tilde\sigma_{3,p}^{(i,j,\textit{v})}|< |\tilde\sigma_{2,p}^{(i,j,\textit{v})}+\tilde\sigma_{4,p}^{(i,j,\textit{v})}|\}+|\tilde\sigma_{3,p}^{(i,j,\textit{v})}|I\{|\tilde\sigma_{3,p}^{(i,j,\textit{v})}|\ge 0.5\delta_{2,j}\}\\
				&~~~~~~ \le  |\tilde\sigma_{2,p}^{(i,j,\textit{v})}|+|\tilde\sigma_{4,p}^{(i,j,\textit{v})}|+|\tilde\sigma_{3,p}^{(i,j,\textit{v})}|I\{|\tilde\sigma_{3,p}^{(i,j,\textit{v})}|\ge 0.5\delta_{2,j}\}\,.
			\end{split}
			\]
By Lemma \ref{bound for Lambda and phi},  for a sufficiently
large $\tilde C_* > 0$, the event $\{|\tilde\sigma_{3,p}^{(i,j,\textit{v})}|\ge0.5\delta_{2,j}\}$ implies
\[
    |[\tilde\bSigma_{\eb_{i,j},\xi_i}(1)]_p|
    +
    \max_{\ell\ne i}|[\tilde\bSigma_{\eb_{i,j},\xi_\ell}(0)]_p|
    >
    0.4\tilde C_*
    \bigg(\frac{\log d_j}{n}\bigg)^{1/2}\,.
\]
Hence, we can conclude that
			\[
			\begin{split}	       
      \bigg(\sum_{p=1}^{d_j}|\tilde\sigma_{3,p}^{(i,j,\textit{v})}|^2I\{|\tilde\sigma_{3,p}^{(i,j,\textit{v})}|\ge 0.5\delta_{2,j}\}\bigg)^{1/2}
				&\le \tilde C_0\tilde C\Delta_{1,n,j}\Delta_{2,n,j}(0.4\tilde C_*)      
                \end{split}
			\]
			for sufficiently large $\tilde C_* > 0$. It follows that
			\[
			\begin{split}
				|\tilde\bSigma_{1,2}^{(i,j,\textit{v})}|_2
				&\le \tilde C_0\tilde C\{\Delta_{1,n,j}\Delta_{2,n,j}(0.4\tilde C_*)+\Delta_{1,n,j} s_{j}^{1/2}+\delta_{2,j}s_{j}^{1/2}\}\\
				&\quad +\tilde C_0\tilde C\bigg( \frac{\gamma_{\max}+\Delta_{0,n}}{w_r} +\frac{1}{n w_r^3}\bigg) +\tilde C_0\tilde C \bigg(w_1\bar\theta_j^{(\textit{v})}+\frac{w_1}{w_r}L_n\bigg)\bar\theta^{(\textit{v})}_j\,.
			\end{split}
			\]
Combining with the bounds for $|\tilde\bSigma_{1,1}^{(i,j,\textit{v})}|_2$, $|\tilde\bSigma_{1,3}^{(i,j,\textit{v})}|_2$, and $|\tilde\bSigma_{1,2}^{(i,j,\textit{v})}|_2$, we have
			\[
			\begin{split}
				&|T_{\delta_{2,j}}\{\tilde\bSigma_{\tilde\yb_{i,j},\tilde\xi_i}^{(\textit{v},j)}(1)\}-\bzeta_1^{(i,j,\textit{v})}|_2	 \\		
                &~~~~~~\le \tilde C_0\tilde C\{\Delta_{1,n,j}\Delta_{2,n,j}(0.4\tilde C_*)+\Delta_{1,n,j} s_{j}^{1/2}+\delta_{2,j}s_{j}^{1/2}\}\\
				&~~~~~~\quad +\tilde C_0\tilde C\bigg( \frac{\gamma_{\max}+\Delta_{0,n}}{w_r} +\frac{1}{n w_r^3}\bigg)+\tilde C_0\tilde C \bigg(w_1\bar\theta_j^{(\textit{v})}+\frac{w_1}{w_r}L_n\bigg)\bar\theta^{(\textit{v})}_j\,,
			\end{split}
			\]
			which concludes Lemma \ref{lem: truncated projected covariance}.
$\hfill\Box$

\subsection{Proof of Lemma \ref{lem: Delta}}\label{sec:lem: Delta}			
		Following the arguments used in the proof of Lemma \ref{lem: Omega n} for $\Xi_{1,n}(\tilde C)$ and $\Xi_{4,n}(\tilde C)$, we can derive the convergence rate of $\Delta_{0,n}$. 
			 Specifically, recall that $\bLambda_{\xi} = (\Lambda_{\xi,i,j})_{r\times r}$ is the diagonal matrix with the $i$-th diagonal entry being $n^{-1}\sum_{t = 1}^n\mathbb{E}\{(\xi_{t,i}-\bar\xi_i)^2\}$, and $u_{t,i} = (\ab_{i,m}^{\MP} \otimes \cdots \otimes  \ab_{i,1}^{\MP})^{\T}\textup{vec}(\mathcal{E}_t)$ with $(\ab_{1,j}^{\MP},\ldots,\ab_{ r,j}^{\MP})^{\T} =(\Ab_j^\T\Ab_j)^{-1}{\Ab}_j^{\T}$. By definition, we can write
			\[
		\begin{split}			
        &\frac{\Lambda_{\xi,i,i}^{1/2}\Lambda_{\xi,\ell,\ell}^{1/2}}{n}\sum_{t=2}^n\{\xi_{t,\ell}^{\textup{s}}\xi_{t-1,i}^{\textup{s}}-\mathbb{E}(\xi_{t,\ell}^{\textup{s}}\xi_{t-1,i}^{\textup{s}})\}\\
			&~~~~~~=\frac{1}{n}\sum_{t=2}^n\{w_{\ell}(f_{t,\ell}-\bar f_{\ell})+(u_{t,\ell}-\bar u_{\ell})\}\{w_{i}(f_{t-1,i}-\bar f_i)+(u_{t-1,i}-\bar u_{i})\}\\
			&~~~~~~\quad -\frac{1}{n}\sum_{t=2}^n\mathbb{E}[\{w_{\ell}(f_{t,\ell}-\bar f_{\ell})+(u_{t,\ell}-\bar u_{\ell})\}\{w_{i}(f_{t-1,i}-\bar f_i)+(u_{t-1,i}-\bar u_{i})\}]\,,
		\end{split}
			\]
			 where $\bar u_i = n^{-1}\sum_{t = 1}^n u_{t,i}$. It is already shown in the proof of Lemma \ref{lem: Omega n} that
			\[
			\begin{split}
				&\frac{1}{n}\sum_{t=2}^n[(f_{t,\ell}-\bar f_{\ell})(f_{t-1,i}-\bar f_i)-\mathbb{E}\{(f_{t,\ell}-\bar f_{\ell})(f_{t-1,i}-\bar f_i)\}]\\
				&~~~~~~=\frac{1}{n}\sum_{t=2}^n\{f_{t,\ell}f_{t-1,i}-\mathbb{E}(f_{t,\ell}f_{t-1,i})\}
				-\frac{1}{n}\sum_{t=2}^n\{f_{t,\ell}\bar f_{i}-\mathbb{E}(f_{t,\ell}\bar f_{i})\}\\
				&~~~~~~\quad-\frac{1}{n}\sum_{t=2}^n\{f_{t-1,i}\bar f_{\ell}-\mathbb{E}(f_{t-1,i}\bar f_{\ell})\}+\frac{1}{n}\sum_{t=1}^n \{\bar f_{\ell} \bar f_{i}-\mathbb{E}(\bar f_{\ell} \bar f_{i})\}\\
				&~~~~~~=  O_{\rm p}\bigg(\frac{1}{\sqrt{n}}\bigg)\,,
			\end{split}
			\]
			which implies the second part in $\Delta_{0,n}$ is $O_{\rm p}(n^{-1/2})$. Handling the other interaction terms similarly, we claim that
			\[
			\frac{\Lambda_{\xi,i,i}^{1/2}\Lambda_{\xi,\ell,\ell}^{1/2}}{n}\sum_{t=2}^n\{\xi_{t,\ell}^{\textup{s}}\xi_{t-1,i}^{\textup{s}}-\mathbb{E}(\xi_{t,\ell}^{\textup{s}}\xi_{t-1,i}^{\textup{s}})\} = O_{\rm p}\bigg(\frac{w_iw_{\ell}}{\sqrt{n}}\bigg)\,.
			\] 
The rate for $\Delta_{0,n}$ then holds because $\Lambda_{\xi,i,i}^{1/2} \asymp w_i$ by  \eqref{eq: varxi-bar-xi}.


The proof for $\Delta_{1,n,j}$ is similar to (\ref{tail sigma2}) and (\ref{sigma2}) under Assumptions \ref{tail}, \ref{mixing} and \ref{cross}. Write $\eb_{t,\ell,j} = (e_{t,\ell,j,1},\ldots,e_{t,\ell,j,d_j})^\T$ and $\bar\eb_{\ell,j} = (\bar e_{\ell,j,1},\ldots,\bar e_{\ell,j,d_j})^\T$. 
By definition of $\xi_{t,i}^{\textup{s}}$, it holds that
\[
\begin{split}
  \frac{\Lambda_{\xi,i,i}^{1/2}}{w_i}[\tilde\bSigma_{\eb_{\ell,j},\xi_i}(1)]_p &= \frac{1}{n-1}\sum_{t=2}^n e_{t,\ell,j,p}w_i^{-1}\xi_{t-1,i} - \frac{w_i^{-1}\bar \xi_i}{n-1}\sum_{t=2}^ne_{t,\ell,j,p} \\
& \quad - \frac{\bar e_{\ell,j,p}}{n-1}\sum_{t=2}^n w_i^{-1}\xi_{t-1,i} + \bar e_{\ell,j,p}   w_i^{-1}\bar \xi_i\,.  
\end{split}
\]
Therefore, similarly to \eqref{tail sigma2} and together with \eqref{eq: varxi-bar-xi}, we have
\[
    \mathbb{P}(
    |[\tilde\bSigma_{\eb_{\ell,j},\xi_i}(1)]_p|\ge x
    )
    \lesssim
    \exp(-Cnx^2)
    +
    \exp(-Cn^{\tilde c}x^{\tilde c})
    +
    \exp(-Cn^{\check c}x^{\check c})
\]
		 for any $x \in (0,1)$ and some universal constant $C>0$, where $\tilde c=(1+2c_1^{-1}+c_2^{-1})^{-1}$ and $\check c = (2+|c_1^{-1} -1|_{\MP} + c_2^{-1})^{-1}$.  Analogous to $\tilde\bSigma_{\eb_{\ell,j},\xi_i}(1)$,  it holds that
\[
			\mathbb{P}(|[\tilde\bSigma_{\eb_{\ell,j},\xi_i}(0)]_p|\ge x)\lesssim \exp(-Cnx^2) +\exp(-Cn^{\tilde c}x^{\tilde c}) +\exp(-Cn^{\check c}x^{\check c})
			\]
			for any $x \in (0,1)$ and some universal constant $C>0$, where $\tilde c=(1+2c_1^{-1}+c_2^{-1})^{-1}$ and $\check c = (2+|c_1^{-1} -1|_{\MP} + c_2^{-1})^{-1}$. We then can conclude the result of $\Delta_{1,n,j}$ provided that $\max_{j \in [m]}\log d_j \ll n^c$ for some constant $c\in(0,1)$ depending only on $c_1$ and $c_2$ specified in Assumptions {\rm\ref{tail}} and {\rm\ref{mixing}}.
            
            For $\Delta_{2,n,j}(0.4\tilde C_*)$, by Markov's inequality, we have
\[
\begin{split}
    &\mathbb{P}\{\Delta_{2,n,j}(0.4\tilde C_*)\ge \lambda\}  \\
    &\quad\le
    \lambda^{-1}
    \sum_{i,\ell\in[r]}\sum_{p=1}^{d_j}
    \mathbb{P}\bigg\{
    |[\tilde\bSigma_{\eb_{\ell,j},\xi_i}(1)]_p|
    +
    |[\tilde\bSigma_{\eb_{\ell,j},\xi_i}(0)]_p|
    >
    0.4\tilde C_*
    \bigg(\frac{\log d_j}{n} \bigg)^{1/2}
    \bigg\} 
\end{split}
\]
for any $\lambda > 1$. By above arguments, it holds that 
			\[
			\mathbb{P}(|[\tilde\bSigma_{\eb_{\ell,j},\xi_i}(1)]_p|+|[\tilde\bSigma_{\eb_{\ell,j},\xi_i}(0)]_p|\ge x)\lesssim \exp(-Cnx^2) +\exp(-Cn^{\tilde c}x^{\tilde c}) +\exp(-Cn^{\check c}x^{\check c})
			\]
			for any $x \in (0,1)$ and some universal constant $C>0$, where $\tilde c=(1+2c_1^{-1}+c_2^{-1})^{-1}$ and $\check c = (2+|c_1^{-1} -1|_{\MP} + c_2^{-1})^{-1}$. Therefore, for sufficiently large $\tilde C_* > 0$, we have
			\[
			\mathbb{P}\{\Delta_{2,n,j}(0.4 \tilde C_*)\ge \lambda\} \le \lambda^{-1}C 
			\]
			for any $\lambda  > 1$, provided that $\max_{j \in [m]}\log d_j \ll n^c$ for some constant $c\in(0,1)$ depending only on $c_1$ and $c_2$ specified in Assumptions {\rm\ref{tail}} and {\rm\ref{mixing}}, which implies that $\Delta_{2,n,j}(0.4 \tilde C_*) = O_{\rm p}(1)$. 
$\hfill\Box$

\subsection{Proof of Lemma \ref{lemma: 4th order cross moment}}\label{sec:lem: 4th order}			
By Triangle inequality, to prove Lemma \ref{lemma: 4th order cross moment}, it suffices to show that
\begin{equation}\label{lem: 4th eq 1}
    \begin{split}
          &\frac{1}{n-1}\sum_{t=2}^n\{(\otimes_{m}^{j^\prime=1}\check\ab_{i,j^\prime}^{\MP}-\otimes_{m}^{j^\prime=1}\ab^{\MP}_{i,j^\prime})^\T \textup{vec}(\mathcal{E}_{t-1}) \}^2\{\bbeta_{i,j}(\hb)^{\T} \textup{vec}(\Eb_{t,j})\}^2\,\\
           &~~~~~~ =O_{\rm p} \Big\{ (\tilde L_n+1 ) \max_{j' \in [m]}| \check\ab_{i,j^\prime}^{\MP}- \ab^{\MP}_{i,j^\prime}|_2^2 \Big\}\,,
    \end{split}
\end{equation}
and 
\begin{equation}\label{lem: 4th eq 2}
    \begin{split}
            &\{(\otimes_{m}^{j^\prime=1}\check\ab_{i,j^\prime}^{\MP}-\otimes_{m}^{j^\prime=1}\ab^{\MP}_{i,j^\prime})^\T \textup{vec}(\bar{\mathcal{E}}) \}^2 \cdot \frac{1}{n-1}\sum_{t=2}^n\{\bbeta_{i,j}(\hb)^{\T} \textup{vec}(\Eb_{t,j})\}^2\,\\
            &~~~~~~=O_{\rm p} \Big(\tilde L_n^2 \max_{j' \in [m]}| \check\ab_{i,j^\prime}^{\MP}- \ab^{\MP}_{i,j^\prime}|_2^2 \Big)\,.
    \end{split}
\end{equation}
Notice that \eqref{lem: 4th eq 2} follows directly from Lemma \ref{lem: Omega n} on the event $\Xi_{8,n}(\tilde C)$, together with the facts that $L_n\lesssim \tilde L_n$ and $(n-1)^{-1}\sum_{t=2}^n\{\bbeta_{i,j}(\hb)^{\T} \textup{vec}(\Eb_{t,j})\}^2=O_{\rm p}(1)$. Therefore, we focus on the proof of \eqref{lem: 4th eq 1}.

To start, define  $\mathcal{Z}_{t}=   \textup{vec}(\mathcal{E}_{t-1}) \otimes \textup{vec}(\mathcal{E}_{t})$ for $t \in \{2,\ldots,n\}$. Then, by the similar arguments in the proof of Lemma \ref{lem: Omega n} for the event $\Xi_{9,n}(\tilde C)$, we have
\begin{align}
    &\max_{\substack{
    |\tilde\bbeta_{1,j}|_2= 1 = |\tilde\bbeta_{2,j}|_2,\,
    |\bbeta_{1,j}|_2= 1 = |\bbeta_{2,j}|_2,\\
    \tilde\bbeta_{1,j},\,\tilde\bbeta_{2,j},\,
    \bbeta_{1,j},\,\bbeta_{2,j}\in\mathbb{R}^{d_j},
    \ \forall j\in[m]}}
    \bigg|
    \frac{1}{n-1}\sum_{t=2}^{n}
    \{(\otimes^{j=1}_m \tilde\bbeta_{1,j})
    \otimes
    (\otimes^{j=1}_m \tilde\bbeta_{2,j})\}^{\T}
    \notag\\
    &\qquad\qquad \times
    \{
    \mathcal{Z}_{t}\mathcal{Z}_{t}^{\T}
    -
    \mathbb{E}(\mathcal{Z}_{t}\mathcal{Z}_{t}^{\T})
    \}
    \{
    (\otimes^{j=1}_m\bbeta_{1,j})
    \otimes
    (\otimes^{j=1}_m\bbeta_{2,j})
    \}
    \bigg|
    =
    O_{\rm p}(\tilde L_n)\,.
    \label{Omega 9n-revised}
\end{align}
Similarly, define $\mathcal{Z}_{t,j}= \textup{vec}(\mathcal{E}_{t-1}) \otimes \textup{vec}(\Eb_{t,j})$ for $t \in \{2,\ldots,n\}$. By \eqref{eq: kronecker product equality}, \eqref{Omega 9n-revised} and the one-by-one replacement argument for the Kronecker products,  it follows that
\[
\begin{split}
    &\bigg|
    \frac{1}{n-1}\sum_{t=2}^n
    \{(\otimes_{m}^{j^\prime=1}\check\ab_{i,j^\prime}^{\MP}
    -\otimes_{m}^{j^\prime=1}\ab^{\MP}_{i,j^\prime})
    \otimes \bbeta_{i,j}(\hb)\}^{\T}
    \{
    \mathcal{Z}_{t,j}\mathcal{Z}_{t,j}^{\T}
    -
    \mathbb{E}(\mathcal{Z}_{t,j}\mathcal{Z}_{t,j}^{\T})
    \} \\
    &~~~~~~\qquad\qquad
    \times
    \{(\otimes_{m}^{j^\prime=1}\check\ab_{i,j^\prime}^{\MP}
    -\otimes_{m}^{j^\prime=1}\ab^{\MP}_{i,j^\prime})
    \otimes \bbeta_{i,j}(\hb)\}
    \bigg| \\
    &~~~~~~ =
    O_{\rm p} \Big(
    \tilde L_n
    \max_{j' \in [m]}|
    \check\ab_{i,j^\prime}^{\MP}
    - \ab^{\MP}_{i,j^\prime}|_2^2
    \Big)\,.
\end{split}
\]
On the other hand, by Assumptions \ref{mixing} and \ref{cross}, we can conclude that $|(\bbeta\otimes\tilde\bbeta)^{\T}\mathbb{E}(\mathcal{Z}_{t,j}\mathcal{Z}_{t,j}^{\T})(\bbeta\otimes\tilde\bbeta)| \le C$ for some constant $C>0$ and any unit vectors $\bbeta,\tilde\bbeta\in\mathbb{R}^{D_n}$. Therefore, 
\[
\begin{split}
      &\bigg|
    \frac{1}{n-1}\sum_{t=2}^n
    \{(\otimes_{m}^{j^\prime=1}\check\ab_{i,j^\prime}^{\MP}
    -\otimes_{m}^{j^\prime=1}\ab^{\MP}_{i,j^\prime})
    \otimes \bbeta_{i,j}(\hb)\}^{\T}
    \{
    \mathbb{E}(\mathcal{Z}_{t,j}\mathcal{Z}_{t,j}^{\T})
    \} \\
    &~~~~~~\qquad\qquad
    \times
    \{(\otimes_{m}^{j^\prime=1}\check\ab_{i,j^\prime}^{\MP}
    -\otimes_{m}^{j^\prime=1}\ab^{\MP}_{i,j^\prime})
    \otimes \bbeta_{i,j}(\hb)\}
    \bigg| \\
    &~~~~~~ =
    O_{\rm p} \Big(
    \max_{j' \in [m]}|
    \check\ab_{i,j^\prime}^{\MP}
    - \ab^{\MP}_{i,j^\prime}|_2^2
    \Big)\,.
\end{split}
\]
Then, \eqref{lem: 4th eq 1} holds, which further implies Lemma \ref{lemma: 4th order cross moment}. $\hfill\Box$

\section{Relaxation of technical assumptions}\label{sec:relaxation assumptions}

\subsection{Serial dependence of the idiosyncratic error tensor}\label{sec: relax error serial dependence}
 
We assume that the idiosyncratic errors are serially uncorrelated in Assumption \ref{error}, which enables a direct separation of
the signal part and the noise part through the auto-covariances of the observed data.  In fact, our proposed procedures can still work if the idiosyncratic errors are serially correlated. For $\Ab_j$ and $\Bb_j$ specified in Section \ref{sec: model}, and $\Gb_{k,\xi}$ specified in Section \ref{sec: initial},    notice that $\bSigma_{\Yb_j,\xi}(k) = \Ab_j \Gb_{k,\xi} \Bb_j^{\T}$ and the corresponding representation of $\Kb_{1,2,j}$ in \eqref{Kbj} no longer hold when the idiosyncratic errors are serially correlated. Instead,
\[
\bSigma_{\Yb_j,\xi}(k)=\Ab_j\Gb_{k,\xi}\Bb_j^\T+\bSigma_{\Eb_j,\xi}(k) \,,
\]
where the additional term $\bSigma_{\Eb_j,\xi}(k)$ arises from the serial dependence in the idiosyncratic errors and is defined as
\[
\bSigma_{\Eb_j,\xi}(k)
=
\frac{1}{n-k}\sum_{t=k+1}^n
\mathbb{E} [\{\Eb_{t,j}-\mathbb{E}(\bar{\Eb}_j)\}\{\xi_{t-k}-\mathbb{E}(\bar{\xi})\} ] 
\]
with $\bar{\Eb}_j=n^{-1}\sum_{t=1}^n \Eb_{t,j}$. 
Write $\bSigma_{\Cb_j,\xi}(k)=\Ab_j\Gb_{k,\xi}\Bb_j^\T$. Then, $\ab_{i,j}$ and $\Qb_j$ can be identified in the same manner as in Section~\ref{sec: initial} based on \eqref{Kbj} and \eqref{tilde Mj}, with $\bSigma_{\Yb_j,\xi}(k)$ replaced by $\bSigma_{\Cb_j,\xi}(k)$.  Therefore, in order to ensure our procedures  still work in such case, the key step is to establish the relationship between the estimator $\tilde{\bSigma}_{k,j}$ specified in \eqref{hat Sigma kj} and $\bSigma_{\Cb_j,\xi}(k)$. 
For $s_j$ specified in Assumption \ref{sparsity}, if we further assume  
\[
\begin{split}
  \max_{k\in[K]}\max_{j\in[m]}\|&\bSigma_{\Eb_j,\xi}(k)\|_2  = O(1)\,, ~~  \max_{k\in[K]} \max_{q\in [d_{\mminus j}]}\sum_{p=1}^{d_j} I\{[\bSigma_{\Eb_j,\xi}(k)]_{p,q} \neq 0\}  \le s_j\,, \\
  &\textup{and}~~\max_{k\in[K]} \max_{p \in [d_j]}\sum_{q=1}^{d_{\mminus j}} I\{[\bSigma_{\Eb_j,\xi}(k)]_{p,q} \neq 0\}  \le \prod_{j' \neq j} s_{j'}  
\end{split}
\]
for each $j\in[m]$, then following the same strategy as in the proof of Lemma \ref{lem: l2 truncated covariance}, we can conclude
\[
\| T_{\delta_1}\{\tilde\bSigma_{\Yb_j,\xi}(k)\} - \tilde\bSigma_{\Cb_j,\xi}(k) - \bSigma_{\Eb_j,\xi}(k) \|_2 
= \ubar\sigma_{\xi}^2 \bar\sigma_{\xi}^{-1} \cdot O_{\rm p}(\Pi_n)\,,
\]
where $\tilde\bSigma_{\Cb_j,\xi}(k)$ defined in \eqref{sigma C}  is the sample estimate of $\bSigma_{\Cb_j,\xi}(k)$.
By Triangle inequality, it follows that
\[
\| T_{\delta_1}\{\tilde\bSigma_{\Yb_j,\xi}(k)\} - \tilde\bSigma_{\Cb_j,\xi}(k) \|_2 
\le \ubar\sigma_{\xi}^2 \bar\sigma_{\xi}^{-1} \cdot O_{\rm p}(\Pi_n) + \|\bSigma_{\Eb_j,\xi}(k)\|_2\,.
\]
 Further, following the proof of Theorem \ref{thm: aij}, if $\Pi_n + \ubar\sigma_{\xi}^{-2} \bar\sigma_{\xi} \max_{k\in[K]}\max_{j\in[m]}  \|\bSigma_{\Eb_j,\xi}(k)\|_2 \ll 1$, we can conclude that the one-pass estimator satisfies
\begin{equation}\label{rate serial correlation}
|\tilde\ab_{z_i,j} - \kappa_{i,j} \ab_{i,j}|_2 
= O_{\rm p}\bigg( \frac{1}{\ubar\sigma_{\xi}} \sqrt{\frac{S_n \log D_n}{n}} + \frac{1}{\ubar\sigma_{\xi}} \max_{k\in[K]}  \|\bSigma_{\Eb_j,\xi}(k)\|_2 \bigg)
\end{equation}
for $z_i$ and $\kappa_{i,j}$ specified in Theorem \ref{thm: aij}.  

  In comparison to Theorem \ref{thm: aij}, the additional error term $\ubar\sigma_{\xi}^{-1}\max_{k\in[K]}  \|\bSigma_{\Eb_j,\xi}(k)\|_2$ in \eqref{rate serial correlation} originates from the serial correlation of the noise. To ensure consistency, it is required that $\ubar\sigma_{\xi}\rightarrow\infty$ as $n\rightarrow \infty$, which is the cost of relaxing Assumption \ref{error}. Similar requirements also appear in \citeS{bai2003inferential-app} and \citeS{chen2026estimation-app}, which are used to  guarantee that the factor signal is strong enough relative to the idiosyncratic errors for consistent estimation. 

We further evaluate the robustness of the proposed methods through a simulation study with serially correlated idiosyncratic errors. Specifically, we modify the setting in Section~\ref{sec:numerical} by generating each entry of the error tensor $\mathcal{E}_t$ as an AR(1) process, where the autoregressive coefficient is independently drawn from a uniform distribution on $[-0.3,0.3]$. All other aspects of the data-generating process remain unchanged. Table~\ref{table:ARerror-rf-all} reports the finite-sample performance of the seven methods discussed in the paper (Pro.iter, HOPE, CC-ISO, Pro.init, cPCA, RP-PCA, and RCP) in estimating the factor loading vectors.  The results are similar to those for the uncorrelated error case in Table~\ref{table:rf-all}, indicating that the proposed methods remain effective and robust even in the presence of serially correlated errors.

\begin{table}[htbp]
\centering
\scriptsize
\renewcommand{\arraystretch}{1.15}
\setlength{\tabcolsep}{3pt}
\caption{
The averages and standard deviations (in parentheses) of the estimation errors \eqref{eq:estimation error} for different methods based on 2000 repetitions. The elements of the error tensor sequence $\{\mathcal{E}_t\}_{t=1}^n$ are generated independently as AR(1) processes, with the autoregressive coefficients independently drawn from the uniform distribution on $[-0.3,0.3]$. Bold numbers indicate the smallest average estimation error among all competing methods. All numbers reported below are multiplied by 100.}
\label{table:ARerror-rf-all}
\begin{tabular}{c|c|c|c|ccc|cccc}
\hline\hline
\multirow{2}{*}{\textbf{$\rho$}} & \multirow{2}{*}{\textbf{$\phi$}} & \multirow{2}{*}{\textbf{$s$}} & \multirow{2}{*}{\textbf{$n$}} & \multicolumn{3}{c|}{\textbf{Iterative estimates}} & \multicolumn{4}{c}{\textbf{One-pass estimates}} \\ \cline{5-11}
& & & & \textbf{Pro.iter} & \textbf{HOPE} & \textbf{CC-ISO} & \textbf{Pro.init} & \textbf{cPCA} & \textbf{RP-PCA} & \textbf{RCP} \\
\hline
\multirow{12}{*}{0}
& \multirow{6}{*}{0.25}
& \multirow{2}{*}{0}
& 400 & \textbf{0.24} (4.25) & 0.71 (7.81) & 0.46 (6.02) & 4.91 (9.50) & 17.14 (17.43) & 19.26 (18.06) & 33.54 (39.48) \\
& & & 800 & \textbf{0.12} (3.05) & 0.36 (5.35) & 0.20 (3.98) & 2.39 (5.98) & 14.17 (15.31) & 16.88 (16.35) & 29.11 (38.25) \\
\cline{3-11}
& & \multirow{2}{*}{0.3}
& 400 & 0.54 (6.81) & 0.93 (9.03) & \textbf{0.36} (5.51) & 4.27 (8.89) & 15.17 (17.58) & 16.31 (17.54) & 29.83 (38.74) \\
& & & 800 & \textbf{0.30} (5.09) & 0.40 (5.84) & 0.51 (6.58) & 2.33 (7.28) & 11.89 (15.13) & 13.62 (15.94) & 27.64 (37.86) \\
\cline{3-11}
& & \multirow{2}{*}{0.6}
& 400 & \textbf{0.45} (6.28) & 1.20 (10.29) & 0.61 (7.33) & 4.08 (8.56) & 13.73 (18.02) & 14.84 (17.88) & 29.84 (38.43) \\
& & & 800 & \textbf{0.25} (4.76) & 0.51 (6.74) & 0.39 (6.08) & 1.99 (6.44) & 9.90 (14.82) & 10.93 (15.06) & 26.51 (37.39) \\
\cline{2-11}
& \multirow{6}{*}{0.75}
& \multirow{2}{*}{0}
& 400 & 1.54 (7.50) & 2.04 (8.26) & \textbf{0.58} (4.07) & 14.12 (17.57) & 32.20 (10.53) & 33.60 (10.34) & 53.83 (30.78) \\
& & & 800 & \textbf{0.67} (5.05) & 1.39 (6.44) & 0.88 (5.24) & 5.56 (10.79) & 33.51 (10.36) & 34.45 (10.25) & 55.60 (31.43) \\
\cline{3-11}
& & \multirow{2}{*}{0.3}
& 400 & \textbf{0.30} (4.29) & 0.70 (6.80) & 0.40 (4.36) & 6.82 (11.94) & 25.68 (15.38) & 28.49 (14.90) & 41.30 (38.42) \\
& & & 800 & \textbf{0.49} (5.92) & 0.72 (6.78) & 0.55 (5.34) & 3.41 (9.56) & 26.37 (14.54) & 30.17 (14.14) & 43.44 (39.56) \\
\cline{3-11}
& & \multirow{2}{*}{0.6}
& 400 & \textbf{0.21} (3.84) & 0.76 (7.85) & 0.61 (6.66) & 4.82 (9.20) & 18.21 (18.17) & 19.64 (17.91) & 33.35 (39.74) \\
& & & 800 & 0.33 (5.31) & 0.56 (6.72) & \textbf{0.26} (4.49) & 2.14 (6.75) & 15.32 (16.33) & 17.37 (16.68) & 30.02 (38.78) \\
\hline
\multirow{12}{*}{0.75}
& \multirow{6}{*}{0.25}
& \multirow{2}{*}{0}
& 400 & \textbf{0.39} (2.77) & 25.26 (38.12) & 27.07 (38.60) & 9.32 (13.56) & 48.41 (15.04) & 49.86 (14.44) & 21.57 (24.77) \\
& & & 800 & \textbf{0.12} (0.05) & 23.76 (37.46) & 24.35 (37.37) & 4.83 (9.47) & 48.64 (14.24) & 48.93 (13.67) & 20.69 (24.64) \\
\cline{3-11}
& & \multirow{2}{*}{0.3}
& 400 & \textbf{0.38} (3.34) & 28.19 (40.08) & 29.48 (40.35) & 9.29 (14.25) & 49.79 (16.16) & 50.53 (15.04) & 22.35 (25.97) \\
& & & 800 & \textbf{0.10} (0.04) & 29.67 (40.58) & 29.58 (40.52) & 3.60 (6.87) & 50.24 (15.64) & 51.27 (15.05) & 19.05 (24.09) \\
\cline{3-11}
& & \multirow{2}{*}{0.6}
& 400 & \textbf{0.25} (2.20) & 29.79 (41.28) & 31.08 (41.52) & 8.14 (13.49) & 51.09 (17.95) & 52.24 (16.77) & 20.70 (24.84) \\
& & & 800 & \textbf{0.08} (0.03) & 32.03 (41.99) & 32.39 (41.93) & 3.17 (6.39) & 51.08 (17.27) & 52.14 (16.45) & 19.54 (24.63) \\
\cline{2-11}
& \multirow{6}{*}{0.75}
& \multirow{2}{*}{0}
& 400 & \textbf{4.64} (10.17) & 7.19 (15.62) & 31.17 (21.03) & 23.26 (18.73) & 38.28 (8.62) & 42.33 (9.13) & 27.89 (19.83) \\
& & & 800 & \textbf{0.38} (1.60) & 3.86 (14.17) & 19.08 (21.78) & 10.35 (13.64) & 38.58 (9.08) & 40.09 (8.32) & 23.66 (20.62) \\
\cline{3-11}
& & \multirow{2}{*}{0.3}
& 400 & \textbf{0.64} (4.01) & 13.82 (28.31) & 18.43 (30.24) & 13.64 (17.32) & 44.22 (10.80) & 45.89 (10.50) & 24.32 (24.55) \\
& & & 800 & \textbf{0.13} (0.05) & 13.83 (28.46) & 14.43 (28.68) & 5.54 (10.09) & 44.44 (10.86) & 45.25 (10.58) & 21.80 (24.51) \\
\cline{3-11}
& & \multirow{2}{*}{0.6}
& 400 & \textbf{0.26} (1.71) & 23.66 (37.58) & 24.71 (38.06) & 9.40 (14.06) & 49.49 (14.92) & 50.51 (14.37) & 22.87 (25.88) \\
& & & 800 & \textbf{0.11} (0.73) & 24.40 (37.77) & 25.90 (38.40) & 3.65 (7.72) & 50.45 (14.20) & 50.97 (13.91) & 20.61 (24.74) \\
\hline\hline
\end{tabular}
\end{table}

\subsection{Tail probability and mixing conditions}\label{sec: relax tail condition}

The exponentially decaying tail probabilities and $\alpha$-mixing coefficients assumed in Assumptions \ref{tail} and \ref{mixing} ensure exponential-type upper bounds for the tail probabilities of the statistics involved in the proofs, such as in \eqref{tail error}. In fact, the proposed procedure would still be valid if these conditions were relaxed to allow polynomially decaying tail probabilities and $\alpha$-mixing coefficients, by applying Fuk--Nagaev-type inequalities to construct appropriate upper bounds.

Specifically, assume that all tail probabilities in Assumption \ref{tail} are upper bounded by $O\{x^{-2(C_{1,*}+C_{2,*})}\}$ as $x\rightarrow\infty$, and that the $\alpha$-mixing coefficients in Assumption \ref{mixing} satisfy $\alpha(k) = O\{k^{-(C_{1,*}-1)(C_{1,*}+C_{2,*})/C_{2,*}}\}$ as $k\rightarrow\infty$  
for some constants $C_{1,*} > 2$ and $C_{2,*} > 0$. 
 Then, by Fuk--Nagaev-type inequalities (Lemma 4 in Appendix E of \citeS{chang2018principal-app}), the uniform bound in \eqref{sigma2} is modified to $O_{\rm p}(\max\{D_n^{1/C_{1,*}} n^{-(C_{1,*}-1)/C_{1,*}}, (n^{-1}\log D_n)^{1/2}\})$. Following the proof of Lemma \ref{lem: l2 truncated covariance}, we can similarly conclude that
\[
\| T_{\delta_1}\{\tilde\bSigma_{\Yb_j,\xi}(k)\} - \tilde\bSigma_{\Cb_j,\xi}(k) \|_2 = O_{\rm p}( n^{-1/2} S_n^{1/2} D_n^{1/C_{1,*}} )
\]
with thresholding level $\delta_1 = \tilde C_* n^{-1/2} D_n^{1/C_{1,*}}$ for some sufficiently large constant $\tilde C_* > 0$. Then, following the proof of Theorem \ref{thm: aij}, if $\bar\sigma_{\xi} \ubar{\sigma}_{\xi}^{-2} n^{-1/2} S_n^{1/2} D_n^{1/C_{1,*}} \ll 1$ and $\tilde r=r$, we can conclude that the one-pass estimator remains consistent with the convergence rate  
$$O_{\rm p}(\ubar\sigma_{\xi}^{-1}n^{-1/2}S_n^{1/2}D_n^{1/C_{1,*}})\,,$$
although this rate is slower than that reported in Theorem \ref{thm: aij}. 

We further provide a robustness check via simulation under relaxed tail conditions. Specifically, we replace the Gaussian error distribution in Section~\ref{sec:numerical} with a heavy-tailed $t(5)$ distribution, keeping all other aspects of the data-generating process unchanged. Table~\ref{table:heavytail-rf-all} reports the finite-sample performance of the seven methods discussed in the paper (Pro.iter, HOPE, CC-ISO, Pro.init, cPCA, RP-PCA, and RCP) in estimating the factor loading vectors. 
We can find that: (i) when the factors are uncorrelated $(\rho = 0)$, the proposed iterative estimator (Pro.iter) performs worse than CC-ISO and outperforms HOPE  when the sample size is small ($n = 400$), but Pro.iter works comparably with CC-ISO when the sample size is large ($n = 800$); (ii) when the factors are correlated $(\rho = 0.75)$, Pro.iter significantly outperforms HOPE and CC-ISO.
These findings are similar to that for the Gaussian error distribution case, which indicate that the proposed iterative estimator remains effective and robust even in the presence of heavy-tailed errors.

\begin{table}
\centering
\scriptsize
\renewcommand{\arraystretch}{1.15}
\setlength{\tabcolsep}{3pt}
\caption{
The averages and standard deviations (in parentheses) of the estimation errors \eqref{eq:estimation error} for different methods based on 2000 repetitions. The elements of the error term sequence $\{\mathcal{E}_t\}_{t = 1}^n$ are independently drawn from $t(5)$. Bold numbers indicate the smallest average estimation error among all competing methods. All numbers reported below are multiplied by 100.}
\label{table:heavytail-rf-all}
\begin{tabular}{c|c|c|c|ccc|cccc}
\hline\hline
\multirow{2}{*}{\textbf{$\rho$}} & \multirow{2}{*}{\textbf{$\phi$}} & \multirow{2}{*}{\textbf{$s$}} & \multirow{2}{*}{\textbf{$n$}} & \multicolumn{3}{c|}{\textbf{Iterative estimates}} & \multicolumn{4}{c}{\textbf{One-pass estimates}} \\
\cline{5-11}
& & & & \textbf{Pro.iter} & \textbf{HOPE} & \textbf{CC-ISO} & \textbf{Pro.init} & \textbf{cPCA} & \textbf{RP-PCA} & \textbf{RCP} \\
\hline
\multirow{12}{*}{0}
& \multirow{6}{*}{0.25}
& \multirow{2}{*}{0}
& 400 & 1.31 (10.62) & 1.85 (12.70) & \textbf{0.49} (6.09) & 9.04 (16.67) & 18.09 (18.90) & 19.19 (17.94) & 39.96 (40.61) \\
& & & 800 & 0.68 (7.76) & 0.77 (8.18) & \textbf{0.30} (4.49) & 3.66 (9.68) & 14.51 (16.12) & 16.98 (16.45) & 35.05 (40.24) \\
\cline{3-11}
& & \multirow{2}{*}{0.3}
& 400 & 1.45 (11.30) & 1.73 (12.38) & \textbf{0.44} (5.77) & 8.01 (15.85) & 15.93 (18.82) & 16.50 (17.81) & 37.92 (40.71) \\
& & & 800 & 0.48 (6.43) & 0.59 (7.11) & \textbf{0.41} (5.76) & 3.04 (8.26) & 11.98 (15.52) & 13.46 (15.60) & 35.50 (40.60) \\
\cline{3-11}
& & \multirow{2}{*}{0.6}
& 400 & 0.81 (8.48) & 1.52 (11.49) & \textbf{0.71} (7.85) & 6.67 (13.15) & 13.93 (18.40) & 14.86 (17.99) & 37.38 (40.58) \\
& & & 800 & \textbf{0.40} (5.97) & 0.58 (7.21) & 0.42 (6.18) & 2.69 (8.24) & 10.01 (15.09) & 10.97 (15.19) & 33.88 (40.28) \\
\cline{2-11}
& \multirow{6}{*}{0.75}
& \multirow{2}{*}{0}
& 400 & 3.92 (11.98) & 3.97 (11.70) & \textbf{0.53} (3.64) & 21.64 (20.78) & 32.53 (10.70) & 33.56 (10.31) & 60.03 (27.54) \\
& & & 800 & 1.25 (6.39) & 2.10 (7.95) & \textbf{0.82} (4.87) & 8.63 (13.68) & 33.53 (10.38) & 34.45 (10.27) & 61.59 (27.73) \\
\cline{3-11}
& & \multirow{2}{*}{0.3}
& 400 & 1.74 (11.36) & 2.12 (12.24) & \textbf{0.44} (4.78) & 13.03 (19.49) & 26.55 (16.58) & 28.50 (14.89) & 49.15 (38.03) \\
& & & 800 & 0.56 (6.20) & 0.81 (7.04) & \textbf{0.46} (4.84) & 5.03 (11.73) & 26.31 (14.62) & 30.15 (14.10) & 50.51 (38.99) \\
\cline{3-11}
& & \multirow{2}{*}{0.6}
& 400 & 1.14 (9.73) & 1.58 (11.49) & \textbf{0.57} (6.38) & 8.94 (16.70) & 18.81 (19.07) & 19.74 (18.03) & 41.22 (40.95) \\
& & & 800 & 0.56 (6.97) & 0.94 (8.82) & \textbf{0.30} (4.72) & 3.00 (8.48) & 15.54 (16.88) & 17.31 (16.52) & 35.73 (39.79) \\
\hline
\multirow{12}{*}{0.75}
& \multirow{6}{*}{0.25}
& \multirow{2}{*}{0}
& 400 & \textbf{0.78} (5.13) & 24.79 (37.73) & 30.66 (39.62) & 16.07 (20.65) & 48.49 (15.15) & 51.61 (15.61) & 25.77 (26.27) \\
& & & 800 & \textbf{0.19} (0.07) & 23.31 (37.21) & 24.50 (37.41) & 6.45 (10.28) & 48.39 (14.23) & 48.94 (13.61) & 25.27 (26.40) \\
\cline{3-11}
& & \multirow{2}{*}{0.3}
& 400 & \textbf{0.79} (6.50) & 28.86 (40.29) & 30.64 (40.46) & 14.06 (18.89) & 50.00 (16.37) & 51.32 (15.64) & 26.30 (26.21) \\
& & & 800 & \textbf{0.15} (0.06) & 29.55 (40.64) & 29.73 (40.68) & 5.24 (8.48) & 50.36 (15.70) & 51.59 (15.27) & 23.04 (25.91) \\
\cline{3-11}
& & \multirow{2}{*}{0.6}
& 400 & \textbf{1.11} (8.74) & 31.79 (41.81) & 32.35 (41.86) & 12.81 (18.59) & 51.48 (18.26) & 52.99 (17.29) & 26.14 (26.78) \\
& & & 800 & \textbf{0.13} (0.05) & 32.92 (42.27) & 32.73 (42.10) & 4.37 (7.59) & 50.91 (17.40) & 52.03 (16.47) & 23.69 (26.03) \\
\cline{2-11}
& \multirow{6}{*}{0.75}
& \multirow{2}{*}{0}
& 400 & \textbf{9.02} (14.10) & 10.88 (17.35) & 43.91 (14.20) & 30.35 (20.31) & 38.69 (8.66) & 45.38 (10.68) & 32.89 (19.19) \\
& & & 800 & \textbf{0.61} (2.44) & 4.14 (14.57) & 37.18 (18.15) & 15.50 (16.43) & 38.59 (9.03) & 43.44 (9.17) & 27.48 (20.54) \\
\cline{3-11}
& & \multirow{2}{*}{0.3}
& 400 & \textbf{1.44} (6.34) & 14.50 (28.43) & 34.84 (31.02) & 21.61 (22.41) & 44.37 (10.82) & 50.81 (11.36) & 28.76 (25.34) \\
& & & 800 & \textbf{0.22} (0.68) & 13.65 (28.19) & 19.57 (30.20) & 7.91 (12.17) & 44.48 (10.82) & 46.48 (10.57) & 25.45 (24.94) \\
\cline{3-11}
& & \multirow{2}{*}{0.6}
& 400 & \textbf{0.60} (4.48) & 22.85 (37.12) & 27.69 (38.59) & 15.21 (19.89) & 49.61 (15.12) & 52.08 (14.86) & 26.54 (26.65) \\
& & & 800 & \textbf{0.15} (0.07) & 24.68 (37.93) & 26.93 (38.62) & 5.35 (9.46) & 50.43 (14.22) & 51.11 (13.86) & 24.35 (25.67) \\
\hline\hline
\end{tabular}
\end{table}


\bibliographystyleS{jasa}
 
 \begingroup
\setlength{\bibsep}{2pt}      
\linespread{0.9}\selectfont 
\bibliographyS{Ref-app-abbreviation}
\endgroup

\end{document}